\documentclass[11pt]{article}
\usepackage[letterpaper,margin=1in]{geometry}
\usepackage[T1]{fontenc}
\usepackage{lmodern,microtype}
\usepackage{amsmath,amssymb,amsthm,mathtools,bm}
\usepackage{booktabs,longtable,array,enumitem}
\usepackage{xcolor,tikz,graphicx}
\usetikzlibrary{arrows.meta,positioning,fit,calc,decorations.pathreplacing}
\usepackage[round,authoryear]{natbib}
\usepackage{placeins,needspace,float}
\usepackage[colorlinks=true,linkcolor=blue!55!black,citecolor=blue!55!black,urlcolor=blue!55!black]{hyperref}
\hypersetup{pdftitle={Arbitrarily Slow Polynomial Convergence of Fictitious Play},pdfsubject={v25 review copy; phase-height figure and overview refinements}}
\definecolor{Parent}{HTML}{24566F}
\definecolor{Gate}{HTML}{9B622C}
\definecolor{Control}{HTML}{536945}
\definecolor{Ink}{HTML}{263443}
\definecolor{Accent}{HTML}{684895}
\setlist{topsep=4pt,itemsep=3pt,parsep=0pt,leftmargin=*}
\newtheorem{theorem}{Theorem}[section]
\newtheorem{lemma}[theorem]{Lemma}
\newtheorem{proposition}[theorem]{Proposition}
\newtheorem{corollary}[theorem]{Corollary}
\theoremstyle{definition}\newtheorem{definition}[theorem]{Definition}
\theoremstyle{remark}\newtheorem{remark}[theorem]{Remark}
\newcommand{\chivec}{\bm{\chi}}
\newcommand{\zetavec}{\bm{\zeta}}
\newcommand{\hvec}{\mathbf h}
\newcommand{\N}{\mathbb N}\newcommand{\erow}[1]{e_{0,#1}^{\top}}
\newcommand{\Q}{\mathbb Q}\newcommand{\Z}{\mathbb Z}\newcommand{\R}{\mathbb R}
\newcommand{\one}{\mathbf1}\newcommand{\col}{\operatorname{col}}
\newcommand{\Gap}{\operatorname{Gap}}\newcommand{\rank}{\operatorname{rank}}
\newcommand{\supp}{\operatorname{supp}}
\newcommand{\ip}[2]{\langle #1,#2\rangle}\newcommand{\W}{\mathsf W}
\newcommand{\Aw}{\mathsf A}\newcommand{\Bw}{\mathsf B}

\newcommand{\LL}{\mathcal L}\newcommand{\RR}{\mathcal R}
\newcommand{\norm}[1]{\lVert #1\rVert}
\newcommand{\DeltaSet}{\Delta}
\newcommand{\reset}[1]{\mathsf a_{#1}}

\graphicspath{{./}}
\tikzset{
  box/.style={draw,rounded corners=2pt,align=center,minimum height=11mm,inner sep=4pt},
  parbox/.style={box,draw=Parent,fill=Parent!6},
  gatbox/.style={box,draw=Gate,fill=Gate!8},
  conbox/.style={box,draw=Control,fill=Control!7},
  arr/.style={-{Latex[length=2.2mm]},semithick}
}

\title{\textbf{Arbitrarily Slow Polynomial Convergence\\of Fictitious Play}}
\author{\begin{tabular}{@{}c@{\hspace{5mm}}c@{\hspace{5mm}}c@{}}
\normalsize Jacob Abernethy & \normalsize John Lazarsfeld & \normalsize Andre Wibisono\\[3pt]
\small Google Research & \small School of Computer Science & \small Department of Computer Science\\
\small \& School of Computer Science & \small Georgia Institute of Technology & \small Yale University\\
\small Georgia Institute of Technology & & \\
\small\texttt{abernethyj@google.com} & \small\texttt{jlazarsfeld@gatech.edu} & \small\texttt{andre.wibisono@yale.edu}
\end{tabular}}
\date{\today}
\makeatletter
\renewcommand{\maketitle}{%
  \begingroup\centering
  {\LARGE\@title\par}\vspace{1em}%
  \ifx\@author\@empty\else{\large\@author\par}\vspace{.6em}\fi
  \endgroup
}
\makeatother

\begin{document}

\pagenumbering{roman}
\thispagestyle{empty}
\thispdfpagelabel{Title}

\maketitle
\begin{center}\small\today\end{center}

\begin{abstract}
  We show that fictitious play can converge 
  at arbitrarily slow polynomial rates in two-player zero-sum games. 
  For every integer $k \ge 2$, we construct a 
  payoff matrix with $(k+1)^2 - 5$ actions per player
  for which the duality gap of the empirical strategies
  decays as $\Theta(t^{-1/k})$ after $t$ steps.
  The family starts from the standard
  rock-paper-scissors matrix, with each 
  higher-order game constructed recursively from 
  the preceding one. 
  After a prescribed common initial action, 
  every subsequent best response under fictitious play 
  is unique.
  For $k\ge3$, these games give counterexamples to Karlin's conjectured
  $O(t^{-1/2})$ convergence rate, and they extend the recent
  $\Theta(t^{-1/3})$ construction of \citet*{wang2025tie}
  to arbitrarily slow polynomial rates.
\end{abstract}

\vspace*{2em}
{\setlength{\parskip}{-0.5em}\tableofcontents}
\newpage
\pagenumbering{arabic}

\section{Introduction}
\label{sec:intro}

Fictitious play is one of the earliest 
and most natural dynamics for
computing a Nash equilibrium in two-player zero-sum games.
Introduced by~\citet*{brown1951iterative},
fictitious play is an iterative method in which, at each time step,
each player best responds to the empirical distribution
of their opponent's past actions. 
Shortly after its introduction,~\citet*{robinson1951iterative}
proved that the empirical strategies 
generated by the algorithm 
converge to a game's Nash equilibrium set,
albeit with a polynomially slow convergence rate
upper bound of $O(t^{-1/(m+n-2)})$ on $m$-by-$n$
payoff matrices after $t$ steps~\citep*{shapiro1958note}.
On the other hand, Karlin 
conjectured in 1959 that fictitious play 
converges in zero-sum games at the substantially faster
rate of  $O(t^{-1/2})$
\citep*{karlin1959mathematical}.

Despite the simplicity of the algorithm,
settling the true convergence rate of 
fictitious play has remained a surprisingly difficult
open question over the past several decades.
While empirical results have long suggested 
that Karlin's conjectured rate should hold,
a seminal result of~\citet*{daskalakis2014counter}
disproved the \textit{strong} variant 
of the conjecture.
Allowing \textit{adversarial tie-breaking},
Daskalakis and Pan proved that 
fictitious play can converge at the slow rate of
$\Theta(t^{-1/n})$ on 
the $n$-by-$n$ identity payoff matrix.
While the dimension-dependence of 
this rate effectively matches 
that in the upper bound of Robinson, the analysis of
Daskalakis and Pan relies heavily on 
an adaptive, adversarial tie-breaking rule.
In contrast, 
more recent works 
have suggested that Karlin's conjecture
may still hold in its weaker form
when the tie-breaking rule is fixed.
Indeed,~\citet*{abernethy2021fast} established the faster $O(t^{-1/2})$ rate
for positive \textit{diagonal} payoff matrices when each player uses
a fixed tie-breaking rule.
For generalized \textit{rock-paper-scissors} (RPS) matrices,
\citet*{lazarsfeld25a} established the same rate under symmetric initialization
and any common tie-breaking rule.

However, a recent breakthrough result of~\citet*{wang2025tie}
disproved Karlin's conjecture in its weaker form.
Wang constructed a $10$-by-$10$ payoff matrix
for which the duality gap under fictitious play scales as
$\Theta(t^{-1/3})$.
Importantly, aside from a prescribed common initial action,
the construction involves no best-response ties.
The key to Wang's construction is to couple
three blocks of rock-paper-scissors within a larger
skew-symmetric game so as to generate the cubic slowdown.
The natural question left open by Wang's result
is whether the $\Theta(t^{-1/3})$ rate is a fundamental limit
in the absence of best-response ties, 
or whether fictitious play can converge
at still slower polynomial rates.

The main contribution of this paper is to show
that the latter is possible.
For every integer $k \ge 2$, we recursively construct
a payoff matrix with $(k+1)^2-5$ actions per player
for which the duality gap under fictitious play decays as
$\Theta(t^{-1/k})$.
After a prescribed common initial action,
every subsequent best response is unique.
Thus, even in the absence of best-response ties,
fictitious play can converge
at an arbitrarily slow polynomial rate. 

We start by recalling the zero-sum game 
setting and the fictitious play update rule
before formally stating our main result. 

\subsection{Fictitious play and duality gap}
\label{sec:fp}

\paragraph{Notation.}
For an integer $m \ge 1$, 
let $\one_m\in\R^m$ denote the all-ones vector.
For an action $i$, let $e_i \in\R^m$ denote the $i$th standard
basis vector. 
Let $\DeltaSet_m:=\{p\in\R_{\ge0}^m:\one_m^{\top}p=1\}$ 
denote the probability simplex on $m$ actions.
We write $\Q$, $\R$, and $\Z$ for the rational numbers, real numbers, and integers,
respectively, and $\N=\{1,2,\ldots\}$ for the positive integers.

\paragraph{Zero-sum games, Nash equilibrium, and duality gap.}
Let $A\in\R^{m\times n}$ be a payoff matrix. 
We study the two-player zero-sum game 
with the bilinear payoff
\[
 \max_{p\in\DeltaSet_m}\min_{q\in\DeltaSet_n}p^{\top}Aq,
\]
in which the row player seeks to maximize the payoff and the column player seeks to minimize it. 
In this setting, von Neumann's minimax theorem \citep*{v1928theorie} guarantees the existence 
of a \textit{Nash equilibrium} 
$(p^\star,  q^\star) \in \DeltaSet_m \times \DeltaSet_n$ 
satisfying
\begin{equation}
    p^\top A q^\star 
    \le 
    (p^\star)^\top A q^\star 
    \le 
    (p^\star)^\top A q
    \qquad
    \text{$\forall \, p \in \DeltaSet_m$, $q \in \DeltaSet_n$.}
    \label{eq:NE}
\end{equation}
For any strategy pair $(p,q) \in \DeltaSet_m \times \DeltaSet_n$, 
a standard measure of proximity to Nash equilibrium
is the \emph{duality gap}
\begin{equation}
  \label{eq:gapdef}
  \Gap_A(p,q)=\max_i(Aq)_i-\min_j(A^{\top}p)_j.
\end{equation}
It is well-known that $\Gap_A(p, q) \ge 0$
for all $(p, q) \in \DeltaSet_m \times \DeltaSet_n$.
Moreover, $\Gap_A(p, q) = 0$ 
if and only if $(p, q)$ is a Nash equilibrium 
satisfying~\eqref{eq:NE}.

\paragraph{Fictitious play update.}
We study the \emph{simultaneous} discrete-time fictitious-play update rule. Let $x_t\in\Z_{\ge0}^m$ and $y_t\in\Z_{\ge0}^n$ be the two cumulative \emph{count vectors} after $t$ completed rounds, with $x_0=y_0=0$. At times $t \ge 0$, the players choose
actions $i_t, j_t$ and update their count vectors as
\begin{equation}\label{eq:fp}
\begin{aligned}
  i_t&\in\arg\max_i(Ay_t)_i,
  \quad
  &x_{t+1}=x_t+e_{i_t},\\
  j_t&\in\arg\min_j(A^{\top}x_t)_j,
  \quad
  &y_{t+1}=y_t+e_{j_t}.
\end{aligned}
\end{equation}
For $t \ge 1$, 
the row score $(Ay_t)_i$ is the cumulative payoff 
that row action $i$ would have earned against all past column 
plays through round $t-1$.
Similarly, $(A^{\top}x_t)_j$ is the cumulative payoff 
to the row player against column action $j$ through round $t-1$.
After $t$ steps, the empirical strategies of the 
players are given by $x_t/t$ and $y_t/t$.

\paragraph{Skew-symmetry and common actions.}

When $m=n$, a payoff matrix $A$ is \textit{skew-symmetric}
if $A = -A^\top$. 
In this case, if $x_t = y_t = z_t$ for some $t \ge 0$,
then the two best-response sets in~\eqref{eq:fp}
coincide. If the players choose the same action from
this set, then their count vectors remain 
equal at time $t+1$, meaning $x_{t+1} = y_{t+1} = z_{t+1}$.
For \(t\ge1\), synchronization gives
\begin{equation}\label{eq:symmetricgap}
  \Gap_A(z_t/t,z_t/t)=\frac{2}{t}\max_i(Az_t)_i.
\end{equation}
Thus, unique common best responses preserve 
synchronization 
without any additional tie-breaking assumptions.

\Needspace{16\baselineskip}
\subsection{Main result}\label{sec:main}

Our main result gives a family
of zero-sum games in which
fictitious play converges
at progressively slower 
polynomial rates. 
In each game, 
apart from a prescribed common first action, every subsequent 
best response under fictitious play 
is unique.
Formally, we prove the following.

\begin{theorem}[Arbitrarily slow polynomial convergence]\label{thm:main}
For every integer $k\ge2$, there exists a skew-symmetric matrix $\widehat A_k\in\Q^{n_k\times n_k}$, where
\[
  n_k=(k+1)^2-5,
\]
with one of these $n_k$ actions designated as the setup action. Start simultaneous fictitious play from zero counts, and let both players choose the designated setup action in round $0$. At every decision time $t\ge1$, both players have the same unique best response. Their common count vector $z_t$ satisfies
\begin{equation}\label{eq:mainrate}
  \Gap_{\widehat A_k}(z_t/t,z_t/t)
     =\Theta(t^{-1/k})
\end{equation}
for each fixed $k$, with constants depending on the fixed game.
\end{theorem}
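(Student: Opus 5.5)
The plan is an induction on $k$, building $\widehat A_{k+1}$ from $\widehat A_k$, with a dynamical invariant strong enough to pin the gap to order $t^{-1/k}$. By \eqref{eq:symmetricgap}, it suffices to show two things: the common trajectory has unique best responses, and the maximal score $M_t:=\max_i(\widehat A_k z_t)_i$ satisfies $M_t=\Theta(t^{1-1/k})$.

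For the base case $k=2$, the game is the standard rock-paper-scissors matrix with $n_2=4$ actions. That count is three RPS actions plus the setup action, and the setup row is chosen so that after round $0$ the ties are broken in a fixed direction. I would verify directly the classical picture. Fictitious play cycles through R, P, S in phases, and each phase length grows linearly in the phase index. Since the maximal score rises by a constant in every cycle, after $c$ cycles $t\asymp c^2$ and $M_t\asymp c\asymp t^{1/2}$. Exact uniqueness of best responses follows because all scores are integers and the setup action introduces a fixed offset that makes the leader strictly dominant at every switching time. Here I would use the rational entries of the setup row to keep all compared quantities off a lattice of ties.

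For the induction step, the dimension grows by $n_{k+1}-n_k=2k+3$. I would realize this as embedding $\widehat A_k$ as a parent block and adding a coupling gadget of roughly $2k+3$ gate and control actions, in the spirit of Wang's three-block construction. Each completed macro-cycle of the level-$k$ dynamics should advance the gadget by one step. The gadget acts as a slow RPS-type cycle whose phase lengths are proportional to the time the parent block takes to complete its own cycles.

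The invariant I would carry is an explicit description of the trajectory at phase boundaries. Phase $c$ has length $\Theta(c^{k-1})$, so the phase boundaries occur at times $T_c=\Theta(c^{k})$. Within each phase, $M_t$ grows by only $O(c^{k-2})$ per phase, giving $M_{T_c}=\Theta(c^{k-1})$ and hence $M_t/t=\Theta(t^{-1/k})$. The proof would prove the boundary state, namely the count vectors modulo an explicit polynomial formula in $c$, by induction on $c$. A monotonicity argument would then interpolate the $\Theta$ bound between consecutive boundaries.

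The main obstacle is showing that the gadget's interaction with the parent block does not destroy the parent's inner recursion. Concretely, the extra score contributions from gate and control actions must shift the parent's rows uniformly, so that the parent's relative scores, and hence its unique best responses, evolve exactly as in $\widehat A_k$ up to controlled resets. At the same time those contributions must be large enough to trigger the next gadget transition at precisely the right moment.

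I would try to design the coupling entries so that the gadget rows see the parent only through $\one^\top$-type aggregates. These aggregates grow linearly in time, so the parent's internal oscillations become invisible to the gadget. Conversely, the parent rows would see the gadget only through a common additive term, which cancels in parent-vs-parent comparisons. Strict uniqueness would then reduce to finitely many strict inequalities between leading-order polynomial coefficients in $c$, with the lower-order terms handled by choosing the gadget entries generically rational.
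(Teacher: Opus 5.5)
\textbf{There is a genuine gap in the induction step.} Your base case and the interpolation between phase boundaries are fine and match the paper. The problem is that the coupling you propose removes exactly the mechanism that produces the extra factor in the exponent.

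\emph{Common-shift coupling cannot slow the rate.} If the parent rows see the gadget only through a common additive term, then no gadget play ever changes a parent-versus-parent score difference. The parent therefore just continues its own order-$k$ trajectory, interrupted by gadget periods. After $t_P$ parent plays its phase index is $\Theta(t_P^{1/k})$, and its own height gain is $\Theta(t_P^{1-1/k})=O(t^{1-1/k})$. Interleaving gadget time only dilates time, so any extra height would have to come from the gadget's own dynamics. Your RPS-type gadget gains $O(1)$ height per cycle, so it cannot supply the $\Theta(q^{k-1})$ per outer phase that an order-$(k+1)$ clock needs. Your stated invariant is just the order-$k$ clock; it does not explain where phases of length $\Theta(q^k)$ with height gain $\Theta(q^{k-1})$ come from.

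\emph{What is actually needed.} The parent must be kept at small phase indices, where each play earns more height. The paper runs a bundle $\W_k(\tau q)\cdots\W_k(2\tau q-1)$. Reset actions then restore the parent's \emph{relative} scores from the exit pattern at index $2\tau q$ back to the entry pattern at $\tau(q+1)$, shifted to a higher common height $\zeta(q)$. This rewind is a non-uniform change of the parent scores, which your design explicitly forbids.

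\emph{Aggregate coupling cannot detect cycle ends.} If the gadget sees the parent only through $\one^{\top}z$, it cannot tell when a parent macro-cycle or bundle ends. So your requirement that ``each completed macro-cycle advance the gadget by one step'' cannot be implemented. The paper uses an endpoint action with row $A_{b,:}+v$, copying the entry leader's row plus a counter row. Combined with the prefix margin \eqref{eq:marker}, this action fires exactly at the bundle endpoint.

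\textbf{The closing step also cannot work as stated.} Once non-uniform resets are allowed, skew-symmetry ties the two directions together. The block $P$ that restores the old scores ($PM=D$) also determines how parent plays move the reset scores ($P^{\top}X=Q$). Both equations are solvable only under the compatibility identity $D^{\top}X=M^{\top}Q$. This is a real obstruction (Lemma~\ref{lem:cokernel}). The paper handles it through:
\begin{itemize}
\item the integer scale $\tau$;
\item the correction row $\ell_*$;
\item a gain adjustment when $k$ is even;
\item strictly positive weights (Definition~\ref{def:admissible}), which must themselves be re-established at the next order.
\end{itemize}
Choosing entries ``generically rational'' cannot replace this. The score returns must hold as exact polynomial identities, so that one fixed matrix returns every phase to its prescribed next entry state for all $q$. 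Strict inequalities among leading coefficients only control best-response comparisons once those identities hold. Finally, the reset actions whose scores rise during the bundle must be kept below the parent maximum throughout it. The paper does this with the remaining-height bound, condition (III) of Definition~\ref{def:clock}, and your proposal has no counterpart to it.
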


\begin{remark}[Payoff normalization]\label{rem:normalization}
Multiplying the constructed matrix by $1/M_*$, where $M_*:=\max_{i,j}|(\widehat A_k)_{ij}|>0$, places its entries in $[-1,1]$. This multiplies every score comparison and the gap by the same positive factor, preserving the action sequence and exponent.
\end{remark}

\vspace*{0.25em}
\begin{remark}[Initialization]\label{rem:initialization}
At $t=0$, all scores are zero and every action is a best response. The theorem prescribes the same setup action $\omega$ for both players; its payoff column produces the initial score vector required by the construction. Wang uses an analogous prescribed setup action, played only once at the beginning, to realize initial scores from zero counts~\citep*[Section 4.2 and Lemma 1]{wang2025tie}. Equivalently, our count vectors may start at the unit vector $e_\omega$; after $r$ additional rounds the empirical denominator is $r+1$.
\end{remark}

\paragraph{No universal polynomial exponent.}
As a consequence of Theorem~\ref{thm:main}, there is no constant $\gamma>0$
such that the duality gap is bounded by $O(t^{-\gamma})$ for every finite
zero-sum game and every simultaneous fictitious-play trajectory with a
prescribed common first action and unique best responses thereafter, even
with a game- and initialization-dependent constant.
This follows by taking $k>1/\gamma$ in Theorem~\ref{thm:main}.

\vspace*{-0.5em}
\paragraph{Rate and action count.}
At order $k=2$, Theorem~\ref{thm:main} is realized by the three-action RPS game augmented with a single setup action, for four actions in total. For $k\ge3$, the gap exponent $1/k$ is strictly smaller than $1/2$, yielding convergence slower than the $O(t^{-1/2})$ rate in Karlin's conjecture. The cubic obstruction was already established by \citet*{wang2025tie}; the rates with $k\ge4$ extend it to arbitrarily small positive exponents. Table~\ref{tab:dimensions} lists the first seven members of the family. Since $n_k+5=(k+1)^2$, the exponent can be written exactly as
\[
 \alpha_{n_k}:=\frac1k=\frac{1}{\sqrt{n_k+5}-1}.
\]
Thus, the construction gives an infinite family of dimensions $n=n_k$ with
\[
 \Gap_{\widehat A_k}(z_t/t,z_t/t)
 =\Theta(t^{-\alpha_{n_k}}),\qquad
 \alpha_{n_k}=\Theta(n_k^{-1/2}).
\]
In this dimension-dependent sense the rate is $t^{-\Theta(1/\sqrt n)}$. The constants and onset may depend on $k$; the dimension statement concerns the sequence $n=n_k$.

\begin{table}[htbp]\centering
\begin{tabular}{@{}lrrrrrrr@{}}\toprule
Order $k$ & $2$ & $3$ & $4$ & $5$ & $6$ & $7$ & $8$\\
Actions per player $n_k$ & $4$ & $11$ & $20$ & $31$ & $44$ & $59$ & $76$\\
Gap rate & $t^{-1/2}$ & $t^{-1/3}$ & $t^{-1/4}$ & $t^{-1/5}$ & $t^{-1/6}$ & $t^{-1/7}$ & $t^{-1/8}$\\\bottomrule
\end{tabular}
\caption{Games with unique best responses after setup. Each rate holds over all integer times up to positive constant factors. The action count includes the setup action.}\label{tab:dimensions}
\end{table}

The quadratic action count follows from retaining a single parent block and adding \(2k+3\) actions in the lift from order \(k\) to order \(k+1\). Including the final setup action gives \(n_k=(k+1)^2-5\).

\medskip
\begin{remark}[A smaller construction]\label{rem:seven-base}
We do not claim that the action count is optimal. An alternative explicit \(7\times7\) game achieves gap \(\Theta(t^{-1/3})\), improving on the \(10\times10\) construction of \citet*{wang2025tie}. Its six recurrent actions can serve as the \(k=3\) seed for the same lift. Adding one final setup action gives \(\tilde n_k=(k+1)^2-9\) actions and gap \(\Theta(t^{-1/k})\) for \(k\ge3\), saving four actions while retaining quadratic growth. 
We use the RPS seed in the main construction because its argument is simpler.
\end{remark}

\paragraph{Formal verification.}
The RPS-based recursive construction in Theorem~\ref{thm:main}, including its seed, lift, and realization as simultaneous fictitious play starting from zero counts with a prescribed common setup action, has been formalized in Lean~4. The checked proofs establish unique best responses after setup and a positive exact leading constant for the empirical duality gap as time tends to infinity through all integers, with relative error $O(t^{-1/k})$. The formal proofs use only Lean's standard axioms. Its computational claims require verified realizers, supplied for this family, and do not assert practical extraction performance. 
The sources, pinned build instructions, and theorem correspondence are available at \url{https://github.com/awibisono/fictitious-play-experiments/tree/main/lean}.

\vspace*{-0.5em}
\paragraph{Open questions.}
Our lower bound construction leaves open two natural questions.
First, what is the optimal dependence on dimension of the
worst-case convergence rate under fixed tie-breaking?
In particular, can a family of
$n\times n$ games exhibit
the slower rate of
$t^{-\Theta(1/n)}$?
Such a construction would match, up to constant factors
in the exponent, both the lower bound under adversarial
tie-breaking of~\citet*{daskalakis2014counter}
and the general upper bound of~\citet*{robinson1951iterative}.
Second, which structural properties ensure Karlin's
conjectured $O(t^{-1/2})$ rate?
This rate holds for positive diagonal and generalized
rock-paper-scissors matrices under the tie-breaking and
initialization assumptions discussed
above~\citep*{abernethy2021fast,lazarsfeld25a},
but a general characterization remains open.

\paragraph{AI methodology.}
We developed the family of hard constructions and their convergence proofs through an iterative process of mathematical analysis and interaction with ChatGPT and Gemini. 
The authors verified and simplified the construction and proofs.
ChatGPT assisted with writing
and developing the Lean formalization of our results.
The authors take full responsibility for the results and their exposition. 
We provide a more detailed description of our AI usage in 
Appendix~\ref{sec:ai-methodology}.

%%%%%%%%%%%%%%%
\subsection{Related work}

\paragraph{Lower bounds for fictitious play in zero-sum games.}
\citet*{daskalakis2014counter} construct trajectories on the $n\times n$ identity payoff matrix $I_n$ that obtain a duality gap of $\Theta(t^{-1/n})$. Their proof is inductive, and in the inductive step from $I_{n-1}$ to $I_n$, the row player repeats one action while the column player raises its counts on the other actions to a common level. Both players then follow an $I_{n-1}$ trajectory on those other actions until the row-player counts are equal again. A padding segment then adjusts the length of the embedded trajectory to reach this endpoint. The
construction repeatedly selects among tied best responses,
and thus this result disproves the strong version of
Karlin's conjecture that permits arbitrary 
time-dependent tie-breaking.

\citet*{wang2025tie} constructs a $10\times10$ skew-symmetric 
payoff matrix whose duality gap scales as $\Theta(t^{-1/3})$,
and with no best-response ties after a prescribed common first action.
The construction has nine recurrent actions 
arranged in three coupled RPS blocks.
A tenth action, played only in the first round by both players, 
produces a prescribed starting score, including a common
 shift and small offsets that remove all
subsequent best-response ties. 
This construction cycles between the three RPS blocks. Our construction uses the same idea of long shifted segments followed by relative-score restoration at a common height. The new step is a lift retaining a single parent block while preserving all conditions needed to repeat the construction (Theorem~\ref{thm:lift}). A direct recursion retaining a fixed number \(c>1\) of parent copies and adding \(O(k)\) new actions per lift would have \(\Theta(c^k)\) actions. 
Instead, retaining one parent copy and adding linearly many actions in the induction order, as in our construction,
yields a total action count that is quadratic in that order.

\vspace*{-0.5em}
\paragraph{Upper bounds for fictitious play.}
For positive diagonal matrices, \citet*{abernethy2021fast} 
proved an $O(t^{-1/2})$ convergence rate when each player resolves ties
by a fixed total priority ordering (equivalently, lexicographical tie-breaking
under the player's fixed ordering).
Note that this result also holds for the identity 
matrix used in the lower bound of~\cite{daskalakis2014counter}.
For weighted generalized RPS games, 
\citet*{lazarsfeld25a} proved a similar $O(t^{-1/2})$ convergence 
rate upper bound under symmetric initialization and using 
any common tie-breaking rule.

\vspace*{-0.5em}
\paragraph{Fictitious play beyond zero-sum games.}
Fictitious play has also been extensively studied in 
settings beyond two-player zero-sum games,
but in general, its empirical strategies do not
always converge to a game's Nash equilibrium set;
see~\citet*[Section 3]{mertikopoulos2026regret}
and the references therein for a recent overview.
Several recent works have also proven 
lower bounds on the convergence rate of
fictitious play in \textit{potential games}~\citep*{brandt2010rate, panageas2023exponential,anagnostides2026doubly}.

More broadly, prior works have studied continuous-time variants of Fictitious Play (see, e.g.,
\citet*{harris1998rate, hofbauer2002global, viossat2013no, 
ostrovski2014payoff,swenson2017exponential}).
In general online learning settings,
several canonical families of algorithms like
\textit{Follow-the-Regularized-Leader} (FTRL)
and \textit{Follow-the-Perturbed-Leader} (FTPL)
can be viewed as smoothed or regularized variants
of Fictitious Play, which itself corresponds 
to the unregularized \textit{Follow-the-Leader} (FTL) algorithm
(see, e.g,~\citet*{kalai2005efficient, hazan2016introduction, shalev2025online}).
Characterizing the convergence (or lack thereof) of such regularized
algorithms to equilibria in games has been the subject of 
extensive work in the past decade 
(see, e.g.,~\citet*{syrgkanis2015fast, 
    mertikopoulos2018cycles, bailey2018multiplicative, 
    daskalakis2021near, soleymani2025faster}).

\paragraph{Organization of the paper.}
Section~\ref{sec:trialoverview} gives an overview of the construction.
Section~\ref{sec:technical} develops the technical framework, and Section~\ref{sec:clock} shows how the construction yields the main result.
Section~\ref{sec:seed} establishes the rock-paper-scissors base case, 
and Sections~\ref{sec:gateway} and~\ref{sec:onestepproof} present and prove the inductive step.
The appendices provide explicit bounds, example matrices, and numerical illustrations.

\newcommand{\trialref}[1]{\ref{#1}}
\section{Overview of the construction}\label{sec:trialoverview}

Our recursive step turns a game with gap of order $t^{-1/k}$ into a larger game with gap of order $t^{-1/(k+1)}$. We keep the parent payoff matrix $A$ unchanged and add one endpoint action $g$ and two groups of $k+1$ reset actions, as in Figure~\ref{fig:trial-overview-matrix}. The new payoffs let fictitious play follow long segments of the parent trajectory and then restore the relative scores needed to run another such segment. This makes the maximum cumulative payoff grow as $\Theta(t^{k/(k+1)})$, so dividing by elapsed time gives the slower decay of the gap. All entries of the enlarged matrix are fixed before the trajectory begins. Starting from rock-paper-scissors (RPS), we repeat this construction, preserving the properties that allow each enlarged game to serve as the next parent.

The recursive step has four components. First, we bundle consecutive parent phases into longer phases. Second, an endpoint action becomes the next best response only when the bundle ends. Third, reset actions restore the relative parent scores needed for the next bundle at a higher common height. We choose their payoffs so that fictitious play leaves them inactive during the bundle and selects them in the required order afterward. Fourth, we move phase boundaries to give each phase a fixed starting action and a final endpoint play, and verify that the construction can be repeated. Finally, we convert this auxiliary trajectory into a fictitious play trajectory from zero counts, with unique best responses after the prescribed common first play.

\begin{figure}[H]\centering
\resizebox{.36\linewidth}{!}{%
\begin{tikzpicture}[x=.82cm,y=.82cm]
\fill[Parent!10] (0,2.6) rectangle (3,5.6);
\fill[Gate!12] (3,0) rectangle (3.5,5.6);
\fill[Gate!12] (0,2.1) rectangle (6.1,2.6);
\fill[Control!8] (3.5,0) rectangle (6.1,5.6);
\fill[Control!8] (0,0) rectangle (6.1,2.1);
\foreach \x in {0,3,3.5,4.8,6.1}{\draw[gray!50] (\x,0)--(\x,5.6);}
\foreach \y in {0,1.05,2.1,2.6,5.6}{\draw[gray!50] (0,\y)--(6.1,\y);}
\draw[Ink] (0,0) rectangle (6.1,5.6);
\node[text=Parent,font=\Large] at (1.5,4.1) {$A$};
\node[text=Parent,font=\scriptsize] at (1.5,3.25) {unchanged};
\node[above,font=\scriptsize,align=center] at (1.5,5.65) {parent\\$m$};
\node[above,font=\scriptsize,text=Gate] at (3.25,5.65) {$g$};
\node[above,font=\scriptsize,align=center] at (4.15,5.65) {early\\$k+1$};
\node[above,font=\scriptsize,align=center] at (5.45,5.65) {late\\$k+1$};
\node[font=\scriptsize,text=Control] at (4.8,4.1) {new couplings};
\node[font=\scriptsize,text=Control] at (3.05,1.05) {new reset interactions};
\end{tikzpicture}%
}
\caption{One unchanged parent block, one endpoint action $g$, and two groups of $k+1$ reset actions. The new cross-block entries are chosen together, subject to skew symmetry. The figure shows a schematic block partition, not a numerical matrix.
See Figure~\trialref{fig:matrix} for a more detailed schematic.}\label{fig:trial-overview-matrix}
\end{figure}

\needspace{7\baselineskip}
\paragraph{Scores and phases.}
To develop the construction, we first allow a rational initial score vector $f$ and a fixed priority order for breaking ties. This priority order is only an auxiliary device, and the final construction requires no tie-breaking after the prescribed common first play. If $z_t$ counts the $t$ completed plays, then
\[
 u_t=f+Az_t,\qquad H_t=\max_i u_{t,i}.
\]
Thus, an action's score is its initial bias plus its cumulative payoff against the completed plays, and the \emph{score height} $H_t$ is the largest of these scores. The next action maximizes the current score, with ties resolved by the priority order. Playing action $i$ adds column $i$ of $A$ to $u_t$. Since $A$ is skew-symmetric ($A^\top=-A$), the chosen action's own score stays fixed, so $H_t$ cannot decrease. For synchronized fictitious play from zero counts and without an initial bias, both players choose the same action each round, and the empirical duality gap is $2\max_i(Az_t)_i/t$, as in~\eqref{eq:symmetricgap}. The setup action introduced at the end will generate the initial scores through actual play, allowing us to use this identity for the final game.

We divide the trajectory into finite consecutive blocks called \emph{phases} so that we can track the score height and number of plays at their boundaries. A \emph{phase word} lists the actions in one such block, and playing the word means executing those actions in their listed order. A \emph{run} is a consecutive repetition of one action.

An order-$k$ \emph{clock} records such a trajectory together with polynomial formulas for its entry scores and phase action counts. Its phase index $q$ is distinct from elapsed time $t$, which counts individual plays. The entry height and phase duration are polynomials of degree $k-1$ with positive leading coefficients. Summing the phase durations gives elapsed time of order $q^k$, and hence
\[
 \text{height}=\Theta(q^{k-1}),\qquad t=\Theta(q^k),\qquad \frac{\text{height}}{t}=\Theta(q^{-1})=\Theta(t^{-1/k}).
\]
The ratio shows why increasing the clock order slows the decay of the gap. We start with an order-$2$ clock in RPS, whose phase $q$ plays actions $0,1,2$ in runs of lengths $3q+5$, $3q+6$, and $3q+7$, respectively, and raises the entry height by one. Its linear phase durations accumulate to quadratic elapsed time, giving the square-root rate. Sections~\ref{sec:auxiliaryoverview} and~\ref{sec:rpsintuition} give the score process and RPS data, while Definition~\ref{def:clock} states the full conditions that the recursion preserves.

We now describe the four components that turn an order-$k$ clock into an order-$(k+1)$ clock.

\paragraph{1. Bundling parent trajectories.}
The first task is to form longer outer phases by revisiting overlapping ranges of parent phases at successively higher score heights. The added reset actions make this schedule possible by restoring the relative parent scores needed at the start of each next bundle.

The first lift from order $k=2$ to order $k=3$ makes the idea concrete. For this illustration, we leave aside the fixed integer scale introduced below. In outer phase $q$, the construction runs the $q$ RPS phases indexed by $q,q+1,\ldots,2q-1$. This \emph{parent bundle} is their concatenated sequence of actions. It lasts $\Theta(q^2)$ plays and raises the parent height by $q$. Its exit has the relative-score pattern for RPS phase $2q$, whereas the next bundle must start with the pattern for phase $q+1$. The added reset actions restore that pattern at a higher common height. Only the differences among the parent scores are restored, while elapsed time and cumulative counts continue to increase. A common shift of all parent scores leaves every comparison among parent actions unchanged. In the auxiliary process, the reset runs at a height a fixed amount $c\in(0,1)$ above the parent maximum at the bundle endpoint. The maximum score remains constant throughout the reset, so outer phase $q$ gains $q+c$ in total. We will choose a reset of length $\Theta(q^2)$, so the linear height gains accumulate to height $\Theta(q^2)$ and the quadratic outer-phase lengths accumulate to elapsed time $\Theta(q^3)$. Their ratio gives the rate $t^{-1/3}$.

\needspace{6\baselineskip}
For the general lift, let $\W_k(\sigma)$ be the parent phase word at index $\sigma$. We choose an integer scale $\tau\ge1$ once for the lift so that the reset run lengths can be made integral. In outer phase $q$, we play the bundle
\[
 \W_k(\tau q)\W_k(\tau q+1)\cdots\W_k(2\tau q-1).
\]
This bundle contains $\tau q$ parent phases and lasts $\Theta(q^k)$ plays. If $\chi(\sigma)$ is the parent's entry height, the gain across the bundle is $\chi(2\tau q)-\chi(\tau q)=\Theta(q^{k-1})$. The next bundle requires the relative-score pattern at $\tau(q+1)$ rather than the pattern at $2\tau q$ where the current bundle ends.

Let $\zeta(q)$ be the height at which the reset actions play. We choose the new payoffs and reset schedule to enforce the following height relation, which adds the parent bundle's height gain and a fixed increment $c$ for the transition to the reset:
\[
 \zeta(q)-\zeta(q-1)=\chi(2\tau q)-\chi(\tau q)+c,
 \qquad 0<c<1.
\]
Thus the parent bundle starts at height $\zeta(q-1)$ and reaches height $\zeta(q)-c$ when it ends. The endpoint action then brings play to the reset actions at height $\zeta(q)$. We choose their plays to restore the parent pattern for phase $\tau(q+1)$ with every parent score shifted by the same amount so that its maximum is $\zeta(q)$. This common shift preserves the comparisons among parent actions, allowing the next bundle to follow the same parent trajectory from the required index. Successive bundles therefore revisit overlapping ranges of parent phases at increasingly large heights, even as time and counts continue to increase. Summing the height increments gives $\zeta(q)=\Theta(q^k)$. The reset construction provides another $\Theta(q^k)$ plays per outer phase, so outer-phase lengths have degree $k$ and accumulate to elapsed time of degree $k+1$. Figure~\ref{fig:overview-phase-height} summarizes the height and duration calculation across successive outer phases, before we move the phase boundaries in Component~4.
\begin{figure}[H]
\centering
\resizebox{.98\linewidth}{!}{\begin{tikzpicture}[x=.85cm,y=1cm,>=Latex,font=\small]
\definecolor{parentcol}{RGB}{57,91,111}
\definecolor{resetcol}{RGB}{148,111,67}
\tikzset{height/.style={line width=1.05pt},guide/.style={gray!45,densely dashed,line width=.35pt}}
% Illustrative coordinates only: neither an exact path nor a numerical sample.
% Aligned chronological bands.
\foreach \a/\b in {0/1.65,3.05/5.30,7.25/10.10}{
 \fill[parentcol!7] (\a,1.35) rectangle (\b,5.15);
 \fill[parentcol!18] (\a,.25) rectangle (\b,.73);
}
\foreach \a/\b in {1.8/3.05,5.45/7.25,10.25/12.6}{
 \fill[resetcol!8] (\a,1.35) rectangle (\b,5.15);
 \fill[resetcol!23] (\a,.25) rectangle (\b,.73);
}
\draw[->,black!65] (0,1.35)--(13,1.35) node[right,font=\small] {$t$};
\draw[->,black!65] (0,1.35)--(0,5.35);
\foreach \y/\lab in {3.55/{\zeta(q+1)},4.95/{\zeta(q+2)}}{
 \draw[gray!60] (-.05,\y)--(.05,\y);
 \node[left,font=\small] at (-.10,\y) {$\lab$};
}
\draw[<->,black!65,line width=.45pt] (-.16,1.65)--(-.16,2.45);
\node[left,align=center,font=\small] at (-.24,2.05) {height gain\\$\Theta(q^{k-1})$};
\draw[guide] (0,2.45)--(3.05,2.45);
\draw[guide] (0,3.55)--(7.25,3.55);
\draw[guide] (0,4.95)--(12.6,4.95);
% Parent-bundle schematic: increasing entry heights, not actual within-phase scores.
\draw[height,parentcol] (0,1.65)--(.45,1.65)--(.45,1.85)--(.95,1.85)--(.95,2.05)--(1.35,2.05)--(1.35,2.25)--(1.65,2.25);
\draw[height,black] (1.65,2.25)--(1.80,2.45);
\draw[height,resetcol] (1.80,2.45)--(3.05,2.45);
\draw[height,parentcol] (3.05,2.45)--(3.55,2.45)--(3.55,2.675)--(4.10,2.675)--(4.10,2.90)--(4.65,2.90)--(4.65,3.125)--(5.05,3.125)--(5.05,3.35)--(5.30,3.35);
\draw[height,black] (5.30,3.35)--(5.45,3.55);
\draw[height,resetcol] (5.45,3.55)--(7.25,3.55);
\draw[height,parentcol] (7.25,3.55)--(7.8,3.55)--(7.8,3.79)--(8.35,3.79)--(8.35,4.03)--(8.90,4.03)--(8.90,4.27)--(9.45,4.27)--(9.45,4.51)--(9.85,4.51)--(9.85,4.75)--(10.10,4.75);
\draw[height,black] (10.10,4.75)--(10.25,4.95);
\draw[height,resetcol] (10.25,4.95)--(12.6,4.95);

\draw[black!60,line width=.4pt] (12.6,4.95)--(13.05,4.95);
\node[anchor=west,font=\small,align=left] at (13.10,4.95) {total height after\\phase $q+2$\\$\zeta(q+2)=\Theta((q+2)^k)$};
\node[font=\small,anchor=west,fill=white,inner sep=2pt] (hand) at (.85,4.1) {endpoint handoff};
\draw[->,black!60,line width=.4pt] (hand.south)--(1.74,2.49);
\foreach \a/\b/\lab in {0/3.05/{\Theta(q^k)},3.05/7.25/{\Theta((q+1)^k)},7.25/12.6/{\Theta((q+2)^k)}}{
 \draw[<->,black!60,line width=.4pt] ({\a+.12},1.16)--({\b-.12},1.16);
 \node[above,font=\small,fill=white,inner sep=1.5pt] at ({(\a+\b)/2},1.16) {$\lab$};
}
% A thin endpoint slot, not an extended plateau.
\foreach \a/\b in {1.65/1.8,5.30/5.45,10.10/10.25}{\fill[black!65] (\a,.25) rectangle (\b,.73);}
\foreach \x in {.825,4.175,8.675}{\node[font=\small] at (\x,.49) {parent};}
\foreach \x in {2.425,6.35,11.425}{\node[font=\small] at (\x,.49) {reset};}
\foreach \a/\b/\lab in {0/3.05/q,3.05/7.25/{q+1},7.25/12.6/{q+2}}{
 \draw[black!55,line width=.4pt] (\a,.11)--(\a,-.02)--(\b,-.02)--(\b,.11);
 \node[below,font=\small] at ({(\a+\b)/2},-.04) {outer phase $\lab$};
}
\end{tikzpicture}
}
\caption[Growing outer phases in the auxiliary construction.]{Schematic height and duration of outer phases, each comprising a parent bundle, one endpoint play, and a reset block, before the boundary rotation in Component~4. 
The parent height rises from $\zeta(q-1)$ to $\zeta(q)-c$. The black connector summarizes the handoff to the reset ceiling $\zeta(q)$. 
Phase $q$ gains $\Theta(q^{k-1})$ in height, with $\Theta(q^k)$ plays in each of the parent bundle and reset block. 
Thus height $\Theta(q^k)$ over elapsed time $T_q=\Theta(q^{k+1})$ gives a duality gap of order $\Theta(T_q^{-1/(k+1)})$. 
The steps and widths are illustrative, 
and the flat reset ceilings belong to the auxiliary process. Removing ties preserves the action sequence, not exact flatness.
}
\label{fig:overview-phase-height}
\end{figure}

Section~\ref{sec:overviewblocks} specifies the bundle and shifted scores, and Lemma~\ref{lem:clockgrowth} and Corollary~\ref{cor:liftgrowth} establish these growth formulas. What remains is to choose fixed payoffs that keep the added actions from exceeding the parent maximum during the bundle and then make best responses perform the restoration.

\paragraph{2. Endpoint detection.}
We want the endpoint action $g$ to remain below the parent maximum throughout the bundle and become the unique next maximizing action when the bundle ends. Its play will then make a reset action the next best response, providing the \emph{handoff} from the parent actions to the reset block.

Write $u_i$ for action $i$'s current score, $H_{\rm parent}$ for the maximum among the parent scores, and $b$ for the parent's \emph{entry leader}, a designated action attaining the maximum at every phase entry, though not necessarily during the phase. Let $z^{\rm rem}$ count the parent plays still remaining in the bundle, one coordinate for each parent action. The parent clock supplies a fixed rational row $v$ assigning total weight one to each complete parent phase. Thus $vz^{\rm rem}$ counts remaining phases at phase boundaries, although within a phase it can fluctuate and is not generally a monotone countdown. We combine this count with the difference between the parent maximum and the score of $b$. The clock guarantees a fixed margin $\delta_0>0$ after every \emph{proper prefix}, meaning an initial segment strictly before the end of the bundle, including the empty segment before any play. At each such decision time,
\[
 H_{\rm parent}-u_b+vz^{\rm rem}\ge\delta_0.
\]
At the endpoint both terms vanish because no plays remain and $b$ again attains the parent maximum.

We make $g$ follow the score of $b$ while accounting for the remaining plays by setting its payoff row against the parent to $A_{b,:}+v$. Copying row $b$ reproduces its score changes, and adding $v$ makes the difference $u_g-u_b$ increase by $v_i$ when parent action $i$ is played. We initialize $g$ so that
\[
 u_g-u_b=\delta-vz^{\rm rem},\qquad 0<\delta<\min\{c,\delta_0\}.
\]
Since the same play changes $vz^{\rm rem}$ by $-v_i$, this identity persists throughout the bundle. At every proper prefix it gives
\[
 H_{\rm parent}-u_g
 =H_{\rm parent}-u_b+vz^{\rm rem}-\delta
 \ge\delta_0-\delta>0.
\]
At the endpoint, $u_g=H_{\rm parent}+\delta$. Figure~\ref{fig:trial-overview-detector} illustrates how the detector stays below the parent maximum before completion and overtakes it by $\delta$ at completion. In designing the reset, we also keep all reset actions below $g$ at this endpoint, making $g$ the unique next action. Its single play brings the first reset action to the ceiling $\zeta(q)$.

\begin{figure}[H]\centering
\resizebox{.36\linewidth}{!}{%
\begin{tikzpicture}[x=.9cm,y=.75cm,>=Latex]
\draw[->,gray] (0,0)--(5.8,0);
\draw[->,gray] (0,0)--(0,3.9);
\draw[Parent,thick] (0,2)--(1,2)--(1,2.25)--(3,2.25)--(3,2.6)--(5,2.6)--(5,2.8);
\foreach \x/\y in {0/.7,1/1.0,2/.8,3/1.6,4/1.3,5/3.1}{\fill[Gate] (\x,\y) circle(2pt);}
\draw[Gate,densely dashed] (0,.7)--(1,1)--(2,.8)--(3,1.6)--(4,1.3)--(5,3.1);
\draw[gray,densely dashed] (5,0)--(5,3.5);
\node[text=Parent,font=\small,anchor=west] at (.2,3.6) {$H_{\rm parent}$};
\node[text=Gate,font=\small] at (2.2,.35) {$u_g$};
\draw[Gate] (5.3,2.8)--(5.3,3.1);\draw[Gate] (5.22,2.8)--(5.38,2.8);\draw[Gate] (5.22,3.1)--(5.38,3.1);
\node[right,font=\small,text=Gate] at (5.3,2.95) {$\delta$};
\node[below,font=\scriptsize,align=center] at (5,-.1) {bundle end};
\node[below,font=\scriptsize,text=gray,align=center] at (2.4,-.65) {Integer-time schematic.};
\end{tikzpicture}%
}
\caption{The endpoint detector waits below the parent maximum at every proper prefix and overtakes it at the bundle endpoint. The figure is schematic. See Figure~\trialref{fig:score-chronology}(a) for a more detailed schematic of endpoint detection.}
\label{fig:trial-overview-detector}
% The plotted heights are an integer-time schematic, not numerical data; $u_g$ need not be monotone.}\label{fig:trial-overview-detector}
\end{figure}

For RPS, we use $v=(0,-1,1)$ because each phase has one more play of action $2$ than of action $1$. In later lifts, each parent phase ends with exactly one play of its endpoint action, and $v$ selects that coordinate. Section~\ref{sec:overviewendpoint} and Lemma~\ref{lem:gateway} give the detector identity and comparisons, while Proposition~\ref{prop:chronology} verifies the handoff against all the added actions.

\paragraph{3. Reset actions and score restoration.}
The reset must now restore the scores required by the next bundle, while ensuring that the fictitious-play best-response update rule selects its actions in the prescribed order. Its payoffs must also keep these actions inactive during the parent bundle just described.

Call the parent actions together with $g$ the \emph{old actions}, and write their matrix as $G_O$. We write the enlarged skew-symmetric payoff matrix as
\[
 G=\begin{pmatrix}G_O&P\\-P^\top&J-J^\top\end{pmatrix},
 \qquad P=[P_E\mid P_L],\qquad P_E<0,\quad P_L>0,
\]
Here $P$ records old--reset payoffs, with $P_E$ and $P_L$ corresponding to the early and late groups, respectively. The inequalities are entrywise. Playing an early reset action adds a negative column to every old score, whereas playing a late reset action adds a positive column. Lowering the old scores first keeps them below the reset ceiling. Raising them afterward toward their target values from below prevents them from taking over before restoration is complete.

The strictly lower-triangular matrix $J$ has positive entries below its diagonal. Hence a reset action leaves its own score fixed, lowers earlier reset scores, and raises later reset scores. These signs determine the directions of the score changes but not when an action becomes maximal. We choose the couplings and run lengths so that the next reset action reaches the ceiling at the end of the current run, while later reset actions remain below it until their own turn. The reset actions then take turns at $\zeta(q)$ in their listed order. The priority order favors the later reset action at each handoff and favors parent actions when the reset ends. Figure~\ref{fig:trial-overview-reset} shows the resulting decrease and recovery of the old scores. The ceiling is flat in this auxiliary process. We will preserve the action sequence when removing ties, without requiring the perturbed maximum to remain exactly flat.

\begin{figure}[H]\centering
\resizebox{.45\linewidth}{!}{%
\begin{tikzpicture}[x=.82cm,y=.8cm,>=Latex]
\draw[gray,->] (0,0)--(9.6,0);
\foreach \a/\b/\lab in {0/1.5/{a_0},1.5/3/{\cdots},3/4.5/{a_{r-1}},4.5/6/{a_r},6/7.5/{\cdots},7.5/9/{a_{2r-1}}}{
 \draw[Gate,very thick] (\a,4.2)--(\b,4.2);
 \draw[Gate] (\a,4.1)--(\a,4.3);
 \node[above,font=\scriptsize] at ({(\a+\b)/2},4.3) {$\lab$};}
\node[above,text=Gate,font=\small] at (4.5,4.95) {reset actions take turns at $\zeta(q)$};
\draw[Parent,very thick] (0,3.2)--(4.5,1.0);
\draw[Control,very thick] (4.5,1.0)--(9,4.2);
\draw[gray!50,densely dashed] (4.5,0)--(4.5,4.2);
\node[below,font=\small,align=center] at (2.25,0) {early reset\\old scores decrease};
\node[below,font=\small,align=center] at (6.75,0) {late reset\\old scores increase};
\node[align=center,font=\small] at (4.5,2.6) {old-score envelope\\(schematic)};
\fill[Control] (9,4.2) circle(2pt);
\node[below,font=\scriptsize,align=right,text=Control] at (8,2.05) {$b$ reaches\\the next entry};
\end{tikzpicture}%
}
\caption{With $r=k+1$, early reset plays lower every old score, while late reset plays raise the old scores toward the next entry state. Reset actions take turns at the common ceiling $\zeta(q)$. This schematic describes the auxiliary process before the fixed bias removes ties. See Figure~\trialref{fig:score-chronology}(b) for a more detailed schematic of the reset and restoration of the next entry state.}\label{fig:trial-overview-reset}
\end{figure}

Skew-symmetry of the enlarged matrix $G$ requires the reverse coupling to be $-P^\top$, so choosing how reset plays change old scores also determines how old plays change reset scores. The same couplings that make early reset plays lower the old scores make parent plays raise the early reset scores, while late reset scores move in the opposite direction. The late reset scores start at or below the parent maximum. The parent leader wins the possible entry tie by the priority order, and subsequent parent plays lower the late reset scores while the parent maximum does not decrease.

Keeping the early reset actions below the parent requires more than these signs. At any decision time during the bundle, an early reset action's current score is its endpoint score minus the payoff it will receive from the remaining parent plays. We choose its couplings so that this remaining payoff is at least the remaining increase in the parent maximum, apart from a fixed additive error. If the reset action's endpoint score is below the endpoint parent maximum by more than this error, its current score is also below the current parent maximum. This explains how an endpoint deficit keeps the action inactive throughout the bundle.

We therefore cannot choose the reset payoffs solely to restore the old scores. The same entries must prepare the reset scores during the preceding bundle, and the two tasks constrain each other. To impose both requirements for every sufficiently large $q$, we express entry scores and action counts as polynomials in $q$. A \emph{score table} or \emph{count table} is a fixed matrix whose row for each action lists the coefficients of its score or count polynomial in a common polynomial basis. Matching finitely many coefficients then enforces the score identities at every such phase index without changing the payoffs from phase to phase.

\needspace{6\baselineskip}
Let $X$ and $M$ be the coefficient tables for old-action counts during the bundle and endpoint play, and for reset counts, respectively. Let $D$ be the coefficient table of the required old-score changes during reset, and let $-Q$ be that of the required reset-score changes during those old plays. The fixed coupling matrix $P$ must satisfy both construction requirements
\[
 PM=D,\qquad P^\top X=Q.
\]
The first equation requires reset plays to restore the old scores. The second requires old plays to prepare the reset scores, whose actual change is $-P^\top X=-Q$ because the reverse coupling is $-P^\top$. These equations constrain the same entries of $P$ and cannot be solved independently. For example, multiplying the first by $X^\top$ and using the second gives the necessary compatibility identity $X^\top D=Q^\top M$. We choose the reset count polynomials and new payoffs together so that the compatibility, positivity, and sign requirements hold simultaneously. The space of polynomials of degree at most $k$ has dimension $k+1$, which motivates using $k+1$ reset actions in each group with count polynomials that form a basis for this space. Their values give positive integer run lengths once the phase index is sufficiently large.

Section~\ref{sec:polynomialdata} introduces the tables, Lemma~\ref{lem:interpolate} characterizes compatibility, and Sections~\ref{sec:moment}--\ref{sec:synthesis} construct the solution. The score identities establish restoration (Lemma~\ref{lem:resetidentity}), while the separate best-response comparisons in Proposition~\ref{prop:chronology} establish that the prescribed runs really occur.

\paragraph{4. Phase boundaries and recursion.}
To use the enlarged game as a parent, we first need phases that begin with a fixed maximizing action and have an endpoint that the next lift can detect. We obtain these features by changing where phases begin, without changing any play.

Let $\Aw(q)$ denote the parent bundle followed by the single play of $g$, and let $\Bw(q)$ denote the reset block. The trajectory we have constructed has the form
\[
 \cdots\Aw(q)\Bw(q)\Aw(q+1)\Bw(q+1)\Aw(q+2)\cdots.
\]
Instead of placing the boundary before each parent bundle, we place it before each reset block. After a finite initial prefix, the same sequence is grouped as
\[
 \cdots\underbrace{\Bw(q)\Aw(q+1)}_{\W_{k+1}(q)}
 \underbrace{\Bw(q+1)\Aw(q+2)}_{\W_{k+1}(q+1)}\cdots,
 \qquad \W_{k+1}(q)=\Bw(q)\Aw(q+1).
\]
Each new phase begins with the first reset action at the maximum. Its final play of $g$ brings that same reset action back to the maximum, ready for the next phase. Moreover, $g$ is played exactly once, at the end of the new phase. The row selecting its coordinate therefore counts one remaining play at every proper prefix and zero at completion. Together with the return of the entry leader to the maximum, this gives the endpoint comparison needed for the next lift. Figure~\ref{fig:trial-overview-rotation} shows both groupings.

To complete the induction, the enlarged game must retain the polynomial score and count structure, the comparisons needed for endpoint detection, and the remaining-height bound used to keep reset actions inactive. We must also preserve the algebraic conditions that make the reset equations solvable with the required signs at the next lift. The clock conditions are stated formally in Definition~\ref{def:clock}. Section~\ref{sec:phaseboundary} explains the regrouping, and Section~\ref{sec:next} and Proposition~\ref{prop:inherit} verify these conditions and the algebraic data needed for another lift. Theorem~\ref{thm:lift} summarizes the recursive step.

\begin{figure}[H]\centering
\resizebox{.72\linewidth}{!}{%
\begin{tikzpicture}[x=1cm,y=1cm]
\draw[gray] (0,1.3)--(0,1.55)--(6.7,1.55)--(6.7,1.3);
\draw[gray] (7,1.3)--(7,1.55)--(13.7,1.55)--(13.7,1.3);
\node[above,font=\scriptsize,text=gray] at (3.35,1.55) {outer phase $q$};
\node[above,font=\scriptsize,text=gray] at (10.35,1.55) {outer phase $q+1$};
\foreach \x/\lab/\col in {0/{\mathsf A(q)}/Parent,3.5/{\mathsf B(q)}/Control,7/{\mathsf A(q+1)}/Parent,10.5/{\mathsf B(q+1)}/Control}{
\draw[draw=\col,fill=\col!6] (\x,0) rectangle +(3.2,1.0);
\node at (\x+1.6,.5) {$\lab$};}
\draw[Gate,very thick] (3.5,-.22)--(3.5,-.6)--(10.2,-.6)--(10.2,-.22);
\node[text=Gate,below] at (6.85,-.65) {$\W_{k+1}(q)=\mathsf B(q)\mathsf A(q+1)$};
\end{tikzpicture}%
}
\caption{The same infinite word can be grouped as parent bundle plus endpoint followed by reset, or as reset followed by the next bundle plus endpoint. The latter grouping supplies the next clock with a fixed entry leader and a single final endpoint play. See Figure~\trialref{fig:rotation} for a more detailed schematic of the phase-boundary rotation.}\label{fig:trial-overview-rotation}
\end{figure}

\paragraph{From the auxiliary clock to fictitious play.}
Finally, we choose fixed score bonuses to implement the auxiliary priority order without ties. Because the payoff matrix and initial scores are rational, all auxiliary score differences lie on a fixed rational lattice, so every nonzero difference is separated from zero by a common positive amount. We choose distinct bonuses in an interval shorter than this spacing, giving larger bonuses to higher-priority actions. These resolve each tie in the prescribed order without reversing any strict comparison, at any time. We then add one setup action whose first play produces the required initial scores, including the bonuses, up to a common shift. Starting from zero counts, we prescribe this common first play for both players. All subsequent best responses are unique, and the setup action is never used again (Lemmas~\ref{lem:bias} and~\ref{lem:setup}).

To pass from phase boundaries to every integer time, consider a time between two consecutive boundaries. Monotonicity places its score height between the corresponding entry heights, while its elapsed time lies between the two boundary times. The ratios of consecutive entry heights and consecutive boundary times tend to one, so the upper and lower bounds give the same leading asymptotic for height divided by time. Fixed bonuses, a common shift, and the single setup play do not change that leading term. We therefore obtain the gap asymptotic at every integer time with the same positive leading constant (Lemma~\ref{lem:alltime}).

Let $n_k$ denote the number of actions available to each player in the final order-$k$ game, including the setup action. We start with three recurrent RPS actions, and each lift from order $j$ to order $j+1$ adds one endpoint action and two groups of $j+1$ reset actions, for a total of $2j+3$ new actions. Adding the final setup action gives

\[
 n_k=3+\sum_{j=2}^{k-1}(2j+3)+1=(k+1)^2-5.
\]
Section~\ref{sec:mainproof} completes the proof of the main theorem. The precise score states and polynomial data behind the construction are given in Section~\ref{sec:technical}.

\FloatBarrier

\section{Technical preliminaries for the construction}\label{sec:technical}
We now give the exact score states, polynomial tables, and comparisons behind the overview.

\subsection{Scores and phases in the auxiliary process}\label{sec:auxiliaryoverview}
\paragraph{Counts, scores, and state.}
To describe the recursive construction, fix a skew-symmetric matrix $A\in\Q^{m\times m}$, a rational initial score vector $f\in\Q^m$, and a total priority order $\prec$ on the actions. Earlier in $\prec$ means higher priority. Let $z_t\in\Z_{\ge0}^m$ count the $t$ plays made after initialization at score vector $f$, with $z_0=0$. The current \emph{score state} is $\mathbf u_t=f+Az_t$. We use the scalar aliases $u_{t,i}:=(\mathbf u_t)_i$ and, when time is suppressed, $u_i:=(\mathbf u)_i$ for the score of action $i$. The auxiliary process is
\begin{equation}\label{eq:process}
\begin{aligned}
 \mathbf u_0&=f,\qquad z_0=0,\\
 i_t&=\text{the highest-priority maximizer of }\mathbf u_t,\\
 z_{t+1}&=z_t+e_{i_t},\\
 \mathbf u_{t+1}&=\mathbf u_t+Ae_{i_t}.
\end{aligned}
\end{equation}
Thus, one play adds one payoff column to the score vector and one unit to a count. Recall from Section~\ref{sec:fp} 
that skew-symmetry, equal counts, and a common priority order synchronize the two players. Here $f$ need not be generated by previous plays; Lemmas~\ref{lem:bias} and~\ref{lem:setup} later realize this sequence from zero counts with unique best responses. The priority order is an auxiliary construction device; after setup, the trajectory in Theorem~\ref{thm:main} needs no tie-breaking.

The \emph{height} is the largest current score:
\[
 H_t:=\max_{0\le i<m}(\mathbf u_t)_i=\max_{0\le i<m}u_{t,i}.
\] Since $A_{ii}=0$, playing a maximizing action leaves its own score unchanged, and hence
\begin{equation}\label{eq:monotone}
 H_{t+1}\ge H_t\qquad(t\ge0).
\end{equation}
Adding a common constant to all initial scores does not change any best response. In local comparisons within a phase (for example, in Section~\ref{sec:chronology}), we write $u_i$ for the current scalar score of action $i$; we retain $\mathbf u_t$ whenever elapsed time matters.

\paragraph{Words, phases, and time.}
A \emph{word} $w=i_1\cdots i_L$ is a sequence of $L$ action labels. A prefix consists of its first $h$ plays, where $0\le h\le L$; it is \emph{proper} if $h<L$, including the empty prefix $h=0$. The suffix is the remaining word. A \emph{run} $i^L$ repeats action $i$ for $L$ plays. A word is \emph{valid} from a score state when, after every proper prefix, its next action maximizes the score over every action in the game and wins any tie by priority. Actions absent from the word are still included in this comparison.

A \emph{phase} is a designated finite word in the infinite best-response sequence generated by~\eqref{eq:process}, which we denote by $\W_k(\sigma)$ for an order-$k$ game on $m$ actions (where $k\ge2$ is the degree parameter described below and $\sigma\in\N$ is the parent phase index). Its entry and exit are the score states immediately before and after that word. All general phase statements below apply to integers $\sigma\ge Q_k$, where $Q_k$ is a fixed onset threshold. Each such phase index $\sigma$ specifies a prescribed \emph{entry score state} $\mathbf r(\sigma)\in\Q^m$, whose coordinates are polynomials of degree at most $k-1$ in $\sigma$, and a valid phase word $\W_k(\sigma)$, whose action counts form a vector $\mathbf c(\sigma)\in\Z_{\ge0}^m$ of total length $|\W_k(\sigma)|=\sum_{i=0}^{m-1}\mathbf c_i(\sigma)$. Starting from the entry score vector $\mathbf u_t=\mathbf r(\sigma)$, playing the word $\W_k(\sigma)$ adds $\mathbf c(\sigma)$ to the action counts and updates the score vector to the phase-exit state
\[
 \mathbf r(\sigma)+A\mathbf c(\sigma)=\mathbf r(\sigma+1),
\]
which is the prescribed entry score state for phase $\sigma+1$. Elapsed time $t$ counts individual plays; in the final game with setup it also includes the initial setup play. Thus, a phase index $\sigma$ is a parameter that specifies a prescribed score state $\mathbf r(\sigma)$ and word $\W_k(\sigma)$, rather than elapsed time $t$ or the cumulative number of phases already played; this distinction is essential in the inductive lift below, where newly added actions will restore the relative scores from one phase's exit state to a different target entry state $\mathbf r(\sigma')$.

\paragraph{Constants in the growth estimates.}
In estimates as a phase index or time tends to infinity, the game and all construction parameters are fixed. Constants and thresholds in the asymptotic notations $O$, $o$, and $\Theta$ may depend on those fixed data. Section~\ref{sec:synthesis} separately gives estimates uniform in parameters that are still to be chosen.

Definition~\ref{def:clock} makes the clock structure in Section~\ref{sec:trialoverview} precise. Lemma~\ref{lem:clockgrowth} proves the polynomial growth statements, and Lemma~\ref{lem:alltime} converts them to an all-integer-time gap after bias and setup.

\subsection{Polynomial score and count data}\label{sec:polynomialdata}
A central challenge of the construction is that the payoff matrix $A\in\Q^{m\times m}$ must be fixed before the score process begins, yet it must guide the trajectory through an infinite sequence of phases $\sigma$ whose entry score vectors $\mathbf r(\sigma)\in\Q^m$ and phase action-count vectors $\mathbf c(\sigma)\in\Z_{\ge0}^m$ grow as polynomials in $\sigma$. To design a fixed matrix $A$ that satisfies trajectory conditions at every phase index simultaneously, we factor each vector of polynomials into a fixed matrix of coefficients—which we call a \emph{score table} or \emph{count table}—multiplying a single column vector of basis polynomials. This reduces infinite families of polynomial score equations to finite matrix identities.

\paragraph{The binomial basis.}
We use $\sigma$ for a parent phase index and $q$ for an outer phase index. The same polynomial-table notation applies to either index. A run $i^L$ consists of $L$ consecutive plays of action $i$, so its length $L$ must be a nonnegative integer. When we specify $L$ by a polynomial in $q$, we must therefore choose a polynomial that takes integer values at integer phase indices. Integer linear combinations of binomial polynomials ensure integrality; we will separately choose a threshold above which the values are positive. For a coefficient vector of length $s\ge1$, write the column vector of binomial basis polynomials of degree at most $s-1$ as
\[
 p_{s-1}(q)=\left(1,\,q,\,\binom q2,\,\ldots,\,\binom q{s-1}\right)^{\top},
 \qquad\text{where }\binom qj=\frac{q(q-1)\cdots(q-j+1)}{j!}.
\]
All coefficient indices start at $0$. Let $e_{0,s}\in\Q^s$ be the first standard basis column (index $0$), and write its transpose as $\erow{s}=(1,0,\ldots,0)$. Then $\erow{s}p_{s-1}(q)=1$. A zero-indexed row vector $a=(a_0,a_1,\ldots,a_{s-1})\in\Q^{1\times s}$ represents the polynomial $\phi(q) = a p_{s-1}(q)=\sum_{j=0}^{s-1}a_j\binom qj$ of degree at most $s-1$, whose leading $q^{s-1}$ coefficient is $a_{s-1}/(s-1)!$. Because $\binom qj\in\Z$ for every integer $q\in\Z$, any row $a \in \Z^{1 \times s}$ with integer coefficients in this basis automatically takes integer values at all integers $q \in \Z$.

\paragraph{Score and count tables for an order-$k$ clock.}
Stacking one coefficient row for each action produces a \emph{coefficient table}: a fixed matrix $Z\in\Q^{d\times s}$ whose $(i,j)$ entry is the coefficient of $\binom qj$ for action $i$. Multiplying the table $Z$ on the right by the basis vector $p_{s-1}(q)$ evaluates all $d$ polynomials at index $q$, giving the column vector
\[
 \mathsf{ev}_q(Z):=Zp_{s-1}(q)\in\Q^d,
 \qquad\text{whose }i\text{th coordinate is }\sum_{j=0}^{s-1}Z_{ij}\binom qj.
\]
We keep a fixed coefficient table $Z$ distinct from its evaluated vector $\mathsf{ev}_q(Z)$: for the score and count data below, uppercase letters denote the fixed tables and bold lowercase letters their evaluated columns. Payoff matrices and shift operators are also uppercase.
The bold row $\chivec$ denotes height coefficients, whereas $\chi(\sigma)=\chivec p_{k-1}(\sigma)$ denotes the scalar height. 
Indexed components retain the typeface of their parent vector or row, except for explicitly defined scalar aliases.

In particular, for an order-$k$ clock on $m$ actions (whose phase polynomials have degree at most $k-1$, corresponding to table width $s=k$), the phase data at index $\sigma$ are encoded by a fixed \emph{score table} $R\in\Q^{m\times k}$ and a fixed \emph{count table} $C\in\Z^{m\times k}$:
\begin{equation}\label{eq:parenttables}
\begin{aligned}
 \mathbf r(\sigma)&:=Rp_{k-1}(\sigma)\in\Q^m,
 &\mathbf c(\sigma)&:=Cp_{k-1}(\sigma)\in\Z^m,\\
 \chivec&:=R_{b,:}\in\Q^{1\times k},
 &\chi(\sigma)&:=\chivec p_{k-1}(\sigma)\in\Q.
\end{aligned}
\end{equation}
Here $R$ and $C$ are constant matrices independent of $\sigma$: $\mathbf r(\sigma)$ gives the prescribed scores at the start of phase $\sigma$, and $\mathbf c(\sigma)$ counts plays made during that single phase $\W_k(\sigma)$ (in contrast to $z_t$, which counts cumulative completed plays from time $0$). In our clocks, a single designated action $b$ attains the maximum score at the start of every phase $\sigma$ above the onset threshold, so the $b$th row of $R$, written $\chivec=(\chivec_0,\ldots,\chivec_{k-1})$, gives the binomial coefficients of the entry height $\chi(\sigma)$. (For a general matrix $Z$, the notation $Z_{:,j}$ denotes its $j$th column, $Z_{i,:}$ denotes its $i$th row, and $\col(X,Y)=\bigl(\begin{smallmatrix}X\\Y\end{smallmatrix}\bigr)$ denotes vertical concatenation of matrices with the same number of columns.)

\paragraph{Shift matrices and why phase transitions become matrix equations.}
A second advantage of the binomial basis is that shifting the phase index $q\mapsto q+1$ or taking the discrete difference $\phi(q+1)-\phi(q)$ corresponds to multiplying the coefficient table by a fixed integer matrix. Let $I_s$ be the $s\times s$ identity matrix, and let $N_s$ be the $s\times s$ first-subdiagonal shift matrix with $(N_s)_{i,i-1}=1$ for $1\le i<s$ and zeros elsewhere. Define $\mathsf{T}_s=I_s+N_s$ and $B_s=\mathsf{T}_s^{-1}$ as the forward and backward shift matrices. 
For example, at widths $s=2$ and $s=3$,
\[
 \mathsf{T}_2=\begin{pmatrix}1&0\\1&1\end{pmatrix},\quad
 B_2=\begin{pmatrix}1&0\\-1&1\end{pmatrix},\qquad
 \mathsf{T}_3=\begin{pmatrix}1&0&0\\1&1&0\\0&1&1\end{pmatrix},\quad
 B_3=\begin{pmatrix}1&0&0\\-1&1&0\\1&-1&1\end{pmatrix}.
\]
By Pascal's identity, $\binom{q+1}{j}=\binom qj+\binom q{j-1}$ (with $\binom q{-1}=0$), the forward shift, forward difference, and backward shift of the basis vector obey the exact polynomial identities
\begin{equation}\label{eq:translations}
\begin{aligned}
 p_{s-1}(q+1)&=\mathsf{T}_sp_{s-1}(q),\\
 p_{s-1}(q+1)-p_{s-1}(q)&=N_sp_{s-1}(q),\\
 p_{s-1}(q-1)&=B_sp_{s-1}(q).
\end{aligned}
\end{equation}
Multiplying a coefficient row or table on the right by $\mathsf{T}_s$, $B_s$, or $N_s$ therefore shifts the represented polynomials forward by one ($\mathsf{ev}_q(Z\mathsf{T}_s)=\mathsf{ev}_{q+1}(Z)$), backward by one ($\mathsf{ev}_q(ZB_s)=\mathsf{ev}_{q-1}(Z)$), or takes their forward difference ($\mathsf{ev}_q(ZN_s)=\mathsf{ev}_{q+1}(Z)-\mathsf{ev}_q(Z)$). For example, take the one-row table $Z=(2,3,4)$, so that $\mathsf{ev}_q(Z)=\phi(q)=2+3q+4\binom q2$. Right-multiplication gives
\[
 Z\mathsf{T}_3=(5,7,4),\qquad ZB_3=(3,-1,4),\qquad ZN_3=(3,4,0).
\]
Thus $\mathsf{ev}_q(Z\mathsf{T}_3)=\phi(q+1)$, $\mathsf{ev}_q(ZB_3)=\phi(q-1)$, and $\mathsf{ev}_q(ZN_3)=\phi(q+1)-\phi(q)=3+4q$.

Crucially,~\eqref{eq:translations} turns the phase-transition condition $\mathbf r(\sigma)+A\mathbf c(\sigma)=\mathbf r(\sigma+1)$ into a finite matrix equation between the tables $R$ and $C$. During phase $\sigma$, playing $\W_k(\sigma)$ changes the score vector by $A\mathbf c(\sigma)=ACp_{k-1}(\sigma)$, while the required difference between consecutive entry score states is $\mathbf r(\sigma+1)-\mathbf r(\sigma)=R\bigl(p_{k-1}(\sigma+1)-p_{k-1}(\sigma)\bigr)=RN_kp_{k-1}(\sigma)$. Thus,
\[
 \mathbf r(\sigma)+A\mathbf c(\sigma)=\mathbf r(\sigma+1)\quad\text{for all }\sigma
 \qquad\Longleftrightarrow\qquad
 AC=RN_k.
\]
In the same way, every score-restoration requirement in the inductive lift from order $k$ to order $k+1$ will be expressed as a finite matrix equation between coefficient tables of width $s=k+1$.

\subsection{Exact data for the RPS base clock}\label{sec:rpsintuition}
We now express the RPS clock from Section~\ref{sec:trialoverview} in the polynomial notation above. Proposition~\ref{prop:seed} verifies its trajectory and clock conditions, and Proposition~\ref{prop:seedadmissible} gives the additional algebraic data for the first lift.

\paragraph{One phase of the RPS base clock.}
The standard RPS payoff matrix on actions $\{0,1,2\}$ is
\begin{equation}\label{eq:rpsmatrix}
 A_{\mathrm{RPS}}=\begin{pmatrix}0&-1&1\\1&0&-1\\-1&1&0\end{pmatrix}.
\end{equation}
For each integer phase index $q\ge1$, we prescribe the entry score vector $\mathbf r_2(q)=(q+1,\,-2q-11/3,\,q+2/3)^\top$. The best-response process~\eqref{eq:process} starting from $\mathbf r_2(q)$ follows the exact three-run word
\[
 \W_2(q)=0^{3q+5}1^{3q+6}2^{3q+7}
\]
with a unique maximizer at every step. Each play raises the next action's score by one and lowers the remaining action's score by one. The fractional offsets make each switch strict. Over the three runs, the height moves from $q+1$ through $q+4/3$ and $q+5/3$ to $q+2$, and the final score vector is exactly $\mathbf r_2(q+1)$. In total, the phase takes $9q+18$ plays and raises the entry height by one. The run-by-run verification appears in Proposition~\ref{prop:seed}.

In the polynomial table notation of Section~\ref{sec:polynomialdata} with basis vector $p_1(q)=(1,q)^\top$, the RPS clock has width $k=2$ and is completely summarized by
\begin{equation}\label{eq:rpstables}
 R_2=\begin{pmatrix}1&1\\-11/3&-2\\2/3&1\end{pmatrix},\qquad
 C_2=\begin{pmatrix}5&3\\6&3\\7&3\end{pmatrix},\qquad
 \W_2(q)=0^{3q+5}1^{3q+6}2^{3q+7},
\end{equation}
Thus,
\[
 \begin{aligned}
 \mathbf r_2(q)&=R_2p_1(q)=\begin{pmatrix}q+1\\-2q-11/3\\q+2/3\end{pmatrix},
 \qquad\qquad
 \mathbf c_2(q)&=C_2p_1(q)=\begin{pmatrix}3q+5\\3q+6\\3q+7\end{pmatrix}.
 \end{aligned}
\] The phase-transition identity $AC=RN_k$ takes the concrete $3\times2$ form
\[
 A_{\mathrm{RPS}}C_2=R_2N_2=\begin{pmatrix}1&0\\-2&0\\1&0\end{pmatrix}.
\]
Multiplying both sides on the right by $p_1(q)=(1,q)^\top$ confirms algebraically that playing $\W_2(q)$ changes the score vector by $(1,-2,1)^{\top}=\mathbf r_2(q+1)-\mathbf r_2(q)$ for every $q$.

The first lift retains this $3\times3$ RPS block and adds one endpoint action and six reset actions, giving ten recurrent actions. One final setup action yields the $11\times11$ game displayed in Appendix~\ref{app:examples}. The next section describes the same construction at general order.

\subsection{Retaining the parent block and adding actions}\label{sec:overviewblocks}\label{sec:overviewgrowth}
Now consider the general inductive step from an order-$k$ parent clock ($k\ge2$) with payoff matrix $A\in\Q^{m\times m}$ on $m$ actions to an order-$(k+1)$ clock. Write $\W_k(\sigma)$ for the parent's phase word at parent index $\sigma$, with entry score vector $\mathbf r(\sigma)=Rp_{k-1}(\sigma)$, phase count vector $\mathbf c(\sigma)=Cp_{k-1}(\sigma)$, and entry height $\chi(\sigma)=\mathbf r_b(\sigma)$. All phase words and score comparisons below are considered above the onset thresholds specified by the construction.

An \emph{outer phase} is one phase of the enlarged, order-$(k+1)$ construction, indexed by $q$. For each induction step we fix a positive integer scale factor $\tau\ge1$, and in outer phase $q$ we run the \emph{parent bundle} consisting of $\tau q$ consecutive parent phases:
\[
 \W_k(\tau q)\W_k(\tau q+1)\cdots\W_k(2\tau q-1).
\]
For example, $\tau=2$ and $q=3$ runs the $\tau q=6$ parent phases with indices $6,7,8,9,10,11$, exiting at parent state $12$. The next outer phase $q+1$ will start its bundle at parent index $\tau(q+1)$ (index $8$ in this example), after the newly added actions have adjusted the scores from state $2\tau q$ back to state $\tau(q+1)$. Past action counts $z_t$ are never erased; only the score vector is moved to the next prescribed entry state. The first lift ($k=2\to3$) uses $\tau=1$; for later lifts ($k\ge3$), Lemma~\ref{lem:scale} chooses a fixed integer $\tau\ge1$ from the parent data so that the correction row $\ell_*$ used in the reset counts has integer binomial coefficients. Those coefficients make its values integral at every integer phase index. The bundle transfer matrices already have integer coefficients for every integer $\tau$.

\paragraph{Shifted parent scores, outer height $\zeta(q)$, and growth across the bundle.}
Let $\zeta(q-1)$ denote the maximum score at the start of outer phase $q$, and let $\zeta(q)$ denote the flat maximum score during the reset block of outer phase $q$. At the start of the bundle, the $m$ parent actions are initialized at the uniformly shifted score vector
\[
 \mathbf r(\tau q)+\bigl(\zeta(q-1)-\chi(\tau q)\bigr)\one_m\in\Q^m,
\]
whose maximum score is $\mathbf r_b(\tau q)+\zeta(q-1)-\chi(\tau q)=\zeta(q-1)$, attained by action $b$. Because adding a common scalar to all $m$ parent coordinates leaves all pairwise score differences (and hence internal best responses) unchanged, playing the bundle $\W_k(\tau q)\cdots\W_k(2\tau q-1)$ advances the parent score vector through the shifted states $\mathbf r(\sigma)+(\zeta(q-1)-\chi(\tau q))\one_m$ for $\sigma=\tau q,\ldots,2\tau q$, exiting the bundle at
\[
 \mathbf r(2\tau q)+\bigl(\zeta(q-1)-\chi(\tau q)\bigr)\one_m\in\Q^m,
\]
where the parent maximum has risen by the telescoping sum $\chi(2\tau q)-\chi(\tau q)$ (again attained by action $b$). During the subsequent handoff and reset block, the maximum score rises by an additional rational constant $c\in(0,1)$ (chosen in Section~\ref{sec:synthesis}) to reach the reset ceiling $\zeta(q)$. Thus, $\zeta(q)$ obeys the first-order difference equation
\begin{equation}\label{eq:overviewheight}
  \zeta(q)-\zeta(q-1)=\chi(2\tau q)-\chi(\tau q)+c,
\end{equation}
so the parent bundle starts at maximum score $\zeta(q-1)$ and exits at maximum score $\zeta(q)-c$. Starting from a chosen value $\zeta(0)$,~\eqref{eq:overviewheight} uniquely determines $\zeta(q)$ for all $q\ge1$ (we give its exact coefficient row in the binomial basis in Section~\ref{sec:gateway}).

Why does bundling $\tau q$ phases and resetting the scores from $2\tau q$ back to $\tau(q+1)$ raise the clock order from $k$ to $k+1$? Write the parent's phase length and entry height in ordinary powers of $\sigma$ as
\[
 |\W_k(\sigma)|=\mathsf L_k \sigma^{k-1}+O(\sigma^{k-2}),\qquad
 \chi(\sigma)=h_k \sigma^{k-1}+O(\sigma^{k-2}),\qquad \mathsf L_k,h_k>0.
\]
In a single parent phase $\sigma$, the height rises by only $\chi(\sigma+1)-\chi(\sigma)=\Theta(\sigma^{k-2})$. Across the $\tau q$ phases of the bundle, however, the phase lengths add to degree $k$ while the height increments telescope to $\chi(2\tau q)-\chi(\tau q)=\Theta(q^{k-1})$, so summing~\eqref{eq:overviewheight} over $q$ (Lemma~\ref{lem:clockgrowth}) makes both the bundle length and the outer height $\zeta(q)$ polynomials of degree $k$ with positive leading coefficients:
\begin{align*}
 \sum_{\sigma=\tau q}^{2\tau q-1}|\W_k(\sigma)|
   &=\frac{\mathsf L_k \tau^k(2^k-1)}{k}\,q^k+O(q^{k-1}),\\
 \zeta(q)&=\frac{h_k \tau^{k-1}(2^{k-1}-1)}{k}\,q^k+O(q^{k-1}).
\end{align*}
For example, in the first lift ($k=2\to3$, $\tau=1$), the RPS parent has $h_2=1$ and $\mathsf L_2=9$, giving $\zeta(q)-\zeta(q-1)=q+c$, $\zeta(q)=\tfrac12q^2+O(q)$, and bundle length $\tfrac{27}{2}q^2+O(q)$. However, the bundle exits at parent state $2\tau q$, whereas outer phase $q+1$ must start at parent state $\tau(q+1)$ (if the parent trajectory simply continued from $2\tau q$, then the parent index would double in every outer phase, $q\mapsto 2q\mapsto 4q$, causing exponential rather than polynomial growth). We therefore add actions that detect when the bundle reaches $2\tau q$ and execute $\Theta(q^k)$ reset plays at the flat maximum score $\zeta(q)$ to restore the parent scores to state $\tau(q+1)$ (shifted to maximum $\zeta(q)$). Since the reset block also has length of degree $k$ with positive leading coefficient, the full outer phase has length $\Theta(q^k)$, entry height $\Theta(q^k)$, and cumulative time $\sum_{a\le q}\Theta(a^k)=\Theta(q^{k+1})$, increasing all three degrees by one.

\paragraph{The endpoint action $g$ and the $2(k+1)$ reset actions.}
Recall that the parent actions are indexed by $0,\ldots,m-1$, with action $b$ attaining the maximum score at every parent phase entry (though $b$ need not remain the maximizer throughout the phase). The parent clock also comes equipped with a fixed rational row vector $v\in\Q^{1\times m}$, with $v_b=0$ and $v\mathbf c(\sigma)=1$, which will detect the end of the bundle. We adjoin two types of new actions (Figure~\ref{fig:word}):
\begin{enumerate}[label=(\roman*),leftmargin=*]
 \item \emph{One endpoint action $g$} (at index $m$ in the enlarged matrix), whose payoff row against the $m$ parent actions is set to $A_{b,:}+v$. Giving $g$ the same payoffs $A_{b,:}$ as action $b$ plus the perturbation $v$ ensures that whenever any parent action $i$ is played, the score difference $u_g-u_b$ changes simply by $(A_{b,i}+v_i)-A_{b,i}=v_i$, allowing $g$ to track $u_b$ plus the weighted remaining count $vz^{\rm rem}$. We refer to the $m$ parent actions together with $g$ as the $m+1$ \emph{old actions}.
 \item \emph{$2r$ reset actions} (where $r=k+1$), labeled $\reset{0},\ldots,\reset{2r-1}$ in their order of play (the sans-serif symbols $\reset{j}$ denote action labels, distinct from the constant $c$ and prefactors $c_k$). The first $r$ actions $\reset{0},\ldots,\reset{r-1}$ are the \emph{early reset actions}, and the last $r$ actions $\reset{r},\ldots,\reset{2r-1}$ are the \emph{late reset actions}. Together they move the old actions' scores to the entry state required for the next outer phase $q+1$.
\end{enumerate}

\begin{figure}[H]\centering
\begin{tikzpicture}[font=\small,node distance=3mm]
\node[parbox,text width=39mm] (p) {parent bundle\\$\W_k(\tau q)\cdots\W_k(2\tau q-1)$};
\node[gatbox,right=of p,text width=14mm] (g) {endpoint\\action $g$};
\node[conbox,right=of g,text width=29mm] (e) {$k+1$ early\\reset actions,\\old scores decrease};
\node[conbox,right=of e,text width=29mm] (l) {$k+1$ late\\reset actions,\\old scores increase};
\draw[arr] (p)--(g);\draw[arr] (g)--(e);\draw[arr] (e)--(l);
\node[below=3mm of p,align=center,font=\footnotesize] {$\tau q$ parent phases};
\node[below=3mm of g,align=center,font=\footnotesize] {one play};
\draw[decorate,decoration={brace,mirror,amplitude=4pt},draw=Control] ($(e.south west)+(0,-.15)$)--($(l.south east)+(0,-.15)$)
 node[midway,below=5pt,align=center,text width=62mm,font=\footnotesize] {scores reach the next entry state};
\end{tikzpicture}
\caption{The sequence in one phase of the enlarged game: a bundle of parent phases, one play of $g$, and the early and late reset runs. Section~\ref{sec:phaseboundary} later moves the phase boundary within this same sequence. All payoffs and parameters are fixed before play begins.}\label{fig:word}
\end{figure}

\paragraph{The new matrix.}
Let $G$ denote the enlarged recurrent payoff matrix of size $(m+1+2r)\times(m+1+2r)$ on the $m$ parent actions, $g$, and the $2r=2(k+1)$ reset actions (Figure~\ref{fig:matrix}). Its score and count vectors obey~\eqref{eq:process} with $G$ in place of $A$ (the single setup action is adjoined only after all lifts are complete). The top-left $m\times m$ parent block is unchanged equal to $A$, and 
skew-symmetry fixes the column of $g$ against the parent actions as $-(A_{b,:}+v)^\top$.

Write $P=[P_E\mid P_L]$ for the $(m+1)\times 2r$ coupling block from reset plays to old scores, where $P_E,P_L\in\Q^{(m+1)\times r}$. Because playing an early (or late) reset action adds the corresponding column of $P_E$ (or $P_L$) to the old actions' scores, we require entrywise negativity $P_E<0$ so that every early reset play decreases all old scores, and entrywise positivity $P_L>0$ so that every late reset play increases all old scores (these inequalities are entrywise, not matrix definiteness). Why use two groups of size $r=k+1$ with opposite signs $P_E<0$ and $P_L>0$? First, outer-phase counts are degree-$k$ polynomials, represented in a $(k+1)$-dimensional space. We choose $r=k+1$ actions per group with linearly independent count polynomials. These two full-rank blocks let us solve the coefficient identities while enforcing opposite signs; we do not claim that this number of actions is necessary. In fact, the required old-score change has degree at most $k-1$ (Lemma~\ref{lem:gateway}). Second, during the reset block, pushing old scores down first ($P_E<0$) and pulling them up second ($P_L>0$) makes every old score approach its next-entry target ($\le\zeta(q)$) from below without exceeding the reset ceiling $\zeta(q)$; simultaneously, by skew-symmetry ($-P^\top$), during the parent bundle, parent plays make early reset scores rise ($-P_E^\top>0$) toward their handoff value while late reset scores fall ($-P_L^\top<0$) and stay below the parent height.

Finally, the bottom-right $2r\times2r$ reset-reset block has the form $J-J^\top$, where $J\in\Q^{2r\times2r}$ is strictly lower triangular with positive entries below its diagonal ($J_{j,\ell}>0$ for $j>\ell$). This sign structure makes the reset actions $\reset{0},\ldots,\reset{2r-1}$ take turns in exact order at the flat maximum score $\zeta(q)$: when $\reset{\ell}$ is played, column $\ell$ of $J-J^\top$ has $0$ on the diagonal (keeping $\reset{\ell}$'s own score flat at $\zeta(q)$), negative entries $-J_{\ell,j}<0$ above the diagonal (pushing all earlier reset actions $\reset{0},\ldots,\reset{\ell-1}$ down below $\zeta(q)$), and positive entries $+J_{j,\ell}>0$ below the diagonal (pulling all later reset actions $\reset{\ell+1},\ldots,\reset{2r-1}$ up until $\reset{\ell+1}$ reaches $\zeta(q)$ and takes over). Sections~\ref{sec:moment}--\ref{sec:synthesis} construct $P$ and $J$ so that these sign requirements and the exact polynomial score-return equations hold simultaneously.

\begin{figure}[H]\centering
\begin{tikzpicture}[font=\small,x=1cm,y=1cm]
\draw[draw=Parent,fill=Parent!6] (0,0) rectangle (3,2.4);
\node at (1.5,1.2) {$A$};
\draw[draw=Gate,fill=Gate!8] (0,-.65) rectangle (3,0);
\node at (1.5,-.325) {$A_{b,:}+v$};
\draw[draw=Gate,fill=Gate!8] (3,0) rectangle (3.7,2.4);
\node[rotate=90] at (3.35,1.2) {$-(A_{b,:}+v)^{\top}$};
\draw[draw=Gate,fill=Gate!5] (3,-.65) rectangle (3.7,0);
\node at (3.35,-.325) {$0$};
\draw[draw=Control,fill=Control!6] (3.7,-.65) rectangle (5.7,2.4);
\node[align=center] at (4.7,.875) {$P_E<0$\\early};
\draw[draw=Control,fill=Control!12] (5.7,-.65) rectangle (7.7,2.4);
\node[align=center] at (6.7,.875) {$P_L>0$\\late};
\draw[draw=Control,fill=Control!6] (0,-1.55) rectangle (3.7,-.65);
\node at (1.85,-1.1) {$-P_E^{\top}$};
\draw[draw=Control,fill=Control!12] (0,-2.45) rectangle (3.7,-1.55);
\node at (1.85,-2) {$-P_L^{\top}$};
\draw[draw=Ink,fill=gray!5] (3.7,-2.45) rectangle (7.7,-.65);
\node at (5.7,-1.55) {$J-J^{\top}$};
\node[above,align=center,font=\footnotesize] at (1.5,2.4) {parent actions\\($m$)};
\node[above,align=center,font=\footnotesize] at (3.35,2.4) {$g$\\(1)};
\node[above,align=center,font=\footnotesize] at (4.7,2.4) {early reset\\($r$)};
\node[above,align=center,font=\footnotesize] at (6.7,2.4) {late reset\\($r$)};
\end{tikzpicture}
\caption{Block structure of the fixed skew-symmetric payoff matrix $G$, with $r=k+1$. Row groups follow the same order as column groups; parentheses give the number of actions. The parent block $A$ is unchanged. The signs shown for $P_E$ and $P_L$ are requirements on the new coupling; Lemma~\ref{lem:uniform} proves that they can be obtained together with both reset equations.}\label{fig:matrix}
\end{figure}

\subsection{Detecting the endpoint of a parent bundle}\label{sec:overviewendpoint}
How does the endpoint action $g$ stay inactive throughout the $\tau q$ phases of the parent bundle $\W_k(\tau q)\cdots\W_k(2\tau q-1)$ and then step in as the unique maximizer at the exact end of $\W_k(2\tau q-1)$? Fix an outer phase $q$, and at any decision time within the bundle, let $\mathbf u\in\Q^{m+1+2(k+1)}$ denote the current score vector in the enlarged game (suppressing the time subscript $t$ from $\mathbf u_t$), let $H_{\rm parent}=\max_{0\le i<m}u_i$ be the current maximum among the $m$ parent actions, and let $z^{\rm rem}\in\Z_{\ge0}^m$ be the column vector whose coordinate $z_i^{\rm rem}$ counts the plays of parent action $i$ still remaining in the current bundle (whereas $z_t$ counts all completed plays since $t=0$).

At the exact end of the bundle, no plays remain ($z^{\rm rem}=0$) and the parent is at state $2\tau q$, where action $b$ attains the parent maximum ($H_{\rm parent}-u_b=0$). Thus, both $H_{\rm parent}-u_b$ and $vz^{\rm rem}$ vanish at the bundle endpoint. Conversely, the parent clock conditions guarantee a rational margin $\delta_0>0$ such that at every \emph{proper} prefix of the bundle, their sum is bounded away from zero:
\[
 H_{\rm parent}-u_b+vz^{\rm rem}\ge\delta_0.
\]
Here $v\in\Q^{1\times m}$ is the endpoint-detection row from Section~\ref{sec:overviewblocks}, so $vz^{\rm rem}$ is a scalar. Because $v$ may have negative entries, $vz^{\rm rem}$ need not decrease monotonically at every step; what matters is that the combined expression $H_{\rm parent}-u_b+vz^{\rm rem}$ stays at least $\delta_0>0$ until the final play of the bundle. For instance, in the RPS base clock ($k=2$), where $\W_2(\sigma)=0^{3\sigma+5}1^{3\sigma+6}2^{3\sigma+7}$, the vector $v=(0,-1,1)$ gives $v\mathbf c_2(\sigma)=(3\sigma+7)-(3\sigma+6)=1$ in every phase and satisfies $vz^{\rm rem}\ge1$ across every proper prefix of the bundle (so $\delta_0=1$); in all higher lifts ($k\ge3$), $v$ is simply the unit coordinate row of the parent's own endpoint action, which plays once at the end of each parent phase and hence automatically satisfies $vz^{\rm rem}\ge1$ before the bundle ends.

\paragraph{Initialization and the score identity.}
Given the outer height-gain constant $c\in(0,1)$ (chosen in Section~\ref{sec:synthesis}), we set $\delta=\tfrac12\min\{c,\delta_0\}$, so $0<\delta<\min\{c,\delta_0\}$. At the start of the bundle, the reset block initializes $g$'s score relative to $b$ so that $u_g-u_b=\delta-vz^{\rm rem}$ (since $v\mathbf c(\sigma)=1$ for each of the $\tau q$ parent phases in the bundle, $vz^{\rm rem}=\tau q$ initially, so $u_g-u_b=\delta-\tau q$). Because $g$ has payoff row $A_{b,:}+v$ against the parent actions, each play of a parent action $i$ changes $u_g-u_b$ by $(A_{b,i}+v_i)-A_{b,i}=v_i$ while replacing $z^{\rm rem}$ by $z^{\rm rem}-e_i$ (decreasing $vz^{\rm rem}$ by $v_i$). Consequently, the sum $(u_g-u_b)+vz^{\rm rem}$ is invariant, and the identity
\begin{equation}\label{eq:overviewendpointidentity}
 u_g-u_b=\delta-vz^{\rm rem}
\end{equation}
holds at every decision time throughout the bundle. In particular, $u_g-u_b$ rises by $+1$ across each completed parent phase in the bundle.

\paragraph{Waiting at every proper prefix.}
Subtracting~\eqref{eq:overviewendpointidentity} from the parent maximum $H_{\rm parent}$ gives, at every proper prefix of the bundle,
\[
 H_{\rm parent}-u_g
 =H_{\rm parent}-u_b+vz^{\rm rem}-\delta
 \ge\delta_0-\delta>0.
\]
Thus, $u_g<H_{\rm parent}$ at every proper prefix, so $g$ cannot interrupt the parent bundle even when $b$ is not the current parent maximizer (the reset actions are bounded below $H_{\rm parent}$ separately in Section~\ref{sec:overviewreset}).

\paragraph{The endpoint.}
Immediately after the final play of the bundle $\W_k(\tau q)\cdots\W_k(2\tau q-1)$, no parent plays remain ($z^{\rm rem}=0$) and the parent is at state $2\tau q$, where $b$ attains the parent maximum $H_{\rm parent}=\zeta(q)-c$. By~\eqref{eq:overviewendpointidentity}, $g$'s score is now
\[
 u_g=u_b+\delta=H_{\rm parent}+\delta=\zeta(q)-c+\delta.
\]
Once the reset actions are also shown to lie strictly below this value, $g$ becomes the unique maximizer of the enlarged game and plays once.

\paragraph{Immediately after the play of $g$.}
In Figure~\ref{fig:matrix}, the column of $G$ for action $g$ against the parent actions is $-(A_{b,:}+v)^\top = Ae_b-v^\top$ (using $A^\top=-A$). Thus, the single play of $g$ changes the parent score vector by $Ae_b-v^\top$: it leaves $b$'s score unchanged at $\zeta(q)-c$ (since $A_{bb}=v_b=0$) and leaves $g$'s own score unchanged at $\zeta(q)-c+\delta$ (since $G_{g,g}=0$). Because every parent phase starts with at least two consecutive plays of $b$, adding $Ae_b$ at parent state $2\tau q$ keeps $b$ a maximizing parent action. Outside $\supp(v)$, subtracting $v^\top$ leaves this comparison unchanged; on $\supp(v)$, the clock's entry-safety condition makes $b$'s lead strict after the full update $Ae_b-v^\top$ (Lemma~\ref{lem:gateway}). Simultaneously, this single play of $g$ raises the first reset action $\reset{0}$ to the reset ceiling $\zeta(q)>\zeta(q)-c+\delta$, so the reset block takes over.

\subsection{How the added actions wait and restore the scores}\label{sec:overviewreset}
Figure~\ref{fig:score-chronology} illustrates the score trajectories and comparisons across outer phase $q$: panel (a) covers the parent bundle $\W_k(\tau q)\cdots\W_k(2\tau q-1)$ and the endpoint play of $g$, while panel (b) covers the $2(k+1)$ reset runs at the flat ceiling $\zeta(q)$. We explain each panel and then write down the matrix equations that choose the coupling block $P$.

\begin{figure}[H]\centering
\resizebox{.98\linewidth}{!}{% Insertable vector figure: requires tikz, amsmath, amssymb, and arrows.meta.
% Auxiliary fixed-priority scores only. Coordinates have no numeric meaning.
% All comparisons refer to integer decision times, q >= Q_out, k >= 2.
\begingroup
\definecolor{FivParent}{HTML}{24566F}
\definecolor{FivEndpoint}{HTML}{9B622C}
\definecolor{FivReset}{HTML}{536945}
\definecolor{FivInk}{HTML}{263443}
\definecolor{FivLate}{HTML}{684895}
\begin{tikzpicture}[x=1cm,y=1cm,
  font=\fontsize{8.7}{10.5}\selectfont,text=FivInk,
  axis/.style={draw=FivInk!70,line width=.45pt,-{Latex[length=1.5mm]}},
  guide/.style={draw=FivInk!30,densely dashed,line width=.35pt},
  pcurve/.style={draw=FivParent,line width=1.1pt},
  ecurve/.style={draw=FivReset,line width=.9pt},
  lcurve/.style={draw=FivLate,line width=.85pt,dash pattern=on 3pt off 1.8pt},
  smallnote/.style={font=\fontsize{8}{9.7}\selectfont,align=left},
  boundary/.style={font=\fontsize{8.3}{10.2}\selectfont,align=left,anchor=north west,inner sep=0pt}]
\path[use as bounding box] (0,-9.88) rectangle (16.25,.25);
\node[anchor=west,font=\bfseries\fontsize{9.6}{11.5}\selectfont] at (0,0)
 {Auxiliary scores before the fixed tie-breaking bias};
\node[anchor=east] at (16.2,0) {$r=k+1$};

% PANEL A: the bundle, plus exactly one update by the endpoint action.
\begin{scope}[shift={(0,-.65)}]
\node[anchor=west,font=\bfseries] at (0,0) {(a) Parent bundle and one endpoint play};
\fill[FivParent!4] (1,-3.72) rectangle (11.05,-.45);
\fill[FivEndpoint!7] (11.05,-3.72) rectangle (13.65,-.45);
\draw[axis] (1,-3.72)--(14.02,-3.72);
\draw[axis] (1,-3.72)--(1,-.33);
\node[anchor=east] at (.87,-.43) {score};
\draw[guide] (11.05,-3.72)--(11.05,-.52);
\draw[guide] (13.65,-3.72)--(13.65,-.52);
% Only boundary-height reference segments: no extraneous numeric axis scale.
\draw[guide] (11.05,-.85)--(14.0,-.85);
\node[anchor=west] at (14.0,-.85) {$\zeta(q)$};
\draw[guide] (13.65,-1.55)--(14.0,-1.55);
\node[anchor=west,text=FivEndpoint] at (14.0,-1.55) {$\zeta(q)-c+\delta$};
\draw[guide] (13.65,-2.00)--(14.0,-2.00);
\node[anchor=west,text=FivParent] at (14.0,-2.00) {$\zeta(q)-c$};
% A coarse monotone envelope: its corners do not specify actual jump times.
\draw[pcurve] (1,-2.62)--(2.65,-2.62)--(2.65,-2.48)--(4.2,-2.48)
 --(4.2,-2.32)--(6.0,-2.32)--(6.0,-2.16)--(8.35,-2.16)
 --(8.35,-2.00)--(13.65,-2.00);
\node[anchor=west,text=FivParent,fill=white,inner sep=1.5pt] at (1.8,-1.84)
 {$H_{\rm parent}$};
% \reset{0} strictly rises on parent plays, stays below H_parent, then jumps on g.
\draw[ecurve] (1,-3.26)--(11.05,-2.39)--(13.65,-.85);
\fill[FivReset] (11.05,-2.39) circle (1.7pt);
\fill[FivReset] (13.65,-.85) circle (1.9pt);
\node[anchor=north west,text=FivReset,fill=white,inner sep=1pt] at (2.0,-2.98)
 {$\reset{0}$};
% Last reset action: initial tie, then strictly falls on parent and g plays.
% No late-reset curve is drawn: its crossing time with \reset{0} is not fixed.
\draw[FivLate,fill=white,line width=.9pt] (1,-2.62) circle (2.5pt);
\fill[FivParent] (1,-2.62) circle (1.2pt);
% g: certified points only; no line is asserted inside the parent bundle.
% The initial g value is stated exactly below, without ordering it against \reset{0}.
\draw[draw=FivEndpoint,line width=.9pt] (11.05,-1.55)--(13.65,-1.55);
\fill[FivEndpoint] (11.05,-1.55) circle (1.8pt);
\fill[FivEndpoint] (13.65,-1.55) circle (1.8pt);
\node[anchor=south,text=FivEndpoint] at (12.1,-1.52) {$g$};
\draw[draw=FivEndpoint,{Latex[length=1.2mm]}-{Latex[length=1.2mm]},line width=.5pt]
 (10.58,-1.55)--(10.58,-2.00);
\node[anchor=east,text=FivEndpoint] at (10.48,-1.76) {$\delta$};
\node[anchor=south,text=FivReset] at (12.55,-.82) {$\reset{0}$ unique};
\node[anchor=north] at (1,-3.79) {entry};
\node[anchor=north] at (11.05,-3.79) {bundle end};
\node[anchor=north] at (13.65,-3.79) {after $g$};
\node[anchor=north,align=center,text=FivParent] at (5.7,-3.83)
 {Parent bundle};
\node[anchor=north,align=center,text=FivEndpoint] at (12.35,-4.14)
 {$g$ plays once};
\end{scope}

% PANEL B: a reset maximum, not the score of a single reset action.
\begin{scope}[shift={(0,-5.48)}]
\node[anchor=west,font=\bfseries] at (0,0) {(b) Reset actions and return to the next entry state};
\fill[FivReset!5] (1,-3.38) rectangle (7.6,-.55);
\fill[FivLate!4] (7.6,-3.38) rectangle (14.65,-.55);
\draw[axis] (1,-3.38)--(15.0,-3.38);
\draw[guide] (7.6,-3.38)--(7.6,-.65);
\draw[guide] (14.65,-3.38)--(14.65,-.65);
\node[anchor=east] at (.89,-1.02) {$\zeta(q)$};
\draw[draw=FivReset,line width=1.5pt] (1,-1.02)--(14.65,-1.02);
% The ellipses avoid coincident indexed labels at k=2.
\foreach \xx/\lab in {1.75/\reset{0},3.3/\cdots,5.4/\reset{r-1},8.55/\reset{r},10.6/\cdots,13.3/\reset{2r-1}} {
  \node[anchor=south,text=FivReset] at (\xx,-.96) {$\lab$};
}
\foreach \xx in {2.5,4.3,7.6,9.55,11.7} {
  \draw[draw=FivReset,line width=.7pt] (\xx,-1.10)--(\xx,-.94);
}
% Only parent maximum is traced. g and individual old-score orderings are not invented.
\draw[pcurve] (1,-1.84)--(3.8,-2.58)--(7.6,-2.98)--(11.4,-2.12)--(14.65,-1.02);
\node[anchor=north west,text=FivParent,fill=white,inner sep=1pt] at (1.48,-2.00)
 {$H_{\rm parent}$};
\draw[FivReset,fill=white,line width=.9pt] (14.65,-1.02) circle (2.5pt);
\fill[FivParent] (14.65,-1.02) circle (1.2pt);
\node[anchor=west,text=FivParent,align=left] at (14.78,-1.27) {$b$ wins\\final tie};
\node[anchor=north] at (1,-3.45) {after $g$};
\node[anchor=north] at (7.6,-3.45) {early / late};
\node[anchor=north] at (14.65,-3.45) {next entry};
\node[anchor=north,align=center,text=FivReset] at (4.3,-3.45)
 {Early reset: $P_E<0$};
\node[anchor=north,align=center,text=FivLate] at (11.1,-3.45)
 {Late reset: $P_L>0$};
\end{scope}

\end{tikzpicture}
\endgroup
}
\caption{Score comparisons before the fixed tie-breaking bias. (a) The parent bundle ends when $g$ first exceeds the parent height; one play of $g$ starts the reset block. (b) Successive reset actions attain the constant maximum $\zeta(q)$ while the old scores decrease and then rise to the next entry state. The curves are schematic and refer to integer decision times. The bias preserves the action sequence but need not preserve the flat maximum.}
\label{fig:score-chronology}

\end{figure}

\paragraph{Panel (a): the added actions wait for the bundle to end.}
As shown in Section~\ref{sec:overviewendpoint}, the endpoint action $g$ stays strictly below the parent height $H_{\rm parent}$ until the final play of the bundle, when $H_{\rm parent}$ reaches $\zeta(q)-c$ and $u_g$ reaches $\zeta(q)-c+\delta$. What keeps the $2(k+1)$ reset actions from interrupting the bundle?
\begin{itemize}[leftmargin=*]
 \item \emph{Early reset actions ($\reset{0},\ldots,\reset{r-1}$):} Because $P_E<0$, each parent play changes an early reset action's score by $-P_{ij}>0$, so early reset scores \emph{rise} during the bundle. To prevent $\reset{j}$ from overtaking the parent prematurely, we use the parent clock's remaining-height bound (Condition~(III) of Definition~\ref{def:clock} and Lemma~\ref{lem:bundle}): the parent has a fixed constant $\mathcal A_k\ge0$, and for any rational slope $\epsilon>0$ there is a phase threshold above which every prefix of the bundle satisfies
 \[
  H_{\rm end}-H_{\rm prefix} \le \epsilon |\W^{\rm rem}|+\mathcal A_k,
 \]
 where $H_{\rm end}=\zeta(q)-c$ is the bundle-exit height, $H_{\rm prefix}$ is the current parent height, and $|\W^{\rm rem}|=\one_m^\top z^{\rm rem}$ is the number of parent plays remaining in the bundle. If we choose the payoff magnitudes so that $-P_{ij}\ge\epsilon$ for all parent actions $i$ and prescribe $\reset{j}$'s score at the bundle endpoint to satisfy $u_{\reset{j}}^{\rm end}\le H_{\rm end}-\Delta$ with a deficit $\Delta>\mathcal A_k$, then working backward by $|\W^{\rm rem}|$ remaining parent plays gives
 \[
  u_{\reset{j}} \le u_{\reset{j}}^{\rm end}-\epsilon|\W^{\rm rem}|
  \le H_{\rm end}-\Delta-\epsilon|\W^{\rm rem}|
  \le H_{\rm prefix}-(\Delta-\mathcal A_k) < H_{\rm prefix}.
 \]
 The $\epsilon|\W^{\rm rem}|$ terms cancel, keeping every early reset action strictly below the parent height by at least $\Delta-\mathcal A_k>0$ throughout the bundle.
 \item \emph{Late reset actions ($\reset{r},\ldots,\reset{2r-1}$):} Because $P_L>0$, each parent play changes a late reset action's score by $-P_{ij}<0$, so late reset scores \emph{fall} throughout the bundle. At the very start of the bundle (the handoff from outer phase $q-1$), the last late reset action $\reset{2r-1}$ starts at $\zeta(q-1)$, tied with $b$; the parent action $b$ wins this entry tie by priority, and its immediate play pushes $\reset{2r-1}$ strictly below $\zeta(q-1)$.
\end{itemize}

\paragraph{Panel (b): the reset actions restore the old scores.}
At the bundle endpoint, the single play of $g$ raises the first early reset action $\reset{0}$ to the reset ceiling $\zeta(q)$. Because $\delta<c$, every old action lies at or below $\zeta(q)-c+\delta<\zeta(q)$, so $\reset{0}$ is the unique maximizer. The $2r$ reset actions $\reset{0},\ldots,\reset{2r-1}$ then take turns in order at the flat maximum score $\zeta(q)$ (driven by the lower-triangular entries of $J-J^\top$). During the $r$ early reset runs, $P_E<0$ pushes every old action's score downward; during the $r$ late reset runs, $P_L>0$ pulls every old action's score upward so that it approaches its next-entry target from below, staying strictly below $\zeta(q)$ until the final play of $\reset{2r-1}$. At that final play, $b$ reaches $\zeta(q)$ to tie $\reset{2r-1}$ and wins the handoff by priority, starting the next parent bundle at parent state $\tau(q+1)$ (shifted to maximum $\zeta(q)$). These statements describe the auxiliary fixed-priority process~\eqref{eq:process}; Lemma~\ref{lem:bias} later removes the boundary ties via a small initial score bias without altering the play sequence. This bias need not preserve the exactly flat reset maximum shown in the illustration.

\paragraph{The two sets of score changes.}
The reset must return every score to its next-entry value. Using the basis $p_k(q)$ from Section~\ref{sec:polynomialdata}, define width-$r=k+1$ coefficient tables by
\[
 \begin{aligned}
 \mathbf s(q)&:=Sp_k(q),& \mathbf x(q)&:=Xp_k(q),\\
 \mathbf s_{\mathrm{mid}}(q)&:=S_{\mathrm{mid}}p_k(q),&
 \mathbf d(q)&:=Dp_k(q),\qquad \mathbf m(q):=Mp_k(q).
 \end{aligned}
\]
The first four tables have $m+1$ rows, one per old action; $M\in\Z^{2r\times r}$ gives the reset run lengths. Their roles are:
\begin{itemize}[leftmargin=*]
 \item $\mathbf s(q)$ gives the old scores before the parent bundle and the single play of $g$. The old-action count vector $\mathbf x(q)$ counts the plays in exactly that word.
 \item With $G_O\in\Q^{(m+1)\times(m+1)}$ the old payoff block, the intermediate scores just after $g$ are $\mathbf s_{\mathrm{mid}}(q)=\mathbf s(q)+G_O\mathbf x(q)$. Thus, $S_{\mathrm{mid}}=S+G_OX$.
 \item The next-entry target is $\mathbf s(q+1)=S\mathsf{T}_rp_k(q)$. The reset must supply $\mathbf d(q)=\mathbf s(q+1)-\mathbf s_{\mathrm{mid}}(q)$, or $D=S\mathsf{T}_r-S_{\mathrm{mid}}=SN_r-G_OX$.
\end{itemize}
Let $Q\in\Q^{2r\times r}$ denote the coefficient table for the \emph{negative} of the reset-score change caused by the old word. Thus, that score change is $-Qp_k(q)$.
For the reset plays $\mathbf m(q)=Mp_k(q)$ to produce the required old-score change $\mathbf d(q)=Dp_k(q)$ while the old plays $\mathbf x(q)=Xp_k(q)$ produce the required reset-score change $-Qp_k(q)$ (via $-P^\top$), the fixed coupling matrix $P\in\Q^{(m+1)\times2(k+1)}$ must satisfy both matrix equations simultaneously:
\[
 \begin{aligned}
 PM&=D,\\
 P^{\top}X&=Q.
 \end{aligned}
\]
Together they require the compatibility identity
\[
 D^{\top}X=(PM)^{\top}X=M^{\top}(P^{\top}X)=M^{\top}Q.
\]
Lemma~\ref{lem:interpolate} shows that when $X$ and $M$ have full column rank $k+1$, this compatibility identity $D^\top X=M^\top Q$ is both necessary and sufficient for a solution $P$ to exist. We derive the explicit tables $S,X,S_{\mathrm{mid}},D,Q$ in Section~\ref{sec:gateway}, choose $M,P,J$ with the required signs ($M_{:,k}>0$ in its highest-degree column, $P_E<0$, $P_L>0$, and $J$ positive below its diagonal) in Sections~\ref{sec:moment}--\ref{sec:synthesis}, and verify every best-response comparison in Section~\ref{sec:chronology}.

\subsection{Making the recursion repeatable}\label{sec:phaseboundary}
To apply the inductive step again from order $k+1$ to order $k+2$, the enlarged game must itself satisfy all conditions of an order-$(k+1)$ clock (Definition~\ref{def:clock}). We make the entry-leadership and endpoint-detection conditions immediate by placing the official phase boundary of the order-$(k+1)$ clock just before the first reset action $\reset{0}$ rather than before the parent bundle.

Write $\Aw(q)$ for the parent bundle $\W_k(\tau q)\cdots\W_k(2\tau q-1)$ followed by the single play of $g$, and write $\Bw(q)$ for the reset block of $\reset{0},\ldots,\reset{2r-1}$ runs. In the preceding sections we analyzed outer phase $q$ in the order $\Aw(q)\Bw(q)$. Because the full trajectory is the infinite concatenation $\Aw(q)\Bw(q)\Aw(q+1)\Bw(q+1)\cdots$, regrouping the adjacent blocks as
\[
  \W_{k+1}(q)=\Bw(q)\Aw(q+1)
\]
leaves the infinite play sequence completely unchanged while shifting the phase boundaries (Figure~\ref{fig:rotation}). Under this grouping:
\begin{itemize}[leftmargin=*]
 \item \emph{New entry leader $b_+=\reset{0}$:} Phase $\W_{k+1}(q)$ begins with the first reset run $\reset{0}^{m_0(q)}$ of $\Bw(q)$ at the flat maximum score $\zeta(q)$, so $b_+=\reset{0}$ is the fixed entry-maximizing action. We choose the onset threshold large enough that this run has at least two plays in every phase.
 \item \emph{New endpoint counter $v_+=e_g^\top$:} Phase $\W_{k+1}(q)$ ends with the single play of $g$ at the end of $\Aw(q+1)$, and $g$ is never played earlier in $\Bw(q)\Aw(q+1)$. Thus, the unit coordinate row $v_+=e_g^\top$ counts exactly one remaining play of $g$ ($v_+z^{\rm rem}=1$) at every proper prefix of $\W_{k+1}(q)$ and drops to $0$ after the final play, satisfying the endpoint-detection inequality $H_{\rm prefix}-u_{b_+}+v_+z^{\rm rem}\ge1$ automatically with margin $\delta_{0,+}=1$.
\end{itemize}

\begin{figure}[htbp]\centering
\begin{tikzpicture}[font=\small,x=1cm,scale=.95,transform shape]
\node[parbox,text width=27mm] (a) at (0,0) {$\Aw(q)$\\{\footnotesize parent bundle $+ g$}};
\node[conbox,text width=27mm] (b) at (3.3,0) {$\Bw(q)$\\{\footnotesize starts with $\reset{0}$}};
\node[parbox,text width=27mm] (c) at (6.6,0) {$\Aw(q+1)$\\{\footnotesize ends with single $g$}};
\node[conbox,text width=27mm] (k) at (9.9,0) {$\Bw(q+1)$\\{\footnotesize starts with $\reset{0}$}};
\draw[arr] (a)--(b);\draw[arr] (b)--(c);\draw[arr] (c)--(k);
\draw[decorate,decoration={brace,mirror,amplitude=6pt},thick,Accent] ($(b.south west)+(0,-.15)$)--($(c.south east)+(0,-.15)$)
  node[midway,below=7pt,align=center,text=Accent] {\textbf{Rotated order-$(k+1)$ phase $\W_{k+1}(q) = \Bw(q)\Aw(q+1)$}\\entry leader $b_+ = \reset{0}$ at height $\zeta(q)$; endpoint counter $v_+ = e_g^{\top}$ ($\delta_{0,+}=1$)};
\end{tikzpicture}
\caption{Rotating the phase boundaries from $\Aw(q)\Bw(q)$ to $\W_{k+1}(q) = \Bw(q)\Aw(q+1)$ without changing the infinite trajectory of play. In $\W_{k+1}(q)$, the endpoint action $g$ is the very last play of the phase.}\label{fig:rotation}
\end{figure}

Finally, to solve $P^\top X=Q$ with the required signs $P_E<0$ and $P_L>0$, our algebraic construction uses a positive-weight condition (Definition~\ref{def:admissible}): certain height-related polynomials must be expressible as a combination of the rows of the parent's count table $C$ with a strictly positive weight on every row. Section~\ref{sec:next} shows how to choose the free parameters of the current lift so that the new count table $C_{k+1}$ inherits this positive-weight property for the next lift, and Proposition~\ref{prop:inherit} completes the induction.

\paragraph{Action counts.}
Let $m_k$ be the number of recurrent actions in the square order-$k$ clock and let $n_k=m_k+1$ include its single setup action (distinct from the dimensions $m,n$ of a general rectangular game in Section~\ref{sec:fp}). Starting from the $3$-action RPS clock ($m_2=3$), each lift from order $k$ to order $k+1$ adds $1$ endpoint action and $2(k+1)$ reset actions, and one setup action is adjoined after reaching the target order:
\[
 m_k=3+\sum_{j=2}^{k-1}(2j+3)=(k+1)^2-6,
 \qquad n_k=m_k+1=(k+1)^2-5.
\]
For $k=3$, this gives $m_3=10$ recurrent actions ($3$ RPS, $1$ endpoint, and $6$ reset actions) and $n_3=11$ actions including setup, with quadratic phase length and entry height and cubic elapsed time.

\FloatBarrier

\section{Proof of the main theorem}\label{sec:clock}
The bridge from the RPS base to Theorem~\ref{thm:main} is the one-step Theorem~\ref{thm:lift}, stated with its hypotheses at the beginning of Section~\ref{sec:gateway} and proved at the end of Section~\ref{sec:chronology}. It preserves both the clock conditions and the additional algebraic data, so it can be iterated. We prove Theorem~\ref{thm:main} by constructing an order-$k$ clock for each $k\ge2$ and then realizing its action sequence as fictitious play from zero counts. In Section~\ref{sec:quantitative}, we state the recursive-family consequence needed for this argument and compute its growth. The more general one-step construction, Theorem~\ref{thm:lift}, is stated in the beginning of Section~\ref{sec:gateway}, with precise algebraic hypotheses in Section~\ref{sec:admissible}. We provide its proof in Sections~\ref{sec:gateway} and~\ref{sec:onestepproof}, with the concluding argument at the end of Section~\ref{sec:chronology}. In Section~\ref{sec:finish}, we remove ties and realize the required initial scores using Lemmas~\ref{lem:bias} and~\ref{lem:setup}. We then prove the gap estimate at every integer time and complete the proof in Section~\ref{sec:mainproof}.

\subsection{Definition of a clock and the bundle height bound}
We give a precise definition of an order-$k$ clock, collecting the data passed from one induction step to the next. It has three conditions. The first specifies an exact trajectory with polynomial phase boundaries. The second allows a new action to detect a phase endpoint. The third bounds the height still to be gained in terms of the number of plays still to come, so that early reset actions can wait throughout a parent bundle. For a row vector $v$, its \emph{support} is $\supp(v)=\{i:v_i\ne0\}$.

All thresholds below are obtainable by finite calculations with the rational coefficient data. In particular, when a positive rational $\epsilon$ is supplied, the threshold in condition (III) can be found in finitely many steps. We include this requirement as part of the clock data and verify it for the base and each lift; subsequent statements use the shorter phrase ``there is a threshold.''

\begin{definition}[Order-$k$ clock]\label{def:clock}
An \emph{order-$k$ clock} ($k\ge2$) consists of the data
\[
 (A,\prec,R,C,b,v,Q_k,\{\W_k(q)\}_{q\ge Q_k},\delta_0,\mathcal A_k,
   \widehat Q_k(\cdot)).
\]
Here $A$ is a rational skew-symmetric matrix $A \in \Q^{m \times m}$ on $m$ actions equipped with a fixed priority ordering $\prec$, a score table $R \in \Q^{m \times k}$, an integer count table $C \in \Z^{m \times k}$, an entry action $b \in \{0, \ldots, m-1\}$, a rational counter row $v \in \Q^{1 \times m}$, an integer phase threshold $Q_k \ge 1$, and explicitly specified words $\W_k(q)$ for all integers $q \ge Q_k$. The rational constants $\delta_0>0$ and $\mathcal A_k\ge0$, together with the computable threshold rule $\widehat Q_k(\epsilon)$, satisfy the following three properties:
\begin{enumerate}[label=\textbf{(\Roman*)},leftmargin=*]
  \item \textbf{Exact trajectory and polynomial growth.} For every integer $q \ge Q_k$, starting from the phase-entry score vector $\mathbf r(q) = R p_{k-1}(q)$, the word $\W_k(q)$ is followed by the auxiliary process~\eqref{eq:process}, has nonnegative action counts $\mathbf c(q) = C p_{k-1}(q) \ge 0$, and terminates at the exact next phase-entry score vector $\mathbf r(q+1) = R p_{k-1}(q+1)$. Algebraically,
  \[
    A C = R N_k, \qquad \rank(C) = k, \qquad \chivec := R_{b,:} \text{ satisfies } \chivec_{k-1} > 0.
  \]
  \item \textbf{Entry leadership and endpoint detection.} Every phase word $\W_k(q)$ ($q \ge Q_k$) begins with at least two consecutive plays of $b$. The counter row $v$ satisfies $v_b = 0$ and $v C = \erow{k}$, where $\erow{k}=(1,0,\ldots,0)\in\Q^{1\times k}$, and there is a rational margin $\delta_0 > 0$ such that at every proper prefix of $\W_k(q)$ ($q \ge Q_k$),
  \begin{equation}\label{eq:marker}
    H_{\rm prefix} - u_b + v z^{\rm rem} \ge \delta_0,
  \end{equation}
  Here $\mathbf u\in\Q^m$ is the score vector after the prefix, $H_{\rm prefix}=\max_i u_i$ is its height, and $z^{\rm rem}\in\Z_{\ge0}^{m}$ is the column vector of plays still remaining in $\W_k(q)$. Thus, $vz^{\rm rem}$ is a scalar. All three quantities depend on the prefix. Moreover, at every phase entry $q \ge Q_k$, the following entry inequality holds:
  \begin{equation}\label{eq:supportsafe}
    \bigl(\mathbf r(q) + A e_b - v^{\top}\bigr)_b > \bigl(\mathbf r(q) + A e_b - v^{\top}\bigr)_i
    \qquad \text{for every } i \in \supp(v).
  \end{equation}
  \item \textbf{Bound on the remaining height gain.} There is a rational constant $\mathcal A_k\ge0$ such that, for every rational $\epsilon>0$, there is a threshold $\widehat Q_k(\epsilon)\ge Q_k$ obtainable as described above, such that for all integers $q\ge\widehat Q_k(\epsilon)$ and every prefix of $\W_k(q)$ (including the empty and complete prefixes),
  \begin{equation}\label{eq:suffix}
    H_{\rm end} - H_{\rm prefix} \le \epsilon \lvert \W^{\rm rem} \rvert + \mathcal{A}_k,
  \end{equation}
  where $H_{\rm end}=\chi(q+1)=\chivec p_{k-1}(q+1)$ is the scalar phase-exit height and $\lvert \W^{\rm rem} \rvert = \one_m^{\top} z^{\rm rem}$ is the number of remaining plays.
\end{enumerate}
\end{definition}

We note that the counter row $v$ may have negative entries. Condition~\eqref{eq:supportsafe} ensures that when the endpoint action $g$ (whose play changes the parent scores by $-(A_{b,:} + v)^{\top}$) plays at the end of a parent bundle, any positive score change $-v_i > 0$ on an action $i \in \supp(v)$ with $v_i < 0$ cannot promote $i$ above the entry leader $b$. For example, the RPS base uses $v=(0,-1,1)$, so the definition must allow a negative entry. We check its entry inequalities in the base-case proof, Proposition~\ref{prop:seed}. After the first lift, the counter is $e_g^{\top}$ and simply counts the final endpoint play; we retain the more general definition to include RPS in the same induction.

We apply condition (III) to a whole bundle, rather than charging the constant $\mathcal A_k$ separately for every phase. The next lemma shows that only the first partially completed phase needs that constant.
\begin{lemma}[Remaining height gain across a parent bundle]\label{lem:bundle}
Let $(A, R, C, b, v, Q_k, \W_k)$ be an order-$k$ clock with constant $\mathcal{A}_k$ in condition (III). For every rational $\epsilon > 0$, there is a threshold above which the remaining-height bound~\eqref{eq:suffix} holds with the \emph{same} additive constant $\mathcal{A}_k$ across every bundle $\W_k(a)\W_k(a+1)\cdots\W_k(a+L-1)$ with $L\ge1$ and starting index $a$ above that threshold.
\end{lemma}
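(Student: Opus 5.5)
Take the threshold to be $\widehat Q_k(\epsilon)$ from condition~(III) of Definition~\ref{def:clock}, and fix a bundle $\W_k(a)\cdots\W_k(a+L-1)$ with $a\ge\widehat Q_k(\epsilon)$. Any prefix of the bundle splits as the complete phases $\W_k(a),\ldots,\W_k(\sigma-1)$ followed by a prefix (possibly empty) of a single phase $\W_k(\sigma)$, with $a\le\sigma\le a+L-1$. The empty prefix of the whole bundle is the case $\sigma=a$ with an empty inner prefix.

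Write $H_{\rm end}$ for the bundle-exit height. By condition~(I), together with entry leadership of $b$ at every phase entry above $Q_k$, this height equals $\chi(a+L)$, the entry height of phase $a+L$. Split the remaining gain at the boundary of the current phase:
\[
 H_{\rm end}-H_{\rm prefix}
 =\bigl(\chi(a+L)-\chi(\sigma+1)\bigr)+\bigl(\chi(\sigma+1)-H_{\rm prefix}\bigr).
\]
Condition~(III) applies to the partially completed phase $\sigma$ because $\sigma\ge a\ge\widehat Q_k(\epsilon)$. It bounds the second term by $\epsilon|\W^{\rm rem}_\sigma|+\mathcal A_k$, where $\W^{\rm rem}_\sigma$ denotes the plays remaining in phase $\sigma$. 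This is the only place the additive constant appears.

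For the first term, telescope over the complete remaining phases $\rho=\sigma+1,\ldots,a+L-1$:
\[
 \chi(a+L)-\chi(\sigma+1)=\sum_{\rho}\bigl(\chi(\rho+1)-\chi(\rho)\bigr).
\]
Apply condition~(III) to each phase $\rho$ at its \emph{empty} prefix. The prefix height is then $\chi(\rho)$ and the remaining length is $|\W_k(\rho)|$, so each increment is at most $\epsilon|\W_k(\rho)|+\mathcal A_k$. Used this way, however, the bound charges $\mathcal A_k$ once per phase, and that accumulation is exactly what the lemma must avoid.

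This is the main obstacle. I would handle it by a growth comparison: the height increment $\chi(\rho+1)-\chi(\rho)$ is a polynomial of degree $k-2$ in $\rho$, while $|\W_k(\rho)|=\one_m^\top Cp_{k-1}(\rho)$ has degree exactly $k-1$ with positive leading coefficient. Positivity of the leading coefficient follows from $AC=RN_k$: the height gain across phase $\rho$ is bounded by $\max|A|$ times the phase length, so the phase length must grow at least like the degree-$(k-2)$ increment with leading coefficient $(k-1)\,\chivec_{k-1}/(k-1)!>0$ (from condition~(I)). I would also confirm $\deg\one^\top C=k-1$ from $\rank C=k$ together with $vC=\erow{k}$, or else use the Section~\ref{sec:overviewgrowth} asymptotics $|\W_k(\sigma)|=\mathsf L_k\sigma^{k-1}+\cdots$.

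With this comparison, for any fixed $\epsilon$ there is a threshold $Q'(\epsilon)$ such that $\chi(\rho+1)-\chi(\rho)\le\epsilon|\W_k(\rho)|$ holds with no additive constant for all $\rho\ge Q'(\epsilon)$. This threshold is a finite computation, since it amounts to positivity of an explicit polynomial beyond a computable root bound.

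To finish, set the bundle threshold to $\max\{\widehat Q_k(\epsilon),Q'(\epsilon)\}$. Summing the telescoped bound gives at most $\epsilon$ times the number of plays in the complete remaining phases. Adding the bound for the partial phase $\sigma$ gives $\epsilon|\W^{\rm rem}|+\mathcal A_k$, because $|\W^{\rm rem}|$ equals $|\W^{\rm rem}_\sigma|$ plus the lengths of the later phases. The degenerate cases are covered by the same decomposition: $\sigma=a+L-1$ leaves an empty sum, and the complete prefix gives a left side of $0$.
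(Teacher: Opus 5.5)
Your argument matches the paper's proof step for step. The paper uses the same threshold, namely the larger of $\widehat Q_k(\epsilon)$ and a point beyond which $\chi(\rho+1)-\chi(\rho)\le\epsilon|\W_k(\rho)|$. It charges condition (III) only to the partially completed phase, with the additive constant $\mathcal A_k$ appearing once. It absorbs every later complete phase into $\epsilon|\W^{\rm rem}|$ with no additive constant.

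The step to repair is your justification that $|\W_k(\rho)|=\one_m^{\top}Cp_{k-1}(\rho)$ has degree exactly $k-1$ with positive leading coefficient. This fact is essential, because $\epsilon$ is arbitrary. Your first argument shows only that the length grows at least like the degree-$(k-2)$ increment: it bounds the height gain by $\max_{i,j}|A_{ij}|$ times the phase length. That is not enough. A phase length of degree $k-2$ could not dominate an increment of degree $k-2$ once $\epsilon$ is small. Your second route, $\rank C=k$ together with $vC=\erow{k}$, also fails, because nothing in it rules out cancellation in $(\one_m^{\top}C)_{k-1}$.

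The ingredient you need is the eventual nonnegativity of the counts in condition (I). Each $\mathbf c_i(q)=(Cp_{k-1}(q))_i$ is nonnegative for all large $q$, so its top coefficient $C_{i,k-1}$ is nonnegative. Full column rank makes the last column of $C$ nonzero, so $(\one_m^{\top}C)_{k-1}>0$. This is exactly the paper's argument, and it is also Step 1 of Lemma~\ref{lem:clockgrowth}, which does not use the present lemma. Your fallback of citing the growth asymptotics is therefore legitimate, provided you trace them to this nonnegativity-plus-rank argument rather than to $AC=RN_k$.
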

\begin{proof}
Condition (I) says that each coordinate $\mathbf c_i(q)=(Cp_{k-1}(q))_i$ is a nonnegative action count for all integers $q\ge Q_k$. A nonzero polynomial that is nonnegative at all sufficiently large integers has a positive leading coefficient. Since $\rank(C)=k$, the last column of $C$ is nonzero. Its nonzero entries are positive by the preceding observation, so the total phase length has degree $k-1$ with positive leading coefficient. The full-phase height increment $\chi(\sigma+1)-\chi(\sigma)$ has degree at most $k-2$. We can therefore choose a threshold $Q^\sharp(\epsilon)\ge\widehat Q_k(\epsilon)$ such that
\[
 \chi(\sigma+1)-\chi(\sigma)\le\epsilon|\W_k(\sigma)|
 \qquad(\sigma\ge Q^\sharp(\epsilon)).
\]
This bound for a complete phase has no additive constant. A threshold can be found, for example, by comparing the positive leading coefficient of $\epsilon|\W_k(\sigma)|-[\chi(\sigma+1)-\chi(\sigma)]$ with the sum of the absolute values of its lower coefficients.

Consider a proper prefix of the bundle $\W_k(a)\cdots\W_k(a+L-1)$, with $a\ge Q^\sharp(\epsilon)$. Let $r$ be the index of the phase containing the next play, and let $d$ be the number of plays remaining in that phase. At a boundary we take the next phase with its empty prefix. The remainder of the bundle consists of these $d$ plays followed by the complete phases $r+1,\ldots,a+L-1$. Condition (III) bounds the first part, and the preceding display bounds every later full phase. Hence
\[
\begin{aligned}
 H_{\rm bundle\text{-}end}-H_{\rm prefix}
 &=\chi(a+L)-H_{\rm prefix}\\
 &=\bigl[\chi(r+1)-H_{\rm prefix}\bigr]
   +\sum_{\sigma=r+1}^{a+L-1}\bigl[\chi(\sigma+1)-\chi(\sigma)\bigr]\\
 &\le\epsilon d+\mathcal A_k
       +\epsilon\sum_{\sigma=r+1}^{a+L-1}|\W_k(\sigma)|\\
 &=\epsilon|\W_{\rm bundle}^{\rm rem}|+\mathcal A_k.
\end{aligned}
\]
For the complete prefix, the remaining length and height gain are both zero, so the same inequality holds.
\end{proof}

\subsection{The quantitative effect of one step}\label{sec:quantitative}
The clock conditions control both the action sequence and its scale. The
following simple calculation explains the exponent before any coupling
coefficients are chosen.

\begin{lemma}[Growth of a clock and a parent bundle]\label{lem:clockgrowth}
For an order-$k$ clock there are positive constants $\mathsf L_k,h_k$ such that
\[
 |\W_k(q)|=\mathsf L_kq^{k-1}+O(q^{k-2}),\qquad
 \chi(q)=h_kq^{k-1}+O(q^{k-2}).
\]
Explicitly, 
\[
\mathsf L_k= \frac{(\one_m^{\top}C)_{k-1}}{(k-1)!} \qquad \text{ and } \qquad
h_k=\frac{\chivec_{k-1}}{(k-1)!}.
\]
For a fixed positive integer $\tau$, its parent bundle has length
\begin{equation}\label{eq:bundlegrowth}
 \sum_{\sigma=\tau q}^{2\tau q-1}|\W_k(\sigma)|
 =\frac{\mathsf L_k \tau^k(2^k-1)}{k}q^k+O(q^{k-1}).
\end{equation}
If the polynomial $\zeta$ satisfies $\zeta(q)-\zeta(q-1)=\chi(2\tau q)-\chi(\tau q)+c$, for some fixed constant $c\in\R$, then
\begin{equation}\label{eq:heightgrowth}
 \zeta(q)=\frac{h_k\tau^{k-1}(2^{k-1}-1)}{k}q^k+O(q^{k-1}).
\end{equation}
\end{lemma}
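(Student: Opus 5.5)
The proof is a direct computation in the binomial basis, using only condition (I) of Definition~\ref{def:clock} and elementary sums. First, for the phase length, I write $|\W_k(q)|=\one_m^\top\mathbf c(q)=(\one_m^\top C)p_{k-1}(q)$. Since $\binom{q}{j}=q^j/j!+O(q^{j-1})$, the leading term is $(\one_m^\top C)_{k-1}q^{k-1}/(k-1)!$, which gives the stated $\mathsf L_k$. To see that $\mathsf L_k>0$, I reuse the argument already given in the proof of Lemma~\ref{lem:bundle}. Each coordinate of $Cp_{k-1}(q)$ is nonnegative at all large integers, so every nonzero row polynomial has a positive leading coefficient. Because $\rank(C)=k$, the last column of $C$ is nonzero, and its nonzero entries are therefore positive. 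Hence $(\one_m^\top C)_{k-1}>0$. The same expansion applied to $\chi(q)=\chivec p_{k-1}(q)$ gives $h_k=\chivec_{k-1}/(k-1)!$, and $h_k>0$ by the requirement $\chivec_{k-1}>0$ in condition (I).

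For the bundle length~\eqref{eq:bundlegrowth}, I sum the asymptotic expansion over $\sigma=\tau q,\ldots,2\tau q-1$. This range has $\tau q$ terms, each of size $O(q^{k-1})$, so summing the error terms $O(\sigma^{k-2})$ contributes only $O(q^{k-1})$. For the main term, I use $\sum_{\sigma=a}^{b-1}\sigma^{k-1}=(b^k-a^k)/k+O(b^{k-1})$, which follows from comparison with the integral or from Faulhaber's formula. With $a=\tau q$ and $b=2\tau q$ this gives $\mathsf L_k\tau^k(2^k-1)q^k/k+O(q^{k-1})$. Everything here is a polynomial identity in $q$, since a sum of a polynomial over an integer range with polynomial endpoints is itself a polynomial, so the $O$-terms are genuine lower-degree polynomial terms.

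For~\eqref{eq:heightgrowth}, I first note that $\chi(2\tau q)-\chi(\tau q)=h_k\tau^{k-1}(2^{k-1}-1)q^{k-1}+O(q^{k-2})$. Adding the constant $c$ changes only lower-order terms when $k\ge2$. (At $k=2$ the increment is $h_2\tau q+O(1)$, and $c$ is absorbed into the $O(1)$.) I then telescope, $\zeta(q)=\zeta(0)+\sum_{a=1}^{q}[\zeta(a)-\zeta(a-1)]$, and apply $\sum_{a=1}^q a^{k-1}=q^k/k+O(q^{k-1})$. This yields the stated coefficient $h_k\tau^{k-1}(2^{k-1}-1)/k$, and since $\zeta$ is a polynomial, the constant $\zeta(0)$ falls into the error term.

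I expect no real obstacle. The only points that need care are the positivity of $\mathsf L_k$, which rests on the rank and nonnegativity argument above, and keeping track of the fact that $c$ and the lower binomial coefficients only affect the $O(\cdot)$ terms.
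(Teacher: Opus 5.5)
Your proposal is correct and follows the paper's proof essentially step for step. The shared ingredients are the binomial-basis leading coefficients $(\one_m^{\top}C)_{k-1}/(k-1)!$ and $\chivec_{k-1}/(k-1)!$, the rank-plus-eventual-nonnegativity argument for $\mathsf L_k>0$, and summing the leading power over the bundle while the $O(\sigma^{k-2})$ errors contribute $O(q^{k-1})$. In the bundle step, the phrase ``each of size $O(q^{k-1})$'' should refer to the $O(q^{k-2})$ error terms. The only difference is cosmetic and comes in the last step: you telescope from $\zeta(0)$ and use $\sum_{a=1}^{q}a^{k-1}=q^k/k+O(q^{k-1})$, whereas the paper notes that $\zeta$ has degree $k$ and matches the leading coefficient of its backward difference; the two routes are equivalent.
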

In the construction, $c$ is the same fixed height gain introduced in Section~\ref{sec:overviewblocks}, and we use $0<c<1$; the polynomial conclusion itself holds for any fixed scalar $c$.
\begin{proof}
The proof proceeds in four steps.

\smallskip\noindent\textbf{Step 1: the length of one phase.}
By condition (I) of Definition~\ref{def:clock}, each count $\mathbf c_i(q)$ is a polynomial of degree at most $k-1$ and is nonnegative for every sufficiently large integer $q$. Its coefficient of $q^{k-1}$ is therefore nonnegative. Full column rank of $C$ implies that its last column is nonzero, so at least one of these coefficients is positive. Summing the counts over the $m$ actions gives
\[
 |\W_k(q)|=\one_m^{\top}Cp_{k-1}(q)
 =\frac{(\one_m^{\top}C)_{k-1}}{(k-1)!}q^{k-1}+O(q^{k-2}).
\]
This proves the formula and positivity of $\mathsf L_k$.

\smallskip\noindent\textbf{Step 2: the height at phase entry.}
By definition, action $b$ is selected at entry, so the maximum score is $\chi(q)=R_{b,:}p_{k-1}(q)$. The assumption $\chivec_{k-1}>0$ gives the stated formula and positivity of $h_k$.

\smallskip\noindent\textbf{Step 3: the length of the parent bundle.}
For fixed $\tau$, summing the leading powers gives
\[
 \sum_{\sigma=\tau q}^{2\tau q-1}\sigma^{k-1}
 =\frac{(2\tau q)^k-(\tau q)^k}{k}+O(q^{k-1}).
\]
The sum of the lower-order errors $O(\sigma^{k-2})$ over $\tau q$ terms is $O(q^{k-1})$. Multiplying by $\mathsf L_k$ proves~\eqref{eq:bundlegrowth}.

\smallskip\noindent\textbf{Step 4: the height of the enlarged phase.}
The right side of the height-difference equation has leading term
$h_k\tau^{k-1}(2^{k-1}-1)q^{k-1}$; the constant $c$ does not change that term because $k\ge2$. The backward difference of a nonconstant degree-$d$ polynomial has degree $d-1$, so $\zeta$ has degree $k$. If its leading coefficient is $a$, then its backward difference is $ka q^{k-1}+O(q^{k-2})$. Matching coefficients gives~\eqref{eq:heightgrowth}.
\end{proof}

We now state the consequence of the construction used to prove the main theorem. We begin with the RPS clock and extend that particular family at every order. The more general extension theorem requires additional algebraic assumptions; we state it as Theorem~\ref{thm:lift} at the opening of Section~\ref{sec:gateway}.

\begin{proposition}[Recursive family starting from RPS]\label{prop:recursive-family}
There is a sequence of order-$k$ clocks, one for each integer $k\ge2$, whose order-$2$ member is the RPS clock of Proposition~\ref{prop:seed}. If its order-$k$ member has $m_k$ actions, then its order-$(k+1)$ member has $m_k+2k+3$ actions and retains the parent matrix as a principal block. Its new phase word is $\Bw(q)\Aw(q+1)$, its entry height is $\zeta(q)$, and its new endpoint counter is $e_g^{\top}$. The constants in the clock definition satisfy $\delta_{0,k+1}=1$ and $\mathcal A_{k+1}=\mathcal A_k+1$.
\end{proposition}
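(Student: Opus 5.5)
The plan is induction on $k$, with the one-step lift, Theorem~\ref{thm:lift}, as the engine. To iterate the lift, the inductive hypothesis must be stronger than ``order-$k$ clock'': it must also carry the additional algebraic data required by Theorem~\ref{thm:lift}. This means the admissibility data of Definition~\ref{def:admissible}, in particular the positive-weight representation of the height-related polynomials by the rows of $C$, and, for $k\ge3$, whatever integrality input Lemma~\ref{lem:scale} needs to choose $\tau$. So the statement I will prove by induction is: for each $k\ge2$ there is an order-$k$ clock that is admissible in the sense of Definition~\ref{def:admissible}, and the stated action counts, principal-block, and constant relations hold between consecutive members.

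\textbf{Base case $k=2$.} I take the RPS clock with data~\eqref{eq:rpstables}, $b=0$, $v=(0,-1,1)$. Proposition~\ref{prop:seed} supplies conditions (I)--(III), and Proposition~\ref{prop:seedadmissible} supplies the admissibility data for the first lift. Here $m_2=3$.

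\textbf{Inductive step.} Given an admissible order-$k$ clock on $m_k$ actions, I apply Theorem~\ref{thm:lift}, using $\tau=1$ when $k=2$ and the $\tau$ of Lemma~\ref{lem:scale} when $k\ge3$. This produces the enlarged matrix $G$ with block structure as in Figure~\ref{fig:matrix}. The parent block $A$ is unchanged, so the parent matrix is a principal block. The new actions are one endpoint action $g$ and $2(k+1)$ reset actions, so $m_{k+1}=m_k+1+2(k+1)=m_k+2k+3$.

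The rest of the step reads off the remaining claims from the construction:
\begin{itemize}
\item The phase words are the rotated words $\W_{k+1}(q)=\Bw(q)\Aw(q+1)$ of Section~\ref{sec:phaseboundary}.
\item The entry leader is $b_+=\reset{0}$ at height $\zeta(q)$, so the new entry height is $\zeta$.
\item The counter is $v_+=e_g^\top$. Since $g$ is played exactly once, at the end of each new phase, we get $v_+z^{\rm rem}\ge1$ at every proper prefix, hence $\delta_{0,k+1}=1$.
\item Admissibility for the next lift comes from Proposition~\ref{prop:inherit}, which closes the induction.
\end{itemize}

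\textbf{The constant $\mathcal A_{k+1}=\mathcal A_k+1$.} This is the one quantitative claim to check explicitly rather than cite, and I expect it to be the main obstacle. Consider a prefix of $\W_{k+1}(q)$ and split into two cases.

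\emph{Prefix inside $\Bw(q)$.} The height stays essentially at $\zeta(q)$, up to the flat auxiliary ceiling. The remaining gain to the exit height $\zeta(q+1)$ then comes from the entire upcoming bundle plus $c<1$. Since the remaining plays include the whole of $\Aw(q+1)$, whose length is $\Theta(q^k)$, while the height gain is $\Theta(q^{k-1})$, for $q$ large this is bounded by $\epsilon|\W^{\rm rem}|$ plus a small constant.

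\emph{Prefix inside $\Aw(q+1)$.} Lemma~\ref{lem:bundle} bounds the remaining parent gain by $\epsilon|\W^{\rm rem}|+\mathcal A_k$. The final play of $g$ contributes no extra height, since the exit height is $\zeta(q+1)-c$ until the next reset starts at $\zeta(q+1)$. This leaves an additional $c<1$.

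Together the two cases give $\mathcal A_{k+1}=\mathcal A_k+1$, with threshold $\widehat Q_{k+1}(\epsilon)$ taken as the maximum of the lift's onset and the threshold from Lemma~\ref{lem:bundle} at the shifted index $\tau(q+1)$. The delicate point is making the handoff conventions at phase boundaries match $H_{\rm end}=\zeta(q+1)$ exactly. If the paper's conventions differ slightly, the constant may still absorb this, since only ``$+1$'' with $c<1$ is claimed.
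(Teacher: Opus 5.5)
Your proposal is correct and follows the paper's proof: the base case is Propositions~\ref{prop:seed} and~\ref{prop:seedadmissible}, the induction iterates Theorem~\ref{thm:lift} while carrying the data of Definition~\ref{def:admissible}, and your two-case check of $\mathcal A_{k+1}=\mathcal A_k+1$ is the same condition (III) argument the paper gives in the proof of Proposition~\ref{prop:inherit}. One wording fix: the single play of $g$ does raise the overall height, from $\zeta(q+1)-c+\delta$ to $\zeta(q+1)$ by lifting $\reset{0}$. That jump is exactly what your extra $c$ accounts for, so your bound is unaffected.
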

\begin{proof}
The base case has two ingredients: the explicit RPS phase trajectory (Proposition~\ref{prop:seed}) and strictly positive weights with an integer correction (Proposition~\ref{prop:seedadmissible}). These weights and correction are the additional data formalized in Definition~\ref{def:admissible}. The one-step theorem, Theorem~\ref{thm:lift}, extends a clock with those data and returns the same type of data for the new clock. Repeated application therefore gives every order. Proposition~\ref{prop:inherit} checks the stated phase word, height, and new constants at each step.
\end{proof}

The general lift, stated in Section~\ref{sec:admissible}, extends an order-$k$ clock when its count table admits the required strictly positive weights and integer correction. It returns both an order-$(k+1)$ clock and the same type of algebraic data, so the operation can be repeated. That section states the precise hypotheses and gives the proof roadmap; the remaining proof map below identifies each supporting result.

The next corollary gives the leading coefficients of the extended family. The reset runs add to the phase length, while the height increment is determined by the parent bundle.

\begin{corollary}[Leading coefficients after the lift]\label{cor:liftgrowth}
Consider one extension in Proposition~\ref{prop:recursive-family}, from an order-$k$ parent to an order-$(k+1)$ clock. Let $M\in\Z^{2(k+1)\times(k+1)}$ be the reset count table for that extension, and let $\tau$ be its parent scale. Then
\begin{align}
 h_{k+1}&=\frac{h_k\tau^{k-1}(2^{k-1}-1)}{k},\label{eq:heightrecurrence}\\
 \mathsf L_{k+1}&=\frac{\mathsf L_k\tau^k(2^k-1)}{k}
              +\frac{(\one^{\top}M)_k}{k!}>0.\label{eq:lengthrecurrence}
\end{align}
\end{corollary}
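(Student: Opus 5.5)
We apply Lemma~\ref{lem:clockgrowth} twice: once to the parent clock, which gives $\mathsf L_k,h_k$, and once to the new order-$(k+1)$ clock of Proposition~\ref{prop:recursive-family}, which gives $\mathsf L_{k+1},h_{k+1}$. Then we identify the leading terms of the new phase length and entry height directly from the structure of the new phase word $\W_{k+1}(q)=\Bw(q)\Aw(q+1)$. Since the lemma expresses each constant as the leading $q^{k}$ coefficient of an explicit polynomial, it suffices to compute these coefficients.

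\textbf{Height.} By Proposition~\ref{prop:recursive-family}, the entry height of the new clock is $\zeta(q)$. This polynomial satisfies $\zeta(q)-\zeta(q-1)=\chi(2\tau q)-\chi(\tau q)+c$ by~\eqref{eq:overviewheight}. Equation~\eqref{eq:heightgrowth} of Lemma~\ref{lem:clockgrowth} therefore gives the leading term $\frac{h_k\tau^{k-1}(2^{k-1}-1)}{k}q^k$. Comparing with $\chi_{k+1}(q)=h_{k+1}q^{k}+O(q^{k-1})$ yields~\eqref{eq:heightrecurrence}. One subtlety is that the new clock's entry height is $\zeta(q)$ and not $\zeta(q-1)$. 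Because the phase boundary is placed before $\reset{0}$ at ceiling $\zeta(q)$, this index shift changes only lower-order terms in any case.

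\textbf{Length.} The new phase consists of the reset block $\Bw(q)$, the bundle $\W_k(\tau(q+1))\cdots\W_k(2\tau(q+1)-1)$, and one play of $g$. The reset block has length $\one^\top Mp_k(q)$, whose $q^k$ coefficient is $(\one^\top M)_k/k!$ in the binomial basis. The bundle has length $\frac{\mathsf L_k\tau^k(2^k-1)}{k}(q+1)^k+O(q^{k-1})$ by~\eqref{eq:bundlegrowth}. Replacing $q+1$ by $q$ only affects terms of order $q^{k-1}$. The single $g$ play is $O(1)$. Adding these contributions gives~\eqref{eq:lengthrecurrence}.

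\textbf{Positivity.} The first summand is positive because $\mathsf L_k>0$ and $\tau\ge1$. For the second summand, we use the reset count table requirement $M_{:,k}>0$ stated in Section~\ref{sec:overviewreset}, which is needed in any case for run lengths to be positive integers. Alternatively, each reset run length is a nonnegative count for large $q$, so each entry of the last column of $M$ is nonnegative. Either argument makes the second term nonnegative, and so the total is positive. The main point needing care is confirming that the count-table leading coefficient of the new clock really is the sum of these two pieces. This means checking that $\W_{k+1}(q)$ contains exactly these plays and no others, which follows from the regrouping in Section~\ref{sec:phaseboundary}.
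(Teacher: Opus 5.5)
Your proposal is correct and follows the paper's proof essentially step for step. The height recurrence comes from \eqref{eq:heightgrowth} applied to the entry height $\zeta(q)$. The length recurrence comes from splitting $|\W_{k+1}(q)|$ into the reset-count term, the shifted bundle term from \eqref{eq:bundlegrowth}, and the single play of $g$. The only difference is in the positivity step: the paper cites Lemma~\ref{lem:uniform}(i) for $M_{:,k}>0$, whereas you argue that eventually nonnegative reset counts force $M_{:,k}\ge0$ and the parent term is strictly positive, which is also valid and slightly more economical.
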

\begin{proof}
For the phase word $\W_{k+1}(q)=\Bw(q)\Aw(q+1)$, the entry height is $\zeta(q)$. Formula~\eqref{eq:heightgrowth} therefore gives~\eqref{eq:heightrecurrence}. The phase length is the sum of three terms:
\[
 |\W_{k+1}(q)|
 =\one_{2(k+1)}^{\top}Mp_k(q)
  +\sum_{\sigma=\tau(q+1)}^{2\tau(q+1)-1}|\W_k(\sigma)|+1.
\]
The reset-count term has coefficient $(\one^{\top}M)_k/k!$ at degree $k$. The bundle term has the coefficient in~\eqref{eq:bundlegrowth}, because replacing $q$ by $q+1$ does not change the leading coefficient. The final single play has degree zero. Adding these coefficients proves~\eqref{eq:lengthrecurrence}. The parent contribution is positive, and the construction gives positive degree-$k$ reset-count coefficients (Lemma~\ref{lem:uniform}), so $\mathsf L_{k+1}>0$.
\end{proof}

\subsection{Removing ties and realizing the initial scores}\label{sec:finish}
The clocks above use prescribed rational initial scores and may resolve ties by a fixed priority rule. The main theorem requires zero initial counts and unique best responses after the first action. We address the two issues in order: a small fixed bias makes the same action sequence strict, and one new action produces the biased scores in the first round of fictitious play.

\subsubsection{A fixed bias removes ties}
The next lemma shows that we can add rational offsets small enough to preserve every strict score comparison and ordered so that every former tie is resolved by the original priority.
\begin{lemma}[A rational bias realizes a priority rule]\label{lem:bias}
Let $A\in\Q^{m\times m}$ be skew-symmetric and let $f\in\Q^m$. There is a rational vector $\beta\in\Q^m$ such that the score-update process~\eqref{eq:process} initialized at $\mathbf u_0=f+\beta$ has a unique maximizer at every decision time and follows exactly the fixed-priority trajectory from $f$.
\end{lemma}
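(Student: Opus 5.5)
The plan is a lattice-spacing argument, proved by induction on time.

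\textbf{Step 1: a common spacing.} Let $D\in\N$ be a common denominator of all entries of $A$ and $f$. For every count vector $z\in\Z_{\ge0}^m$, each coordinate of $f+Az$ lies in $\frac1D\Z$. Hence every difference $u_{t,i}-u_{t,j}$ of the fixed-priority process~\eqref{eq:process} lies in $\frac1D\Z$. So whenever such a difference is nonzero, its absolute value is at least $1/D$. This bound does not depend on $t$, which is what lets one fixed bias work over the whole infinite trajectory.

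\textbf{Step 2: choosing the bias.} List the actions in priority order as $\pi(0)\prec\pi(1)\prec\cdots\prec\pi(m-1)$, with $\pi(0)$ the highest priority. Set
\[
 \beta_{\pi(\ell)}:=\frac{m-1-\ell}{2Dm}\in\Q .
\]
The entries of $\beta$ are pairwise distinct. They are larger for higher-priority actions. Any two of them differ by less than $1/(2D)<1/D$.

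\textbf{Step 3: induction on $t$.} Let $z_t$ and $\mathbf u_t$ be the fixed-priority trajectory from $f$. Let $z'_t$ and $\mathbf u'_t$ be the process started at $f+\beta$. The claim is that $z'_t=z_t$ for every $t$, and that at each decision time the maximizer of $\mathbf u'_t$ is unique and equals $i_t$.

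Suppose $z'_t=z_t$. Then $\mathbf u'_t=f+\beta+Az_t=\mathbf u_t+\beta$. Compare $i_t$ with any other action $j\ne i_t$; there are two cases.
\begin{itemize}
 \item If $u_{t,i_t}>u_{t,j}$, the gap is at least $1/D$. The bias changes it by at most $|\beta_{i_t}-\beta_j|<1/D$, so $u'_{t,i_t}>u'_{t,j}$.
 \item If $u_{t,i_t}=u_{t,j}$, then $i_t$ precedes $j$ in priority, because $i_t$ is the highest-priority maximizer. Hence $\beta_{i_t}>\beta_j$, and again $u'_{t,i_t}>u'_{t,j}$.
\end{itemize}
So $i_t$ is the unique maximizer of $\mathbf u'_t$. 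Both processes then add the same column $Ae_{i_t}$, which gives $z'_{t+1}=z_{t+1}$ and closes the induction.

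The only delicate point is the uniformity of the separation over all $t$ along an infinite trajectory, and Step 1 settles it. Skew-symmetry is not actually needed for this lemma.
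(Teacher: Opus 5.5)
Your proposal is correct and follows essentially the same route as the paper: the paper also clears denominators to get a uniform spacing $1/D_0$ for all score differences, chooses $\beta_i=(m-1-r(i))/(4mD_0)$ so that higher-priority actions get larger bonuses with total spread below that spacing, and then argues inductively that the two processes make the same choices. Your constant $2Dm$ instead of $4mD_0$ makes no difference, and your remark that skew-symmetry is not needed here is accurate.
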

\begin{proof}
Let $D_0$ be the least positive integer clearing the denominators of $A$ and $f$, and let $r(i)\in\{0,\ldots,m-1\}$ be the priority rank of action $i$, with rank $0$ highest. Set
\begin{equation}\label{eq:bias}
  \beta_i=\frac{m-1-r(i)}{4mD_0}.
\end{equation}
Every unperturbed score difference is a multiple of $1/D_0$, at every integer time. Thus, a strict difference has magnitude at least $1/D_0$, while $|\beta_i-\beta_j|<1/(4D_0)$. The bias preserves every strict comparison and orders any tied scores by the original priority. Inductively, the chosen actions agree at every decision time and the score vectors continue to differ by the same fixed bias. Hence every selected maximizer in the biased process is unique.
\end{proof}

\subsubsection{One initial action produces the required scores}
We add one action whose payoff column gives the biased initial scores, up to a common translation. The translation ensures that the new action is never selected after the first round.
\begin{lemma}[One setup action]\label{lem:setup}
Any rational skew-symmetric score process with unique maximizers and initial vector $f+\beta$ can be realized, up to a common score translation, by simultaneous fictitious play from zero counts in a game with one additional action. Suppose both players choose the setup action in the first round. Then the setup action is never played again.
\end{lemma}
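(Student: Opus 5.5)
The plan is to build the enlarged skew-symmetric matrix explicitly. Let $\hat A=\begin{pmatrix}A&w\\-w^\top&0\end{pmatrix}\in\Q^{(m+1)\times(m+1)}$, with the setup action $\omega$ at index $m$ and column $w:=f+\beta-\kappa\one_m$. Here $\kappa\in\Q$ is a translation chosen below. From zero counts every score is $0$ at $t=0$, so every action ties and both players may choose $\omega$. After this common first play, the score vector of the original actions is exactly $w=f+\beta-\kappa\one_m$. This is the biased initial vector shifted by a common constant, and by the remark after~\eqref{eq:monotone} a common shift changes no comparison among original actions. The score of $\omega$ itself stays at $0$ because $\hat A_{\omega\omega}=0$. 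Skew-symmetry, equal counts, and unique common best responses keep the two players synchronized, as in Section~\ref{sec:fp}. It therefore suffices to treat the symmetric process.

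The key step is to show that $\omega$ is never again a maximizer, so that the restricted process on the original actions coincides with the process of Lemma~\ref{lem:bias} started at $f+\beta$. I will argue by induction on $t\ge1$. Suppose $\omega$ has not been played after round $0$ through time $t$. Then the original scores equal $\mathbf u_{t-1}-\kappa\one_m$, where $\mathbf u$ is the biased process from Lemma~\ref{lem:bias}. Since only original actions have been played since round $0$, the score of $\omega$ is $-w^\top z'$, where $z'$ counts those plays. It suffices to show
\[
 -w^\top z' < \max_i(\mathbf u_{t-1})_i-\kappa .
\]
Write $H'_{t-1}=\max_i(\mathbf u_{t-1})_i$. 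By~\eqref{eq:monotone}, $H'_{t-1}\ge H'_0=\max_i(f+\beta)_i$. Each played action $i_s$ was the maximizer at its decision time $s$, so $(f+\beta)_{i_s}+(Az'_s)_{i_s}=H'_s\ge H'_0$. The term $w^\top z'$ is a sum of $(f+\beta)_{i_s}-\kappa$ over the plays. I expect controlling this sum to be the main obstacle, because $f_{i_s}$ alone is not obviously bounded below along the trajectory.

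My first attempt is to choose $\kappa$ very negative, so that $w_i=(f+\beta)_i-\kappa>0$ for all $i$. Then $-w^\top z'\le -t'\min_i w_i$, where $t'\ge0$ is the number of plays since setup. The right-hand side is $H'_{t-1}-\kappa\ge H'_0-\kappa$. For $t'=0$ the comparison reads $0<H'_0-\kappa$, which holds when $\kappa<H'_0$. For $t'\ge1$ the left side is negative and decreasing while the right side is at least $H'_0-\kappa>0$, so the strict inequality holds at every time. In particular $\omega$ is strictly below the maximum at every decision time $t\ge1$. The maximizer among the original actions is unique by Lemma~\ref{lem:bias}, so the overall best response is unique.

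The common translation $-\kappa\one_m$ does not affect which actions are chosen. Adding the row $-w^\top$ and the column $w$ preserves skew-symmetry, and all entries remain rational. Finally, I will note that the realized scores of the original actions differ from the clock's scores only by the constant $-\kappa$ and the bounded bias. This is the sense of ``up to a common score translation'' in the statement, and it is what the later all-time gap estimate (Lemma~\ref{lem:alltime}) uses.
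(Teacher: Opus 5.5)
Your proposal is correct and is essentially the paper's proof: the paper also adjoins the setup action with payoff column $f+\beta$ shifted by a common constant. It takes the shift $1-\min_i(f_i+\beta_i)$, which is your $\kappa=\min_i(f_i+\beta_i)-1$ and makes every entry at least $1$, and then argues exactly as you do that the setup score starts at $0$ and only decreases, while the recurrent maximum starts positive and is nondecreasing by~\eqref{eq:monotone}. The obstacle you anticipate in controlling $w^\top z'$ does not arise, since $f+\beta$ is a fixed finite vector and positivity of every entry of $w$ already settles the comparison.
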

\begin{proof}
List the setup action first, followed by the $m$ recurrent actions, and define
\begin{equation}\label{eq:setup}
  f^*=f+\beta+\bigl(1-\min_i(f_i+\beta_i)\bigr)\one_m\ge\one_m,
  \qquad
  \widehat A=\begin{pmatrix}0&-(f^*)^{\top}\\ f^*&A\end{pmatrix}.
\end{equation}
Both players choose the new setup action at decision time $0$. After this round, the recurrent score vector is $f^*$ and the setup score is $0$. The common translation does not change comparisons among recurrent actions. Their maximum is nondecreasing by~\eqref{eq:monotone} and is initially at least $1$. Each recurrent play changes the setup score by $-f_i^*\le-1$. Therefore the setup action stays strictly below the recurrent height at every subsequent decision time. The two players have the same unique recurrent best response at each subsequent time, so their counts remain equal and realize the desired word.
\end{proof}

\subsection{The duality gap at every integer time}
The phase formulas determine the gap at phase boundaries. To prove the theorem at every integer time, we also need to control the plays between those boundaries. The next lemma uses the nondecreasing maximum score to enclose an entire phase between its endpoint heights. The fixed bias, setup translation, and starting phase change only lower-order terms.
\begin{lemma}[All integer times]\label{lem:alltime}
Let an order-$k$ clock have count table $C_k$ and entry-score table $R_k$, with entry action $b$. Apply Lemmas~\ref{lem:bias} and~\ref{lem:setup}, starting at any fixed phase $Q\ge Q_k$. Define
\[
  L=(\one^{\top}C_k)_{k-1}>0,\qquad \xi=(R_k)_{b,k-1}>0,
  \qquad a=\frac{L}{k!},\qquad b_* =\frac{\xi}{(k-1)!}.
\]
Equivalently, $a=\mathsf L_k/k$ and $b_*=h_k$ in the ordinary-coefficient
notation of Lemma~\ref{lem:clockgrowth}. For the resulting game,
\begin{equation}\label{eq:constant}
  \Gap_{\widehat A_k}(z_t/t,z_t/t)
     =c_k t^{-1/k}(1+o(1)),\qquad
  c_k=2b_*a^{-(k-1)/k}>0,
\end{equation}
as $t\to\infty$ over all integers.
\end{lemma}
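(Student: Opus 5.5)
The plan is to compute everything at phase boundaries and then interpolate to arbitrary times using the monotone height. Let $T_\sigma$ be the elapsed time, in the final game with setup, at the entry of phase $\sigma\ge Q$. Then
\[
 T_\sigma = 1 + T_0' + \sum_{\rho=Q}^{\sigma-1}|\W_k(\rho)|,
\]
where $T_0'$ is the fixed length of any initial prefix. Step~1 of Lemma~\ref{lem:clockgrowth} gives $|\W_k(\rho)|=\one^\top C_k p_{k-1}(\rho)$ with leading coefficient $L/(k-1)!$, so summing yields $T_\sigma = a\sigma^k + O(\sigma^{k-1})$ with $a=L/k!$.

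The height at entry comes next. In the auxiliary process it equals $\chi(\sigma)=R_{b,:}p_{k-1}(\sigma)=b_*\sigma^{k-1}+O(\sigma^{k-2})$. In the realized game the recurrent score vector is $f^*+\widehat{A}$-updates, with the setup column contributing exactly the translation. Hence
\[
 (\widehat A_k z_t)_i = u_{t,i}+\beta_i+\text{const}\qquad\text{for recurrent }i,
\]
where $u$ is the auxiliary score process started from the clock's initial vector $f$. (Any offset between $f$ and $\mathbf r(Q)$ is also a constant shift.) The setup action's score stays below the recurrent height by the proof of Lemma~\ref{lem:setup}, so $\max_i(\widehat A_kz_t)_i = H_t + O(1)$, where $H_t$ is the auxiliary height. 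By~\eqref{eq:symmetricgap},
\[
 \Gap = \frac{2\,(H_t+O(1))}{t}.
\]

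For a general integer $t$, I pick $\sigma$ with $T_\sigma\le t<T_{\sigma+1}$. By~\eqref{eq:monotone}, $\chi(\sigma)\le H_t\le \chi(\sigma+1)$, with the realized height differing from these by $O(1)$ uniformly, since $\beta$ and the translation are fixed. The Gap is therefore squeezed between $2(\chi(\sigma)+O(1))/T_{\sigma+1}$ and $2(\chi(\sigma+1)+O(1))/T_\sigma$. Both $\chi(\sigma+1)/\chi(\sigma)\to1$ and $T_{\sigma+1}/T_\sigma\to1$, since the phase length $\Theta(\sigma^{k-1})$ is $o(T_\sigma)$. Hence both bounds equal
\[
 \frac{2b_*\sigma^{k-1}}{a\sigma^k}\,(1+o(1)).
\]
Writing $\sigma=(t/a)^{1/k}(1+o(1))$ from $T_\sigma\sim a\sigma^k$ gives
\[
 \frac{2b_*}{a}\,(t/a)^{-1/k}=2b_*a^{-(k-1)/k}\,t^{-1/k},
\]
which is~\eqref{eq:constant}.

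The main obstacle is bookkeeping rather than estimation. It has three parts. First, I must justify that the realized maximum differs from the auxiliary one by a bounded amount at every time: the bias only perturbs by less than $1/D_0$, and the setup adds a common constant. Second, I must check that the setup action never contributes to the maximum. Third, I must confirm that positivity $L>0$ and $\xi>0$ follow from Lemma~\ref{lem:clockgrowth} and condition~(I), which gives $c_k>0$. Everything else is the squeeze argument above.
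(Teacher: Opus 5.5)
Your proposal is correct and follows essentially the same route as the paper: compute elapsed time $a\sigma^k+O(\sigma^{k-1})$ and entry height $b_*\sigma^{k-1}+O(\sigma^{k-2})$ at phase boundaries, then squeeze intermediate times using the nondecreasing maximum, with the bias, common translation, and never-again-maximal setup action contributing only bounded terms. The extra prefix $T_0'$ and the hedge about an offset between $f$ and $\mathbf r(Q)$ are unnecessary, since the paper initializes exactly at $f=R_kp_{k-1}(Q)$, but they do no harm.
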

\begin{proof}
The positivity of $L$ follows from eventual nonnegative counts and full column rank, as in the proof of Lemma~\ref{lem:bundle}. Let $t(q)$ be the number of completed rounds at the entry of phase $q$, including the initial setup round. Then
\[
  t(q)=1+\sum_{\sigma=Q}^{q-1}\one^{\top}C_kp_{k-1}(\sigma)
       =a q^k+O(q^{k-1}),
\]
using $\sum_{\sigma=Q}^{q-1}\binom{\sigma}{j}=\binom{q}{j+1}-\binom{Q}{j+1}$. At this phase boundary, the recurrent score vector is $R_kp_{k-1}(q)$ plus a fixed bias and a common fixed translation. The biased priority trajectory still selects $b$ at each phase entry. Hence the maximum score is
\[
  H_{t(q)}=b_*q^{k-1}+O(q^{k-2}).
\]
For any integer $t(q)\le t<t(q+1)$, monotonicity gives
\[
  H_{t(q)}\le H_t\le H_{t(q+1)}.
\]
Moreover, $t(q+1)/t(q)=1+O(q^{-1})$ and $H_{t(q+1)}/H_{t(q)}=1+O(q^{-1})$. The constants in these two estimates are fixed by the constructed game and the starting phase $Q$. The same constants can be enlarged to cover every integer $t$ in the interval $t(q)\le t<t(q+1)$, because the endpoint bounds enclose the whole phase. Therefore, uniformly over those times,
\[
  t=a q^k(1+O(q^{-1})),\qquad
  H_t=b_*q^{k-1}(1+O(q^{-1})).
\]
The setup action is never a maximizer after its initial play, so this $H_t$ is also the maximum over the entire game. Substituting into~\eqref{eq:symmetricgap} yields~\eqref{eq:constant}. In fact, the relative error obtained here is $O(t^{-1/k})$.
\end{proof}

\subsection{Proof of the main theorem}\label{sec:mainproof}
\begin{proof}[Proof of Theorem~\ref{thm:main}]
We prove the theorem in three steps.

\smallskip\noindent\textbf{Step 1: construct the clock.}
The order-$2$ RPS clock is the base case. Proposition~\ref{prop:recursive-family} extends it to every order $k\ge2$. At each extension from order $j$ to $j+1$, the action count increases by $2j+3$, so
\begin{equation}\label{eq:dimension}
 m_k=3+\sum_{j=2}^{k-1}(2j+3)=(k+1)^2-6,
 \qquad \mathcal A_k=k-1.
\end{equation}
The score and count tables are rational and integer, respectively, as required in Definition~\ref{def:clock}.

\smallskip\noindent\textbf{Step 2: obtain a strict fictitious-play trajectory.}
Choose any integer starting phase $Q\ge Q_k$, where $Q_k$ is the clock threshold in Definition~\ref{def:clock}; all prescribed phases from $Q$ onward then satisfy its trajectory and comparison conditions. Initialize the auxiliary process at $f=R_kp_{k-1}(Q)$. Lemma~\ref{lem:bias} removes ties without changing the sequence of actions. Lemma~\ref{lem:setup} produces these scores from zero counts using one additional common first action. There are therefore $n_k=m_k+1=(k+1)^2-5$ actions, and every best response after the first round is unique.

\smallskip\noindent\textbf{Step 3: compute the gap.}
Lemma~\ref{lem:alltime} gives
$\Gap_{\widehat A_k}(z_t/t,z_t/t)=c_kt^{-1/k}(1+o(1))$
at every integer time, with $c_k>0$. This implies the claimed $\Theta(t^{-1/k})$ rate.
\end{proof}

Appendix~\ref{app:bounds} provides bounds that are sufficient to choose the reset-count scale, and computes the leading gap constant. Appendix~\ref{app:examples} provides explicit payoff matrices for the cases $k=2$ and $k=3$, as an illustration.

\Needspace{24\baselineskip}
\subsection{Remaining proof map}\label{sec:proofmap}
The deduction of Theorem~\ref{thm:main} above uses the following dependencies. Theorem~\ref{thm:lift} is the induction bridge; its five proof steps appear in Section~\ref{sec:gateway} and Sections~\ref{sec:moment}--\ref{sec:chronology}. Table~\ref{tab:proofmap} separates a statement location from the location of its proof.
\begin{table}[htbp]\centering\small
\renewcommand{\arraystretch}{1.15}
\begin{tabular}{@{}>{\raggedright\arraybackslash}p{.37\linewidth}>{\raggedright\arraybackslash}p{.35\linewidth}>{\raggedright\arraybackslash}p{\dimexpr.28\linewidth-4\tabcolsep\relax}@{}}\toprule
Ingredient & Statement & Detailed proof\\\midrule
Induction bridge & Theorem~\ref{thm:lift} in Section~\ref{sec:gateway} & Section~\ref{sec:liftproof}\\
Base case $k=2$: RPS phases and the extra data for the first lift & Propositions~\ref{prop:seed}, \ref{prop:seedadmissible} & Section~\ref{sec:seed}; Section~\ref{sec:rpsadmissible}\\
Endpoint identity and parent-score comparison & Lemma~\ref{lem:gateway} & Section~\ref{sec:gateway}\\
Compatibility of both reset equations & Lemmas~\ref{lem:interpolate}, \ref{lem:cokernel}, \ref{lem:scale}; Proposition~\ref{prop:momentchoice} & Section~\ref{sec:gateway}; Sections~\ref{sec:moment}--\ref{sec:synthesis}\\
Coupling signs and waiting margins & Lemma~\ref{lem:uniform} & Section~\ref{sec:synthesis}; Appendix~\ref{app:bounds}\\
Positive weights for the next lift & Proposition~\ref{prop:nextweights} & Section~\ref{sec:next}\\
Every proposed play is a best response in the full game & Proposition~\ref{prop:chronology} & Section~\ref{sec:chronology}\\
All conditions for the next order & Proposition~\ref{prop:inherit}; Theorem~\ref{thm:lift} & Section~\ref{sec:chronology}\\
\bottomrule
\end{tabular}
\caption{Dependencies in the proof of Theorem~\ref{thm:main}, with locations for statements and proofs.}\label{tab:proofmap}
\end{table}

\section{The base case \texorpdfstring{$k=2$}{k=2}: rock--paper--scissors}\label{sec:seed}
We prove the base case of the induction: the RPS game gives an order-$2$ clock. We use the payoff matrix and polynomial tables displayed in~\eqref{eq:rpsmatrix}--\eqref{eq:rpstables} and verify all three conditions of Definition~\ref{def:clock}. Proposition~\ref{prop:seed} verifies their exact word, endpoint condition, and remaining-height bound, giving the order-two clock that starts the proof of Theorem~\ref{thm:main}. Proposition~\ref{prop:seedadmissible} later supplies the additional algebraic data for the first lift. The known square-root upper bound for the three-action RPS game is a special case of \citet*[Theorem 11]{lazarsfeld25a}. The setup-augmented game is handled by our own construction; here we check the stronger phase information needed for the induction.

The shift $\sigma=q+1$ makes the additional algebraic data needed for the first lift especially simple.

\begin{proposition}[RPS clock]\label{prop:seed}
Equations~\eqref{eq:rpsmatrix}--\eqref{eq:rpstables}, with entry action $b=0$
and counter $v=(0,-1,1)$, define an order-$2$ clock with $m=3$ actions in Definition~\ref{def:clock}. The game starting from zero counts has one additional setup action, hence four actions in total. One may take
\[
 Q_2=1,\qquad\delta_0=1,\qquad\mathcal A_2=1,\qquad
 \widehat Q_2(\epsilon)=1\quad(\epsilon>0).
\]
Every best response in its phase words is unique.
\end{proposition}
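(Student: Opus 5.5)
The plan is to verify the three conditions of Definition~\ref{def:clock} directly from the explicit word $\W_2(q)=0^{3q+5}1^{3q+6}2^{3q+7}$, tracking the score vector run by run from $\mathbf r_2(q)=(q+1,-2q-11/3,q+2/3)^\top$. The one structural fact we need is that the columns of $A_{\mathrm{RPS}}$ are $Ae_0=(0,1,-1)^\top$, $Ae_1=(-1,0,1)^\top$ and $Ae_2=(1,-1,0)^\top$. So during a run of action $i$, action $i$ stays fixed, the next action (mod $3$) rises by one per play, and the remaining action falls by one per play.

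\textbf{Condition (I).}
\begin{itemize}
\item \emph{Run of $0$.} Action $0$ stays at $q+1$, and $u_1$ climbs from $-2q-11/3$ to $q+4/3$. Before the last play $u_1=q+1/3<q+1$, while $u_2$ decreases from $q+2/3<q+1$. Hence $0$ is the strict unique maximizer at every proper prefix of this run, and $1$ takes over strictly afterwards.
\item \emph{Run of $1$.} The same computation gives $u_2$ rising from $-2q-13/3$ to $q+5/3$, with value $q+2/3<q+4/3$ one play before the end. Meanwhile $u_0$ falls to $-2q-5$.
\item \emph{Run of $2$.} Here $u_0$ rises to $q+2$, with value $q+1<q+5/3$ one play before the end. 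Meanwhile $u_1$ falls to $-2q-17/3$.
\end{itemize}
The exit vector is $(q+2,-2q-17/3,q+5/3)^\top=\mathbf r_2(q+1)$. Every active comparison is separated by at least $1/3$, so each best response is unique and no priority is needed. The algebraic parts of (I) are then checked by matrix arithmetic.
\begin{itemize}
\item $A_{\mathrm{RPS}}C_2=R_2N_2$, as displayed earlier.
\item $\rank C_2=2$.
\item $\chivec_1=1>0$.
\item The counts are nonnegative for $q\ge1$, which justifies $Q_2=1$.
\end{itemize}

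\textbf{Condition (II).} Each word starts with $3q+5\ge 8$ plays of $b=0$. For the counter row, $v_0=0$ and $vC_2=(7-6,\,3-3)=(1,0)=\erow{2}$. For the margin \eqref{eq:marker}, we use $H_{\rm prefix}-u_0\ge0$ and bound $vz^{\rm rem}=z^{\rm rem}_2-z^{\rm rem}_1$ from below.
\begin{itemize}
\item During the $0$- and $1$-runs, $z^{\rm rem}_2=3q+7$ and $z^{\rm rem}_1\le 3q+6$, so $vz^{\rm rem}\ge1$.
\item During the $2$-run with $j\ge1$ plays left, $vz^{\rm rem}=j\ge1$.
\end{itemize}
Hence $\delta_0=1$ works. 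For the entry inequality \eqref{eq:supportsafe}, we compute
\[
\mathbf r_2(q)+Ae_0-v^\top=(q+1,\,-2q-5/3,\,q-4/3)^\top .
\]
Coordinate $0$ strictly exceeds coordinates $1$ and $2$, which form $\supp(v)$.

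\textbf{Condition (III).} This is immediate from monotonicity \eqref{eq:monotone}. The prefix height satisfies $H_{\rm prefix}\ge H_{\rm entry}=q+1$, and $H_{\rm end}=\chi(q+1)=q+2$. Hence $H_{\rm end}-H_{\rm prefix}\le 1\le \epsilon|\W^{\rm rem}|+1$ for every $\epsilon>0$. This gives $\mathcal A_2=1$ and $\widehat Q_2(\epsilon)=1$.

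The four-action count follows by adjoining the setup action of Lemma~\ref{lem:setup}. The only step needing care is the strictness of each handoff, and that is settled by the fractional offsets $1/3$ and $2/3$ in $\mathbf r_2(q)$. I do not expect a genuine obstacle here: the proof is bookkeeping of three linear runs.
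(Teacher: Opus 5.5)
Your proposal is correct and follows the paper's proof essentially step by step. Both arguments track the three linear runs with the same $1/3$ handoff margins, bound the counter through $z^{\rm rem}_2-z^{\rm rem}_1$, compute the same entry vector with gaps $3q+8/3$ and $7/3$, and use monotonicity of the height for condition (III); the paper merely reparametrizes by $\sigma=q+1$ before tabulating the run-entry scores.
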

\begin{proof}
We check the exact trajectory first, then the endpoint-counter inequalities, and finally the bound on the remaining height gain. For the trajectory calculation, put $\sigma=q+1$ and write
$\mathbf r_{\rm RPS}(\sigma)=(\sigma,-2\sigma-5/3,\sigma-1/3)^\top$.
Starting from this state, the three runs have lengths
$3\sigma+2,3\sigma+3,3\sigma+4$. Their complete run-entry score vectors and next winning
heights are
\[
\begin{array}{c|c|c|c}
\text{selected action}&\text{run-entry scores}&\text{length}&\text{next height}\\\hline
0&(\sigma,-2\sigma-5/3,\sigma-1/3)&3\sigma+2&\sigma+1/3\\
1&(\sigma,\sigma+1/3,-2\sigma-7/3)&3\sigma+3&\sigma+2/3\\
2&(-2\sigma-3,\sigma+1/3,\sigma+2/3)&3\sigma+4&\sigma+1
\end{array}
\]
During run $0$, coordinate $1$ increases at rate one and coordinate $2$
decreases at rate one. At the last proper integer offset, coordinate $1$
is $2/3$ below the leader; coordinate $2$ stays at least $1/3$ below it.
During run $1$, coordinate $2$ increases at rate one and coordinate $0$
decreases at rate one, with the same two margins. During run $2$,
coordinate $0$ increases and coordinate $1$ decreases, again with these
margins. Each next action overtakes by $1/3$ exactly at its stated handoff.
The endpoint is $\mathbf r_{\rm RPS}(\sigma+1)$, so this proves the whole infinite strict word.

To check the coefficient-table identity in Definition~\ref{def:clock}, recall explicitly that
\[
 A_{\mathrm{RPS}}=\begin{pmatrix}0&-1&1\\1&0&-1\\-1&1&0\end{pmatrix},\quad
 R_2=\begin{pmatrix}1&1\\-11/3&-2\\2/3&1\end{pmatrix},\quad
 C_2=\begin{pmatrix}5&3\\6&3\\7&3\end{pmatrix},\quad
 N_2=\begin{pmatrix}0&0\\1&0\end{pmatrix}.
\]
Here $A_{\mathrm{RPS}}$ is the payoff matrix, $R_2,C_2$ are the width-two score and count tables in~\eqref{eq:rpstables}, and $N_2$ takes the forward difference of a coefficient row. Direct multiplication gives
\[
 A_{\mathrm{RPS}}C_2=R_2N_2=\begin{pmatrix}1&0\\-2&0\\1&0\end{pmatrix}.
\]
The first two rows of $C_2$ have determinant $-3$, so $\rank(C_2)=2$. All counts are positive for $q\ge1$, the first run
has at least eight plays, and the entry-height row is $\chivec=(1,1)$.
The counter satisfies $v_0=0$ and $vC_2=(1,0)$.
Within a parent phase, after $h$ plays of the current run, its remaining
counter is respectively
\[
 v z^{\rm rem}=1,\qquad 1+h,\qquad 3\sigma+4-h
\]
on runs $0,1,2$. It is therefore at least one at every proper prefix.
Since the current maximum is at least coordinate $0$, this proves
\eqref{eq:marker} with margin one. The two supported entry gaps in
$\mathbf r_2(q)+A_{\mathrm{RPS}}e_0-v^\top$ are $3q+8/3$ and $7/3$, proving
\eqref{eq:supportsafe} despite the negative counter entry.
Finally, the total height gain of one phase is one and the maximum is
nondecreasing. The remaining gain is at most one at every prefix, giving
\eqref{eq:suffix} with additive constant one and constant threshold.
\end{proof}

The same remaining-height estimate must also hold across a bundle of many phases, with the additive constant $\mathcal A_2=1$ used only once. The unfinished first phase contributes at most one to the increase of the maximum score. Every later complete phase contributes another one, which we bound by $\epsilon$ times that phase's length rather than by another additive constant. Since $|\W_2(q)|=9q+18$, we can choose
\[
 Q^\sharp(\epsilon)=\max\{1,\lceil1/(9\epsilon)\rceil\}
\]
to work in Lemma~\ref{lem:bundle}: for $q\ge Q^\sharp(\epsilon)$, each complete phase satisfies $1\le\epsilon(9q+18)$. Thus, the single-phase threshold $\widehat Q_2(\epsilon)=1$ and the bundle threshold $Q^\sharp(\epsilon)$ serve different purposes; the latter absorbs the height increases of all later complete phases into the remaining-length term.

\paragraph{The standard square-root rate.}
For this particular trajectory, summing the phase lengths gives elapsed time $(9/2)q^2+O(q)$ at height $q+O(1)$; monotonicity between phase boundaries therefore gives gap $\Theta(t^{-1/2})$ at all integer times. The displayed scores differ by a common constant from the scores generated by the following identical nonnegative initial count vectors for the two players:
\[
 R_2p_1(q)+\frac23\one_3
 =A_{\mathrm{RPS}}\begin{pmatrix}0\\q+4/3\\2q+3\end{pmatrix}.
\]
Adding the same positive multiple of $\one_3$ makes every coordinate of these initial count vectors positive without changing the scores.
This observation identifies the parent trajectory; the zero-count
initialization of the final games is supplied later by
Lemma~\ref{lem:setup}.

\paragraph{Remark: Why three actions are a natural starting point.}
A two-action skew-symmetric matrix has the form
$A=\bigl(\begin{smallmatrix}0&-a\\a&0\end{smallmatrix}\bigr)$.
For any initial score vector, the difference $u_1-u_0$ increases by $a$ at
every update, whichever action is chosen. If $a\ne0$, then the maximizing action
can change at most once; if $a=0$, then the scores are constant. Thus, a
nontrivial recurrent clock cannot start with a two-action synchronized
skew-symmetric game. In ordinary non-skew-symmetric $2\times2$ games,
fictitious play can switch repeatedly between actions, but importing their two separate count vectors is unnecessary for
this construction. RPS supplies the required clock directly within the
skew-symmetric setting used throughout the proof.

\section{The one-step theorem and the endpoint/reset equations}\label{sec:gateway}
The goal of Sections~\ref{sec:gateway} and~\ref{sec:onestepproof} is to lift an order-$k$ clock to an order-$(k+1)$ clock while retaining the parent payoff matrix as a principal block and adding only $2k+3$ new actions. The clock conditions in Definition~\ref{def:clock} control the parent trajectory, but they do not by themselves provide all of the algebraic data needed to make this lift repeatable. We therefore carry additional data---an integer scale, a rational reference gain, an integer count correction, and strictly positive parent weights---whose precise requirements are formalized in Definition~\ref{def:admissible} below.

Our main technical result shows that, whenever an order-$k$ clock is equipped with these additional data, one step of the construction produces an order-$(k+1)$ clock carrying the same type of data and hence ready for the next lift. Here $g$ denotes the new endpoint action, and $e_g$ is its standard basis vector.

\begin{theorem}[One-step construction]\label{thm:lift}
Every order-$k$ clock satisfying Definition~\ref{def:admissible}, $k\ge2$, on $m$ actions can be extended to an order-$(k+1)$ clock on $m+2k+3$ actions satisfying the same definition. The parent matrix is retained as a principal block. The new clock has endpoint counter $v_+=e_g^{\top}$, prefix margin $\delta_{0,+}=1$, remaining-height constant $\mathcal A_{k+1}=\mathcal A_k+1$, and reference gain $c_{{\rm ref},+}=1/2$.
\end{theorem}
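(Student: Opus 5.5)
We construct the enlarged matrix $G$ exactly as in Figure~\ref{fig:matrix} and verify Definition~\ref{def:clock} for the rotated phase words $\W_{k+1}(q)=\Bw(q)\Aw(q+1)$. The first step is the algebra. From the parent tables $R,C$ and the scale $\tau$ of Definition~\ref{def:admissible} (via Lemma~\ref{lem:scale}), we will write down the width-$(k+1)$ tables. These are the old-score table $S$, whose parent rows are the bundle-entry scores shifted by $\zeta(q-1)-\chi(\tau q)$ and whose $g$-row is fixed by the detector identity~\eqref{eq:overviewendpointidentity}; the old-count table $X$, which is the bundle counts plus one play of $g$; and the tables $S_{\rm mid}=S+G_OX$, $D=SN_r-G_OX$ and $Q$. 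The height row of $\zeta$ comes from summing~\eqref{eq:overviewheight} in the binomial basis, using $\sum\binom{\sigma}{j}=\binom{q}{j+1}$. Lemma~\ref{lem:gateway} will give the endpoint comparison, including that $b$ stays a strict parent leader after $g$ plays, which uses~\eqref{eq:supportsafe}.

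The second step solves $PM=D$, $P^\top X=Q$. By Lemma~\ref{lem:interpolate}, once $X$ and $M$ have full column rank $k+1$, it suffices to choose the reset count table $M$ and the reset-score targets encoded in $Q$ so that $D^\top X=M^\top Q$ holds. Here we invoke the positive-weight condition and the moment choice (Proposition~\ref{prop:momentchoice}, Lemma~\ref{lem:cokernel}). The positive parent weights make the required reset-score gains expressible as $-P^\top X$ with $P_E<0$ strictly. The integer correction $\ell_*$ makes the rows of $M$ integral. We then take the reset counts large, scaled by a parameter $\lambda$, so that $M_{:,k}>0$. Lemma~\ref{lem:uniform} then delivers $P_E<0$, $P_L>0$, $J$ positive below its diagonal, $-P_{E}\ge\epsilon$ on parent rows, and endpoint deficits $\Delta>\mathcal A_k$ for the early reset actions.

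The third step, which I expect to be the main obstacle, is Proposition~\ref{prop:chronology}: every prescribed play is a best response in the full game. During the bundle, $g$ waits by the margin $\delta_0-\delta$. Early reset actions wait by Lemma~\ref{lem:bundle} combined with $-P_E\ge\epsilon$, as in the cancellation computed in Section~\ref{sec:overviewreset}. Late reset actions decrease after losing the entry tie. During the reset block, the runs of $J-J^\top$ must hand off exactly at the ceiling $\zeta(q)$. The delicate point is that old scores must stay below $\zeta(q)$ throughout, including in the middle of the late runs. I would handle this by writing each old score as a piecewise-linear function of the reset plays, bounded by its value at the end of the early block (negative offset of order $q^k$) plus a monotone late increase ending at the target, which is at most $\zeta(q)$, with equality only for $b$ at the final play. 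Ties then occur only where the priority order prescribes them.

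Finally, the inheritance step is Proposition~\ref{prop:inherit}. Condition (I) follows from $G C_{k+1}=R_{k+1}N_{k+1}$, which holds because both the bundle and the reset return the scores to the shifted next entry, with $\chivec_{+,k}>0$ by Lemma~\ref{lem:clockgrowth}. Condition (II) holds with $b_+=\reset{0}$, $v_+=e_g^\top$, $\delta_{0,+}=1$, and a long first reset run. For condition (III), the reset block has flat height, the $g$ play plus handoff contribute at most $c<1$, and the bundle contributes $\epsilon|\W^{\rm rem}|+\mathcal A_k$, so $\mathcal A_{k+1}=\mathcal A_k+1$. We then use Proposition~\ref{prop:nextweights} to choose the remaining free parameters so that $C_{k+1}$ carries strictly positive weights and an integer correction with $c_{\rm ref,+}=1/2$. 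The action count is $m+1+2(k+1)$.
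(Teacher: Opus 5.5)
Your proposal is correct and follows the paper's proof essentially step for step. It uses the same ingredients the paper assembles: the endpoint tables with Lemma~\ref{lem:gateway}, the compatibility reduction through Lemma~\ref{lem:interpolate} and Proposition~\ref{prop:momentchoice}, the sign and margin bounds of Lemma~\ref{lem:uniform}, the best-response check of Proposition~\ref{prop:chronology}, and the rotated-clock inheritance with the positive weights of Proposition~\ref{prop:nextweights}.

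A few small corrections. The integer row $\ell_*$ is needed to satisfy the higher compatibility tests~\eqref{eq:higher}; $M$ is integral without it, and the integrality of $\ell_*$ only keeps $M$ integral. The large reset scale is $T_1$, not $\lambda$. When you verify conditions (I) and (II) for the new clock, you should also record two routine facts: $\rank(C_+)=k+1$, which follows from invertibility of the late block $W$, and the support inequality~\eqref{eq:supportsafe} at the single coordinate $g$.
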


The forward reference to Definition~\ref{def:admissible} encodes two additional algebraic ingredients: a correction satisfying the higher compatibility conditions, and a strictly positive representation of a target row by the parent count rows. Their precise formulation requires several coefficient tables associated with the parent bundle and the old-action score restoration. We introduce these tables in Section~\ref{sec:lift-tables}, state the additional hypotheses in Section~\ref{sec:admissible}, and then carry out the construction through Sections~\ref{sec:gateway}--\ref{sec:next}; Section~\ref{sec:chronology} verifies that the resulting fixed game generates the prescribed best-response trajectory and closes the induction.

\subsection{Parent and reference tables for the one-step theorem}\label{sec:lift-tables}
We now define the coefficient tables needed to state Definition~\ref{def:admissible} and formulate the reset equations. We first fix the widths and transfer matrices, then define the old-action scores and counts for an arbitrary gain. These are definitions needed for the reference-gain hypotheses in Section~\ref{sec:admissible}; we choose a compatible gain and prove validity later.

Fix an order-$k$ parent clock ($k\ge2$) on $m$ actions, with data $(A,R,C,b,v)$ from Definition~\ref{def:clock}, and write $w=r=k+1$. Here $w$ is the coefficient-table width, whereas $r$ is the number of actions in each reset group; these distinct roles happen to require the same integer. For brevity in the matrix formulas, set
\[
 \mathsf{T}:=\mathsf{T}_w,\qquad B:=B_w=\mathsf{T}_w^{-1},\qquad N:=N_w=\mathsf{T}_w-I_w,
 \qquad \erow{w}=(1,0,\ldots,0)\in\Q^{1\times w}.
\]
Thus, $\mathsf{T},B,N\in\Z^{w\times w}$ always act on the new width-$w$ tables in Sections~\ref{sec:gateway} and~\ref{sec:onestepproof}; the parent-width matrices and rows keep their explicit subscripts $N_k$ and $\erow{k}$. For any row $f$, the rows $f\mathsf{T},fB,fN$ represent respectively the forward shift, backward shift, and forward difference of $fp_k(q)$. The transfer matrices below evaluate a parent table at the bundle entrance and exit, and sum it over the bundled phases. For a positive integer scale factor $\tau\ge 1$, define the integer coefficient transfer matrices $E_\tau, E_{2\tau}, F_\tau\in \Z^{k \times w}$ by
\begin{equation}\label{eq:maps}
  p_{k-1}(\tau q) = E_\tau p_k(q), \qquad
  p_{k-1}(2\tau q) = E_{2\tau} p_k(q), \qquad
  \sum_{\sigma=\tau q}^{2\tau q-1} p_{k-1}(\sigma) = F_\tau p_k(q).
\end{equation}
These matrices are obtained by taking forward differences at $q=0$ of the integer-valued polynomials in~\eqref{eq:maps}.\footnote{For a polynomial $f$ of degree at most $r$, $f(q)=\sum_{j=0}^r(\Delta^j f)(0)\binom qj$, where $\Delta f(q)=f(q+1)-f(q)$ and $(\Delta^j f)(0)=\sum_{h=0}^j(-1)^{j-h}\binom jh f(h)$. Integer values $f(0),\ldots,f(r)$ therefore give integer coefficients.} Hence, their coefficients in the binomial basis are integers. For the sum, we use the identity
\[
 \sum_{\sigma=a}^{b-1}\binom \sigma j=\binom b{j+1}-\binom a{j+1}
 \qquad(a\le b).
\]
The polynomial $\binom{2\tau q}{j+1}-\binom{\tau q}{j+1}$ defines the left-hand side of~\eqref{eq:maps} also at nonpositive $q$, where we use this polynomial extension rather than a reversed finite sum. Because the sum is empty at $q=0$, column $0$ of $F_\tau$ is zero. Columns are indexed $0,1,\ldots,k$, so columns $1,\ldots,k$ are the second through last columns. They form a lower-triangular invertible $k\times k$ block with strictly positive main diagonal
\begin{equation}\label{eq:Fdiag}
  (F_\tau)_{j, j+1} = \tau^{j+1}(2^{j+1} - 1) > 0 \qquad (0 \le j < k).
\end{equation}
Indeed, row $j$ of $F_\tau$ represents a polynomial of degree $j+1$ with zero constant term. Its ordinary leading coefficient is $\tau^{j+1}(2^{j+1}-1)/(j+1)!$, which gives both the triangular support and~\eqref{eq:Fdiag}. For instance, at $k=2$, the choices $\tau=1$ and $\tau=2$ give
\[
 F_1=\begin{pmatrix}0&1&0\\0&1&3\end{pmatrix},\qquad
 F_2=\begin{pmatrix}0&2&0\\0&5&12\end{pmatrix}.
\]
For the second example, the sums of $1$ and $\sigma$ over $\sigma=2q,\ldots,4q-1$ are $2q$ and $6q^2-q=5q+12\binom q2$, respectively. The invertible blocks here are the last two columns, with diagonals $(1,3)$ and $(2,12)$.

\paragraph{Outer height and old-action tables.}
The parent bundle raises its own entry height from $\chi(\tau q)$ to $\chi(2\tau q)$. Indeed, it concatenates the parent phases indexed $\tau q,\ldots,2\tau q-1$, each ending at the next phase-entry score vector by Definition~\ref{def:clock}(I). We choose an outer height polynomial whose one-step increase equals that rise plus a positive gain $c$; this makes consecutive outer phases fit together while leaving room for the endpoint and reset actions. Its additive constant is free, and we fix the baseline $H=\mathcal A_k+3$ to exceed the parent's remaining-height allowance $\mathcal A_k$ by three, as needed in the later waiting estimates.

Given a rational height gain $c\in(0,1)$, let $\zetavec\in\Q^{1\times w}$ be a coefficient row, and write $\zeta(q):=\zetavec p_k(q)$ for its scalar evaluation. Define $\zetavec$ uniquely by its backward difference and its scalar value at $q=0$, and define the common parent-score translation row $\hvec\in\Q^{1\times w}$, as follows:
\begin{equation}\label{eq:y}
\begin{aligned}
 \zetavec(I-B)&=\chivec(E_{2\tau}-E_\tau)+c\erow{w}, & \zetavec p_k(0)&=H,\\
 \hvec&=\zetavec B-\chivec E_\tau. &&
\end{aligned}
\end{equation}
Evaluating~\eqref{eq:y} gives the scalar relations
\[
 \zeta(q)-\zeta(q-1)=\chi(2\tau q)-\chi(\tau q)+c,
 \qquad h(q)=\zeta(q-1)-\chi(\tau q).
\]
We likewise write $h(q):=\hvec p_k(q)\in\Q$ for the scalar common score translation represented by the row $\hvec$. Here $\zetavec$ and $\hvec$ are fixed coefficient rows, whereas $\zeta(q)$ and $h(q)$ are scalar functions. In particular, $\zeta(0)=\zetavec p_k(0)=\zetavec_0$, because $p_k(0)=e_0$. The right-hand side of the first equation in~\eqref{eq:y} has degree at most $k-1$, because $\chi$ has degree $k-1$, so it has a discrete antiderivative of degree at most $k$. The row solutions of $f(I-B)=0$ are precisely the constant-polynomial rows, and the value at zero therefore makes $\zetavec$ unique. Adjoin the endpoint action $g$ as action $m$ (so the old actions are $\{0, \ldots, m-1, g\}$ of total size $m+1$), and set
\begin{equation}\label{eq:old}
\begin{gathered}
 G_O=\begin{pmatrix}A&-(A_{b,:}+v)^{\top}\\A_{b,:}+v&0\end{pmatrix}
       \in\Q^{(m+1)\times(m+1)},\\
 X=\begin{pmatrix}CF_\tau\\\erow{w}\end{pmatrix}\in\Z^{(m+1)\times w},\qquad
 S=\begin{pmatrix}RE_\tau+\one_m\hvec\\\zetavec B-vCF_\tau+\delta\erow{w}\end{pmatrix}
       \in\Q^{(m+1)\times w},\\
 S_{\mathrm{mid}}=S+G_OX,\qquad
 D=S\mathsf{T}-S_{\mathrm{mid}}=SN-G_OX,\\
 0<\delta<\min\{c,\delta_0\}.
\end{gathered}
\end{equation}
The first $m$ rows of $X$ count the parent bundle, and its last row counts the single play of $g$. The first $m$ rows of $S$ give the parent entry scores plus the common translation $h(q)$; its last row places $g$ below $b$ by the remaining counter value minus $\delta$. Thus, we define the count table $X$ so that $Xp_k(q)$ is the vector of counts in the old-block word
\[
  \Aw(q) = \W_k(\tau q)\W_k(\tau q+1)\cdots\W_k(2\tau q-1)\, g.
\]
Moreover, $\rank(X) = w$ because $C F_\tau$ has rank $k$ on columns $1, \ldots, k$ while the endpoint row $X_{g,:}=\erow{w}=(1,0,\ldots,0)$ is nonzero only in column $0$. For the evaluated vectors in $\Q^{m+1}$, write
\[
 \mathbf s(q):=Sp_k(q),\qquad \mathbf x(q):=Xp_k(q),\qquad
 \mathbf s_{\mathrm{mid}}(q):=S_{\mathrm{mid}}p_k(q),\qquad
 \mathbf d(q):=Dp_k(q).
\]
The old-action scores start at $\mathbf s(q)$ and reach the intermediate state $\mathbf s_{\mathrm{mid}}(q)=\mathbf s(q)+G_O\mathbf x(q)$ immediately after $\Aw(q)$. The next phase must instead start at $\mathbf s(q+1)=S\mathsf{T}p_k(q)$. Thus, the reset word $\Bw(q)$ must supply exactly $\mathbf d(q)=\mathbf s(q+1)-\mathbf s_{\mathrm{mid}}(q)$.

For clarity, the four old-action tables have the same width $w=k+1$:
\begin{center}\small
\begin{tabular}{@{}cll@{}}\toprule
Table & Size & Meaning at parameter $q$\\\midrule
$X$ & $(m+1)\times w$ & counts in the parent bundle and the one play of $g$\\
$S$ & $(m+1)\times w$ & old scores before that word\\
$S_{\mathrm{mid}}$ & $(m+1)\times w$ & intermediate old scores after that word\\
$D$ & $(m+1)\times w$ & required change $\mathbf s(q+1)-\mathbf s_{\mathrm{mid}}(q)$\\\bottomrule
\end{tabular}
\end{center}

\subsection{The additional data required by the one-step theorem}\label{sec:admissible}
The clock conditions control the parent trajectory, but the lift also needs two algebraic inputs: a correction to the total reset counts that satisfies specified compatibility tests, and a strictly positive representation of a target height polynomial by the parent count polynomials. We define both inputs here to make the hypotheses of Theorem~\ref{thm:lift} precise. Their roles in the proof are described in the roadmap below.

Let $c_{\rm ref}\in(0,1)\cap\Q$ denote a rational \emph{reference gain}. It fixes the reusable correction and positive-weight data before the reset scale is chosen. The actual gain $c$ may later differ slightly to remove the lowest compatibility obstruction; the superscript $\circ$ always denotes evaluation at this fixed reference gain. For a proposed integer scale $\tau\ge1$, let $\zetavec^\circ\in\Q^{1\times w}$ and $D^\circ\in\Q^{(m+1)\times w}$ be the tables in~\eqref{eq:y}--\eqref{eq:old} with $c=c_{\rm ref}$, and put $\psi^\circ:=\zetavec^\circ B-\zetavec^\circ\in\Q^{1\times w}$. As before, $w=k+1$, $H=\mathcal A_k+3$, and $\chivec=R_{b,:}\in\Q^{1\times k}$. We use the same scalar polynomial already defined in~\eqref{eq:parenttables}:
\[
 \chi(\sigma)=\chivec p_{k-1}(\sigma)=(Rp_{k-1}(\sigma))_b
        =\max_i(Rp_{k-1}(\sigma))_i\qquad(\sigma\ge Q_k,\ \sigma\in\Z).
\]
In particular, $\chivec_{k-1}$ is the last coefficient of the fixed row $\chivec$, whereas $\chi(\sigma)$ is its evaluation at the phase index $\sigma$. Its ordinary leading coefficient is $\chivec_{k-1}/(k-1)!$.

We specify the compatibility-test matrices now; Lemma~\ref{lem:cokernel} in Section~\ref{sec:compatibilitytests} will prove that they form a complete set of tests. Write
\[
 \overline p_k(q):=\left(1,q,\frac{q^2}{2!},\ldots,\frac{q^k}{k!}\right)^{\top},
 \qquad p_k(q)=S_w\overline p_k(q),
\]
where $S_w\in\Q^{w\times w}$ is the lower-triangular change-of-basis matrix. Its diagonal entries are one. The subscript $\mathrm{DP}$ below stands for divided powers, the basis $q^i/i!$. These are \emph{scaled-monomial coordinates}: coefficient rows multiply $\overline p_k(q)$ instead of the binomial-basis vector $p_k(q)$. Let $N_0\in\Q^{w\times w}$ have entries $(N_0)_{i,i-1}=1$ for $1\le i\le k$ and zero elsewhere. For each even $h\in[k,2k]$, define
\begin{equation}\label{eq:cokernel}
 \begin{gathered}
 V^{(h)}_{ij}=(-1)^i\mathbf1_{\{i+j=h\}}\quad(0\le i,j\le k),\qquad
 Z_{\rm DP}^{(h)}=e^{-N_0/2}V^{(h)},\\
 Z^{(h)}=S_wZ_{\rm DP}^{(h)}S_w^{\top}\in\Q^{w\times w}.
 \end{gathered}
\end{equation}
The exponential is a finite rational polynomial in $N_0$, since $N_0^{k+1}=0$. Indeed, the derivative with respect to $q$ is $(\overline p_k)'(q)=N_0\overline p_k(q)$, so its finite Taylor formula makes the backward shift equal to $e^{-N_0}$. Thus, these matrices are explicit rational data determined by $k$, not additional unknowns in the construction.

For the positive-weight input, we want a weighted sum of old-block counts to equal the sum of two consecutive reference heights. This identity will produce the negative early and positive late reference couplings in~\eqref{eq:P0}. The endpoint row can supply only a constant; subtracting its weight isolates the nonconstant part that the parent counts must represent. Evaluating~\eqref{eq:y} at $q=0$ gives
$\zeta^\circ(-1)=\zetavec^\circ Bp_k(0)=H-c_{\rm ref}$.
Together with $\zeta^\circ(0)=H$, this gives constant entry $H+(H-c_{\rm ref})-(2H-c_{\rm ref})=0$ for $\zetavec^\circ+\zetavec^\circ B-(2H-c_{\rm ref})\erow{w}$. Since columns $1,\ldots,k$ of $F_\tau$ form an invertible block, there is a unique target row $t_{\tau,c_{\rm ref}}(\chivec)\in\Q^{1\times k}$ satisfying
\begin{equation}\label{eq:target}
 \begin{aligned}
 t_{\tau,c_{\rm ref}}(\chivec)F_\tau
   &=\zetavec^\circ+\zetavec^\circ B-(2H-c_{\rm ref})\erow{w},\\
 \bigl(t_{\tau,c_{\rm ref}}(\chivec)\bigr)_{k-1}
   &=\frac{2\chivec_{k-1}(2^{k-1}-1)}{\tau(2^k-1)}>0.
 \end{aligned}
\end{equation}
The last equality follows by comparing degree-$k$ coefficients and using~\eqref{eq:Fdiag}. Equation~\eqref{eq:target} only defines the target row; it does not assert that positive parent weights represent it. That additional requirement is item (ii) below. The argument $\chivec$ specifies the entire coefficient row, not the single number $\chi(\sigma)$ at a chosen phase index $\sigma$; together with $\tau,c_{\rm ref},H$, this row determines $\zetavec^\circ$ and hence the target.

\begin{definition}[Additional data for the next step]\label{def:admissible}
In addition to the clock data in Definition~\ref{def:clock}, assume that we have an integer $\tau\ge1$, a rational reference gain $c_{\rm ref}\in(0,1)$, an integer correction row $\ell_*\in\Z^{1\times w}$, and a strictly positive weight vector $a_P\in\Q_{>0}^m$ with the following properties:
\begin{enumerate}[label=(\roman*),leftmargin=*]
 \item The correction row satisfies the compatibility tests with indices above $k$:
 \begin{equation}\label{eq:higher}
  \ip{Z^{(h)}}{(D^\circ)^{\top}X-\ell_*^{\top}\psi^\circ}=0
  \qquad\text{for every even }k<h\le2k.
 \end{equation}
 \item The weights satisfy $a_P^{\top}C=t_{\tau,c_{\rm ref}}(\chivec)$, where the target row is uniquely defined by~\eqref{eq:target}. Equivalently, with the positive endpoint weight $a_g=2H-c_{\rm ref}$,
 \begin{equation}\label{eq:coneidentity}
  a=\begin{pmatrix}a_P\\2H-c_{\rm ref}\end{pmatrix}\in\Q_{>0}^{m+1},
  \qquad a^{\top}X=\zetavec^\circ+\zetavec^\circ B.
 \end{equation}
\end{enumerate}
\end{definition}
The equivalence in (ii) follows from $X=\col(CF_\tau,\erow{w})$ and~\eqref{eq:target}, since right multiplication by $F_\tau$ is injective on width-$k$ rows. Thus, the target is a strictly positive weighted sum of the old-block count polynomials. In Section~\ref{sec:synthesis}, this will make the leading coupling to the two reset groups equal to $[-a\one_r^{\top}\mid a\one_r^{\top}]$, with opposite signs in the two groups.

\paragraph{Proof roadmap for Theorem~\ref{thm:lift}.}
The input is an order-$k$ parent clock together with $(\tau,c_{\rm ref},\ell_*,a_P)$ from Definition~\ref{def:admissible}. The goal is an order-$(k+1)$ clock that satisfies both the trajectory conditions and this same additional-data definition, so another lift is possible. We prove the theorem across Sections~\ref{sec:gateway} and~\ref{sec:onestepproof}, as follows.
\begin{enumerate}[leftmargin=*]
 \item Section~\ref{sec:gateway} introduces the parent and old-action coefficient tables. Lemma~\ref{lem:gateway} verifies the endpoint action and derives the key identities for the required score change $D$; Lemmas~\ref{lem:resetidentity}--\ref{lem:interpolate} then derive the two reset equations and reduce exact score restoration to the matrix compatibility equation~\eqref{eq:moment}.
 \item Section~\ref{sec:moment} characterizes the compatibility tests in Lemma~\ref{lem:cokernel}, constructs the integer correction required by Definition~\ref{def:admissible}(i) in Lemma~\ref{lem:scale}, and verifies the additional RPS data in Proposition~\ref{prop:seedadmissible}.
 \item Section~\ref{sec:synthesis} chooses the actual gain $c$ and the total reset counts. If $k$ is even, then a small change from $c_{\rm ref}$ cancels the lowest test $h=k$; all higher tests stay satisfied. It then constructs $M,J,P$, solves the two reset equations, and proves the sign and waiting bounds.
 \item Section~\ref{sec:next} supplies the strictly positive weights required by Definition~\ref{def:admissible}(ii) at the next order and chooses the remaining count parameter so that the construction can be repeated.
 \item Section~\ref{sec:chronology} proves that every action in the proposed word is a best response, moves the phase boundary to the required entry action, and checks the new counter, prefix margin, and remaining-height bound. Section~\ref{sec:liftproof} gives the concluding proof of Theorem~\ref{thm:lift} by collecting these ingredients.
\end{enumerate}
The theorem therefore does not assume that solving the matrix equation alone proves best-response validity. We provide that final dynamical verification and pass the additional data to the next order later.

\subsection{Verifying the endpoint action}\label{sec:endpoint-proof}
The next lemma verifies that $g$ waits at every proper bundle prefix and wins exactly at the endpoint. It also identifies the intermediate scores after the single play of $g$ and shows that the required reset $D$ has no degree-$k$ term. The latter fact will remove one compatibility test automatically.

\begin{lemma}[Endpoint-action identities]\label{lem:gateway}
During the parent bundle $\W_k(\tau q)\cdots\W_k(2\tau q-1)$ starting from $\mathbf s(q)$, with remaining bundled parent counts $z^{\rm rem}$, the endpoint action score relative to $b$ satisfies
\begin{equation}\label{eq:gatewaygap}
  u_g - u_b = \delta - v z^{\rm rem}.
\end{equation}
Consequently, the endpoint action loses to the active parent maximum by at least $\delta_0 - \delta > 0$ at every proper prefix of the parent bundle, and exceeds the parent maximum by $\delta > 0$ at the exact bundle endpoint. After its single play, the parent maximum is $u_b = \zeta(q) - c$ and the endpoint action score is $u_g = \zeta(q) - c + \delta$. Moreover,
\begin{equation}\label{eq:top}
\begin{gathered}
 S_{b,:}=\zetavec B,\qquad (S_{\mathrm{mid}})_{b,:}=\zetavec-c\erow{w},\qquad
 (S_{\mathrm{mid}})_{g,:}=\zetavec-(c-\delta)\erow{w},\\
 S_{:,k}=(S_{\mathrm{mid}})_{:,k}=\zetavec_k\one_{m+1},\qquad D_{:,k}=0,
\end{gathered}
\end{equation}
where $\zetavec_k = \chivec_{k-1} \tau^{k-1}(2^{k-1} - 1) > 0$, and the required score change table $D$ is independent of $\delta$.
\end{lemma}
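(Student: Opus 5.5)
The plan is to track everything through the coefficient tables in~\eqref{eq:old} and to reduce each dynamical claim to one of three ingredients: the clock conditions of Definition~\ref{def:clock}, the defining relations~\eqref{eq:y}, and the transfer identities~\eqref{eq:maps}.

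\textbf{Step 1: the endpoint identity~\eqref{eq:gatewaygap}.} I first check it at the bundle entrance from the tables. By definition $h=\zetavec B-\chivec E_\tau$, so the $b$th row of $S$ is $\chivec E_\tau+\hvec=\zetavec B$; this is the first identity in~\eqref{eq:top}. Subtracting it from the $g$th row gives
\[
S_{g,:}-S_{b,:}=\delta\erow{w}-vCF_\tau .
\]
Evaluating at $q$ gives $\delta-v\sum_{\sigma=\tau q}^{2\tau q-1}\mathbf c(\sigma)$, which is $\delta-vz^{\rm rem}$ for the full bundle. Propagation is the invariance argument of Section~\ref{sec:overviewendpoint}. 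Row $g$ of $G_O$ against parent action $i$ is $A_{b,i}+v_i$, so a play of $i$ changes $u_g-u_b$ by $v_i$ and changes $vz^{\rm rem}$ by $-v_i$.

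\textbf{Step 2: waiting and winning.} At a proper prefix, let $z^{\rm rem}_{\rm ph}$ count the plays remaining in the current parent phase. The bundle remainder satisfies $vz^{\rm rem}=vz^{\rm rem}_{\rm ph}+(\text{number of later full phases})$, because $vC=\erow{k}$ gives $v\mathbf c(\sigma)=1$ for each complete phase. The parent scores are the clock scores shifted by the common amount $h(q)$, so~\eqref{eq:marker} applies unchanged to the current phase. Hence
\[
H_{\rm parent}-u_g=H_{\rm parent}-u_b+vz^{\rm rem}-\delta\ge\delta_0-\delta .
\]
At the endpoint we have $z^{\rm rem}=0$ and $b$ leads the parent at state $2\tau q$. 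Its score there is
\[
u_b=\chi(2\tau q)+h(q)=\zeta(q-1)+\chi(2\tau q)-\chi(\tau q)=\zeta(q)-c
\]
by~\eqref{eq:y}, and therefore $u_g=\zeta(q)-c+\delta$.

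\textbf{Step 3: the play of $g$.} The column of $g$ restricted to the parent is $Ae_b-v^\top$. Since $A_{bb}=v_b=0$ and $G_{gg}=0$, this play leaves both $u_b$ and $u_g$ unchanged. Action $b$ stays the parent maximizer for the following reason. Off $\supp(v)$, every phase starts with two plays of $b$, so adding $Ae_b$ at the entry state keeps $b$ maximal. On $\supp(v)$ I use~\eqref{eq:supportsafe} at index $2\tau q$, again shifted by the common amount $h(q)$.

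\textbf{Step 4: the remaining table identities.} I need the telescoping identity $N_kF_\tau=E_{2\tau}-E_\tau$, which follows from~\eqref{eq:translations} and~\eqref{eq:maps}. Combined with $AC=RN_k$, it gives
\[
(S_{\rm mid})_{b,:}=\zetavec B+A_{b,:}CF_\tau=\zetavec B+\chivec(E_{2\tau}-E_\tau)=\zetavec-c\erow{w},
\]
using~\eqref{eq:y}. Here the $g$-entry of row $b$ of $G_O$ is $-(A_{bb}+v_b)=0$. Row $g$ similarly picks up $(A_{b,:}+v)CF_\tau$ with a zero self-entry, which gives $\zetavec-(c-\delta)\erow{w}$.

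For column $k$, the entries of $RE_\tau$ and $vCF_\tau=\erow{k}F_\tau$ have degree at most $k-1$, since row $0$ of $F_\tau$ has degree $1\le k-1$. The row $\zetavec B$ has top entry $\zetavec_k$. Hence $S_{:,k}=\zetavec_k\one_{m+1}$. Within $G_OX$, the parent rows give $ACF_\tau=RN_kF_\tau$, of degree at most $k-1$. The $g$ row gives $\chivec(E_{2\tau}-E_\tau)+vCF_\tau$, also of degree at most $k-1$. Hence $(S_{\rm mid})_{:,k}=S_{:,k}$. Column $k$ of $SN$ is zero because $(SN)_{:,j}=S_{:,j+1}$, and so $D_{:,k}=0$. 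The same shift structure discards column $0$ of $S$, which is the only place $\delta$ enters. Since $G_OX$ is independent of $\delta$, so is $D$.

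Finally, $\zetavec_k$ is read off by matching degree-$k$ coefficients in $\zetavec(I-B)=\chivec(E_{2\tau}-E_\tau)+c\erow{w}$, in the same way as~\eqref{eq:Fdiag}. I expect the main obstacle to be the bookkeeping in Step 2. Specifically, one has to justify that the per-phase margin~\eqref{eq:marker} combines with the extra full-phase counts without loss, and that the shifted parent scores inherit~\eqref{eq:marker} and~\eqref{eq:supportsafe} exactly.
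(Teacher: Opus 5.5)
Your proposal is correct and follows essentially the same route as the paper. It uses the invariant $(u_g-u_b)+vz^{\rm rem}$, the per-phase margin~\eqref{eq:marker} plus one unit for each later full phase, and entry leadership together with~\eqref{eq:supportsafe} at index $2\tau q$ for the play of $g$, finishing with degree and shift arguments for the top column and for the $\delta$-independence of $D$. The only differences are cosmetic: you derive the rows of $S_{\mathrm{mid}}$ directly from $A_{b,:}C=\chivec N_k$ and $N_kF_\tau=E_{2\tau}-E_\tau$ rather than reading them off the evaluated scores, and you drop the constant term $-(A_{b,:}+v)^{\top}\erow{w}$ in the parent rows of $G_OX$, which is harmless since it does not touch column $k$.
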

\begin{proof}
We first establish the invariant score difference~\eqref{eq:gatewaygap}, then use the parent's prefix condition to locate the handoff, and finally check the coefficient identities. At the start, the $b$ row of $S$ and the definition of $h$ in~\eqref{eq:y} give
\[
 u_b=\chi(\tau q)+h(q)=\zeta(q-1),\qquad
 u_g=\zeta(q-1)-vCF_\tau p_k(q)+\delta.
\]
All parent plays are still remaining, so $z^{\rm rem}=CF_\tau p_k(q)$; subtracting the two scores proves~\eqref{eq:gatewaygap} initially. A parent play $i$ adds $A_{bi}$ to $u_b$ and $A_{bi}+v_i$ to $u_g$, and replaces $z^{\rm rem}$ by $z^{\rm rem}-e_i$. Hence the score difference after one play is
\[
 (u_g-u_b)^{\rm new}=(u_g-u_b)+v_i
 =\delta-v(z^{\rm rem}-e_i),
\]
which preserves the identity~\eqref{eq:gatewaygap} with the updated remaining count vector. At a proper prefix, let $\sigma$ be the index of the current parent phase, so $\tau q\le \sigma<2\tau q$. At an internal phase boundary, use the next phase with its empty prefix. Let $z_\sigma^{\rm rem}$ count the plays remaining in this phase alone. There are $2\tau q-1-\sigma$ later complete phases. Each contributes $vCp_{k-1}(s)=\erow{k}p_{k-1}(s)=1$, so
\[
 vz^{\rm rem}=vz_\sigma^{\rm rem}+2\tau q-1-\sigma.
\]
Substitute this decomposition into~\eqref{eq:gatewaygap}. The parent prefix condition~\eqref{eq:marker} therefore gives
\[
 H_{\rm parent}-u_g
 =\bigl(H_{\rm parent}-u_b+vz_\sigma^{\rm rem}\bigr)
   +(2\tau q-1-\sigma)-\delta
 \ge\delta_0-\delta>0.
\]
This uses the parent maximum, without requiring $b$ to remain a maximizer or $vz_\sigma^{\rm rem}$ to be positive.
At the bundle endpoint, $z^{\rm rem}=0$. The consecutive parent phase identities take the parent scores from $\mathbf r(\tau q)+\one_mh(q)$ to $\mathbf r(2\tau q) + \one_m h(q)$, whose maximum is attained at $b$ by the parent entry-leadership condition. Its value is $\chi(2\tau q)+h(q)=\zeta(q)-c$ by the scalar height equation following~\eqref{eq:y}. Setting $z^{\rm rem}=0$ in~\eqref{eq:gatewaygap} then gives $u_g=u_b+\delta=\zeta(q)-c+\delta$.

Playing $g$ once adds $(G_O)_{0:m, g} = A e_b - v^{\top}$ to the parent scores. Because $\W_k(2\tau q)$ begins with at least two plays of $b$, action $b$ remains a maximizer of $\mathbf r(2\tau q) + A e_b$. For actions outside $\supp(v)$, the update agrees with that first play of $b$. On $\supp(v)$, apply~\eqref{eq:supportsafe} at parent phase $2\tau q$: it states that each coordinate of $\mathbf r(2\tau q)+Ae_b-v^{\top}$ on that support is strictly below its $b$ coordinate. Adding the common translation $h(q)$ preserves these comparisons. Since $A_{bb} = v_b = 0$, the score of $b$ remains $\zeta(q) - c$. To check the top column, telescope the parent differences over the bundle:
\[
 N_kF_\tau=E_{2\tau}-E_\tau,\qquad
 ACF_\tau=R(E_{2\tau}-E_\tau).
\]
The evaluated entrance and exit tables have degree at most $k-1$, so the last column of each right-hand side is zero. Also $vCF_\tau=\erow{k}F_\tau$ represents $\tau q$, of degree $1<k$. The contribution from the single play of $g$ is constant. Thus $(G_OX)_{:,k}=0$ in both the parent and endpoint rows. The parent table $RE_\tau$ has degree at most $k-1$, while $\hvec$ and $\zetavec B$ both have degree-$k$ coefficient
$\zetavec_k=\chivec_{k-1}\tau^{k-1}(2^{k-1}-1)>0$.
Consequently all old actions share this top coefficient before and after the old word. A shift preserves the top coefficient, so $D=S\mathsf{T}-S_{\mathrm{mid}}$ has zero top column. Changing $\delta$ shifts $S_{g,:}$ by a constant multiple of $\erow{w}$, which satisfies $\erow{w} N = 0$ and leaves $D = SN - G_O X$ unchanged.
\end{proof}

\subsection{Restoring both score blocks with one coupling matrix}\label{sec:reset-equations}
After the old-block word $\Aw(q)$, the old scores are at $\mathbf s_{\mathrm{mid}}(q)$ and must be changed to $\mathbf s(q+1)$. The reset actions also need their own next-entry scores: otherwise the same reset schedule need not work again in phase $q+1$. We therefore impose simultaneous restoration of both blocks, using one coupling matrix $P$ whose transpose controls the reverse interactions by skew-symmetry.

For this lift, choose $n=2r=2(k+1)$ reset actions. This is the number used by our two-group construction. Let $M\in\Z^{n\times w}$ be their as-yet-undetermined count table, let $J\in\Q^{n\times n}$ be a strictly lower-triangular interaction matrix, and let $P\in\Q^{(m+1)\times n}$ be the coupling from reset actions to old actions. The desired reset-entry state after the old word is $\one_n\zetavec-JM$: earlier runs will raise each later reset action to the common ceiling $\zeta(q)$. At the end of all runs it becomes $\one_n\zetavec-J^{\top}M$, which must be $V\mathsf{T}$. This determines $V$; the old word changes reset scores by $-P^{\top}X$, determining $Q$. Define the skew-symmetric block $K$ for interactions among reset actions, the full lifted skew-symmetric matrix $G$, the initial reset-action score table $V$, and the table of score changes from old actions to reset actions $Q$ by
\begin{equation}\label{eq:controllerdata}
\begin{gathered}
  K = J - J^{\top}, \qquad G = \begin{pmatrix}G_O & P \\ -P^{\top} & K\end{pmatrix}, \\
  V = \one_n \zetavec B - J^{\top} M B, \qquad Q = \one_n(\zetavec B - \zetavec ) + JM - J^{\top} M B.
\end{gathered}
\end{equation}
Here $K\in\Q^{n\times n}$, $G\in\Q^{(m+1+n)\times(m+1+n)}$, and $V,Q\in\Q^{n\times w}$. Write $\mathbf m(q):=Mp_k(q)\in\Z^n$ for the reset-count vector, $\mathbf v(q):=Vp_k(q)\in\Q^n$ for the reset-entry score vector, and $\mathbf u(q):=Up_k(q)\in\Q^{m+1+n}$ for the full entry score vector. The prescribed word of reset actions is $\Bw(q) = \reset{0}^{\mathbf m_0(q)} \reset{1}^{\mathbf m_1(q)} \cdots \reset{n-1}^{\mathbf m_{n-1}(q)}$, and the full phase-entry score table of the unrotated outer phase $\Aw(q)\Bw(q)$ is $U = \col(S, V)$.

The entries of $M,J,P$ will be constructed in Section~\ref{sec:synthesis}. One requirement is $J_{ij}>0$ for every $i>j$: a play of an earlier reset action then raises the score of a later reset action, whereas a play of the later one lowers the earlier action's score. Together with the entry scores in~\eqref{eq:controllerdata}, positive run counts, and the coupling bounds proved later, these signs must make each $\reset{i}$ remain a maximizer for exactly $\mathbf m_i(q)$ plays and hand off to $\reset{i+1}$ in the prescribed order. Proposition~\ref{prop:chronology} verifies this best-response property, including competition from the old actions.

For the proposed score tables, the simultaneous restoration requirement is precisely the following pair of linear equations for $P$:
\begin{equation}\label{eq:twosided}
  PM = D, \qquad P^{\top} X = Q.
\end{equation}
We state explicitly what these equations accomplish before solving them.
\begin{lemma}[The two coupling equations give the next phase-entry scores]\label{lem:resetidentity}
With the tables in~\eqref{eq:old} and~\eqref{eq:controllerdata}, equations~\eqref{eq:twosided} hold if and only if the combined word with count table $\col(X,M)$ changes the full score table from $U=\col(S,V)$ to $U\mathsf{T}$. Equivalently,
\begin{equation}\label{eq:fullreset}
 G\begin{pmatrix}X\\M\end{pmatrix}=UN.
\end{equation}
\end{lemma}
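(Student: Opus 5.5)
The plan is to expand the block equation \eqref{eq:fullreset} row-block by row-block and match each block with one equation in \eqref{eq:twosided}. This is pure bookkeeping with the tables in \eqref{eq:old} and \eqref{eq:controllerdata}. Since $U=\col(S,V)$ and $G$ has blocks $G_O,P,-P^\top,K$, the product $G\,\col(X,M)$ has top block $G_OX+PM$ and bottom block $-P^\top X+KM$. Equation \eqref{eq:fullreset} is therefore the pair
\[
 G_OX+PM=SN,\qquad -P^{\top}X+KM=VN.
\]
By the third relation \eqref{eq:translations}, the statement that the combined word moves $U$ to $U\mathsf{T}$ is the same as this pair, because the score change is $G\,\col(X,M)p_k(q)$ and the required change is $U(\mathsf{T}-I)p_k(q)$.

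\emph{Top block.} By definition $D=SN-G_OX$, so $G_OX+PM=SN$ holds exactly when $PM=D$.

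\emph{Bottom block.} I would show that $KM-VN=Q$ holds identically for every choice of $M$ and $J$. Granting that, the bottom equation is equivalent to $P^\top X=Q$, and the lemma follows. The computation has three parts.
\begin{enumerate}
 \item Since $N=\mathsf{T}-I$ and $B\mathsf{T}=I$, we get $BN=I-B$.
 \item From $V=\one_n\zetavec B-J^\top MB$ this gives
 \[
  VN=\one_n(\zetavec-\zetavec B)-J^\top M+J^\top MB.
 \]
 \item Subtracting from $KM=JM-J^\top M$ gives
 \[
  KM-VN=\one_n(\zetavec B-\zetavec)+JM-J^\top MB,
 \]
 which is exactly $Q$ in \eqref{eq:controllerdata}.
\end{enumerate}

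Both steps are equivalences, so the converse direction follows as well. I do not expect a real obstacle here. The only care needed is keeping the backward-shift identity $BN=I-B$ straight, and noting that the matrices act on coefficient tables from the right. The equivalence is purely algebraic and does not use any sign conditions on $J$, $P$, or $M$.
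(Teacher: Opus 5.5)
Your proof is correct and is essentially the paper's argument. The top block reduces to $PM=D$ by the definition $D=SN-G_OX$, and your identity $KM-VN=Q$ is exactly the computation the paper uses in its converse direction; its forward direction instead tracks the intermediate reset scores $V-Q=\one_n\zetavec-JM$ and then $\one_n\zetavec-J^{\top}M=V\mathsf{T}$, whereas your single identity gives both directions at once. One small citation slip: passing from ``moves $U$ to $U\mathsf{T}$'' to $G\,\col(X,M)=UN$ uses only $N=\mathsf{T}-I$ (the second relation in \eqref{eq:translations}), not the backward-shift relation.
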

We note that~\eqref{eq:fullreset} is an identity of coefficient tables; whether the word is actually played is a separate best-response question.
\begin{proof}
We verify the old and reset blocks separately. After the old-block word, the old scores are $S_{\mathrm{mid}}=S+G_OX$. If $PM=D$, then the reset word changes them to
\[
 S_{\mathrm{mid}}+PM=S_{\mathrm{mid}}+D=S\mathsf{T}.
\]
Here the last equality is the definition $D=S\mathsf{T}-S_{\mathrm{mid}}$ in~\eqref{eq:old}, so the reset word supplies exactly the missing old-score change.
For the reset block, $P^{\top}X=Q$ and~\eqref{eq:controllerdata} give the intermediate scores
\[
 V-P^{\top}X=V-Q=\one_n\zetavec-JM.
\]
The reset word then gives
\[
 (\one_n\zetavec-JM)+(J-J^{\top})M
 =\one_n\zetavec-J^{\top}M=V\mathsf{T},
\]
where the last identity uses $B\mathsf{T}=I_w$. Thus, both blocks equal their next-entry tables, proving~\eqref{eq:fullreset}.
Conversely, the old block of~\eqref{eq:fullreset} gives $PM=SN-G_OX=D$. Its reset block gives
\[
 P^{\top}X=V+(J-J^{\top})M-V\mathsf{T}
 =\one_n(\zetavec B-\zetavec)+JM-J^{\top}MB=Q.
\]
Hence the exact full reset is equivalent to both coupling equations.
\end{proof}

The next lemma isolates the only algebraic obstruction to finding such a coupling $P$: both equations must give the same matrix when paired with the opposite count table. Full column rank makes this condition sufficient as well as necessary.

\begin{lemma}[Solving $PM=D$ and $P^{\top}X=Q$ simultaneously]\label{lem:interpolate}
Suppose $X \in \Q^{(m+1)\times w}$ and $M \in \Q^{n \times w}$ have full column rank $w$. The two equations~\eqref{eq:twosided} are solvable for $P \in \Q^{(m+1)\times n}$ if and only if the compatibility identity holds:
\begin{equation}\label{eq:work}
  D^{\top} X = M^{\top} Q.
\end{equation}
Moreover, let $L_X\in\Q^{(m+1)\times w}$ and $R_M\in\Q^{w\times n}$ be one-sided inverse matrices satisfying $X^{\top}L_X=I_w$ and $R_MM=I_w$. Then, for any reference matrix $P_0\in\Q^{(m+1)\times n}$, a solution to~\eqref{eq:twosided} is given explicitly by
\begin{equation}\label{eq:interpolate}
\begin{aligned}
 D_R&=D-P_0M\in\Q^{(m+1)\times w},\qquad
 Q_R=Q-P_0^{\top}X\in\Q^{n\times w},\\
 P&=P_0+D_RR_M+L_X\bigl(Q_R^{\top}-X^{\top}D_RR_M\bigr).
\end{aligned}
\end{equation}
\end{lemma}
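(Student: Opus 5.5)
Necessity is immediate from the two equations. If $PM=D$ and $P^{\top}X=Q$, then
\[
D^{\top}X=(PM)^{\top}X=M^{\top}(P^{\top}X)=M^{\top}Q,
\]
which is exactly~\eqref{eq:work}. Full column rank plays no role here. For sufficiency, the plan is to verify directly that the explicit formula~\eqref{eq:interpolate} solves both equations whenever~\eqref{eq:work} holds. This also proves existence, since full column rank of $X$ and $M$ guarantees one-sided inverses with rational entries. For instance, take $L_X=X(X^{\top}X)^{-1}$ and $R_M=(M^{\top}M)^{-1}M^{\top}$.

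First I reduce to the residual problem. Writing $P=P_0+P_R$, the equations become $P_RM=D_R$ and $P_R^{\top}X=Q_R$. The compatibility identity transfers to the residuals:
\[
D_R^{\top}X=D^{\top}X-M^{\top}P_0^{\top}X=M^{\top}Q-M^{\top}P_0^{\top}X=M^{\top}Q_R.
\]
So it suffices to show that $P_R:=D_RR_M+L_X(Q_R^{\top}-X^{\top}D_RR_M)$ satisfies both residual equations, assuming $D_R^{\top}X=M^{\top}Q_R$.

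\textbf{First equation.} Multiply on the right by $M$ and use $R_MM=I_w$:
\[
P_RM=D_R+L_X\bigl(Q_R^{\top}M-X^{\top}D_R\bigr).
\]
The bracket is the transpose of $M^{\top}Q_R-D_R^{\top}X$, which vanishes by the residual compatibility identity. Hence $P_RM=D_R$.

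\textbf{Second equation.} Transpose and multiply by $X$, using $L_X^{\top}X=(X^{\top}L_X)^{\top}=I_w$:
\[
P_R^{\top}X=R_M^{\top}D_R^{\top}X+\bigl(Q_R-R_M^{\top}D_R^{\top}X\bigr)=Q_R.
\]
This equation holds with no need for compatibility; only the first one uses it.

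The argument is routine linear algebra, and the only step that needs care is the transpose bookkeeping in these two checks. In particular, $X^{\top}L_X=I_w$ must be turned into $L_X^{\top}X=I_w$ before it can be applied after transposing $P_R$. Rationality of $P$ follows because every ingredient is rational.
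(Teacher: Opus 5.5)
Your proposal is correct and follows essentially the same route as the paper: necessity by the one-line transpose computation, and sufficiency by checking directly that the explicit formula satisfies both equations, with compatibility used only for $PM=D$ (via $Q_R^{\top}M=X^{\top}D_R$). Your explicit reduction to the residual unknown $P_R=P-P_0$ and your note on the rational one-sided inverses are harmless additions that the paper handles implicitly.
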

\begin{proof}
We prove first that compatibility is necessary and then check that formula~\eqref{eq:interpolate} solves both equations. Its first correction $D_RR_M$ fixes $PM=D$. Its second fixes $X^{\top}P=Q^{\top}$ without changing $PM$, because compatibility makes its product with $M$ zero. If $PM=D$ and $P^{\top}X=Q$, then
\[
 D^{\top}X=(PM)^{\top}X=M^{\top}(P^{\top}X)=M^{\top}Q,
\]
which is~\eqref{eq:work}. Conversely, assume~\eqref{eq:work}. Transposing it and using the residual definitions gives
\[
 Q_R^{\top}M=(Q^{\top}-X^{\top}P_0)M
 =X^{\top}(D-P_0M)=X^{\top}D_R.
\]
Consequently, right-multiplying~\eqref{eq:interpolate} by $M$ yields
\[
 PM=P_0M+D_R+L_X(Q_R^{\top}M-X^{\top}D_R)=P_0M+D_R=D.
\]
For the equation $P^{\top}X=Q$ in~\eqref{eq:twosided}, $L_X^{\top}X=I_w$ gives
\[
 \begin{aligned}
 P^{\top}X
 &=P_0^{\top}X+R_M^{\top}D_R^{\top}X
   +(Q_R-R_M^{\top}D_R^{\top}X)L_X^{\top}X\\
 &=P_0^{\top}X+Q_R=Q.
 \end{aligned}
\]
Thus, the same matrix $P$ satisfies both equations, establishing sufficiency and the claimed formula.
\end{proof}
For definiteness, we always choose $L_X=X(X^{\top}X)^{-1}$, which is rational because $X$ has full column rank. The left inverse $R_M$ will be specified by the construction of the reset actions.

Write $\ell=\one_n^{\top}M\in\Z^{1\times w}$ for the total reset-action count row and $E=M^{\top}JM\in\Q^{w\times w}$ for the aggregate interaction matrix. Define the coefficient row $\psi:=\zetavec B-\zetavec\in\Q^{1\times w}$ and its scalar evaluation $\psi(q):=\psi p_k(q)=\zeta(q-1)-\zeta(q)$, the change in the height polynomial when the phase index decreases by one. Substituting $Q=\one_n\psi+JM-J^{\top}MB$ gives
\[
 M^{\top}Q=\ell^{\top}\psi+M^{\top}JM-M^{\top}J^{\top}MB
 =\ell^{\top}\psi+E-E^{\top}B.
\]
Thus, under the full-column-rank hypotheses of Lemma~\ref{lem:interpolate}, solvability of~\eqref{eq:twosided} is equivalent to the $w\times w$ linear matrix equation
\begin{equation}\label{eq:moment}
  \LL(E) := E - E^{\top} B = D^{\top} X - \ell^{\top} \psi.
\end{equation}
Equation~\eqref{eq:moment} is the \emph{matrix compatibility equation}: solving it ensures that the required changes to old scores and reset-action scores agree with skew-symmetry. In Sections~\ref{sec:moment} and~\ref{sec:synthesis}, we prove that suitable count and interaction tables solve this equation with the required positive reset interactions. Section~\ref{sec:chronology} then uses these signs and the coupling bounds to prove that every prescribed action in each outer phase is a best response.

\section{Proof of the one-step theorem}\label{sec:onestepproof}
\subsection{Compatibility tests and the integer correction}\label{sec:moment}
Section~\ref{sec:gateway} reduced exact score restoration to the matrix compatibility equation
\[
 \LL(E):=E-E^{\top}B=D^{\top}X-\ell^{\top}\psi.
\]
In this section, we characterize exactly when this equation is solvable and use that characterization to construct the integer scale and correction required by Definition~\ref{def:admissible}(i). Lemma~\ref{lem:cokernel} gives a complete family of scalar compatibility tests for $\LL(E)=F$, and Lemma~\ref{lem:scale} chooses the scale $\tau$ and correction row $\ell_*$ so that every test with index strictly above $k$ is satisfied. This leaves at most the lowest-index test $h=k$, which will be removed by the gain adjustment in Section~\ref{sec:synthesis}. At this stage we address only algebraic compatibility; payoff signs and best-response validity are handled later.

Section~\ref{sec:compatibilitytests} derives the complete family of compatibility tests, Section~\ref{sec:scale} constructs the integer scale and correction satisfying the higher-index tests, and Section~\ref{sec:rpsadmissible} verifies Definition~\ref{def:admissible} for the RPS clock in Proposition~\ref{prop:seedadmissible}.

Recall that $w=k+1$ is the coefficient-table width, $B=\mathsf{T}_w^{-1}$ is the backward shift, and
\[
 \LL(E)=E-E^{\top}B
\]
is the operator in~\eqref{eq:moment}. The lower-triangular matrix $S_w$ changes scaled-monomial coordinates to binomial coordinates:
\[
 p_k(q)=S_w(1,q,q^2/2!,\ldots,q^k/k!)^{\top}.
\]
The tables $D,X\in\Q^{(m+1)\times w}$ encode the required old-score change and old-word counts in~\eqref{eq:old}. Below, $F$ denotes a generic $w\times w$ right-hand side, distinct from the transfer matrix $F_\tau$ in~\eqref{eq:maps}.

\subsubsection{Why the listed linear tests characterize solvability}\label{sec:compatibilitytests}
We now characterize the image of $\LL$, and hence all obstructions to solving $\LL(E)=F$. For comparison, if the backward shift $B$ were replaced by the identity, the equation would become $E-E^{\top}=F$, so $F$ would have to be skew-symmetric, imposing $\binom{k+2}{2}$ scalar conditions. The backward shift changes these dependencies substantially: Lemma~\ref{lem:cokernel} will show that the actual operator $\LL$ has only $\lfloor k/2\rfloor+1$ independent compatibility tests. We identify them by computing the kernel of the adjoint operator.

For the matrix inner product $\ip{U}{V}=\sum_{i,j}U_{ij}V_{ij}$, define the adjoint $\LL^*$ by $\ip{Z}{\LL(E)}=\ip{\LL^*(Z)}{E}$. Here $\LL^*(Z)=Z-BZ^{\top}$. Finite-dimensional linear algebra gives $\operatorname{im}(\LL)=\ker(\LL^*)^{\perp}$, hence
\[
 F\in\operatorname{im}(\LL)
 \quad\Longleftrightarrow\quad
 \ip{Z}{F}=0\ \text{for every }Z\text{ with }\LL^*(Z)=0.
\]
Indeed, these $Z$ are precisely the matrices orthogonal to every output of $\LL$; taking the orthogonal complement twice recovers its output space. Therefore, to prove that $\LL(E)=F$ has a solution, it is enough to find a basis of $\ker(\LL^*)$ and show that the weighted sum $\sum_{i,j}Z_{ij}F_{ij}$ vanishes for each basis matrix.

In the scaled-monomial coordinates defined in Section~\ref{sec:admissible}, the backward shift $q\mapsto q-1$ acts by the matrix exponential $e^{-N_0}$, because
\[
 \frac{(q-1)^i}{i!}=\sum_{j=0}^i\frac{(-1)^{i-j}}{(i-j)!}\frac{q^j}{j!},
 \qquad e^{-N_0}=\sum_{j=0}^k\frac{(-N_0)^j}{j!}.
\]
The next lemma shows that the explicit matrices in~\eqref{eq:cokernel} form such a basis. In particular, there are no additional hidden consistency requirements.

\begin{lemma}[All linear compatibility tests for $\LL$]\label{lem:cokernel}
In scaled-monomial coordinates, the matrices $Z_{\rm DP}^{(h)}$ in~\eqref{eq:cokernel}, indexed by even $h\in[k,2k]$, form a basis of the kernel of $\LL_{\rm DP}^*(Z)=Z-e^{-N_0}Z^{\top}$. In binomial coordinates, the corresponding basis is $Z^{(h)}=S_wZ_{\rm DP}^{(h)}S_w^{\top}$. Consequently, $\LL(E)=F$ is solvable over $\Q$ if and only if $\ip{Z^{(h)}}{F}=0$ for every even integer $h\in[k,2k]$.
\end{lemma}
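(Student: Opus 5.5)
We will compute $\ker\LL_{\rm DP}^*$ exactly in scaled-monomial coordinates, transfer the answer to binomial coordinates, and then use $\operatorname{im}(\LL)=\ker(\LL^*)^{\perp}$. The key step is a conjugation by $e^{-N_0/2}$. It turns the twisted symmetry condition $Z=e^{-N_0}Z^{\top}$ into an honest infinitesimal invariance $N_0Y+YN_0^{\top}=0$, together with ordinary symmetry $Y^{\top}=Y$.

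\emph{Step 1 (reduction).} Since $e^{-N_0}$ is invertible, $\LL_{\rm DP}^*(Z)=0$ is equivalent to $Z^{\top}=e^{N_0}Z$. Substitute $Z=e^{-N_0/2}Y$, which is a bijection. Transposing the condition gives $Y^{\top}e^{-N_0^{\top}/2}=e^{N_0/2}Y$, that is,
\[
 Y^{\top}=e^{N_0/2}\,Y\,e^{N_0^{\top}/2}.
\]
Transpose this once more and substitute it into itself. This yields $Y=e^{N_0}Ye^{N_0^{\top}}$, so $Y$ is fixed by the unipotent operator $\Phi(Y)=e^{N_0}Ye^{N_0^{\top}}=\exp(\mathrm{ad})(Y)$, where $\mathrm{ad}(Y):=N_0Y+YN_0^{\top}$ is nilpotent. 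A unipotent $\exp(\mathrm{ad})$ fixes $Y$ exactly when $\mathrm{ad}(Y)=0$, since the logarithm is a finite polynomial. Hence every kernel element satisfies $N_0Y+YN_0^{\top}=0$. Conversely, if $\mathrm{ad}(Y)=0$, then $e^{N_0/2}Ye^{N_0^{\top}/2}=Y$, and the displayed condition reduces to $Y^{\top}=Y$.

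\emph{Step 2 (solving the infinitesimal equation).} Entrywise, $\mathrm{ad}(Y)=0$ reads $Y_{i-1,j}+Y_{i,j-1}=0$ for $0\le i,j\le k$, with out-of-range entries treated as zero. Thus $Y_{ij}=(-1)^i\gamma_{i+j}$ is constant up to alternating sign along each antidiagonal $i+j=h$. Now use the boundary equations. At $(0,j)$ with $1\le j\le k$ we get $Y_{0,j-1}=0$, so $\gamma_h=0$ for $h\le k-1$. No constraint arises for $h\ge k$, because the equation at $(i,j)$ only links entries on the antidiagonal $i+j-1$, and for $i,j\le k$ these never touch a missing index when $h\ge k$. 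Hence the solutions are exactly the combinations $Y=\sum_{h=k}^{2k}\gamma_hV^{(h)}$.

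Symmetry then finishes the description. Since $V^{(h)}_{ji}=(-1)^{h-i}=(-1)^hV^{(h)}_{ij}$, the matrix $V^{(h)}$ is symmetric for even $h$ and skew for odd $h$. So $Y^{\top}=Y$ forces $\gamma_h=0$ for odd $h$. The $V^{(h)}$ with even $h\in[k,2k]$ have disjoint supports, so they are independent, and so are their images $Z_{\rm DP}^{(h)}=e^{-N_0/2}V^{(h)}$. This gives the basis claim.

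\emph{Step 3 (coordinates and conclusion).} In binomial coordinates the backward shift is $B=S_we^{-N_0}S_w^{-1}$, because $p_k=S_w\overline p_k$ and row tables act on the right. Hence
\[
 \LL^*(S_wZS_w^{\top})=S_w\bigl(Z-e^{-N_0}Z^{\top}\bigr)S_w^{\top},
\]
so $Z\mapsto S_wZS_w^{\top}$ maps $\ker\LL_{\rm DP}^*$ isomorphically onto $\ker\LL^*$. I will verify the transpose bookkeeping here carefully, since it is where a sign or inverse could slip. Finally, the orthogonal-complement argument already given before the lemma, $\operatorname{im}\LL=(\ker\LL^*)^{\perp}$, which holds over $\Q$ because everything is finite-dimensional rational linear algebra, gives the solvability criterion.

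The only real obstacle is Step 1, namely seeing the half-shift conjugation that symmetrizes the condition. After that, the boundary analysis in Step 2 is routine.
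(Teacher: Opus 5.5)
Your proof is correct and follows essentially the same route as the paper: you substitute $Z=e^{-N_0/2}Y$ to reduce the kernel condition to $Y^{\top}=Y$ together with $N_0Y+YN_0^{\top}=0$, then solve along antidiagonals and use symmetry to force even $h$, and finally transfer to binomial coordinates via $BS_w=S_we^{-N_0}$ and conclude with $\operatorname{im}\LL=(\ker\LL^*)^{\perp}$. The only cosmetic differences are that you take the finite logarithm of the unipotent operator where the paper factors $I-e^{-\mathcal D}=\mathcal D\widetilde P(\mathcal D)$, and you conjugate the adjoint directly, $\LL^*(S_wZS_w^{\top})=S_w(Z-e^{-N_0}Z^{\top})S_w^{\top}$ (which checks out), where the paper changes variables in the primal equation.
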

\begin{proof}
We first reduce the adjoint-kernel equation to a symmetry condition and a recurrence along anti-diagonals. We then solve that recurrence and translate the resulting basis back to binomial coordinates. In scaled-monomial coordinates, $\LL_{\rm DP}^*(Z_{\rm DP}) = Z_{\rm DP} - e^{-N_0} Z_{\rm DP}^{\top}$. Any solution satisfies $Z_{\rm DP}^{\top} = e^{N_0} Z_{\rm DP}$ and hence $Z_{\rm DP} = e^{-N_0} Z_{\rm DP} e^{-N_0^{\top}}$. Define the linear operator $\mathcal D:\Q^{w\times w}\to\Q^{w\times w}$ by
\[
 \mathcal D(Z):=N_0Z+ZN_0^{\top}.
\]
Left multiplication by $N_0$ and right multiplication by $N_0^{\top}$ commute, and $\mathcal D^{2k+1}=0$ because $N_0^{k+1}=0$. The operator exponential is therefore the finite sum
\[
 e^{-\mathcal D}:=\sum_{j=0}^{2k}\frac{(-\mathcal D)^j}{j!},
 \qquad e^{-\mathcal D}(Z)=e^{-N_0}Ze^{-N_0^{\top}}.
\]
 Moreover, $I-e^{-\mathcal D}=\mathcal D \widetilde P(\mathcal D)$, where $\widetilde P(x)=\sum_{j=1}^{2k}(-1)^{j+1}x^{j-1}/j!$ has constant term one. Thus, $\widetilde P(\mathcal D)$ is invertible: the operator $\widetilde P(\mathcal D)$ has the form $I+\mathcal K$, where $\mathcal K$ is a polynomial in $\mathcal D$ with zero constant term. This $\mathcal K$ is nilpotent, and the inverse is the finite sum $(I+\mathcal K)^{-1}=\sum_{j=0}^{2k}(-\mathcal K)^j$. It follows that $e^{-\mathcal{D}}(Z_{\rm DP}) = Z_{\rm DP}$ is equivalent to $\mathcal{D}(Z_{\rm DP}) = 0$.

Setting $\widetilde V = e^{N_0/2} Z_{\rm DP}$ (left multiplication by $e^{N_0/2}$ commutes with $\mathcal{D}$), the two conditions $Z_{\rm DP}^{\top} = e^{N_0} Z_{\rm DP}$ and $\mathcal{D}(Z_{\rm DP}) = 0$ become
\[
  \widetilde V^{\top} = \widetilde V \qquad \text{and} \qquad N_0 \widetilde V + \widetilde V N_0^{\top} = 0.
\]
For the symmetry step, $\mathcal D(Z_{\rm DP})=0$ implies $Z_{\rm DP}e^{N_0^{\top}/2}=e^{-N_0/2}Z_{\rm DP}$. Hence
\[
 \widetilde V^{\top}=Z_{\rm DP}^{\top}e^{N_0^{\top}/2}
 =e^{N_0}Z_{\rm DP}e^{N_0^{\top}/2}
 =e^{N_0/2}Z_{\rm DP}=\widetilde V.
\]
In coordinates, $N_0 \widetilde V + \widetilde V N_0^{\top} = 0$ reads $\widetilde V_{i-1,j} + \widetilde V_{i,j-1} = 0$ for all $0 \le i,j \le k$ (with $\widetilde V_{-1,j} = \widetilde V_{i,-1} = 0$). On any anti-diagonal $i + j = h < k$, the equation at $(i,j)=(0,h+1)$ forces $\widetilde V_{0,h}=0$ and inductively $\widetilde V_{i,h-i} = 0$. For $h\ge k$, that boundary equation is outside the matrix. The recurrence instead relates successive entries on the same anti-diagonal by a change of sign, with one free scalar. Thus, on each anti-diagonal $i + j = h \in [k, 2k]$, $\widetilde V_{i,h-i} = (-1)^i c_h$ is determined up to a scalar, and symmetry $\widetilde V_{i,h-i} = \widetilde V_{h-i,i}$ holds if and only if $(-1)^i = (-1)^{h-i}$, i.e., $h$ is even.

The symmetric solutions are therefore exactly the span of the $V^{(h)}$ in~\eqref{eq:cokernel}; they are linearly independent because they occupy distinct anti-diagonals. The preceding transformations are reversible: if $\widetilde V^{\top}=\widetilde V$ and $N_0\widetilde V+\widetilde V N_0^{\top}=0$, then $Z_{\rm DP}=e^{-N_0/2}\widetilde V$ satisfies $Z_{\rm DP}=e^{-N_0}Z_{\rm DP}^{\top}$. Hence these matrices give a basis of the full adjoint kernel.

The shift relation is $BS_w=S_we^{-N_0}$. Under the changes of coordinates $E_{\rm DP}=S_w^{\top}ES_w$ and $F_{\rm DP}=S_w^{\top}FS_w$, the matrix equation $E-E^{\top}B=F$ becomes $E_{\rm DP}-E_{\rm DP}^{\top}e^{-N_0}=F_{\rm DP}$. Hence, we have $\ip{Z_{\rm DP}^{(h)}}{F_{\rm DP}} = \ip{S_w Z_{\rm DP}^{(h)} S_w^{\top}}{F} = \ip{Z^{(h)}}{F}$. The orthogonality criterion preceding the lemma now gives the stated necessary and sufficient tests. All coefficients are rational, so a consistent system has a rational solution by Gaussian elimination.
\end{proof}

Left multiplication by the lower-triangular matrices $e^{-N_0/2}$ and $S_w$ can only increase a row index; right multiplication by $S_w^{\top}$ can only increase a column index. Starting from $i+j=h$ therefore gives support only where $i+j\ge h$. All three factors have diagonal entries one, so the entries on the original anti-diagonal retain their values $(-1)^i$.

For the right-hand side $F=D^{\top}X-\ell^{\top}\psi$ of~\eqref{eq:moment}, the top test vanishes automatically, the intermediate tests determine the correction row, and the possible lowest test determines the gain adjustment. We verify these three claims in turn. First, $D_{:,k}=0$ by~\eqref{eq:top}, and $\psi_k=0$ because $\psi$ is a difference of degree-$k$ polynomials. Thus,
\[
 F_{kk}=(D_{:,k})^{\top}X_{:,k}-\ell_k\psi_k=0.
\]
However, the rest of column $k$ need not vanish. The top test $h=2k$ is supported solely at $(k,k)$, so this one entry makes that test automatic. Second, for each even $h\in(k,2k)$, the support of $Z^{(h)}$ is contained in $\{(i,j):i+j\ge h\}$, which excludes the first row and column. We satisfy these higher-index tests in Section~\ref{sec:scale}. Finally, when $k$ is even, the lowest test $h=k$ has corner entries $Z^{(k)}_{0,k}=Z^{(k)}_{k,0}=1$; this is the single test left for the gain adjustment in Section~\ref{sec:synthesis}. When $k$ is odd there is no test with index $h=k$.

\subsubsection{Choosing the integer scale and correction for tests indexed above $k$}\label{sec:scale}
We now construct the integer data required in Definition~\ref{def:admissible}(i). The top test $h=2k$ is already automatic, and we will solve the other tests with even indices $k<h<2k$ one at a time. The potential lowest test $h=k$ is intentionally left to the gain adjustment in Section~\ref{sec:synthesis}. Keep the rational reference gain $c_{\rm ref}$ fixed, with reference tables $\zetavec^\circ,D^\circ,\psi^\circ$ as defined in Sections~\ref{sec:lift-tables} and~\ref{sec:admissible}.

\begin{lemma}[Choosing the integer scale factor]\label{lem:scale}
For the algebraic parent data $(R^{\top}C,\one_m^{\top}C,\chivec,vC=\erow{k},c_{\rm ref})$, there exist an integer scale $\tau\ge1$ and an integer correction row $\ell_*\in\Z^{1\times w}$ satisfying~\eqref{eq:higher}. They can be computed in finitely many steps using exact rational arithmetic. Moreover, this choice of $(\tau,\ell_*)$ does not depend on row $0$ of $R^{\top}C$ or on the constant coefficient $\chivec_0$.
\end{lemma}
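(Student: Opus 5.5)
The plan is to treat the tests in~\eqref{eq:higher} as a finite linear system in the unknown row $\ell_*$, show that this system is triangular with nonzero pivots, and then use the freedom in $\tau$ to make the unique pinned coordinates integral. Fix $c_{\rm ref}$ and a trial scale $\tau$, and form the reference tables $D^\circ,X,\psi^\circ$ of~\eqref{eq:y}--\eqref{eq:old}. Since $\ip{Z^{(h)}}{\ell^\top\psi^\circ}=\ell\,Z^{(h)}(\psi^\circ)^\top$, each test with even $k<h<2k$ reads
\[
 \ell\,y_h=\beta_h,\qquad y_h:=Z^{(h)}(\psi^\circ)^\top\in\Q^{w},\qquad \beta_h:=\ip{Z^{(h)}}{(D^\circ)^\top X}.
\]
The case $h=2k$ is automatic by the remark after Lemma~\ref{lem:cokernel}.

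\textbf{Step 1: the linear system is triangular.} The support of $Z^{(h)}$ lies in $\{i+j\ge h\}$, and its anti-diagonal entries are $(-1)^i$. The row $\psi^\circ=\zetavec^\circ B-\zetavec^\circ$ represents $\zeta^\circ(q-1)-\zeta^\circ(q)$, so $\psi^\circ_k=0$ and $\psi^\circ_{k-1}=-\zetavec^\circ_k\ne0$ by Lemma~\ref{lem:gateway}. Hence $(y_h)_i=0$ for $i<h-k+1$, and the pivot entry is $(y_h)_{h-k+1}=\pm\zetavec^\circ_k\neq0$. The pivot indices $i_h=h-k+1$ are distinct and lie in $\{2,\dots,k-1\}$. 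I would set every coordinate of $\ell$ outside these pivot indices to zero, or to any fixed integers. Then the tests determine the pivot coordinates uniquely by back substitution from the largest $h$ downward. This already gives a rational solution, computed in finitely many steps.

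\textbf{Step 2: independence claims.} These follow from the same supports. Row $0$ of $R^\top C$ and the constant coefficient $\chivec_0$ affect $D^\circ$ and $X$ only through constant-polynomial terms. Such terms land in row or column $0$ of $(D^\circ)^\top X$: for example, a constant shift of $S$ is killed by $N$, and $\chivec_0$ cancels in $\chivec(E_{2\tau}-E_\tau)$. They are also absent from $\psi^\circ$. Since every $Z^{(h)}$ with $h>k$ vanishes on row and column $0$, neither $\beta_h$ nor $y_h$ depends on these data. I would verify this entrywise using the formulas $D=SN-G_OX$ and $X=\col(CF_\tau,\erow{w})$.

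\textbf{Step 3: integrality, the main obstacle.} The back-substituted pivot coordinates are rational functions of $\tau$. After dividing by the pivot $\zetavec^\circ_k$, which is a monomial $\chivec_{k-1}\tau^{k-1}(2^{k-1}-1)$, they become Laurent-type expressions in $\tau$ with denominators from $R$, $C$, $\chivec_{k-1}$ and $c_{\rm ref}$. My first attempt is a homogeneity analysis. Under the substitution built into $E_\tau,F_\tau$, the degree-$j$ coefficients scale like $\tau^{j}$. I would then check that each pivot coordinate $\ell_{i_h}$ is $\tau$ times a polynomial in $\tau$ with rational coefficients, in particular with no negative powers and no nonzero constant term. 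If so, taking $\tau$ to be a multiple of the common denominator of all these coefficients makes every $\ell_{i_h}\in\Z$.

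If a constant term survives, which is most likely from the $c_{\rm ref}\erow{w}$ part of $\zetavec^\circ$, I would absorb it using the free non-pivot coordinates. Alternatively, I would enlarge the pivot set by adjusting a free coordinate that feeds into $y_h$ at a later index, so that residual constants are cancelled by integer choices. This divisibility bookkeeping, rather than solvability, is where I expect the real work.
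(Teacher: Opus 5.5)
Your Steps 1 and 2 are sound. Evaluating the tests in binomial rather than scaled-monomial coordinates is harmless, since $\ip{Z^{(h)}}{F}=\ip{Z^{(h)}_{\rm DP}}{S_w^{\top}FS_w}$. The triangular structure with pivot $(y_h)_{h-k+1}=\pm\zetavec^\circ_k$ is also correct.

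The gap is Step~3, which is the real content of the lemma and which you leave unproved, with fallbacks that are not worked out. Rational solvability is immediate. What must be shown is that back substitution never produces negative powers or constant terms in $\tau$, even though it divides at every step by $\pm\chivec_{k-1}(2^{k-1}-1)\tau^{k-1}$. Your proposed invariant $\ell_{i_h}\in\tau\Q[\tau]$ cannot survive this division. An already solved coordinate $\ell_i$ with $i>i_h$ enters test $h$ multiplied by $(y_h)_i$, which is at best divisible by $\tau^{h-i}$. So the numerator is only known to lie in $\tau^{h-i+1}\Q[\tau]$, and $h-i+1<k$. You need the graded invariant $\ell_i\in\tau^{i}\Q[\tau]$. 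For the same reason the non-pivot coordinates must be zero rather than arbitrary integers: a nonzero $\ell_k$ contributes $\ell_k(y_h)_k\in\tau^{h-k}\Q[\tau]$ to the numerator, and $h-k<k-1$.

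The missing idea is a graded divisibility of the data. Using $AC=RN_k$, $A_{b,:}C=\chivec N_k$ and $vC=\erow{k}$, rewrite $(D^\circ)^{\top}X$ as $L_\tau^{\top}(R^{\top}C)F_\tau+f^{\top}(\one_m^{\top}C)F_\tau$ plus terms supported in row $0$ or column $0$. Here $L_\tau=E_\tau\mathsf{T}-E_{2\tau}$ and $f=c_{\rm ref}\erow{w}-\chivec L_\tau$.

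A parent monomial $\sigma^d$ contributes $\tau^d((q+1)^d-(2q)^d)$ to $L_\tau$. Its bundle sum is $\phi_d(2\tau q)-\phi_d(\tau q)$ for a fixed polynomial $\phi_d$. In both cases the coefficient of $q^j$ is divisible by $\tau^j$; this is divisibility, not the homogeneous scaling you describe. The change between monomial and binomial coordinates is triangular: coordinate $j$ involves only coordinates $\ge j$. Hence, in either basis, for $i,j\ge1$ entry $(i,j)$ of $(D^\circ)^{\top}X$ lies in $\tau^{i+j}\Q[\tau]$ and $\psi^\circ_j\in\tau^j\Q[\tau]$.

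Every $Z^{(h)}$ with $h>k$ vanishes on row and column $0$ and is supported on $i+j\ge h$. This gives $\beta_h\in\tau^h\Q[\tau]$ and $(y_h)_i\in\tau^{h-i}\Q[\tau]$, and induction then yields $\ell_{i_h}\in\tau^{h-k+1}\Q[\tau]$. Your worry about a surviving constant from $c_{\rm ref}$ is therefore unfounded. $c_{\rm ref}$ enters only row and column $0$ of $(D^\circ)^{\top}X$ and the coordinate $\psi^\circ_0$, which these tests never see.

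The same rewriting is also what shows that $(\tau,\ell_*)$ depends only on the listed data $(R^{\top}C,\one_m^{\top}C,\chivec,vC,c_{\rm ref})$. Proposition~\ref{prop:nextweights} relies on this, and your proposal does not address it.

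With this in place, your binomial-coordinate route finishes slightly more cheaply than the paper's. Taking $\tau$ to be a common denominator of the coefficients of the polynomials $\ell_{i_h}(x)$ already makes $\ell_*$ integral. You do not need the extra factor $k!$ that the paper uses to pass from scaled-monomial to binomial coefficients.
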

\begin{proof}
We solve first for rational correction coefficients with the scale $\tau$ left symbolic, and then choose an integer $\tau$ that clears all denominators. The support pattern of the tests makes the coefficient equations triangular. Define
\[
  L_\tau= E_\tau \mathsf{T}-E_{2\tau}, \qquad f=c_{\rm ref}\erow{w}-\chivec L_\tau.
\]
To make the dependence on $\tau$ explicit, first expand the parent and endpoint rows of $D^\circ=S^\circ N-G_OX$. The bundle consists of phases $\sigma=\tau q,\ldots,2\tau q-1$. Summing their successive differences gives
\[
 \begin{aligned}
 N_kF_\tau p_k(q)
 &=\sum_{\sigma=\tau q}^{2\tau q-1}
       \bigl(p_{k-1}(\sigma+1)-p_{k-1}(\sigma)\bigr)\\
 &=p_{k-1}(2\tau q)-p_{k-1}(\tau q)
 =(E_{2\tau}-E_\tau)p_k(q).
 \end{aligned}
\]
Equality for all positive integers $q$ implies the coefficient identity $N_kF_\tau=E_{2\tau}-E_\tau$. Using $AC=RN_k$ and $vC=\erow{k}$ then gives
\[
 \begin{aligned}
 D^\circ_{0:m,:}&=RL_\tau+\one_m f+(A_{b,:}+v)^{\top}\erow{w},\\
 D^\circ_{g,:}&=c_{\rm ref}\erow{w}-\erow{k}F_\tau \mathsf{T}.
 \end{aligned}
\]
Here $D^\circ_{0:m,:}$ denotes the first $m$ rows. Multiplying by $X=\col(CF_\tau,\erow{w})$ and using $A_{b,:}C=\chivec N_k$ gives
\begin{equation}\label{eq:summary}
\begin{split}
  (D^\circ)^{\top} X ={}& L_\tau^{\top} (R^{\top} C) F_\tau+ f^{\top} (\one_m^{\top} C) F_\tau\\
  &+ (\erow{w})^{\top}\bigl\{\chivec(E_{2\tau} - E_\tau) + \erow{k} F_\tau\bigr\}
  + \bigl\{c_{\rm ref}\erow{w} - \erow{k} F_\tau \mathsf{T}\bigr\}^{\top} \erow{w}.
\end{split}
\end{equation}
Use a hat to denote the corresponding scaled-monomial coordinates:
\[
 \widehat F=S_w^{\top}(D^\circ)^{\top}XS_w,\qquad
 \widehat\psi=\psi^\circ S_w,\qquad \widehat\ell=\ell_*S_w.
\]
For $i,j\ge1$, the entry $\widehat F_{ij}$ is a rational polynomial in $\tau$ divisible by $\tau^{i+j}$. To see this directly, a parent monomial $\sigma^a$ contributes to $L_\tau$ the polynomial
\[
 \tau^a\bigl((q+1)^a-(2q)^a\bigr).
\]
Its coefficient of $q^i$ is zero for $a<i$ and otherwise divisible by $\tau^i$. Similarly, summing a parent polynomial over the bundle gives a polynomial antiderivative evaluated at $2\tau q$ and $\tau q$, whose coefficient of $q^j$ is divisible by $\tau^j$. These facts apply to the first two products in~\eqref{eq:summary}. The last two products remain supported in row $0$ or column $0$, because $\erow{w}S_w=\erow{w}$ and $S_w^{\top}(\erow{w})^{\top}=(\erow{w})^{\top}$. They contribute nothing for $i,j\ge1$.

Since $\psi^\circ=-\chivec(E_{2\tau}-E_\tau)-c_{\rm ref}\erow{w}$, each $\widehat\psi_j$ for $j\ge1$ is a fixed rational multiple of $\tau^j$, with
\[
 \widehat\psi_k=0,\qquad
 \widehat\psi_{k-1}=-\chivec_{k-1}\tau^{k-1}(2^{k-1}-1)\ne0.
\]
Initialize every coordinate of $\widehat\ell$ to zero. For each even $h\in(k,2k)$, in decreasing order, set $i=h-k+1$ and solve the $h$th test by
\begin{equation}\label{eq:scalepivot}
 (-1)^i\widehat\psi_{k-1}\widehat\ell_i
 =\ip{Z_{\rm DP}^{(h)}}{\widehat F}
 -\sum_{\substack{i<a\le k\\0\le b\le k-1}}
 (Z_{\rm DP}^{(h)})_{ab}\widehat\ell_a\widehat\psi_b.
\end{equation}
These equations determine the selected coefficients one at a time, in decreasing index order. The coefficient of $\widehat\ell_i$ in~\eqref{eq:scalepivot} is $(-1)^i\widehat\psi_{k-1}\ne0$. Coefficients with index below $i$ cannot occur because every nonzero test entry has $a+b\ge h$. Here $b\le k-1$ because $\widehat\psi_k=0$; at the smallest possible row $a=i=h-k+1$, only $b=k-1$ is possible, and that anti-diagonal entry is $(-1)^i$. All higher-index coefficients are already fixed. Later choices have smaller indices and therefore cannot alter a test already solved. Inductively, each nonzero summand on the right is divisible by $\tau^{a+b}$ and hence by $\tau^h$; the first term has the same divisibility because $\widehat F_{ab}$ is divisible by $\tau^{a+b}$ and the test has support only where $a+b\ge h$, as shown above. Dividing by this nonzero coefficient, a constant multiple of $\tau^{k-1}$, gives a rational polynomial $\widehat\ell_i(\tau)$ divisible by $\tau^{h-k+1}=\tau^i$. The unselected coordinates, including $0$, $1$, and $k$, remain zero.

To make the choice explicit, write the solved polynomials in a formal variable $x$ as $\widehat\ell_i(x)=\sum_{r\ge i}a_{ir}x^r$, with $a_{ir}\in\Q$. Define
\begin{equation}\label{eq:scalechoice}
 D_*:=\operatorname{lcm}_{i,r}\operatorname{den}(a_{ir}),
 \qquad \tau:=D_*k!,
\end{equation}
Here $\operatorname{lcm}$ means least common multiple, $\operatorname{den}(a)$ is the positive denominator of $a$ in lowest terms, and the least common multiple of an empty collection is $1$. This is the scale factor used by the general algorithm, fixed solely from the parent data. For this choice and each $i\le k$, $\widehat\ell_i(\tau)$ is an integer multiple of $(k!)^i$ and hence a multiple of $i!$, so the ordinary polynomial coefficients of $\ell_*$ are integers, and its binomial coefficients are integers by the finite-difference formula following~\eqref{eq:maps}. Thus $\ell_*\in\Z^{1\times w}$. Finally, $\erow{k} L_\tau= 0$ (since $\erow{k} E_\tau \mathsf{T}=\erow{w} \mathsf{T}=\erow{w}=\erow{k} E_{2\tau}$) and $\chivec_0$ cancels from $\chivec(E_{2\tau} - E_\tau)$ and $\chivec L_\tau$, proving independence of row $0$ of $R^{\top} C$ and of $\chivec_0$. The explicit RPS witness in Proposition~\ref{prop:seedadmissible} uses $\tau=1$ instead of this general denominator-clearing choice.
\end{proof}

\subsubsection{Verifying the additional data for the RPS base}\label{sec:rpsadmissible}
To apply Theorem~\ref{thm:lift} for the first time, from order $2$ to order $3$, the RPS clock must supply both ingredients of Definition~\ref{def:admissible}: integer scale and correction data satisfying the higher-index tests, and strictly positive weights representing the target height polynomial. The next proposition gives explicit values and checks each requirement. The subsequent example explains why the first lift still adjusts the actual gain to satisfy the lowest test.

\begin{proposition}[Additional data for the RPS clock]\label{prop:seedadmissible}
The clock of Proposition~\ref{prop:seed} satisfies
Definition~\ref{def:admissible} with
\[
 \tau=1,\qquad c_{\rm ref}=1/2,\qquad\ell_*=(0,0,0),\qquad
 a_P=(2,2,2)^\top/27.
\]
\end{proposition}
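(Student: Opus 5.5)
The plan is to check the two items of Definition~\ref{def:admissible} directly for the order-$2$ RPS clock, with $k=2$, $w=3$, $\tau=1$, $c_{\rm ref}=1/2$ and $H=\mathcal A_2+3=4$ (Proposition~\ref{prop:seed} gives $\mathcal A_2=1$). The data $\tau,c_{\rm ref},\ell_*$ are visibly integer or rational of the right types, and $a_P$ is strictly positive. So only the compatibility tests~\eqref{eq:higher} and the cone identity~\eqref{eq:coneidentity} need verification.

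For item (i), the even indices $h$ with $k<h\le 2k$ reduce to the single value $h=4=2k$. As noted after Lemma~\ref{lem:cokernel}, this top test is supported only at the entry $(k,k)=(2,2)$. With $\ell_*=0$, the relevant entry of $(D^\circ)^{\top}X-\ell_*^{\top}\psi^\circ$ is $(D^\circ_{:,2})^{\top}X_{:,2}$. This entry vanishes because Lemma~\ref{lem:gateway} gives $D_{:,k}=0$ for every gain, in particular for $c=c_{\rm ref}$. Thus (i) holds with no computation, and the zero correction suffices. The lowest test $h=k=2$ is deliberately not required here; it is left to the gain adjustment.

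For item (ii), I will compute the reference height row explicitly. The parent height is $\chi(\sigma)=1+\sigma$, from $\chivec=(1,1)$, so~\eqref{eq:y} with $\tau=1$ reads $\zeta^\circ(q)-\zeta^\circ(q-1)=q+1/2$, with $\zeta^\circ(0)=4$. This yields
\[
 \zeta^\circ(q)=4+\tfrac32 q+\tbinom q2,\qquad \zetavec^\circ=(4,\tfrac32,1),
\]
and, by right multiplication with $B_3$, $\zetavec^\circ B=(\tfrac72,\tfrac12,1)$. Hence
\[
 \zetavec^\circ+\zetavec^\circ B-(2H-c_{\rm ref})\erow{3}=(0,2,2).
\]
With $F_1=\begin{pmatrix}0&1&0\\0&1&3\end{pmatrix}$, the equation $tF_1=(0,2,2)$ forces $t_1=2/3$ and $t_0=4/3$. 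The value $t_1=2/3$ agrees with the closed form in~\eqref{eq:target}.

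Finally, since $\one_3^{\top}C_2=(18,9)$, we get $a_P^{\top}C_2=\tfrac{2}{27}(18,9)=(\tfrac43,\tfrac23)=t_{1,1/2}(\chivec)$, which is the required identity. The endpoint weight $2H-c_{\rm ref}=15/2$ is positive, and injectivity of right multiplication by $F_1$ on width-$2$ rows gives the equivalent form~\eqref{eq:coneidentity}.

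I expect no real obstacle here. The only care needed is to use the correct baseline $H=\mathcal A_2+3=4$ and to keep all row computations in the binomial basis. The one conceptual point is to see that (i) collapses to the automatic top test; this rests entirely on Lemma~\ref{lem:gateway}.
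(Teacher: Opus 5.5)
Your proposal is correct and follows essentially the paper's route: with $H=4$ you compute $\zetavec^\circ=(4,\tfrac32,1)$ and the target row $(4/3,2/3)$, check $a_P^{\top}C_2=(4/3,2/3)$, and observe that the only higher test, $h=4$, is the automatically vanishing $(2,2)$ entry, so $\ell_*=0$ works. The only differences are cosmetic. You solve $tF_1=(0,2,2)$ in the binomial basis, where the paper sums the target polynomial over the bundle. You also cite $D_{:,k}=0$ from Lemma~\ref{lem:gateway} explicitly to justify the automatic top test.
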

\begin{proof}
We must verify the reference height and target row, the positivity and count identity for $a_P$, and the higher-index compatibility tests. Proposition~\ref{prop:seed} gives the remaining-height constant $\mathcal A_2=1$, so $H=\mathcal A_2+3=4$. With $\chivec=(1,1)$, $\tau=1$, and $c_{\rm ref}=1/2$, equation~\eqref{eq:y} gives
\[
 \zeta^\circ(q)=q^2/2+q+4,\qquad
 t_{1,1/2}(\chivec)p_1(q)=4/3+2q/3.
\]
Indeed, the target polynomial $4/3+2\sigma/3$ sums over $\sigma=q,\ldots,2q-1$ to
\[
 \frac43q+\frac23\frac{3q^2-q}{2}
 =q^2+q=\zeta^\circ(q)+\zeta^\circ(q-1)-\frac{15}{2},
\]
which verifies~\eqref{eq:target}, since $2H-c_{\rm ref}=15/2$. For the proposed positive weights,
\[
 a_P^\top C_2=(4/3,2/3)=t_{1,1/2}(\chivec),
\]
with all three weights equal to $2/27>0$, so the positive-weight requirement holds. It remains to check the higher-index compatibility tests~\eqref{eq:higher} required by Definition~\ref{def:admissible}(i). For $k=2$, the only higher compatibility test in~\eqref{eq:higher} is
$h=4$. It is the bottom-right entry and vanishes automatically, so
$\ell_*=0$ satisfies all required higher tests.
\end{proof}

\paragraph{Adjusting the gain to remove the lowest RPS compatibility obstruction}
Proposition~\ref{prop:seedadmissible} supplies all input data for the first lift, but exact score restoration also requires the lowest compatibility test $h=2$. That test is deliberately excluded from Definition~\ref{def:admissible}(i), which imposes only~\eqref{eq:higher}, and at the reference gain $c_{\rm ref}=1/2$ it does not vanish. We compute the obstruction here and show how the later reset-count scale $T_1$ allows a small change of the actual gain $c$ to remove it. This illustrates the general even-order adjustment that will be proved in Proposition~\ref{prop:momentchoice}.

For the RPS data,
\[
 X=\begin{pmatrix}0&8&9\\0&9&9\\0&10&9\\1&0&0\end{pmatrix},\qquad
 (D^\circ)^\top X=
 \begin{pmatrix}-1/2&-23/2&-27/2\\-1&27&27\\0&0&0\end{pmatrix},
 \qquad \psi^\circ=(-1/2,-1,0).
\]
Here the width-three reference-height row is $\zetavec^\circ=(4,3/2,1)$.
The matrix for the lowest compatibility test in~\eqref{eq:cokernel} is
\[
 Z^{(2)}=\begin{pmatrix}0&0&1\\0&-1&0\\1&1&-1/8\end{pmatrix},
 \qquad\ip{Z^{(2)}}{(D^\circ)^\top X}=-81/2.
\]
The test value $-81/2$ is nonzero, so keeping $c=1/2$ would leave the two reset equations inconsistent. To correct it, let $T_1\ge1$ be the integer reset-count scale to be chosen in Section~\ref{sec:synthesis}. The least positive integer clearing the denominators of $\zetavec^\circ$ is $\lambda=2$. With $r=k+1=3$ and $\ell_*=0$, the total reset-count row prescribed in~\eqref{eq:baseline} is $\ell=2r\lambda T_1\zetavec^\circ$, so $\ell_2=12T_1$, while $(\one_4^{\top}X)_2=27$. Changing the gain by $\Delta c=c-1/2$ changes the lowest test by $\Delta c(27+12T_1)$. Setting this equal to $81/2$ gives the correction in~\eqref{eq:parity}:
\begin{equation}\label{eq:rpscorrection}
 c=\frac12+\frac{27}{8T_1+18}
   =\frac{2(T_1+9)}{4T_1+9}.
\end{equation}
The correction is positive and of order $T_1^{-1}$. Moreover, $c<1$ is equivalent to $T_1>9/2$, so every integer $T_1\ge5$ makes $c\in(0,1)$ and cancels this algebraic obstruction. This bound on $T_1$ guarantees compatibility and an admissible gain; the coupling signs and waiting margins require the additional lower bounds on $T_1$ proved in Section~\ref{sec:uniform}, applied here with $k=2$. We choose $T_1$ large enough for all these requirements. The reference gain used to define the reusable positive-weight data is still $c_{\rm ref}=1/2$. The calculations in this section establish the $k=2$ instance. Proposition~\ref{prop:momentchoice} will justify the analogous correction at every even parent order and show that $c=c_{\rm ref}$ suffices at odd parent orders.

\subsection{Constructing the reset actions with the required signs}\label{sec:synthesis}
Sections~\ref{sec:gateway}--\ref{sec:moment} reduced exact score restoration to the compatibility equation and constructed the higher-test correction. We now build the $2(k+1)$ reset actions required by Theorem~\ref{thm:lift}: their count table $M$, interaction matrix $J$, and coupling $P$ must satisfy $PM=D$ and $P^{\top}X=Q$ in~\eqref{eq:twosided}, with negative early and positive late old-action couplings.

The integer $T_1$ controls the common scale of the reset counts. We choose it first to secure compatibility, signs, and waiting margins uniformly over every permitted value of a second integer $T_2$. This second parameter changes lower-degree count coefficients while preserving the group totals and leading coefficients. Section~\ref{sec:next} chooses $T_2$ to obtain positive weights for the next lift; the uniform estimates proved here ensure that this later choice preserves all the required signs.

The outputs are the exact compatibility solution of Proposition~\ref{prop:momentchoice} and the uniform sign and waiting bounds of Lemma~\ref{lem:uniform}. Section~\ref{sec:lowesttest} chooses the total reset counts and removes the possible lowest obstruction, Section~\ref{sec:resetinteractions} constructs $J$ and $P$, and Section~\ref{sec:uniform} proves the bounds. Best-response validity is checked afterward in Section~\ref{sec:chronology}.

\paragraph{Parameters, tables, and norms.}
Throughout this section we fix the order $k$, the parent clock, and its additional data $(\tau,c_{\rm ref},\ell_*,a_P)$. In particular, $X$ is the old-action count table from~\eqref{eq:old}; $H=\mathcal A_k+3$ and $\zetavec^\circ$ are the baseline height and the reference height row from~\eqref{eq:y}, evaluated at $c=c_{\rm ref}$; $a$ is the positive weight vector in~\eqref{eq:coneidentity}; and $L_X=X(X^{\top}X)^{-1}$ is the right inverse of $X^{\top}$ chosen after Lemma~\ref{lem:interpolate}. The shift matrices and the auxiliary matrices defined solely from this parent data do not depend on $T_1$ or $T_2$. The actual gain $c$, height row $\zetavec$, score-change table $D$, and interaction table $E$ may depend on $T_1$, but not on $T_2$. The count tables $W,W_E,M$ and interaction blocks $J,Q,P$ may depend on both parameters.

For a matrix $Z$, we use the transpose-invariant, submultiplicative norm
\begin{equation}\label{eq:coefficientnorm}
 \norm{Z}_*:=\max\{\norm{Z}_1,\norm{Z}_\infty\}
 =\max\left\{\max_j\sum_i|Z_{ij}|,\ \max_i\sum_j|Z_{ij}|\right\}.
\end{equation}
Here $\norm{Z}_1$ and $\norm{Z}_\infty$ are the induced column-sum and row-sum norms. Thus, $\norm{YZ}_*\le\norm{Y}_*\norm{Z}_*$ for compatible matrices and $\norm{Z^{\top}}_*=\norm{Z}_*$. For a row or column vector $f$, write $\norm{f}_1=\sum_j|f_j|$ and $\norm{f}_\infty=\max_j|f_j|$.

The statement $Z=O(T_1^{-\alpha})$ uniformly in $T_2$ means that there are constants $K,T_0$, depending only on the fixed parent data, such that
\[
 \norm{Z}_*\le K T_1^{-\alpha}
 \quad\text{for every integer }T_1\ge T_0
 \text{ and every integer }T_2\ge\max\{T_1,\upsilon_k\}.
\]
The coefficient $\upsilon_k=T_1\lambda\zetavec_k^\circ$ is defined in~\eqref{eq:baseline} below, so its dependence on $T_1$ is included in this quantifier. For a coefficient table of width $k+1$, the corresponding evaluated-vector bound is
\[
 \norm{Zp_k(q)}_\infty
 \le \norm{Z}_*\norm{p_k(q)}_\infty
 \le K T_1^{-\alpha}\norm{p_k(q)}_\infty.
\]
The parameter limit $T_1\to\infty$ compares different candidate games. Once $T_1$ and $T_2$ have both been chosen, all payoffs are constant; later growth estimates as $q\to\infty$ concern the phases of that one fixed game.

\subsubsection{Choosing the total counts and satisfying the lowest compatibility test}\label{sec:lowesttest}
The correction row $\ell_*$ is supplied by Lemma~\ref{lem:scale} and satisfies the higher-index tests of Definition~\ref{def:admissible}(i). We now choose the total reset counts and, when necessary, adjust the gain to satisfy the lowest-index test as well. Recall $r=w=k+1$, and let $\lambda\ge1$ be the least positive integer for which the reference height row $\lambda\zetavec^\circ$ has integer entries. For an integer $T_1\ge1$, define the rows $\upsilon,\ell_0,\ell\in\Z^{1\times w}$ and the matrix $E_0\in\Q^{w\times w}$ by
\begin{equation}\label{eq:baseline}
  \upsilon= T_1 \lambda \zetavec^\circ, \qquad
  \ell_0 = 2r \lambda \zetavec^\circ, \qquad
  \ell = T_1 \ell_0 + \ell_*, \qquad
  E_0 = 2r \lambda (\zetavec^\circ)^{\top} \zetavec^\circ.
\end{equation}
Lemma~\ref{lem:cokernel} indexes the tests by the even integers $h\in[k,2k]$. If $k$ is odd, then every such $h$ is strictly greater than $k$, so Lemma~\ref{lem:scale} has already satisfied all of them and we set $c=c_{\rm ref}$. If $k$ is even, then only the test with $h=k$ remains. We make its residual zero by choosing
\begin{equation}\label{eq:parity}
  c = c_{\rm ref} - \frac{\ip{Z^{(k)}}{(D^\circ)^{\top} X - \ell_*^{\top} \psi^\circ}}{(\one_{m+1}^{\top} X)_k + \ell_k}.
\end{equation}
The column $C_{:,k-1}$ is nonzero by full column rank and entrywise nonnegative because the count polynomials are eventually nonnegative. Multiplication by the positive diagonal entry $(F_\tau)_{k-1,k}$ therefore gives $(\one_{m+1}^{\top}X)_k>0$. Because this quantity is positive and $\ell_k = 2r\lambda \zetavec_k^\circ T_1 + \ell_{*,k} \to +\infty$ as $T_1 \to \infty$, the denominator of~\eqref{eq:parity} is strictly positive and $c = c_{\rm ref} + O(T_1^{-1}) \in (0,1)$ for all sufficiently large $T_1$. Using this $c$, define $\zetavec$ by the height equation~\eqref{eq:y} from Section~\ref{sec:lift-tables}: $\zeta(q)-\zeta(q-1)=\chi(2\tau q)-\chi(\tau q)+c$ with $\zeta(0)=H$. Define $D=S\mathsf{T}-S_{\mathrm{mid}}$ by~\eqref{eq:old}, the old-score change that the reset word must supply, and fix the endpoint action margin
\[
  \delta = \frac{1}{2}\min\{\delta_0, c\},
\]
independently of $T_2$.

To select a solution of $\LL(E)=F$ consistently, fix a rational linear map
\[
 \RR:\operatorname{im}(\LL)\longrightarrow\Q^{w\times w},
 \qquad \LL(\RR(F))=F.
\]
For example, row-reduce the fixed linear system for $\LL$, with matrix entries listed row by row, and set its free variables to zero. This gives a solution $E=\RR(F)$ that depends linearly on each compatible right-hand side $F$; the map depends only on $k$. We retain the leading term $T_1E_0$ and use this map to correct its residual:
\begin{equation}\label{eq:E}
  E = T_1 E_0 + \RR\bigl(D^{\top} X - \ell^{\top} \psi - T_1 \LL(E_0)\bigr).
\end{equation}
The next proposition proves that the argument of $\RR$ is compatible, making this construction a solution of the required matrix equation.
\begin{proposition}[Exact solution of the scalar compatibility tests]\label{prop:momentchoice}
For all sufficiently large $T_1$, the gain $c$ in~\eqref{eq:parity} (or $c=c_{\rm ref}$ for odd $k$) belongs to $(0,1)$. The argument of $\RR$ in~\eqref{eq:E} satisfies every compatibility test, so that equation defines a rational solution of $E-E^{\top}B=D^{\top}X-\ell^{\top}\psi$, the compatibility equation~\eqref{eq:moment}. Moreover, as $T_1\to\infty$ with the parent data fixed,
\begin{equation}\label{eq:Esize}
  c - c_{\rm ref} = O(T_1^{-1}), \qquad E - T_1 E_0 = O(1).
\end{equation}
\end{proposition}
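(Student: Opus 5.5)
The plan is to check, via Lemma~\ref{lem:cokernel}, that every compatibility test $\ip{Z^{(h)}}{\cdot}$ vanishes on the argument
\[
F_R:=D^{\top}X-\ell^{\top}\psi-T_1\LL(E_0)
\]
of $\RR$. Once that holds, $\LL(E)=T_1\LL(E_0)+F_R=D^{\top}X-\ell^{\top}\psi$, which is~\eqref{eq:moment}. Since $T_1\LL(E_0)\in\operatorname{im}(\LL)$ is automatically orthogonal to every $Z^{(h)}$, it suffices to show $\ip{Z^{(h)}}{D^{\top}X-\ell^{\top}\psi}=0$ for every even $h\in[k,2k]$.

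\textbf{Step 1: dependence on $c$.} By~\eqref{eq:y}, changing $c$ alters $\zetavec$ by a multiple of the row representing the polynomial $q$, i.e. $\zetavec=\zetavec^\circ+(c-c_{\rm ref})\,e_{1}^{\top}$ (the constant term is pinned at $H$). Hence $\psi=\psi^\circ-(c-c_{\rm ref})\erow{w}$. In~\eqref{eq:old}, $\hvec$ shifts by $(c-c_{\rm ref})e_1^{\top}B$, so $S$ shifts by $(c-c_{\rm ref})\one_{m+1}e_1^{\top}B$. Since $e_1^{\top}BN=\erow{w}$, this gives $D=D^\circ+(c-c_{\rm ref})\one_{m+1}\erow{w}$.

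\textbf{Step 2: affine expansion of the right-hand side.} Substituting these, together with $\ell=T_1\ell_0+\ell_*$, gives
\[
D^{\top}X-\ell^{\top}\psi=(D^\circ)^{\top}X-\ell_*^{\top}\psi^\circ-T_1\ell_0^{\top}\psi^\circ+(c-c_{\rm ref})\,(\erow{w})^{\top}\bigl(\one^{\top}X+\ell\bigr).
\]
The third term should lie in $\operatorname{im}(\LL)$, and I expect it to equal $-\LL(E_0)$ up to image terms. Writing $\ell_0=2r\lambda\zetavec^\circ$ and $\psi^\circ=\zetavec^\circ B-\zetavec^\circ$, we have
\[
\LL\bigl((\zetavec^\circ)^{\top}\zetavec^\circ\bigr)=(\zetavec^\circ)^{\top}\zetavec^\circ-(\zetavec^\circ)^{\top}\zetavec^\circ B=-(\zetavec^\circ)^{\top}\psi^\circ,
\]
so $-\ell_0^{\top}\psi^\circ=\LL(E_0)$ exactly. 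This is the main identity to verify carefully, and I expect it to check out as written.

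\textbf{Step 3: the tests.} For even $h>k$, the tests on $(D^\circ)^{\top}X-\ell_*^{\top}\psi^\circ$ vanish by~\eqref{eq:higher}. The term $(\erow{w})^{\top}(\cdot)$ is supported in row $0$, which these tests avoid (support $i+j\ge h>k$), so they contribute nothing. For even $k$ and $h=k$, the row-$0$ support of $Z^{(k)}$ is only the corner $(0,k)$, with value $1$. The test therefore changes by $(c-c_{\rm ref})\bigl((\one^{\top}X)_k+\ell_k\bigr)$, and~\eqref{eq:parity} is designed precisely to cancel it. Positivity of the denominator was already argued before~\eqref{eq:parity}.

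\textbf{Step 4: sizes.} The numerator of~\eqref{eq:parity} is independent of $T_1$, while the denominator grows linearly because $\zetavec^\circ_k>0$. This gives $c-c_{\rm ref}=O(T_1^{-1})$, and hence $c\in(0,1)$ for large $T_1$. Finally,
\[
F_R=(D^\circ)^{\top}X-\ell_*^{\top}\psi^\circ+(c-c_{\rm ref})(\erow{w})^{\top}(\one^{\top}X+\ell).
\]
The last term is $O(T_1^{-1})\cdot O(T_1)=O(1)$, and $\RR$ is a fixed linear map, so $E-T_1E_0=\RR(F_R)=O(1)$.

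The main obstacle is the bookkeeping in Step 1. I need to confirm that the $c$-dependence of $D$ is exactly a constant-polynomial shift in every old row, including the $g$ row: there $\zetavec B$ shifts identically and $\delta$ drops out by Lemma~\ref{lem:gateway}.
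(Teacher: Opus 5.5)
Your proposal follows the paper's proof essentially step for step. It uses the same identity $\LL(E_0)=-\ell_0^{\top}\psi^\circ$, the same $c$-dependence $D-D^\circ=d_c\one_{m+1}\erow{w}$ and $\psi-\psi^\circ=-d_c\erow{w}$, the same residual expansion, and the same appeal to Lemma~\ref{lem:cokernel} and~\eqref{eq:parity}. There is one small transposition slip: the $\ell$-contribution is $d_c\,\ell^{\top}\erow{w}$, which is supported in column $0$, not row $0$. This is harmless, because the higher tests also vanish on the first column and $Z^{(k)}_{k,0}=Z^{(k)}_{0,k}=1$ for even $k$, so the lowest test still changes by exactly $d_c\bigl((\one_{m+1}^{\top}X)_k+\ell_k\bigr)$.
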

\begin{proof}
The proof has two parts: the gain correction preserves the tests already solved, and the residual left for $\RR$ stays bounded. First, the definition of $E_0$ gives
\[
 \LL(E_0)=2r\lambda(\zetavec^\circ)^{\top}
              (\zetavec^\circ-\zetavec^\circ B)
          =-\ell_0^{\top}\psi^\circ.
\]
Write $d_c:=c-c_{\rm ref}$ for the scalar gain change, and let $e_1^{\top}=(0,1,0,\ldots,0)\in\Q^{1\times w}$, the coefficient row of the polynomial $q=\binom q1$. The height equation~\eqref{eq:y}, with its value at zero unchanged, yields
\[
 \zetavec-\zetavec^\circ=d_c e_1^{\top},\qquad
 D-D^\circ=d_c\one_{m+1}\erow{w},\qquad
 \psi-\psi^\circ=-d_c\erow{w}.
\]
Using $\ell=T_1\ell_0+\ell_*$ from~\eqref{eq:baseline}, we can therefore expand the residual as
\begin{equation}\label{eq:compatibilityresidual}
 D^{\top}X-\ell^{\top}\psi-T_1\LL(E_0)
 =(D^\circ)^{\top}X-\ell_*^{\top}\psi^\circ
   +d_c\bigl((\erow{w})^{\top}\one_{m+1}^{\top}X+\ell^{\top}\erow{w}\bigr).
\end{equation}
For every even $h>k$, the test matrix $Z^{(h)}$ has zero first row and first column. Its inner product with the last term is therefore zero, and its inner product with the first two terms is zero by~\eqref{eq:higher}. These are exactly the higher-index tests of Lemma~\ref{lem:cokernel}. When $k$ is even, $Z^{(k)}_{0,k}=Z^{(k)}_{k,0}=1$, with all other entries in its first row and first column zero. Thus, the remaining test equals
\[
 \ip{Z^{(k)}}{(D^\circ)^{\top}X-\ell_*^{\top}\psi^\circ}
 +d_c\bigl((\one_{m+1}^{\top}X)_k+\ell_k\bigr)=0
\]
by~\eqref{eq:parity}. The subtracted term $T_1\LL(E_0)$ has zero inner product with every test because it belongs to the image of $\LL$. Hence~\eqref{eq:E} solves the matrix compatibility equation exactly.

Finally, $d_c=O(T_1^{-1})$, $\ell=T_1\ell_0+\ell_*$, and all other quantities on the right of~\eqref{eq:compatibilityresidual} are independent of $T_1$. The residual matrix on the left of~\eqref{eq:compatibilityresidual} therefore has norm $O(1)$. The fixed linear map $\RR$ used in~\eqref{eq:E} maps this bounded residual to an $O(1)$ correction, proving~\eqref{eq:Esize}.
\end{proof}

Proposition~\ref{prop:momentchoice} has solved~\eqref{eq:moment} for the total count row $\ell$ chosen in~\eqref{eq:baseline}, but we still need individual reset runs whose interactions can realize $E$. For this sufficient construction we use two groups of $r=k+1$ reset actions: an early group, which will lower every old-action score, and a late group, which will restore those scores. We give each group an invertible count table so that its interactions can be solved explicitly. The adjustment matrix $V_0$ below has zero column sums, $\one_r^{\top}V_0=0$, and a zero last column. It lets $T_2$ change the lower count coefficients without changing the total counts or the leading coefficients. Define $V_0\in\Z^{r\times w}$ and $M\in\Z^{2r\times w}$ by
\begin{equation}\label{eq:W}
  V_0 = \begin{pmatrix}I_k & 0 \\ -\one_k^{\top} & 0\end{pmatrix}, \qquad
  W = \one_r \upsilon+ T_2 V_0, \qquad
  W_E = W + e_0 \ell_*, \qquad
  M = \begin{pmatrix}W_E \\ W\end{pmatrix},
\end{equation}
where $T_2\ge\max\{T_1,\upsilon_k\}$ is an integer controlling the lower-degree count coefficients and $e_0\in\Q^r$ in $W_E$ selects its first row. The early table $W_E$ and late table $W$ each have size $r\times w$, with $r=w$. As polynomials, the first $k$ rows of $W$ are $\upsilon p_k(q)+T_2\binom qi$ for $i=0,\ldots,k-1$; its last row is $\upsilon p_k(q)-T_2\sum_{i=0}^{k-1}\binom qi$. Their adjustments sum to zero and have degree below $k$. Their combined table has exactly the total count row already used in the compatibility equation:
\[
 \one_{2r}^{\top}M=2r\upsilon+\ell_*=\ell.
\]
Its leading coefficient column is
\[
 \gamma:=M_{:,k}=\upsilon_k\one_{2r}+\ell_{*,k}e_0>0
\]
for sufficiently large $T_1$, where here $e_0\in\Q^{2r}$. Consequently every individual reset count is an integer-valued degree-$k$ polynomial that is positive at all sufficiently large phase indices. These two facts preserve the solved total-count equation and provide positive run lengths; we next choose interactions realizing $E$.

\subsubsection{Constructing the reset interactions and the old-action coupling}\label{sec:resetinteractions}
We now construct $J$ so that $M^{\top}JM=E$, and then construct $P$ solving both reset equations. The matrix $J$ describes how a reset action raises the scores of later reset actions; its lower-left block $C_\times$ links the early group to the late group. For $P$, we begin from a matrix $P_0$ with negative early columns and positive late columns, and correct it using Lemma~\ref{lem:interpolate}. The next section proves that these corrections are small enough to preserve all required signs.
For the inverse formulas we need three matrices determined by the parent data, independently of $T_1$ and $T_2$: $\Pi\in\Q^{k\times r}$ and $\Phi_0,\Phi_1\in\Q^{w\times r}$. Define
\begin{equation}\label{eq:phi}
  \Pi = \bigl[I_k\;\big|\;\mathbf 0_k\bigr] - \frac{1}{r}\one_k \one_r^{\top}, \qquad
  \Phi_0 = \frac{e_k \one_r^{\top}}{r \lambda \zetavec_k^\circ}, \qquad
  \Phi_1 = \begin{pmatrix}I_k \\ -\zetavec_{0:(k-1)}^\circ / \zetavec_k^\circ\end{pmatrix} \Pi.
\end{equation}
Here $\zetavec_{0:(k-1)}^\circ=(\zetavec_0^\circ,\ldots,\zetavec_{k-1}^\circ)\in\Q^{1\times k}$, and $\bigl[I_k\;\big|\;\mathbf 0_k\bigr]\in\Q^{k\times w}$ appends the zero column $\mathbf 0_k\in\Q^k$ to the identity. The identities $\Pi\one_r=0$, $\Pi V_0=\bigl[I_k\;\big|\;\mathbf 0_k\bigr]$, and $\one_r^{\top}W=r\upsilon=rT_1\lambda\zetavec^\circ$ give the exact closed-form inverses
\begin{equation}\label{eq:inverse}
  W^{-1} = T_1^{-1} \Phi_0 + T_2^{-1} \Phi_1, \qquad
  W_E^{-1} = W^{-1} - \frac{W^{-1} e_0 \ell_* W^{-1}}{1 + \ell_* W^{-1} e_0}.
\end{equation}
The first formula separates the common row-sum component, recovered by $\Phi_0$, from the centered component, recovered using $\Pi$ in $\Phi_1$. Multiplying it by $W$ and using the three preceding identities gives $I_w$. The second is the Sherman--Morrison formula for the rank-one update $W_E=W+e_0\ell_*$. Since $\Phi_0,\Phi_1,\ell_*$ are fixed and $T_2^{-1}\le T_1^{-1}$, it also gives $\ell_*W^{-1}e_0=O(T_1^{-1})$ uniformly for $T_2\ge T_1$. Thus, the denominator in the second formula is nonzero for every such $T_2$ once $T_1$ is sufficiently large. In particular, there is a parent-dependent constant $K$ such that
\[
 \norm{W^{-1}}_*,\ \norm{W_E^{-1}}_*\le K/T_1
\]
throughout this range.

Let $L_r\in\Z^{r\times r}$ have ones strictly below the diagonal and zeros elsewhere. We use the positive scalar $\eta$ to create handoffs within each group and the positive scalar $\kappa$ for the leading interactions between groups. The matrix $E_{\rm base}$ is the contribution of those leading cross-group interactions. The residual $F_{\rm res}\in\Q^{w\times w}$ accounts for what remains after subtracting this contribution and the within-group interactions. The choice $\eta=(T_1T_2^2)^{-1}$ offsets the quadratic growth of those internal contributions in $T_2$. The choice of $\kappa$ makes the leading cross-group contribution exactly $T_1E_0$ plus its fixed correction. Lemma~\ref{lem:uniform} checks the required uniform error bounds. Define
\begin{equation}\label{eq:J}
\begin{gathered}
  \eta = \frac{1}{T_1 T_2^2}, \qquad
  \kappa = \frac{2}{r \lambda T_1}, \qquad
  E_{\rm base} = T_1 E_0 + 2(\zetavec^\circ)^{\top} \ell_*, \\
  F_{\rm res} = E - E_{\rm base} - \eta\bigl(W_E^{\top} L_r W_E + W^{\top} L_r W\bigr), \\
  C_\times = \kappa \one_r \one_r^{\top} + W^{-\top} F_{\rm res} W_E^{-1}, \qquad
  J = \begin{pmatrix}\eta L_r & 0 \\ C_\times & \eta L_r\end{pmatrix}.
\end{gathered}
\end{equation}
Here $C_\times\in\Q^{r\times r}$ and $J\in\Q^{2r\times2r}$, while $E_{\rm base}$ has size $w\times w$. The definition of the cross-group block gives
\[
 \kappa W^{\top}\one_r\one_r^{\top}W_E
 =\kappa(r\upsilon)^{\top}(r\upsilon+\ell_*)=E_{\rm base},\qquad
 W^{\top}C_\times W_E=E_{\rm base}+F_{\rm res}.
\]
Consequently,
\[
 \begin{aligned}
 M^{\top}JM
 &=\eta(W_E^{\top}L_rW_E+W^{\top}L_rW)
      +W^{\top}C_\times W_E\\
 &=\eta(W_E^{\top}L_rW_E+W^{\top}L_rW)
      +E_{\rm base}+F_{\rm res}=E.
 \end{aligned}
\]
We have therefore realized the solution of the compatibility equation. To obtain the coupling $P$, we apply the formula~\eqref{eq:interpolate} from Lemma~\ref{lem:interpolate} with the following reference matrix $P_0\in\Q^{(m+1)\times2r}$ and left inverse $R_M\in\Q^{w\times2r}$:
\begin{equation}\label{eq:P0}
  P_0 = \bigl[-a \one_r^{\top} \;\big|\; a \one_r^{\top}\bigr], \qquad
  R_M = \bigl[0 \;\big|\; W^{-1}\bigr].
\end{equation}
The brackets in $R_M$ concatenate a $w\times r$ zero block and $W^{-1}$ horizontally, so $R_MM=W^{-1}W=I_w$. The rank of $X$ was established after~\eqref{eq:old}; the invertible late block $W$ gives full column rank of $M$. Moreover, $M^{\top}JM=E$ and~\eqref{eq:moment} imply $M^{\top}Q=D^{\top}X$, precisely the compatibility hypothesis of Lemma~\ref{lem:interpolate}. Lemma~\ref{lem:interpolate} therefore proves $PM=D$ and $P^{\top}X=Q$ exactly.

\subsubsection{Bounds that preserve the reset interaction and coupling signs}\label{sec:uniform}
The exact reset equations $PM=D$ and $P^{\top}X=Q$ alone do not guarantee that the proposed plays are best responses. We next show that $C_\times$ stays entrywise positive and that $P$ retains the opposite signs of its early and late reference columns. The uniformity in $T_2$ is essential: we will choose that parameter only in Section~\ref{sec:next}, after fixing $T_1$ here.

\begin{lemma}[Reset action bounds for both parameters]\label{lem:uniform}
As $T_1 \to \infty$, uniformly over all integers $T_2 \ge \max\{T_1, \upsilon_k\}$,
\begin{equation}\label{eq:uniform}
  C_\times = \kappa \one_r \one_r^{\top} + O(T_1^{-2}), \qquad
  Q = P_0^{\top} X + O(T_1^{-1}), \qquad
  P = P_0 + O(T_1^{-1}).
\end{equation}
Consequently, there is an integer $T_0$ depending only on the parent data such that, for every integer $T_1\ge T_0$ and every integer $T_2\ge\max\{T_1,\upsilon_k\}$, the following properties hold:
\begin{enumerate}[label=\textup{(\roman*)},leftmargin=*]
 \item The actual gain satisfies $0<c<1$, and the leading reset-count column satisfies $\gamma=M_{:,k}>0$ entrywise.
 \item Every entry of $J$ strictly below the diagonal is positive: $C_\times>0$ entrywise and $\eta>0$.
 \item The early columns of $P$ are strictly negative and its late columns are strictly positive, both entrywise: $P_E<0<P_L$.
 \item With $a_{\min}=\min_{0\le i\le g}a_i>0$ and $\epsilon=a_{\min}/4$, every early column satisfies the waiting margins
 \begin{equation}\label{eq:reserve}
 -P_{ij}>\epsilon\quad(i<m,\ 0\le j<r),\qquad
 -P_{gj}-c>\mathcal A_k\quad(0\le j<r).
 \end{equation}
\end{enumerate}
\end{lemma}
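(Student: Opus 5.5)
The plan is to prove the three asymptotic estimates in~\eqref{eq:uniform} first and then read off (i)--(iv) as sign and margin consequences for $T_1$ large. All the bounds are in the norm $\norm{\cdot}_*$ and must be uniform over $T_2\ge\max\{T_1,\upsilon_k\}$. The basic inputs are as follows. By~\eqref{eq:inverse}, $\norm{W^{-1}}_*,\norm{W_E^{-1}}_*\le K/T_1$. Proposition~\ref{prop:momentchoice} gives $E-T_1E_0=O(1)$ and $c-c_{\rm ref}=O(T_1^{-1})$. Every entry of $W$ and $W_E$ is $O(T_2)$, because $\upsilon=O(T_1)\le O(T_2)$.

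\textbf{Step 1: the cross-group block $C_\times$.} The term $E-E_{\rm base}$ equals $(E-T_1E_0)-2(\zetavec^\circ)^{\top}\ell_*$, so it is $O(1)$. The within-group term is $\eta(W_E^{\top}L_rW_E+W^{\top}L_rW)$, which has size $\eta\cdot O(T_2^2)=O(T_1^{-1})$. Hence $F_{\rm res}=O(1)$. Combining this with the inverse bounds gives $W^{-\top}F_{\rm res}W_E^{-1}=O(T_1^{-2})$. This is the first estimate in~\eqref{eq:uniform}. Since $\kappa\asymp T_1^{-1}$, it follows that $C_\times>0$ entrywise for large $T_1$. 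Together with $\eta>0$, this proves (ii).

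\textbf{Step 2: the reset change table $Q$.} Write $Q=\one_n\psi+JM-J^{\top}MB$ and compare it with $P_0^{\top}X$. The rows of $P_0^{\top}X$ are $\mp a^{\top}X=\mp(\zetavec^\circ+\zetavec^\circ B)$ by~\eqref{eq:coneidentity}.

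The main terms of $JM$ come from the cross block. For a late row, this is $C_\times W_E\approx\kappa\one_r(r\upsilon+\ell_*)=2\one_r\zetavec^\circ+O(T_1^{-1})$. The remaining part, $O(T_1^{-2})\cdot O(T_2)$, is dangerous because $T_2$ is unbounded. I will handle it by writing $(W^{-\top}F_{\rm res}W_E^{-1})W_E=W^{-\top}F_{\rm res}=O(T_1^{-1})$, which removes the $T_2$ growth. The within-group terms $\eta L_rW$ are $O(T_1^{-1}T_2^{-1})$.

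In the same way, the early rows pick up $-C_\times^{\top}W\approx-2\one_r\zetavec^\circ$ from $J^{\top}M$. Here the error $W_E^{-\top}F_{\rm res}^{\top}$ is again $O(T_1^{-1})$.

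Adding $\psi=\zetavec B-\zetavec$ and applying $B$ to the appropriate pieces, the early rows become $-\zetavec-\zetavec B$ and the late rows become $+\zetavec+\zetavec B$, up to $O(T_1^{-1})$. Using $\zetavec-\zetavec^\circ=O(T_1^{-1})$, this gives $Q=P_0^{\top}X+O(T_1^{-1})$. I expect this sign and $B$ bookkeeping to be the main obstacle. In particular, I have to check that the early and late contributions land with exactly the right shift to match $\zetavec^\circ+\zetavec^\circ B$, and that every term involving $T_2$ cancels through an inverse as above rather than being bounded naively.

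\textbf{Step 3: the coupling $P$.} I use formula~\eqref{eq:interpolate}. The residual $Q_R=Q-P_0^{\top}X$ is $O(T_1^{-1})$ by Step 2. Next consider $D_RR_M=(D-P_0M)[0\mid W^{-1}]$. Here $D=O(1)$ and $D_{:,k}=0$. The product $P_0M$ equals $a(\one^{\top}W-\one^{\top}W_E)=-a\ell_*$, which is $O(1)$. So $D_R=O(1)$, and therefore $D_RR_M=O(T_1^{-1})$. The matrix $L_X$ is fixed, so the last correction is $O(T_1^{-1})$ as well. This gives $P=P_0+O(T_1^{-1})$.

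\textbf{Step 4: the listed properties.} Property (i) was already shown before the lemma, via~\eqref{eq:parity} and $\gamma=\upsilon_k\one+\ell_{*,k}e_0$ with $\upsilon_k\asymp T_1$. Property (iii) follows from $P_0=[-a\one^{\top}\mid a\one^{\top}]$ with $a>0$ together with the $O(T_1^{-1})$ error. For (iv), the bound $-P_{ij}\ge a_i-O(T_1^{-1})>a_{\min}/4$ handles the parent rows. For the endpoint row, $-P_{gj}-c\ge(2H-c_{\rm ref})-c-O(T_1^{-1})$, which is close to $2\mathcal A_k+6-1$ and hence exceeds $\mathcal A_k$.

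Finally, $T_0$ is taken as the maximum of the finitely many thresholds arising above. Each of them depends only on the parent data.
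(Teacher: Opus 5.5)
Your proposal is correct and follows essentially the same route as the paper's proof. You bound $F_{\rm res}=O(1)$ using $\eta=(T_1T_2^2)^{-1}$, and you remove the $T_2$-sized factors through the exact identities $C_\times W_E=\kappa\one_r(r\upsilon+\ell_*)+W^{-\top}F_{\rm res}$ and $C_\times^{\top}W=2\one_r\zetavec^\circ+W_E^{-\top}F_{\rm res}^{\top}$ to obtain $Q=P_0^{\top}X+O(T_1^{-1})$. You then use the cancellation $P_0M=-a\ell_*$ so that $D_R=O(1)$ in the interpolation formula, after which the signs and margins in (i)--(iv) follow as in the paper's final step.
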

\begin{proof}
The waiting margins have a backward-time interpretation: each remaining parent play separates an early reset action from the parent maximum by more than $\epsilon$, while the final endpoint play leaves more than $\mathcal A_k$ of reserve. Proposition~\ref{prop:chronology} makes this comparison explicit. We use the norm defined in~\eqref{eq:coefficientnorm}. The first two steps bound the inverses and the residual defining $C_\times$; the next two cancel the potentially large count tables before estimating $Q$ and $P$; the last step turns those bounds into the stated signs and margins. Every estimate holds for all integers $T_2\ge\max\{T_1,\upsilon_k\}$.

\paragraph{Step 1: inverse bounds independent of $T_2$.}
Set $K_W=\norm{\Phi_0}_*+\norm{\Phi_1}_*$, which is a constant determined only by the parent data. Since $T_2\ge T_1$, the first formula in~\eqref{eq:inverse} gives
\[
 \norm{W^{-1}}_*\le K_W/T_1=:b_W,
 \qquad |\ell_*W^{-1}e_0|\le b_W\norm{\ell_*}_1=:\nu.
\]
For all sufficiently large $T_1$, $\nu<1/2$, so the denominator in the second inverse formula stays away from zero for every integer $T_2\ge\max\{T_1,\upsilon_k\}$. The second formula in~\eqref{eq:inverse} yields
\[
 \norm{W_E^{-1}}_*
 \le b_W+\frac{b_W^2\norm{\ell_*}_1}{1-\nu}
 =\frac{b_W}{1-\nu}=O(T_1^{-1}).
\]
The same bounds hold for the transposed inverses because $\norm{Z^{\top}}_*=\norm{Z}_*$.

\paragraph{Step 2: a bounded residual and a smaller cross-block error.}
The matrices $W$ and $W_E$ themselves can grow with $T_2$. However, their normalized versions are bounded:
\[
 \frac{W}{T_2}=\frac{T_1}{T_2}\one_r\lambda \zetavec^\circ+V_0,
 \qquad
 \frac{W_E}{T_2}=\frac{W}{T_2}+\frac{e_0\ell_*}{T_2}.
\]
Thus, both norms are at most
$b_C:=\norm{\one_r\lambda \zetavec^\circ}_*+\norm{V_0}_*+\norm{\ell_*}_1/T_1=O(1)$.
Since $\eta=(T_1T_2^2)^{-1}$,
\[
 \norm{\eta(W_E^{\top}L_rW_E+W^{\top}L_rW)}_*
 \le \frac{2\norm{L_r}_*b_C^2}{T_1}=O(T_1^{-1}).
\]
Also $E-E_{\rm base}=E-T_1E_0-2(\zetavec^\circ)^{\top}\ell_*=O(1)$ by~\eqref{eq:Esize}. Hence $F_{\rm res}=O(1)$ uniformly in $T_2$. Submultiplicativity of the matrix norm now gives
\[
 \norm{C_\times-\kappa\one_r\one_r^{\top}}_*
 \le\norm{W^{-\top}}_*\norm{F_{\rm res}}_*\norm{W_E^{-1}}_*
 =O(T_1^{-2}),
\]
which is the first estimate in~\eqref{eq:uniform}.

\paragraph{Step 3: cancel the count tables from the score-change errors.}
Multiplying the last error bound by $W$ or $W_E$ directly would give a bound proportional to $T_2/T_1^2$, which is not uniform over the allowed values of $T_2$. Instead, the definition of $C_\times$ gives the exact identities
\begin{align*}
 C_\times W_E
 &=\kappa\one_r\one_r^{\top}W_E+W^{-\top}F_{\rm res}
 =2\one_r \zetavec^\circ+\kappa\one_r\ell_*+W^{-\top}F_{\rm res},\\
 C_\times^{\top}WB
 &=\kappa\one_r\one_r^{\top}WB+W_E^{-\top}F_{\rm res}^{\top}B
 =2\one_r \zetavec^\circ B+W_E^{-\top}F_{\rm res}^{\top}B.
\end{align*}
Here $\one_r^{\top}W_E=r\upsilon+\ell_*$, $\one_r^{\top}W=r\upsilon$, and $\kappa r \upsilon=2\zetavec^\circ$. Note that the factors $W_E$ and $W$, whose norms can grow with $T_2$, have canceled from the error terms; only their inverse transposes remain. The leading terms also depend only on the row sums $r\upsilon$ and $r\upsilon+\ell_*$, which are independent of $T_2$.

Write $Q=\col(Q_E,Q_L)$ for the early and late rows. Expanding $JM-J^{\top}MB$ block by block gives
\begin{align*}
 Q_E&=\one_r \psi+\eta L_rW_E-\eta L_r^{\top}W_EB-C_\times^{\top}WB,\\
 Q_L&=\one_r \psi+C_\times W_E+\eta L_rW-\eta L_r^{\top}WB.
\end{align*}
For either $Z=W$ or $W_E$, the internal terms satisfy
\[
 \norm{\eta L_rZ}_*+\norm{\eta L_r^{\top}ZB}_*
 \le\frac{\norm{L_r}_*b_C(1+\norm{B}_*)}{T_1T_2}
 =O(T_1^{-2}).
\]
Further, $\psi-\psi^\circ=-(c-c_{\rm ref})\erow{w}=O(T_1^{-1})$, $\kappa\ell_*=O(T_1^{-1})$, and the remaining inverse-times-residual terms are $O(T_1^{-1})$ by Steps 1--2. The leading cancellations are
\[
 \psi^\circ-2\zetavec^\circ B=-(\zetavec^\circ+\zetavec^\circ B),
 \qquad \psi^\circ+2\zetavec^\circ=\zetavec^\circ+\zetavec^\circ B.
\]
Therefore
\[
 Q=\begin{pmatrix}-\one_r(\zetavec^\circ+\zetavec^\circ B)\\
                         \one_r(\zetavec^\circ+\zetavec^\circ B)\end{pmatrix}
       +O(T_1^{-1})=P_0^{\top}X+O(T_1^{-1}),
\]
using the positive-weight identity $a^{\top}X=\zetavec^\circ+\zetavec^\circ B$. This proves the second coefficient-table estimate.

\paragraph{Step 4: reconstruct the coupling matrix.}
The common baseline count row $r\upsilon=rT_1\lambda\zetavec^\circ$, which grows with $T_1$, also cancels between the early and late groups:
\[
 P_0M=-a(r\upsilon+\ell_*)+a(r\upsilon)=-a\ell_*.
\]
Thus, $D_R=D+a\ell_*=O(1)$, since $D=D^\circ+(c-c_{\rm ref})\one_{m+1}\erow{w}$ is bounded. Step 3 gives $Q_R:=Q-P_0^{\top}X=O(T_1^{-1})$, while $R_M=[0\mid W^{-1}]$ is the horizontal block matrix in~\eqref{eq:P0} and satisfies $\norm{R_M}_*=\norm{W^{-1}}_*=O(T_1^{-1})$. The matrices $L_X$ and $X$ depend only on the parent data, not on $T_1$ or $T_2$. Formula~\eqref{eq:interpolate} consequently implies
\[
 \norm{P-P_0}_*
 \le\norm{D_R}_*\norm{R_M}_*
 +\norm{L_X}_*\bigl(\norm{Q_R}_*+
           \norm{X}_*\norm{D_R}_*\norm{R_M}_*\bigr)
 =O(T_1^{-1}).
\]
This proves the final estimate with a constant independent of $T_2$.

\paragraph{Step 5: choose one scale that guarantees every sign.}
Every entry of the leading term $\kappa\one_r\one_r^{\top}$ in $C_\times$ is $\kappa=2/(r\lambda T_1)>0$, whereas its entrywise error is $O(T_1^{-2})$. More explicitly, Steps 2 and 4 provide constants $K_\times,K_P$ independent of both parameters such that
\[
 \norm{C_\times-\kappa\one_r\one_r^{\top}}_*
       \le K_\times/T_1^2,
 \qquad
 \norm{P-P_0}_*\le K_P/T_1.
\]
Thus, it is enough to choose $T_1$ so large that
\begin{equation}\label{eq:signcriterion}
 \frac{K_\times}{T_1^2}<\frac{\kappa}{2},\qquad
 \frac{K_P}{T_1}<
 \min\left\{\frac{a_{\min}}4,
             \frac{a_g-\mathcal A_k-1}{2}\right\}.
\end{equation}
The second right-hand side is positive because $a_{\min}>0$ and
$a_g=2(\mathcal A_k+3)-c_{\rm ref}>\mathcal A_k+1$. The first inequality gives $C_\times>0$ entrywise, and $\eta>0$ gives all remaining positive entries below the diagonal of $J$. The second gives $P_E<0<P_L$ entrywise. In terms of the family of matrices indexed by $(T_1,T_2)$, each early column differs from $-a$ and each late column differs from $a$ by at most $K_P/T_1$ in every entry, uniformly in $T_2$. In particular,
\[
 \begin{aligned}
 -P_{ij}&\ge a_i-K_P/T_1>3a_{\min}/4>\epsilon
                   &&(i<m,\ 0\le j<r),\\
 -P_{gj}-c&>a_g-K_P/T_1-1>\mathcal A_k
                   &&(0\le j<r).
 \end{aligned}
\]
These inequalities prove~\eqref{eq:reserve}. In Appendix~\ref{app:bounds}, we provide explicit rational bounds for the same errors, which can be used to choose a numerical value of $T_1$.

Since $c=c_{\rm ref}+O(T_1^{-1})$ with $0<c_{\rm ref}<1$, $\nu=O(T_1^{-1})$, and $\upsilon_k=T_1\lambda\zetavec_k^\circ\to+\infty$, sufficiently large $T_1$ also gives $0<c<1$, $\nu<1/2$, and $\upsilon_k+\ell_{*,k}>0$; the other entries of $\gamma$ equal $\upsilon_k>0$. All these inequalities hold for every sufficiently large $T_1$, uniformly over the permitted values of $T_2$. We may therefore fix $T_1$ before choosing $T_2$. The exact reset also gives $P\gamma=(PM)_{:,k}=D_{:,k}=0$.
\end{proof}

We have now solved both reset equations $PM=D$ and $P^{\top}X=Q$ from~\eqref{eq:twosided} and proved the signs and margins of Lemma~\ref{lem:uniform} for every integer $T_2\ge\max\{T_1,\upsilon_k\}$. In Section~\ref{sec:next}, we use the remaining freedom in $T_2$ to obtain the positive weights required at the next order.

\subsection{Preserving the additional data for the next order}\label{sec:next}
We now supply the positive-weight data required by Definition~\ref{def:admissible}(ii) for order $k+1$, completing the algebraic part of Theorem~\ref{thm:lift}. Section~\ref{sec:synthesis} fixed $T_1$ and secured both reset equations, signs, and waiting margins for every integer $T_2\ge\max\{T_1,\upsilon_k\}$. We use the remaining freedom to make the next-order weights positive.

The key point in Proposition~\ref{prop:nextweights} is that the next scale $\tau_+$ and target row can be computed independently of $T_2$. Increasing $T_2$ then makes the late-reset weights positive. The uniform estimates of Lemma~\ref{lem:uniform}, not merely preservation of group totals, ensure that this choice does not invalidate the reset signs and margins.

The output is a fixed permitted integer $T_2$ together with the next correction and positive weights. Section~\ref{sec:weight-preservation} obtains them using algebraic identities only. Section~\ref{sec:chronology} will establish the remaining best-response and clock-inheritance conditions for the resulting fixed game.

Recall that the target row $t_{\tau,c_{\rm ref}}(\chivec)\in\Q^{1\times k}$ is defined by~\eqref{eq:target}; the positive-weight conditions in Definition~\ref{def:admissible} are $a_P>0$ and $a_P^{\top}C=t_{\tau,c_{\rm ref}}(\chivec)$. We need this same identity for the next clock, whose height-coefficient row will be $\zetavec$. The reset-count table $M=\col(W_E,W)$ and interaction table $J$ are those of~\eqref{eq:W} and~\eqref{eq:J}.

\subsubsection{Choosing the remaining parameter and preserving positive weights}\label{sec:weight-preservation}
We form the rotated tables, isolate the quantities independent of the free parameter, and then make the remaining weights positive. This supplies the reusable algebraic input for Proposition~\ref{prop:inherit} and the next application of Theorem~\ref{thm:lift}.

Recall the old count and intermediate score tables $X,S_{\mathrm{mid}}$ from~\eqref{eq:old}, the forward shift $\mathsf{T}$, and the reset tables $M,J$ from~\eqref{eq:W} and~\eqref{eq:J}. With $T_1$ fixed by Lemma~\ref{lem:uniform} and $T_2$ still free, define the rotated order-$(k+1)$ phase word and tables
\begin{equation}\label{eq:rotate}
  \W_{k+1}(q) = \Bw(q)\Aw(q+1), \qquad
  R_+ = \begin{pmatrix}S_{\mathrm{mid}} \\ \one_{2r} \zetavec  - JM\end{pmatrix}, \qquad
  C_+ = \begin{pmatrix}X\mathsf{T} \\ M\end{pmatrix}.
\end{equation}
The word begins after $\Aw(q)$, at old scores $S_{\mathrm{mid}}$ and reset scores $\one_{2r}\zetavec-JM$. It then uses the current reset counts $M$ and the next old-word counts $X\mathsf{T}$, where $\mathsf{T}$ shifts $q$ to $q+1$. These are algebraic tables at this point; Section~\ref{sec:chronology} proves their trajectory interpretation.

\begin{proposition}[Positive weights for the next order]\label{prop:nextweights}
Fix a sufficiently large scale $T_1$ as in Lemma~\ref{lem:uniform}. There is a sufficiently large integer $T_2\ge\max\{T_1,\upsilon_k\}$ such that the rotated tables in~\eqref{eq:rotate} satisfy the positive-weight condition of Definition~\ref{def:admissible}(ii) at order $k+1$, with reference gain $c_{{\rm ref},+}=1/2$. This choice preserves the reset equations, signs, and margins proved in Section~\ref{sec:synthesis}.
\end{proposition}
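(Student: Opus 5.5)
The plan is to reduce the statement to a finite-dimensional positivity question in which the free parameter $T_2$ enters only through the inverse $W^{-1}=T_1^{-1}\Phi_0+T_2^{-1}\Phi_1$ from~\eqref{eq:inverse}. Then I would let $T_2\to\infty$. The last sentence of the proposition is immediate from Lemma~\ref{lem:uniform}. That lemma already gives the reset equations, the signs $P_E<0<P_L$, $C_\times>0$, and the margins~\eqref{eq:reserve} for every integer $T_2\ge\max\{T_1,\upsilon_k\}$ once $T_1$ is fixed, so any large $T_2$ chosen below is permitted.

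\textbf{Step 1: the next-order target does not depend on $T_2$.} At order $k+1$ the target row is $t_+:=t_{\tau_+,1/2}(\zetavec)$. It is determined by $\zetavec$, by $H_+=\mathcal A_{k+1}+3=\mathcal A_k+4$, and by the scale $\tau_+$. The row $\zetavec$ depends only on $c$ and the parent data, so it is independent of $T_2$. The scale $\tau_+$ is produced by Lemma~\ref{lem:scale} from the data $(R_+^{\top}C_+,\ \one^{\top}C_+,\ \zetavec,\ v_+C_+=\erow{w})$, with row $0$ of $R_+^{\top}C_+$ irrelevant. I would verify the $T_2$-independence directly from~\eqref{eq:rotate}.

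First, $\one^{\top}C_+=\one_{m+1}^{\top}X\mathsf T+\ell$, and $\ell$ is fixed by~\eqref{eq:baseline}. Second,
\[
 R_+^{\top}C_+=S_{\mathrm{mid}}^{\top}X\mathsf T+\zetavec^{\top}\ell-M^{\top}J^{\top}M
 =S_{\mathrm{mid}}^{\top}X\mathsf T+\zetavec^{\top}\ell-E^{\top}.
\]
Here $E=M^{\top}JM$ comes from~\eqref{eq:E} and depends only on $T_1$. Hence $\tau_+$, the reference row $\zetavec^\circ_+$, the correction $\ell_{*,+}$, and the target $t_+$ are all fixed before $T_2$ is chosen. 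The top coefficient of $t_+$ is positive by the analogue of~\eqref{eq:target} at order $k+1$, since it is a positive multiple of $\zetavec_k>0$ (Lemma~\ref{lem:gateway}).

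\textbf{Step 2: construct the weights.} Write $a_{P,+}=(\alpha,\beta,y)$ on the old rows, early rows and late rows, so that the requirement is
\[
 \alpha^{\top}X\mathsf T+\beta^{\top}W_E+y^{\top}W=t_+.
\]
I would fix $\alpha=\varepsilon\one_{m+1}$ and $\beta=\varepsilon\one_r$ with a small rational $\varepsilon>0$. The resulting row $\rho:=t_+-\varepsilon(\one^{\top}X\mathsf T+\one_r^{\top}W_E)$ is independent of $T_2$, because $\one_r^{\top}W_E=r\upsilon+\ell_*$. Choosing $\varepsilon$ small keeps $\rho_k>0$. Then
\[
 y^{\top}=\rho W^{-1}=\frac{\rho_k}{r\lambda\zetavec^\circ_kT_1}\one_r^{\top}+T_2^{-1}\rho\Phi_1 .
\]
The first term is a strictly positive constant vector independent of $T_2$, and the second term is $O(T_2^{-1})$. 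So a sufficiently large integer $T_2$ gives $y>0$ entrywise, and $a_{P,+}>0$.

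\textbf{Main obstacle.} I expect the delicate point to be Step 1. Beyond verifying the displayed identity for $R_+^{\top}C_+$, I must check three further things. The first is that Lemma~\ref{lem:scale} really consumes only $T_2$-independent data. The second is that the entry leader and height row of the rotated clock are $\reset{0}$ and $\zetavec$, as anticipated in Section~\ref{sec:phaseboundary}. The third is that the exact positivity of the top coefficient of $t_+$ carries over with $H_+=\mathcal A_k+4$ and $c_{{\rm ref},+}=1/2$.

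If some term in $R_+^{\top}C_+$ did depend on $T_2$, I would fall back on a different argument. Only a finite set of test inner products enters the construction in Lemma~\ref{lem:scale}, so I would try to show that these inner products are unchanged, using $\one_r^{\top}V_0=0$ and the fact that the last column of $V_0$ is zero.
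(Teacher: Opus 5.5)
Your proposal is correct and follows the paper's proof essentially step for step. The paper also checks that the four inputs to Lemma~\ref{lem:scale} are independent of $T_2$, including $R_+^{\top}C_+=S_{\mathrm{mid}}^{\top}X\mathsf{T}+\zetavec^{\top}\ell-E^{\top}$ and $(R_+)_{\reset{0},:}=\zetavec$. It then places a common weight $\mu=(t_+)_k/(4(r_0)_k)$ on all old and early-reset rows (your $\varepsilon$), and uses $W^{-1}=T_1^{-1}\Phi_0+T_2^{-1}\Phi_1$ to write the late weights as $\alpha\one_r^{\top}+T_2^{-1}t_{\rm rem}\Phi_1$, which are positive once $T_2$ is large. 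The remaining items you flag are immediate (e.g.\ row $\reset{0}$ of $\one_{2r}\zetavec-JM$ equals $\zetavec$ because $J$ is strictly lower triangular), so your fallback argument is not needed.
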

\begin{proof}

By~\eqref{eq:twosided}, $G C_+ = R_+ N$. The four quantities used in Lemma~\ref{lem:scale} to compute the next scale factor and correction row $(\tau_+,\ell_{*,+})$ are independent of $T_2$:
\begin{equation}\label{eq:nextsummary}
  R_+^{\top} C_+ = S_{\mathrm{mid}}^{\top} X \mathsf{T} + \zetavec^{\top} \ell - E^{\top}, \qquad
  \one^{\top} C_+ = \one^{\top} X \mathsf{T} + \ell, \qquad
  (R_+)_{\reset{0},:} = \zetavec , \qquad (C_+)_{g,:} = \erow{w}.
\end{equation}
Indeed, the reset-action contribution is
\[
 (\one_{2r}\zetavec-JM)^{\top}M
 =\zetavec^{\top}(\one_{2r}^{\top}M)-(M^{\top}JM)^{\top}
 =\zetavec^{\top}\ell-E^{\top}.
\]
Thus, $S_{\mathrm{mid}}, X, \zetavec , \ell, E$ depend only on $T_1$ (since $\delta = \min\{\delta_0, c\}/2$ was fixed before $T_2$).

\begin{figure}[tb]\centering
\begin{tikzpicture}[font=\small,node distance=8mm and 10mm]
\node[parbox,text width=56mm] (a) {parent data\\$(\tau,c_{\rm ref},\ell_*,a_P)$};
\node[conbox,right=of a,text width=56mm] (b) {choose $T_1$\\fix $c,\delta,\zetavec ,D,E$ uniformly in $T_2$};
\node[parbox,below=of b,text width=56mm] (c) {compute next scale factor and target\\from data independent of $T_2$};
\node[gatbox,left=of c,text width=56mm] (k) {choose $T_2$\\obtain positive next-order weights};
\draw[arr] (a)--(b);\draw[arr] (b)--(c);\draw[arr] (c)--(k);
\end{tikzpicture}
\caption{The order of parameter choices. Increasing $T_2$ in the last step preserves the signs: the bounds used to choose $T_1$ hold for every permitted value of $T_2$.}\label{fig:parameters}
\end{figure}

Lemma~\ref{lem:scale} uses only these four quantities, the reset identity, the counter identity, and the positive leading entry coefficient. It does not use trajectory validity. Thus, it can already be applied to the $T_2$-independent quantities in~\eqref{eq:nextsummary} with reference gain $c_{{\rm ref},+} = 1/2$ to produce the next-order integer scale factor $\tau_+ \ge 1$, count correction row $\ell_{*,+} \in \Z^{1 \times (k+2)}$, and target $t_+ = t_{\tau_+, 1/2}(\zetavec ) \in \Q^{1 \times w}$, which satisfies $(t_+)_k > 0$ by~\eqref{eq:target}. We assign a common small positive weight $\mu$ to all old and early-reset rows. Their sum is $r_0$ below. The late rows must represent the residual target, so their weights are $t_{\rm rem}W^{-1}$. Define
\begin{equation}\label{eq:nextweights}
  r_0 = \one_{m+1}^{\top} X \mathsf{T} + r \upsilon+ \ell_*, \qquad
  \mu = \frac{(t_+)_k}{4(r_0)_k} > 0, \qquad
  t_{\rm rem} = t_+ - \mu r_0, \qquad
  \alpha = \frac{(t_{\rm rem})_k}{r \upsilon_k} > 0.
\end{equation}
Here $r_0,t_{\rm rem}\in\Q^{1\times w}$ are rows and $\mu,\alpha$ are positive rational scalars. The row $r_0=\one_{m+1}^{\top}X\mathsf{T}+\one_r^{\top}W_E$ sums the old-action and early-reset rows of $C_+$. Its last entry is positive: the old count polynomials have nonnegative leading coefficients, and every early reset count has a positive degree-$k$ coefficient. Thus, $\mu$ is well-defined, and $(t_{\rm rem})_k=\frac34(t_+)_k>0$. Multiplying $t_{\rm rem}$ by the closed-form inverse $W^{-1}$ in~\eqref{eq:inverse} gives the late reset-action weight vector
\[
  t_{\rm rem} W^{-1} = \alpha \one_r^{\top} + T_2^{-1} t_{\rm rem} \Phi_1.
\] The constant term is $t_{\rm rem}(T_1^{-1}\Phi_0)=\alpha\one_r^{\top}$ because $\Phi_0$ selects the last target coefficient and $\upsilon_k=T_1\lambda\zetavec_k^\circ$.
Choosing the integer $T_2$ so that
\begin{equation}\label{eq:radius}
  T_2 > \max\Bigl\{T_1,\, \upsilon_k,\, 1,\, \frac{2\norm{t_{\rm rem} \Phi_1}_\infty}{\alpha}\Bigr\}
\end{equation}
ensures $t_{\rm rem} W^{-1} > \frac{1}{2}\alpha \one_r^{\top} > 0$. Assigning weight $\mu > 0$ to every old row and every row of $C_+$ for an early reset action, and assigning weights $(t_{\rm rem} W^{-1})^{\top} > 0$ to the $r$ rows for late reset actions of $C_+$, yields a strictly positive vector $a_{P,+} \in \Q_{>0}^{m+1+2r}$ satisfying
\[
  a_{P,+}^{\top} C_+ = \mu r_0 + (t_{\rm rem} W^{-1})W = \mu r_0 + t_{\rm rem} = t_+.
\]
This supplies the positive-weight and scalar-test data $(\tau_+, 1/2, \ell_{*,+}, a_{P,+})$ for the next lift.

\end{proof}
For definiteness, we choose $T_2$ to be the least integer strictly larger than the bound in~\eqref{eq:radius}. At this point every entry of the new payoff matrix $G$ is fixed. To complete the proof of Theorem~\ref{thm:lift}, Section~\ref{sec:chronology} shows that the proposed words are generated by best responses.

\subsection{Best-response verification and proof of the one-step theorem}\label{sec:chronology}
We complete the proof of Theorem~\ref{thm:lift} by showing that the fixed game constructed in Sections~\ref{sec:synthesis}--\ref{sec:next} generates the prescribed best-response trajectory. The exact score identities and quantitative margins compare each proposed play with every competing action. In Proposition~\ref{prop:chronology}, we prove that the full game follows the word $\Aw(q)\Bw(q)$. In Proposition~\ref{prop:inherit}, we move the phase boundary to the start of the reset block and verify all next-order clock conditions. With these ingredients, we give the proof of Theorem~\ref{thm:lift} in Section~\ref{sec:liftproof}.

\subsubsection{Checking every proposed run against every action}\label{sec:valid-runs}
Recall the fixed data available at this point. The old actions are the $m$ parent actions $0,\ldots,m-1$ and the endpoint action $g=m$. There are $2r$ reset actions $\reset{0},\ldots,\reset{2r-1}$, where $r=k+1$. Their full payoff matrix is $G$ from~\eqref{eq:controllerdata}, with old-action block $G_O$, coupling block $P$, and reset block $K=J-J^{\top}$. The matrices $P,M,J$ satisfy the two reset equations~\eqref{eq:twosided} and the signs and margins of Lemma~\ref{lem:uniform}.

The tables $S$ from~\eqref{eq:old} and $V$ from~\eqref{eq:controllerdata} prescribe the scores at the start of outer phase $q$:
\[
 \mathbf s(q)=Sp_k(q)\in\Q^{m+1},\qquad
 \mathbf v(q)=Vp_k(q)\in\Q^{2r},\qquad
 \mathbf u(q)=\col(\mathbf s(q),\mathbf v(q)).
\]
Thus, $\mathbf s(q)$ lists the old-action scores and $\mathbf v(q)$ lists the reset-action scores. The vector $\mathbf v(q)$ is distinct from the parent counter row $v$ in Definition~\ref{def:clock}. The count vector $\mathbf m(q)=Mp_k(q)$ gives the reset run lengths. The prescribed word first plays the parent bundle followed by $g$,
\[
 \Aw(q)=\W_k(\tau q)\cdots\W_k(2\tau q-1)\,g,
\]
and then the reset block
\[
 \Bw(q)=\reset{0}^{\mathbf m_0(q)}\cdots
          \reset{2r-1}^{\mathbf m_{2r-1}(q)}.
\]
Lemma~\ref{lem:resetidentity} already shows that adding the payoff columns for this entire word takes $\mathbf u(q)$ to $\mathbf u(q+1)$. We must now show that its successive actions are the best responses selected by~\eqref{eq:process}.

Give the actions the priority order
\begin{equation}\label{eq:priority}
  \text{parent actions in their inherited order}
  \ \prec\ \reset{2r-1}\ \prec\ \reset{2r-2}\ \prec\cdots\prec\ \reset{0}\ \prec\ g.
\end{equation}
Here $x\prec y$ means that $x$ wins a tie against $y$. In particular, any parent action wins a tie against any added action; among reset actions, the later reset action wins a tie. These choices determine the boundary handoffs $\reset{j}\to\reset{j+1}$ and $\reset{2r-1}\to b$.

\begin{proposition}[Every proposed run is valid]\label{prop:chronology}
There is an integer threshold $Q_{\rm out}$ such that, for every integer $q\ge Q_{\rm out}$, the best-response updates starting from $\mathbf u(q)=\col(\mathbf s(q),\mathbf v(q))$ follow the exact word $\Aw(q)\Bw(q)$ and end at $\mathbf u(q+1)$. During the parent bundle no added action interrupts the parent. Action $g$ plays once at the bundle endpoint, followed by exactly $\mathbf m_j(q)$ consecutive plays of each reset action $\reset{j}$ in increasing order of $j$. Throughout this reset block, including its entry and endpoint, the maximum score among all actions equals $\zeta(q)$. Here $q$ is fixed, so this maximum does not change as reset actions are played.
\end{proposition}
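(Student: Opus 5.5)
The proof is a chronological induction over the word $\Aw(q)\Bw(q)$. Lemma~\ref{lem:resetidentity} already shows that the word carries $\mathbf u(q)$ to $\mathbf u(q+1)$, so the only task is to check, at every decision time, that the proposed action is the maximizer selected under the priority order~\eqref{eq:priority}. We split the word into three stages: the parent bundle, the single play of $g$, and the $2r$ reset runs. At each stage we compare the proposed action against three classes of competitors: parent actions, $g$, and the reset actions.

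\textbf{Stage 1: the parent bundle.} The parent scores start at the shifted entry state $\mathbf r(\tau q)+h(q)\one_m$. Comparisons among parent actions are therefore those of the parent clock (Definition~\ref{def:clock}(I)), concatenated over the bundle. Action $g$ stays below by Lemma~\ref{lem:gateway}. For an early reset action $\reset{j}$, we first compute its exact score at the bundle endpoint, namely $\mathbf v_j(q)-(P^\top\mathbf x(q))_j+P_{gj}$. This score is the endpoint value minus the $g$-contribution. Using $P^\top X=Q$, this gives $\zeta(q)-(JM)_jp_k(q)+P_{gj}$, evaluated before the $g$ play. Since $(JM)_j p_k(q)\ge0$ for large $q$ (because $J\ge0$ and $M>0$), the margin~\eqref{eq:reserve} places it below $H_{\rm end}=\zeta(q)-c$ by more than $\mathcal A_k$. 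We then run time backwards: each remaining parent play changes $\reset{j}$ by $-P_{ij}>\epsilon$. Lemma~\ref{lem:bundle} bounds the remaining parent height gain by $\epsilon|\W^{\rm rem}|+\mathcal A_k$, which gives strict inferiority throughout the bundle. Late reset actions start at or below $\zeta(q-1)$, since $\mathbf v(q)\le\zeta(q-1)\one$ by the sign of $J^\top M B$. Their scores only decrease under parent plays, because $P_L>0$, while the parent maximum is nondecreasing. The single possible tie, with $\reset{2r-1}$ at entry, is won by $b$ through priority.

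\textbf{Stage 2: the play of $g$.} At the endpoint, Lemma~\ref{lem:gateway} gives $u_g=\zeta(q)-c+\delta$, which exceeds the parent maximum. The reset actions are below by Stage 1; for late resets we use $\delta>0$. Action $g$ therefore plays. After it plays, $\reset{0}$ sits at $\zeta(q)$, by the $S_{\rm mid}$/$V$ bookkeeping: its intermediate score is $\zeta(q)-(JM)_0p_k(q)=\zeta(q)$, since row $0$ of $J$ vanishes. Every other reset action $\reset{j}$ sits at $\zeta(q)-(JM)_jp_k(q)<\zeta(q)$, because $(JM)_j$ has positive leading coefficient for $j\ge1$. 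The old scores are at most $\zeta(q)-c+\delta<\zeta(q)$.

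\textbf{Stage 3: the reset block, which is the main obstacle.} During run $j$, the score of $\reset{j}$ stays at $\zeta(q)$, and scores of earlier resets decrease. A later $\reset{i}$ rises by $J_{ij}$ per play. Its score after runs $0,\dots,j$ equals $\zeta(q)-\sum_{\ell>j}J_{i\ell}\mathbf m_\ell(q)$, which is strictly below $\zeta(q)$ for $i>j+1$ and reaches exactly $\zeta(q)$ for $i=j+1$ at the end of run $j$. That handoff tie goes to the later action by priority. The delicate part is the old scores. During the early runs they decrease, because $P_E<0$. During the late runs they increase monotonically toward $\mathbf s(q+1)\le\zeta(q)\one$. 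Since the increments $P_L>0$ are positive and the final value is at most $\zeta(q)$ (with equality only for $b$ at the very end), every intermediate value is strictly below $\zeta(q)$, apart from $b$ at the last play, where priority hands the move to $b$. I expect the real work in this stage to be verifying that each old score at the end of the early group, and throughout, never exceeds $\zeta(q)$. Monotone decrease from values that are already below $\zeta(q)$ handles this, and monotone increase to the target handles the late group. Finally, the threshold $Q_{\rm out}$ is the maximum of the finitely many polynomial thresholds used: positivity of the $\mathbf m_j$, the $(JM)_j$ sign conditions, and $\widehat Q_k(\epsilon)$ applied at $\tau q$.
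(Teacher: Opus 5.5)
Your proposal is correct and follows essentially the same route as the paper's proof. The steps match one for one:
\begin{itemize}
\item For the early resets, you account backward from the exact post-$\Aw(q)$ score $\zeta(q)-(J\mathbf m(q))_j$, using Lemma~\ref{lem:bundle} and the margins~\eqref{eq:reserve}.
\item For the late resets, you use monotone decrease together with the priority tie-break.
\item For $g$, you use Lemma~\ref{lem:gateway}.
\item For the handoffs within the reset block, you use the triangular structure of $J$.
\item For the old scores during the late runs, you use their approach to $\mathbf s(q+1)\le\zeta(q)\one$ from below.
\end{itemize}

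The only slip is in your threshold list, and it has two parts. First, the bundle step requires $\tau q$ to exceed the threshold of Lemma~\ref{lem:bundle}, not just $\widehat Q_k(\epsilon)$; that threshold may be larger, because it must also absorb the height gains of the later complete phases. Second, the late-reset entry bound uses $\mathbf m(q-1)\ge0$, so run-length positivity must hold at $q-1$ as well, as in the paper's choice~\eqref{eq:Qout}.
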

\begin{proof}
We follow the sequence in Figure~\ref{fig:word}: the parent bundle, the single play of $g$, the early reset runs, and the late reset runs. First we choose $Q_{\rm out}$ to make all run lengths positive and to allow the bundle-height bound. We then show that the added actions wait during the parent bundle, that $g$ hands play to the first reset action, and that the reset actions execute in order while every old action waits. All comparisons are evaluated after a proposed prefix; the final paragraph turns them into an induction on the number of plays.

\paragraph{Choose the phase threshold $Q_{\rm out}$.}
Use $\epsilon=a_{\min}/4$ from Lemma~\ref{lem:uniform}. Let $Q_{\rm bun}(\epsilon)$ be a threshold from Lemma~\ref{lem:bundle}. Each reset-action count polynomial has a positive degree-$k$ coefficient. Hence the following explicit estimate gives an integer $Q_{\rm len}\ge1$ such that $\mathbf m_j(q)\ge2$ for every reset action and all integers $q\ge Q_{\rm len}$. Explicitly, a polynomial $f(q)=a_Dq^D+\cdots+a_0$ with $D\ge1$ and $a_D>0$ is positive when $q\ge1$ and $q>\sum_{i<D}|a_i|/a_D$, because $f(q)\ge q^{D-1}(a_Dq-\sum_{i<D}|a_i|)>0$. Apply this to $\mathbf m_j(q)-2$. Recall that the integer $\tau\ge1$ is the parent scale from Definition~\ref{def:admissible}, fixed before $T_1$ and $T_2$, and that the bundle in $\Aw(q)$ starts at parent phase $\tau q$. Choose
\begin{equation}\label{eq:Qout}
  Q_{\rm out}=\max\left\{1,Q_{\rm len}+1,
               \left\lceil Q_{\rm bun}(\epsilon)/\tau\right\rceil\right\}.
\end{equation}
Then every entry of $\mathbf m(q)$ and $\mathbf m(q-1)$ is at least two. Also $\tau q\ge Q_{\rm bun}(\epsilon)\ge Q_k$. Therefore the parent clock conditions in Definition~\ref{def:clock} hold at every phase index used in the bundle and at its exit index $2\tau q$. Lemma~\ref{lem:gateway} gives the endpoint-action comparisons, and Lemma~\ref{lem:bundle} gives the remaining-height bound across the entire bundle.

\paragraph{The parent bundle: early reset actions.}
Within the parent block the payoff matrix is still $A$, and the initial score shift is the common scalar $h(q)$. Thus, provided that no added action is selected, condition~(I) of Definition~\ref{def:clock} makes the parent's internal sequence valid. Consider the score vector obtained after any prefix of the prescribed parent bundle, including the empty and complete prefixes. Let $z_i^{\rm rem}$ be the number of plays of parent action $i$, $0\le i<m$, remaining in that bundle.

Fix an early reset action $\reset{j}$, where $0\le j<r$. Its score rises during parent plays because $-P_{ij}>0$, so an initial-score bound would not suffice. We instead work backward from its exact score after $\Aw(q)$. By $P^{\top}X=Q$ in~\eqref{eq:twosided} and the calculation in Lemma~\ref{lem:resetidentity}, its score immediately after the complete word $\Aw(q)$ is $\zeta(q)-(J\mathbf m(q))_j$. From the present prefix to that point, every remaining play of parent action $i$ changes its score by $G_{\reset{j},i}=-P_{ij}$, and the final play of $g$ changes it by $-P_{gj}$. Consequently,
\[
 \zeta(q)-(J\mathbf m(q))_j
 =u_{\reset{j}}-\sum_{i=0}^{m-1}P_{ij}z_i^{\rm rem}-P_{gj}.
\]
Solving this identity for its score at the present prefix gives
\begin{equation}\label{eq:earlybackward}
  u_{\reset{j}}=\zeta(q)-(J\mathbf m(q))_j+P_{gj}
                  +\sum_{i<m}P_{ij}z_i^{\rm rem}.
\end{equation}
The parent endpoint maximum is $\zeta(q)-c$ by Lemma~\ref{lem:gateway}. Lemma~\ref{lem:bundle}, applied with the common shift $h(q)$, gives
\[
  H_{\rm parent}\ge \zeta(q)-c-\mathcal A_k
                         -\epsilon\sum_{i<m}z_i^{\rm rem}.
\]
Subtracting~\eqref{eq:earlybackward} gives
\begin{equation}\label{eq:earlywait}
 H_{\rm parent}-u_{\reset{j}}
 \ge (J\mathbf m(q))_j+(-P_{gj}-c-\mathcal A_k)
       +\sum_{i<m}(-P_{ij}-\epsilon)z_i^{\rm rem}>0.
\end{equation}
Here $J\mathbf m(q)\ge0$ because $J$ is entrywise nonnegative and every run count is positive, $-P_{gj}-c-\mathcal A_k>0$, and $-P_{ij}-\epsilon>0$ by~\eqref{eq:reserve}. Therefore each early reset action has score strictly below the parent maximum at every decision point within the bundle. The same comparison holds after the entire bundle has been played and immediately before $g$ is selected: at that prefix, all $z_i^{\rm rem}=0$ and the strictly positive constant term in~\eqref{eq:earlywait} remains.

\paragraph{The parent bundle: late reset actions and the endpoint action.}
A late reset action $\reset{j}$, $r\le j<2r$, starts with
\[
  \mathbf v_j(q)=\zeta(q-1)-(J^{\top}\mathbf m(q-1))_j\le \zeta(q-1)=\mathbf s_b(q).
\]
For $j<2r-1$, the inequality is strict because $(J^{\top}\mathbf m(q-1))_j$ includes $J_{j+1,j}\mathbf m_{j+1}(q-1)>0$. For $j=2r-1$, it is an equality, and the parent wins the tie by~\eqref{eq:priority}. Each play of a parent action $i$ changes a late reset action's score by $-P_{ij}<0$, while the parent maximum is nondecreasing by~\eqref{eq:monotone}. Therefore every late reset action is strictly below the parent maximum after the first parent play and remains so through the bundle endpoint.

Lemma~\ref{lem:gateway} keeps $g$ strictly below the parent at every proper prefix. At the endpoint, $g$ exceeds the parent maximum by $\delta$ and is the unique maximizer, so it plays once. Immediately after this play, the reset-action scores are $\one \zeta(q)-J\mathbf m(q)$. Thus, $\reset{0}$ has score $\zeta(q)$ and every other reset action is strictly below it. Every old action has score at most $\zeta(q)-c+\delta<\zeta(q)$ by Lemma~\ref{lem:gateway}. This proves the transition to the reset block.

\paragraph{Reset action handoffs.}
During a play of $\reset{j}$, $K=J-J^{\top}$ raises every later reset score by $J_{ij}>0$, lowers every earlier reset score by $J_{ji}>0$, and leaves its own score fixed. Immediately before the reset block, its score vector is $\one\zeta(q)-J\mathbf m(q)$. Consider a prefix consisting of all runs $\reset{0},\ldots,\reset{j-1}$ and then $h$ plays of $\reset{j}$, where $0\le j<2r$ and $0\le h<\mathbf m_j(q)$. Adding the corresponding columns of $K=J-J^{\top}$ shows that the active reset action has score
\[
 u_{\reset{j}}=\zeta(q)-(J\mathbf m(q))_j
                   +\sum_{\ell<j}J_{j\ell}\mathbf m_\ell(q)
              =\zeta(q).
\]
Its own plays leave this score unchanged because $K_{jj}=0$. A previously played reset action $\reset{i}$, $0\le i<j$, has score
\[
  u_{\reset{i}}=\zeta(q)-\sum_{i<\ell<j}J_{\ell i}\mathbf m_\ell(q)-J_{ji}h\le \zeta(q)
\]
and loses any tie to $\reset{j}$. A future reset action $\reset{i}$, $i>j$, has deficit
\[
  \zeta(q)-u_{\reset{i}}=J_{ij}(\mathbf m_j(q)-h)
                        +\sum_{j<\ell<i}J_{i\ell}\mathbf m_\ell(q)>0.
\]
For $h<\mathbf m_j(q)$, the first term in this deficit is strictly positive, so no future reset action can be selected before the current run is complete. If $j<2r-1$, then at $h=\mathbf m_j(q)$ the next reset action $\reset{j+1}$ reaches $\zeta(q)$ and wins the tie against $\reset{j}$ and all earlier reset actions under~\eqref{eq:priority}. Every action $\reset{i}$ with $i>j+1$ is still strictly below $\zeta(q)$, because its remaining deficit includes $J_{i,j+1}\mathbf m_{j+1}(q)>0$. Hence the reset actions must be selected consecutively, in increasing index order, for their full prescribed run lengths, provided that no old action intervenes. We verify that remaining comparison next.

\paragraph{Old actions during the reset block.}
We bound falling old scores forward from the reset entry and rising old scores backward from the reset endpoint. Every old score starts strictly below $\zeta(q)$. All old scores decrease throughout the early half because $P_E<0$. To compare them during the late half, fix an old action $i\in\{0,\ldots,m-1,g\}$. Here $i$ labels a row of $P$, while $j\in\{r,\ldots,2r-1\}$ will label a column of $P$, corresponding to late reset action $\reset{j}$; there is no ordering relation required between $i$ and $j$.

Consider a prefix of $\Bw(q)$ that has completed all early runs but has not yet completed all late runs. For each $r\le j<2r$, let $z_j^{\rm rem}$ be the number of plays of $\reset{j}$ still remaining. All these numbers are nonnegative integers, and
\[
 \sum_{j=r}^{2r-1}z_j^{\rm rem}\ge1,
\]
because at least one prescribed reset play has not yet occurred and all early plays are already complete. The old-score reset identity $\mathbf s_{\mathrm{mid}}(q)+P\mathbf m(q)=\mathbf s(q+1)$ follows from $PM=D$ in~\eqref{eq:twosided}. Each remaining play of $\reset{j}$ adds $P_{ij}$ to the old score $u_i$. Thus, the current score and its prescribed endpoint value satisfy
\[
 \mathbf s_i(q+1)=u_i+\sum_{j=r}^{2r-1}P_{ij}z_j^{\rm rem}.
\]
Equivalently,
\begin{equation}\label{eq:oldlate}
  u_i=\mathbf s_i(q+1)-\sum_{j=r}^{2r-1}P_{ij}z_j^{\rm rem}
                         <\mathbf s_i(q+1)\le \zeta(q).
\end{equation}
The strict inequality holds because every late coupling coefficient $P_{ij}$ is positive and at least one $z_j^{\rm rem}$ is positive. For the last inequality, the parent entries of $\mathbf s(q+1)$ are a common shift of $\mathbf r(\tau(q+1))$, whose selected entry leader is $b$. Therefore all of them are at most $\mathbf s_b(q+1)=\zeta(q)$. The remaining old action is $g$. Its entry row in~\eqref{eq:old} subtracts the number of parent phases still to be played and adds the endpoint margin $\delta$, giving
\[
  \mathbf s_g(q+1)=\zeta(q)-\tau(q+1)+\delta<\zeta(q).
\]
Here $vCF_\tau p_k(q+1)=\tau(q+1)$ because $vC=\erow{k}$ counts one per parent phase, and $0<\delta<c<1$. Thus, no old action can be selected before the reset block is complete. Together with the reset-score comparisons, this proves that its maximum score remains $\zeta(q)$ throughout.

At the final endpoint, $b$ is the selected parent entry leader and wins its tie with the last reset action by~\eqref{eq:priority}. Lemma~\ref{lem:resetidentity} identifies the complete score vector there as $\mathbf u(q+1)$.

\paragraph{Induction over the plays of the proposed word.}
Let $L=|\Aw(q)\Bw(q)|$, and write the proposed word as $a_1\cdots a_L$. We now use induction on $h=0,\ldots,L$ to prove that the actual first $h$ best-response plays are $a_1,\ldots,a_h$ and that their score vector is
\[
 \mathbf u(q)+\sum_{\ell=1}^{h}Ge_{a_\ell}.
\]
The assertion holds at $h=0$, with an empty sum. If it holds for some $h<L$, then the actual state equals the score state at the corresponding proposed prefix. The comparisons above show that $a_{h+1}$ is a maximizer and wins every possible tie under~\eqref{eq:priority}. The next best response is therefore $a_{h+1}$, and adding its payoff column $Ge_{a_{h+1}}$ proves the assertion at $h+1$. At $h=L$, the reset identities give exactly $\mathbf u(q+1)$. Since $q+1\ge Q_{\rm out}$, the same argument applies to the next phase and, successively, to every later phase.
\end{proof}

\subsubsection{Starting the new phase at the first reset action}\label{sec:inherit}
The valid word from Proposition~\ref{prop:chronology} begins with a parent bundle. For the next lift, we want a phase whose endpoint is marked by exactly one final play of $g$. We therefore regard the same infinite trajectory as the concatenation of $\Bw(q)\Aw(q+1)$, beginning at the first reset action; see Figure~\ref{fig:rotation}. Proposition~\ref{prop:inherit} verifies that the rotated tables already used in Section~\ref{sec:next} satisfy every clause of the clock definition and carry the additional data needed to repeat Theorem~\ref{thm:lift}.

\begin{proposition}[The new clock and its additional data]\label{prop:inherit}
With phase words and tables from~\eqref{eq:rotate}, entry action $b_+=\reset{0}$, and counter $v_+=e_g^{\top}$, the enlarged skew-symmetric payoff matrix $G$ from~\eqref{eq:controllerdata} defines an order-$(k+1)$ clock satisfying Definition~\ref{def:admissible}. One may take $Q_{k+1}=Q_{\rm out}$, $\delta_{0,+}=1$, and $\mathcal A_{k+1}=\mathcal A_k+1$.
\end{proposition}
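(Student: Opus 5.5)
We verify the three clauses of Definition~\ref{def:clock} for the rotated data $(G,\prec,R_+,C_+,\reset{0},e_g^{\top})$ in order, and then note that Proposition~\ref{prop:nextweights} together with Lemma~\ref{lem:scale} supplies Definition~\ref{def:admissible}.

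\textbf{Condition (I).} By Proposition~\ref{prop:chronology}, the infinite trajectory from $\mathbf u(Q_{\rm out})$ is $\Aw(q)\Bw(q)\Aw(q+1)\Bw(q+1)\cdots$. The score state just after $\Aw(q)$ is $\col(\mathbf s_{\mathrm{mid}}(q),\one\zeta(q)-J\mathbf m(q))=R_+p_k(q)$, so $\W_{k+1}(q)=\Bw(q)\Aw(q+1)$ is a valid word from $R_+p_k(q)$ to $R_+p_k(q+1)$. Its counts are $C_+p_k(q)$, which are nonnegative. The identity $GC_+=R_+N$ follows from~\eqref{eq:fullreset} after the shift. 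For the rank, the block $M$ already has full column rank $w$ because $W$ is invertible. The leading coefficient $(R_+)_{\reset{0},k}=\zetavec_k>0$ is given by Lemma~\ref{lem:gateway}.

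\textbf{Condition (II).} The word begins with $\mathbf m_0(q)\ge2$ plays of $\reset{0}$, and $\reset{0}$ is the maximizer at $\zeta(q)$. We have $v_+=e_g^{\top}$, which vanishes at $\reset{0}$, and $e_g^{\top}C_+=(X\mathsf T)_{g,:}=\erow{w}$. The margin~\eqref{eq:marker} with $\delta_{0,+}=1$ holds because $g$ occurs only as the last play, so $v_+z^{\rm rem}=1$ at every proper prefix, while $H-u_{\reset{0}}\ge0$. For~\eqref{eq:supportsafe} with $\supp(v_+)=\{g\}$, we must show
\[
u_{\reset{0}}(q)+G_{\reset{0},\reset{0}}>u_g(q)+G_{g,\reset{0}}-1 .
\]
This reduces to $\zeta(q)>\zeta(q)-c+\delta+P_{g,0}-1$. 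It holds because $P_{g,0}<0$ and $\delta<c$.

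\textbf{Condition (III).} This is the main obstacle. Take any prefix of $\Bw(q)\Aw(q+1)$. If the prefix lies within $\Bw(q)$, the height is $\zeta(q)$, and the exit height is $\zeta(q+1)$. The gain $\zeta(q+1)-\zeta(q)=\Theta(q^{k-1})$ must be absorbed by $\epsilon$ times the remaining length, which includes the whole bundle of length $\Theta(q^k)$. For $q\ge\widehat Q_{k+1}(\epsilon)$ this happens with no constant. If the prefix lies within $\Aw(q+1)$, apply Lemma~\ref{lem:bundle} with the common shift. Up to the bundle end, the gain is at most $\epsilon|\W^{\rm rem}|+\mathcal A_k$. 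The final play of $g$ adds $c<1$. This gives the constant $\mathcal A_k+1$. For the complete prefix the gain is zero. The threshold is computable by the explicit polynomial estimate used in choosing $Q_{\rm out}$, so we take $Q_{k+1}=Q_{\rm out}$.

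Finally, the additional data $(\tau_+,1/2,\ell_{*,+},a_{P,+})$ come from Proposition~\ref{prop:nextweights}. Its inputs were computed from~\eqref{eq:nextsummary}, and these coincide with the verified clock's data, with $\chivec_+=\zetavec$ and $H_+=\mathcal A_{k+1}+3$. I would check that the baseline $H_+$ used in the target row matches $\mathcal A_k+4$. This consistency of constants is the point needing the most care.
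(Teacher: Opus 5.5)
Your proof is correct and follows the paper's own argument clause by clause. You use Proposition~\ref{prop:chronology} with the rotated boundary for (I), the single final play of $g$ with $P_{g0}<0$ and $\delta<c$ for (II), the reset-block/bundle case split with Lemma~\ref{lem:bundle} and the extra gain $c<1$ for (III), and Proposition~\ref{prop:nextweights} for the additional data.

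The $H_+$ consistency you flag is vacuous: $\erow{w}B=\erow{w}$ and $\erow{w}N=0$ make the target row in~\eqref{eq:target}, $D^\circ$, and $\psi^\circ$ independent of the baseline $H$. The one thing to tighten is the threshold: $\widehat Q_{k+1}(\epsilon)$ must also include $\lceil Q_{\rm bun}(\epsilon)/\tau\rceil$ for the given $\epsilon$, because the $Q_{\rm bun}$ built into $Q_{\rm out}$ covers only $\epsilon=a_{\min}/4$.
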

\begin{proof}
We check the three clauses of Definition~\ref{def:clock}, then the additional data. Recall $R_+=\col(S_{\mathrm{mid}},\one_{2r}\zetavec-JM)$ from~\eqref{eq:rotate}; these are the scores just after $\Aw(q)$.

\paragraph{Condition (I): exact trajectory and polynomial growth.}
Proposition~\ref{prop:chronology} makes the infinite concatenation $\Aw(q)\Bw(q)\Aw(q+1)\Bw(q+1)\cdots$ valid. Moving the phase boundary to just after $\Aw(q)$ gives exactly $\W_{k+1}(q)=\Bw(q)\Aw(q+1)$, starting from $R_+p_k(q)$ and ending at $R_+p_k(q+1)$. Its count table is $C_+=\col(X\mathsf{T},M)$, with integer coefficients and nonnegative evaluations for $q\ge Q_{\rm out}$. The identity $GC_+=R_+N$ says that the payoff change over one rotated phase is precisely the difference between its next and current entry-score tables. Indeed, evaluation at $p_k(q)$ gives
\[
 GC_+p_k(q)=R_+\bigl(p_k(q+1)-p_k(q)\bigr).
\]
The late reset-action submatrix of $C_+$ is $W$ from~\eqref{eq:W}. Formula~\eqref{eq:inverse}, valid for the chosen parameters, gives its inverse, so $\rank(C_+)=k+1$. The coefficient row of the new entry height is $\zetavec\in\Q^{1\times(k+1)}$; its scalar last entry $\zetavec_k>0$ is $k!$ times the ordinary degree-$k$ coefficient.

\paragraph{Condition (II): entry leadership and endpoint detection.}
The phase begins with $\mathbf m_0(q)\ge2$ plays of $\reset{0}$. Its only play of $g$ is the last symbol of $\Aw(q+1)$. Therefore $v_{+,b_+}=0$, $v_+C_+=\erow{w}$, and $v_+z^{\rm rem}=1$ at every proper prefix. Since the current maximum is at least the current score of $\reset{0}$, this proves~\eqref{eq:marker} with margin $1$. At entry, $\reset{0}$ remains a maximizer after one play of itself because its run has length at least two. The support of $v_+$ is exactly $\{g\}$. At this coordinate the vector required by~\eqref{eq:supportsafe} has value
\[
 \bigl(R_+p_k(q)+Ge_{\reset{0}}-e_g\bigr)_g
 =\zeta(q)-c+\delta+P_{g0}-1<\zeta(q),
\]
since $\delta<c$ and $P_{g0}<0$. Its $\reset{0}$ coordinate remains $\zeta(q)$, because $G$ is skew-symmetric and $g\ne\reset{0}$. All other coordinates agree with the valid state after the first reset play, so the required maximum and strict support comparison both hold.

\paragraph{Condition (III): bound on the remaining height gain.}
Consider any prefix of $\Bw(q)\Aw(q+1)$. If it lies in the reset block, then the current maximum is $\zeta(q)$ and the remaining word contains all of $\Aw(q+1)$. This latter word has length of degree $k$ with positive leading coefficient, while $\zeta(q+1)-\zeta(q)$ has degree $k-1$. For each rational $\epsilon>0$, the polynomial $\epsilon|\Aw(q+1)|-[\zeta(q+1)-\zeta(q)]$ has positive leading coefficient. The coefficient bound used in the proof of Proposition~\ref{prop:chronology} therefore gives a threshold above which the remaining height gain is at most $\epsilon$ times the remaining length.

If the prefix lies in the parent bundle of $\Aw(q+1)$, then Lemma~\ref{lem:bundle} bounds the remaining gain to its endpoint $\zeta(q+1)-c$ by $\epsilon$ times the remaining parent length plus $\mathcal A_k$. The full phase ends at maximum $\zeta(q+1)$, so its remaining gain is at most
\[
  \epsilon|\W_{k+1}^{\rm rem}|+\mathcal A_k+c
  \le\epsilon|\W_{k+1}^{\rm rem}|+\mathcal A_k+1.
\]
At the prefix immediately before $g$, the remaining gain is $c-\delta<1$; at the complete prefix it is zero. These cases also include the empty prefix. Take the maximum of $Q_{\rm out}$, the coefficient-based polynomial threshold for the reset-prefix case, and $\lceil Q_{\rm bun}(\epsilon)/\tau\rceil$. For every larger outer index, the bundled phases in $\Aw(q+1)$ meet the parent bundle threshold. This gives a terminating threshold algorithm using exact rational arithmetic for each rational $\epsilon>0$.

\paragraph{Additional data for the next lift.}
Proposition~\ref{prop:nextweights} supplies the integer scale factor, correction row, reference value $1/2$, and strictly positive weights for these same tables. The correction satisfies~\eqref{eq:higher} for the rotated algebraic tables, and its weights satisfy $a_{P,+}^{\top}C_+=t_{\tau_+,1/2}(\zetavec)$ with every entry positive. These are precisely items (i) and (ii) of Definition~\ref{def:admissible} for the new height row $\zetavec$, so that definition holds.
\end{proof}

\subsubsection{Proof of the one-step construction theorem}\label{sec:liftproof}
We are ready to prove Theorem~\ref{thm:lift}. Starting from an order-$k$ clock with the additional data of Definition~\ref{def:admissible}, we must retain its matrix as a principal block, add exactly $2k+3$ actions, and verify the three order-$(k+1)$ clock conditions together with the additional data for another lift. We must also obtain the stated endpoint counter, margins, and reference gain.

\begin{proof}[Proof of Theorem~\ref{thm:lift}]
Starting from a parent with the data in Definition~\ref{def:admissible}, the formulas of Section~\ref{sec:gateway} construct its endpoint extension. Proposition~\ref{prop:momentchoice} solves the matrix compatibility equation, and the formulas of Section~\ref{sec:synthesis} realize it as $M^{\top}JM$ and solve both reset equations. Lemma~\ref{lem:uniform} guarantees the required signs and margins for all sufficiently large scales, uniformly in $T_2$. Proposition~\ref{prop:nextweights} applies Lemma~\ref{lem:scale} to the invariant rotated tables to obtain the next higher-test correction, proving Definition~\ref{def:admissible}(i); it then fixes $T_2$ and constructs the strictly positive target representation required by item (ii). Finally, Propositions~\ref{prop:chronology} and~\ref{prop:inherit} prove validity and all inductive conditions. Exactly $1+2(k+1)=2k+3$ actions have been added, and the original matrix is unchanged in its principal block. The degree arguments here require only $k\ge2$: the top coefficient of $\erow{k}F_\tau$ is zero, while $\chivec_{k-1}\tau^{k-1}(2^{k-1}-1)$ is positive. The reset equations, their compatibility tests, the sign bounds, the positive-weight construction in Proposition~\ref{prop:nextweights}, and the best-response comparisons in Proposition~\ref{prop:chronology} all apply for $k=2$. For the RPS parent, the parity adjustment is the explicit correction~\eqref{eq:rpscorrection}, and Proposition~\ref{prop:seedadmissible} supplies the positive-weight data. Proposition~\ref{prop:inherit} gives $v_+=e_g^{\top}$, $\delta_{0,+}=1$, and $\mathcal A_{k+1}=\mathcal A_k+1$, while Proposition~\ref{prop:nextweights} gives $c_{{\rm ref},+}=1/2$. This completes the proof of the theorem.
\end{proof}

\appendix
\section{Explicit reset-scale bounds and rate constants}\label{app:bounds}
Appendix~\ref{app:reset-scale} gives explicit sufficient bounds for choosing the reset-count scale $T_1$. Appendix~\ref{app:constant-size} gives the leading gap constant from the coefficient tables and explains payoff normalization.
\subsection{Sufficient bounds for the reset-count scale}\label{app:reset-scale}
Lemma~\ref{lem:uniform}, Step 5, proves that a sufficiently large reset-count scale $T_1$ gives every required sign and waiting margin, uniformly over the later choice of $T_2$. We provide explicit bounds for choosing such a $T_1$ from the parent coefficient tables. The compatibility equations themselves have already been solved in Proposition~\ref{prop:momentchoice} and Section~\ref{sec:resetinteractions}; only inequalities for the signs remain to be checked.

Recall from~\eqref{eq:coefficientnorm} that $\norm{Z}_*=\max\{\norm{Z}_1,\norm{Z}_\infty\}$ and $\norm{f}_1=\sum_j|f_j|$ for a vector $f$. For a candidate positive integer value of $T_1$, define
\begin{align*}
  b_W &= (\norm{\Phi_0}_* + \norm{\Phi_1}_*)/T_1, &
  \nu &= b_W \norm{\ell_*}_1, \\
  b_E &= b_W / (1 - \nu), &
  b_C &= \norm{\one_r \lambda \zetavec^\circ}_* + \norm{V_0}_* + \norm{\ell_*}_1 / T_1, \\
  b_R &= \norm{E - E_{\rm base}}_* + 2\norm{L_r}_* b_C^2 / T_1, &
  b_\times &= b_W b_R b_E,
\end{align*}
and
\begin{align*}
  b_Q &= 2r\norm{\psi-\psi^\circ}_1 + \kappa r \norm{\ell_*}_1 + \frac{4\norm{L_r}_* b_C (1 + \norm{B}_*)}{T_1^2} + b_R(b_E \norm{B}_* + b_W), \\
  b_P &= \norm{D + a\ell_*}_* b_W + \norm{L_X}_*\bigl(b_Q + \norm{X}_* \norm{D + a\ell_*}_* b_W\bigr).
\end{align*}
For any $T_1$ with $\nu < 1/2$ and $T_2 \ge \max\{T_1, \upsilon_k\}$, these quantities bound $\norm{C_\times - \kappa \one_r \one_r^{\top}}_* \le b_\times$, $\norm{Q - P_0^{\top} X}_* \le b_Q$, and $\norm{P - P_0}_* \le b_P$. To choose $T_1$, it is sufficient to check
\begin{equation}\label{eq:explicit-sign-tests}
\begin{gathered}
 0<c<1,\qquad \upsilon_k+\ell_{*,k}>0,\qquad \nu<\tfrac12,\\
 b_\times<\kappa/2,\qquad
 b_P<\min\{a_{\min}/4,(a_g-\mathcal A_k-1)/2\}.
\end{gathered}
\end{equation}
When $k$ is even, if the denominator in~\eqref{eq:parity} is nonpositive, then discard that candidate integer $T_1$.  Define $b_E$ and the bounds depending on it only after verifying $\nu<1/2$. Every sufficiently large $T_1$ passes all these tests, by Lemma~\ref{lem:uniform}. One may therefore test successive integers, or double $T_1=1,2,4,\ldots$ for convenience, until a value satisfies all the requirements.

\subsection{The leading constant and payoff normalization}\label{app:constant-size}
Lemma~\ref{lem:alltime} proves the sharper form of the rate,
\[
 \Gap_{\widehat A_k}(z_t/t,z_t/t)
   =c_k t^{-1/k}(1+o(1)),\qquad c_k>0.
\]
Here $c_k$ is the leading constant multiplying $t^{-1/k}$. We give its exact formula and explain the effect of payoff normalization. For a fixed order-$k$ clock, put
\[
 \Lambda_k=(\one^{\top}C_k)_{k-1}>0,\qquad
 \xi_k=(R_k)_{b,k-1}>0,\qquad
 M_k=\max_{i,j}|(\widehat A_k)_{ij}|>0.
\]
Lemma~\ref{lem:alltime} gives the exact positive rational power
\begin{equation}\label{eq:constant-rational-power}
 c_k^k=\frac{2^k k^{k-1}\xi_k^k}{(k-1)!\Lambda_k^{k-1}}.
\end{equation}
Thus, the constants can be computed from the exact coefficient tables and
payoff matrix including setup. The constant for the normalized matrix
$\widehat A_k/M_k$ is $c_k/M_k$. The fixed bias,
common score translation, and choice of starting phase do not change
$c_k$, because they do not change the leading phase-length and height
coefficients. They can change $M_k$, and hence the normalized constant.
For the displayed order-two game, $c_2=2\sqrt{2}/3$ and $M_2=313/36$, so $c_2/M_2=24\sqrt{2}/313$.

For the particular games used in Appendix~\ref{sec:exact-computations}, the normalized constants beside this exact formula are
\[
\begin{array}{c|c}
k&\widetilde c_k=c_k/M_k\\\hline
2&0.108438101\\
3&3.22228726\times10^{-11}\\
4&1.70811213\times10^{-56}\\
5&1.37443391\times10^{-316}\\
6&8.09268843\times10^{-1596}
\end{array}
\]
These rounded values were independently recomputed at 100-digit precision from~\eqref{eq:constant-rational-power}, including the phase-to-time conversion. They describe these particular games; the constants depend on the chosen matrices.

The asymptotic rate $t^{-1/k}$ need not be accurate near the start of the trajectory. Lower-order terms and the chosen starting phase can produce a long transient, although Lemma~\ref{lem:alltime} guarantees convergence to the stated law for each fixed game. Appendix~\ref{sec:exact-computations} gives concrete numerical examples.

\section{Explicit matrices for orders two and three}\label{app:examples}
We display explicit order-$2$ and order-$3$ matrices to illustrate one application of the lift. The order-$3$ entries span a wide range because we use conservative sufficient parameter choices. The matrices here include the setup action, listed first and denoted by $\omega$. Both players start from zero counts and choose $\omega$ in the first round. Thereafter the best responses are unique. The matrices are displayed before the optional normalization to $[-1,1]$; multiplying every entry by the same positive number preserves the sequence of play.

\subsection{Order two: rock--paper--scissors with setup}
The recurrent block is the familiar three-action RPS matrix $A_{\mathrm{RPS}}$ in Section~\ref{sec:seed}. At phase $q=1$, its entry score is $(2,-17/3,5/3)^{\top}$. The bias in Lemma~\ref{lem:bias} is $(2,1,0)^{\top}/36$. Adding the common constant $239/36$ makes its smallest coordinate one. In the action order $(\omega,0,1,2)$, the resulting zero-start game is therefore
\[
 \widehat A_2=\frac1{36}
 \begin{pmatrix}
 0&-313&-36&-299\\
 313&0&-36&36\\
 36&36&0&-36\\
 299&-36&36&0
 \end{pmatrix}.
\]
The setup play gives the three recurrent scores $(313/36,1,299/36)^{\top}$. The recurrent word starts with $0^8 1^9 2^{10}$ and then follows the successive RPS phases. Thus, three actions generate the square-root clock, and a fourth produces its initial scores from zero counts. To normalize this displayed matrix to $[-1,1]$, replace its denominator $36$ by $313$.

\subsection{Order three: the first application of the general construction}
Here the action order is
\[
 (\omega,\underbrace{0,1,2,g}_{\text{RPS and endpoint actions~}},
       \underbrace{\reset{0},\reset{1},\reset{2}}_{\text{early reset actions~}},
       \underbrace{\reset{3},\reset{4},\reset{5}}_{\text{late reset actions}}).
\]
One set of parameter choices satisfying the construction is $\tau=1$, $T_1=524288$, $T_2=456247107$, and the starting phase is $q=1733$. The actual height gain is $c=1048594/2097161$, as given by~\eqref{eq:rpscorrection}. The integer scale below is
\[
 \Gamma_3=518388790765424836153943053839756165120\approx5.184\times10^{38}.
\]
The large integers below come from conservative sufficient parameter bounds; they are not a proved necessity for a cubic rate. We display the matrix produced by the same parameter-selection procedure used at all orders.

Write $f^{(3)}=\col(f_O,f_E,f_L)$, with the old actions
$O=(0,1,2,g)$, early reset actions $E=(\reset{0},\reset{1},\reset{2})$, and late reset actions
$L=(\reset{3},\reset{4},\reset{5})$. Split $U=[U_E\mid U_L]$ into its first and last three columns.
The integer numerator $F_3$ and the payoff matrix are
\begin{equation}\label{eq:explicitcubic}
 \widehat A_3=\frac{F_3}{\Gamma_3},\qquad
 F_3=\begin{pmatrix}
 0&-f_O^{\top}&-f_E^{\top}&-f_L^{\top}\\
 f_O&\Gamma_3H&41943040\,U_E&41943040\,U_L\\
 f_E&-41943040\,U_E^{\top}&\Lambda_3 S&-120\,C\\
 f_L&-41943040\,U_L^{\top}&120\,C^{\top}&\Lambda_3 S
 \end{pmatrix}.
\end{equation}
This is the old/early/late block structure of Figure~\ref{fig:matrix}, with
one setup row and column added. Its block sizes are $1,4,3,3$, totaling eleven.
In particular,
\[
 f_O\in\Z^{4\times1},\quad f_E,f_L\in\Z^{3\times1},\quad
 H\in\Z^{4\times4},\quad U_E,U_L\in\Z^{4\times3},\quad S,C\in\Z^{3\times3}.
\]
Here
\[
 H=\begin{pmatrix}0&-1&1&0\\1&0&-1&2\\-1&1&0&-2\\0&-2&2&0\end{pmatrix},\qquad
 S=\begin{pmatrix}0&-1&-1\\1&0&-1\\1&1&0\end{pmatrix},\qquad
 \Lambda_3=4749910616309760 \approx 4.75 \times 10^{15}.
\]
The upper-left $3\times3$ block of $H$ is $A_{\mathrm{RPS}}$, and its fourth action is $g$.
Tables~\ref{tab:explicitsetup}--\ref{tab:explicitcross} specify every entry of
$f^{(3)}\in\Z^{10\times1}$, $U\in\Z^{4\times6}$, and $C$.
Thus, $F_3\in\Z^{11\times11}$ is completely specified by these tables and the
displayed signs and transposes. The common denominator $\Gamma_3$ clears the
rational construction coefficients. Every actual payoff is the corresponding entry of $F_3$ divided by $\Gamma_3$.

\paragraph{The sizes of the actual payoffs.}
Table~\ref{tab:cubicmagnitudes} gives enclosing intervals for the nonzero absolute entries of $\widehat A_3=F_3/\Gamma_3$. Each lower decimal bound is rounded down and each upper bound up, then verified by exact rational comparison. Thus, the integer numerators in the preceding display do not by themselves describe the payoff scale.
\begin{table}[htbp]\centering\small
\begin{tabular}{@{}lll@{}}\toprule
Block of $\widehat A_3$ & Lower bound & Upper bound\\\midrule
Setup row and column & $1$ & $3.006778\times10^6$\\
Old actions, $H$ & $1$ & $2$\\
Parent / reset & $0.0740742$ & $0.0740746$\\
Endpoint $g$ / reset & $7.4999934$ & $7.4999939$\\
Within either reset half & $9.1628\times10^{-24}$ & $9.1629\times10^{-24}$\\
Between the two reset halves & $6.3578\times10^{-7}$ & $6.3579\times10^{-7}$\\\bottomrule
\end{tabular}
\caption{Magnitudes after division by $\Gamma_3$, before any optional normalization to $[-1,1]$. Zeros are excluded; the bounds need not be attained.}\label{tab:cubicmagnitudes}
\end{table}
The largest absolute entry is the setup payoff $f^{(3)}_{\reset{0}}/\Gamma_3\approx3.006777059\times10^6$. The largest recurrent payoff has magnitude approximately $7.499993800$, in the $g,\reset{1}$ pair. The smallest nonzero magnitude is exactly
\[
 \frac{\Lambda_3}{\Gamma_3}
 =\frac{1}{109136535956157601677312}
 \approx9.162834345\times10^{-24}.
\]
To apply the optional normalization in Remark~\ref{rem:normalization}, divide every actual payoff, including the setup row and column, by the common maximum $f^{(3)}_{\reset{0}}/\Gamma_3$.

The setup column already includes the bias and common score translation from Lemmas~\ref{lem:bias} and~\ref{lem:setup}. In recurrent action order, the bias is
\[
 (9,8,7,0,1,2,3,4,5,6)^{\top}/\Gamma_3.
\]
After setup, the score vector is exactly $f^{(3)}/\Gamma_3$ and $\reset{0}$ is the unique maximizer. No additional bias is to be applied to the displayed game.

\begin{table}[H]\centering\small
\begin{tabular}{@{}cr@{}}\toprule
Action & $f^{(3)}_i$\\\midrule
$0$ & $1558679264526474926416054178185188930087813124$\\
$1$ & $1553287675509987331170162401129886239466455043$\\
$2$ & $1558678054952629807091436485651396637323427842$\\
$g$ & $1558679394125341123117187434527396943217295355$\\
$\reset{0}$ & $1558679523724207319818320690869604956346777596$\\
$\reset{1}$ & $1558679523724207312328348104430061311848611837$\\
$\reset{2}$ & $1558679523724207301084901201480264859873443838$\\
$\reset{3}$ & $748574605883120608380675033545425551359$\\
$\reset{4}$ & $518388790765424836153943053839756165120$\\
$\reset{5}$ & $541477418431223810551440457239908843521$\\
\bottomrule\end{tabular}
\caption{The setup column numerator $f^{(3)}$, in recurrent action order.}\label{tab:explicitsetup}
\end{table}

\begin{table}[H]\centering\footnotesize
\setlength{\tabcolsep}{4pt}
\begin{tabular}{@{}ccrr@{}}\toprule
$i$ & $j$ & $U_{i,j}$ & $U_{i,j+3}$\\\midrule
$0$ & $0$ & $-915510046385366433976412132953$ & $915510055415203425882643486297$\\
$0$ & $1$ & $-915512040816756456584520511936$ & $915512036301837960631404835264$\\
$0$ & $2$ & $-915512342182560758868922202866$ & $915512337667642262915806526194$\\
$1$ & $0$ & $-915511408738655678979274036825$ & $915511300382123072556775364185$\\
$1$ & $1$ & $-915511503550723431475555980889$ & $915511598362733057185211864665$\\
$1$ & $2$ & $-915511517095304538975024830041$ & $915511530639827519687867618905$\\
$2$ & $0$ & $-915512771091944923982135940697$ & $915512816240664869220284225113$\\
$2$ & $1$ & $-915510966284690406366591449842$ & $915510943710330433747517307634$\\
$2$ & $2$ & $-915510692008048319081127457216$ & $915510669433688346462053315008$\\
$g$ & $0$ & $-92695058168523872867103415486770$ & $92695058168523989120676667607346$\\
$g$ & $1$ & $-92695062556949563982457323108046$ & $92695062543404924748171228198606$\\
$g$ & $2$ & $-92695058154979362684263029782222$ & $92695058168523885664975872571086$\\
\bottomrule\end{tabular}
\caption{The old-action/reset-action block $U$. Each row gives an early column and the corresponding late column. The row label $g$ means row $3$.}\label{tab:explicitcouplings}
\end{table}

\begin{table}[H]\centering\small
\begin{tabular}{@{}ccr@{}}\toprule
$i$ & $j$ & $C_{ij}$\\\midrule
$0$ & $0$ & $2746533971855354434308184310539$\\
$0$ & $1$ & $2746534374406625756888725029643$\\
$0$ & $2$ & $2746534332385920840451380052747$\\
$1$ & $0$ & $2746534253313543388392959345419$\\
$1$ & $1$ & $2746534659366843049941790982923$\\
$1$ & $2$ & $2746534619276124444945253499659$\\
$2$ & $0$ & $2746534290945076359443142509323$\\
$2$ & $1$ & $2746534701351235547861886010123$\\
$2$ & $2$ & $2746534661561429022635371889419$\\
\bottomrule\end{tabular}
\caption{The cross block $C$ between the early and late reset actions.}
\label{tab:explicitcross}
\end{table}

The unnormalized maximum absolute payoff is $F_{\max}/\Gamma_3$, where $F_{\max}=\max_{i,j}|(F_3)_{ij}|$. Hence normalization to $[-1,1]$ gives $F_3/F_{\max}$. Explicitly, replace $\Gamma_3$ in~\eqref{eq:explicitcubic} by
\[
 F_{\max}=1558679523724207319818320690869604956346777596\approx1.56\times10^{45},
\]
while leaving the integer blocks unchanged. This is the largest absolute entry of $F_3$. Before normalization, the largest absolute payoff is $F_{\max}/\Gamma_3\approx3.006777\times10^6$; division by $F_{\max}$ instead places every payoff in $[-1,1]$.

The same conservative parameter choices give a $20\times20$ order-four game and a $31\times31$ order-five game. In these particular outputs, the largest numerator has $320$ and $1914$ decimal digits. These sizes result from the sufficient parameter bounds used in our construction; we do not claim that such large entries are required. The block structure is already visible in~\eqref{eq:explicitcubic}, so we do not expand those larger matrices in the paper.
\FloatBarrier

\clearpage
\section{Exact event-driven empirical simulations}
\label{sec:exact-computations}
The reproduction code is available in the accompanying repository at \url{https://github.com/awibisono/fictitious-play-experiments}.

We check empirically how the constructed trajectories approach the predicted
$t^{-1/k}$ rates. The conservative parameter choices in the inductive
construction can make this regime occur after an enormous number of rounds.
For example, completing just the first outer phase takes about
$10^{67}$, $10^{351}$, and $10^{1742}$ rounds for $k=4,5,6$.
Updating one round at a time is therefore impractical. Instead, we group rounds
whose best responses can all be verified exactly, and add their action counts
in one operation. This still gives an exact simulation of fictitious play, without approximating any action choice or score update.

We construct the rational matrices by the inductive steps above, for
$k=2,3,4,5,6$, with $4,11,20,31,44$ actions including setup.
Figure~\ref{fig:empirical-low-orders} shows the order-two and order-three
trajectories; Figure~\ref{fig:empirical-high-orders} gives corresponding
phase-boundary observations for orders four through six.
For $k=2,3$ (Figure~\ref{fig:empirical-low-orders}), the computation groups constant-action runs and consecutive RPS cycles. For $k=4,5,6$ (Figure~\ref{fig:empirical-high-orders}), it additionally proposes an entire nested sequence of runs and verifies every decision in that finite sequence by exact polynomial inequalities. This latter procedure is called \emph{certified hierarchical replay} below. The empirical results agree with the theoretical rates proved in Theorem~\ref{thm:main}.

\subsection{Simulation state, exact arithmetic, and payoff scaling}
For each fixed order $k$, let $D_k$ be a positive integer clearing all
denominators of its constructed matrix $\widehat A_k$, including the setup row
and column, and put $B_k=D_k\widehat A_k\in\mathbb Z^{n_k\times n_k}$.
At zero counts all actions tie; both players take the prescribed setup action
$\omega$. Thus, after that first round, the common count column is
$z_1=e_\omega$, the integer score column is $\mathbf u_1=B_kz_1$, and $t=1$.
Thereafter skew-symmetry and equal counts give the same unique best response
$i_*$ to both players. One simultaneous round replaces
\[
 z\longleftarrow z+e_{i_*},\qquad
 \mathbf u\longleftarrow\mathbf u+B_ke_{i_*},\qquad
 t\longleftarrow t+1.
\]
Every state satisfies $\mathbf u=B_kz$ and $t=\mathbf1^\top z$.
In particular, the initial phase label $Q_k$ contributes no fictitious past
rounds: only the single actual setup round is included in $t$.

The code stores matrix entries and polynomial coefficients as exact integer
ratios, using Python arbitrary-precision integers, \texttt{Fraction}, and
SymPy rational arithmetic. Clearing denominators makes the trajectory scores
and counts integers; their bit lengths grow as needed rather than being
restricted to a fixed machine word. Run lengths, score comparisons,
polynomial minima, and state updates use this exact arithmetic. Floating-point
logarithms and finite-precision decimal roots are used to display results and
to specify optional reporting or stopping horizons, never to choose or certify
a best response. We emphasize this distinction because the constructed payoffs and elapsed times greatly exceed the precision or range of ordinary floating-point representations. Computing scores and comparisons with ordinary fixed-precision floating-point arithmetic would quickly lose the small score differences needed to follow the true fictitious-play trajectory.

The empirical gap for the original payoff matrix is already time-averaged:
\[
 g_k(t)=\frac{2\max_i u_i}{D_kt}.
\]
A separate, optional normalization puts the payoff entries in $[-1,1]$:
\[
 M_k=\max_{i,j}|(\widehat A_k)_{ij}|,\qquad
 \widetilde g_k(t)=\frac{g_k(t)}{M_k},\qquad
 \widetilde c_k=\frac{c_k}{M_k}.
\]
Dividing by $t$ forms empirical strategies; dividing by $M_k$ changes payoff
units. The latter leaves every best response and elapsed time unchanged and
lets us display games with a common maximum-entry payoff scale.
We use this normalization when plotting the gaps; the trajectory itself is computed with the exact integer matrix $B_k$.

\subsection{Skipping rounds while preserving the exact trajectory}
\paragraph{Constant-action runs.}
Suppose $i_*$ is the unique maximizer of the current integer scores. The first
integer time at which it ceases to be a strict maximizer is
\[
 \ell=\min_{j:(B_k)_{ji_*}>0}
 \left\lceil\frac{u_{i_*}-u_j}{(B_k)_{ji_*}}\right\rceil.
\]
For the constructed trajectory, at least one rival has $(B_k)_{ji_*}>0$, so the set in this minimum is nonempty. Without such a rival, the current leader would remain best forever. Because
$(B_k)_{i_*i_*}=0$, every leader-minus-rival score difference is affine in the
within-run offset. Strict positivity at offsets $0$ and $\ell-1$ therefore
checks all $\ell$ decisions, against every constructed action, including setup
and the outer reset actions. The algorithm adds $\ell e_{i_*}$ to the counts
and $\ell B_ke_{i_*}$ to the scores, then determines the next maximizer afresh.
At the new state, the program checks that the next maximizer is unique before continuing.

\paragraph{Consecutive inherited RPS cycles.}
A further grouping is available whenever the trajectory enters its inherited
three-action RPS subcycle. A \emph{cycle} here is one run of each of those
three actions, in their cyclic order. Although every order contains these
three inherited actions, an RPS cycle is only a small subword of a higher-order
phase. This grouping is used for the reported $k=2,3$ runs, and it always checks
all other actions too.

Write $\iota_0,\iota_1,\iota_2$ for their coordinates in the constructed game.
The program first finds their three current run lengths $\ell_0,\ell_1,\ell_2$
by the preceding exact rule. It proposes lengths $\ell_i+3r$ in cycle $r$,
and the corresponding cycle-entry score column
\[
 \mathbf u(r)=\mathbf u(0)+r d_0+\binom r2 d_1,\qquad
 d_0=\sum_{i=0}^2\ell_i B_ke_{\iota_i},\qquad
 d_1=3\sum_{i=0}^2B_ke_{\iota_i}.
\]
Every first- and last-decision margin is then a quadratic
$p(r)=\alpha r^2+\beta r+\gamma$. A proposed $N$-cycle batch is accepted
only when all these margins are positive at every integer $0\le r<N$.
For $\alpha>0$, the minimum on an integer interval is attained at an endpoint
or one of the integers adjacent to $-\beta/(2\alpha)$; for $\alpha\le0$,
endpoint tests suffice. We locate the first integer $r$ where any margin is nonpositive using integer square roots and exact comparisons. This is the
\emph{first failure} of the proposed RPS pattern, not a numerical error in the
simulation. Only the preceding certified cycles are grouped. Consequently,
the proposed lengths are justified by checks of the actual matrix, rather
than by assuming a phase-return theorem.

\paragraph{Batch caps and stopping budgets.}
A cap limits how many cycles one computational event may group; it does not skip the remaining rounds or alter their decisions. We choose these caps and the total event budgets to control computation time and sample spacing; they are numerical settings, not parameters required by the proof. For $k=2$, the cap at time $t$ is
\[
 \min\!\left\{2^{40},\,
 \max\!\left\{2,\,\left\lfloor\frac{\lfloor\sqrt{t-1}\rfloor}{100}\right\rfloor+1\right\}\right\}.
\]
The growing cap allows larger batches as time increases, while limiting the distance between saved observations. We stop after $100000$ events. Each accepted batch reaches this cap before the RPS pattern fails, and the next event resumes from the exact resulting state.

For $k=3$, we use the fixed cap $2^{20000}$ to allow batching up to the first failure of the proposed inherited RPS pattern. This deliberately large cap does not restrict any accepted batch: the largest has $251733$ cycles. We stop after $2000000$ events. An event may represent many rounds, so these event budgets are not elapsed fictitious-play times. The reproduction documentation specifies these settings and the sampling cadence.

\subsection{How far the computations ran}
Table~\ref{tab:empirical-horizons} reports the last actual completed round and
the number of complete outer phases, rather than a prescribed equal-time
comparison across games. Here $Q_k$ is the starting phase label chosen by the
construction. In particular,
\[
 \begin{gathered}
 Q_2=1,\qquad Q_3=1733,\qquad Q_4=130490910,\\
 Q_5\approx8.7054945636\times10^{44},\qquad
 Q_6\approx8.1896038987\times10^{172}.
 \end{gathered}
\]
The last two are rounded descriptions of the exact integers in the supplied
construction files. The runs with $k\in\{4,5,6\}$ were certified through phase
$4096Q_k$, comprising $4095Q_k$ complete phases after setup.

\begin{table}[htbp]\centering\small
\begin{tabular}{@{}rrrrr@{}}\toprule
$k$ & Actions & Final rounds $t$ & Complete outer phases & $g_k(t)/(c_kt^{-1/k})$\\\midrule
2 & 4 & $5.30226409\times10^{34}$ & $108548648493631900$ & $1+5.71\times10^{-17}$\\
3 & 11 & $1.67272550\times10^{22}$ & $250000$ & $1+5.16385\times10^{-5}$\\
4 & 20 & $6.39859488\times10^{88}$ & $534360276450$ & $1+1.73610\times10^{-11}$\\
5 & 31 & $7.74720259\times10^{412}$ & $4095Q_5$ & $1+3.55341\times10^{-15}$\\
6 & 44 & $6.08763077\times10^{1935}$ & $4095Q_6$ & $1+8.67538\times10^{-19}$\\\bottomrule
\end{tabular}
\caption{Finite certified horizons under different computational budgets and
grouping methods.}
\label{tab:empirical-horizons}
\end{table}

The $k=3$ run uses two million run-or-cycle events, whereas the runs with $k\in\{4,5,6\}$ verify whole nested phase words. Their much larger horizons reflect this different grouping and the different phase lengths and starting labels. The last column compares each gap to its own theoretical
asymptote. The constant $c_k$ is the positive leading constant from
Lemma~\ref{lem:alltime}, computed from the exact count and height coefficients.
It fixes the vertical position of the reference law as well as its exponent;
it is not fitted to the observations. Since normalization divides both the
gap and its constant by $M_k$, the displayed ratio is unchanged.

\subsection{Observed rates for orders two and three}
Figure~\ref{fig:empirical-low-orders} compares each game with its own predicted law. Even after scaling by the full setup-inclusive maximum entry, the constants $c_k/M_k$ differ greatly (Appendix~\ref{app:constant-size}); a smaller exponent need not give a larger gap at a fixed finite time.
\begin{figure}[htbp]\centering
\includegraphics[width=\textwidth]{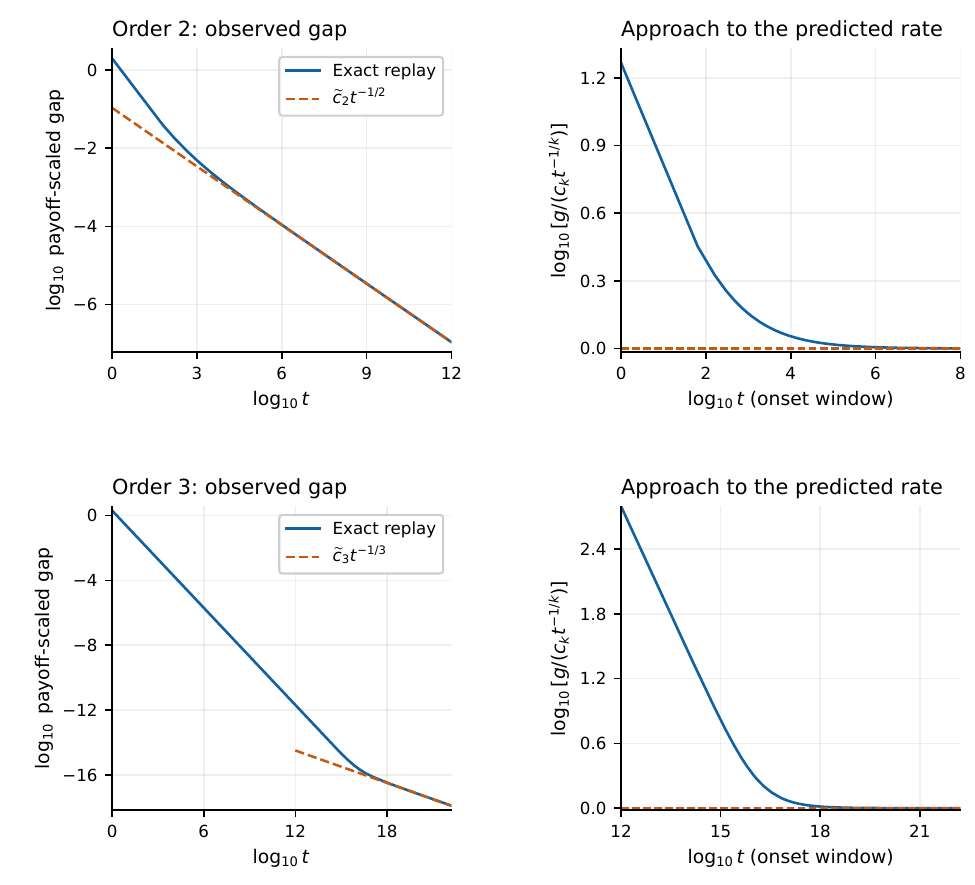}
\caption{Exact trajectory samples for orders two and three. \textbf{Left:} payoff-scaled
empirical gaps and fixed laws $\widetilde c_kt^{-1/k}$. The order-two view stops
at $t=10^{12}$ because its rate is already visible there; the longer certified
endpoint is given in Table~\ref{tab:empirical-horizons}. \textbf{Right:} onset windows
for the logarithm of the gap-to-law ratio, in place of a nearly flat far-tail
view. The zero value means exact agreement with the law. Lines joining saved
samples are visual guides, not additional sampled decisions.}
\label{fig:empirical-low-orders}
\end{figure}
\subsection{Checking complete nested phase words for orders four through six}
The computation for $k>3$ uses the same principle on a larger scale:
propose a finite list of action runs, prove that each proposed decision is a
best response, and only then add its counts to the actual state.
This avoids expanding all the inherited subcycles individually.
To describe the recursion, consider any order-$k$ phase with $k\ge3$ and label $q$, including the inherited order-$3$ phases inside the reported $k=4,5,6$ runs. Its proposed word consists
of its $2k$ reset-action runs, its parent phases with labels
\[
 \sigma=\tau(q+1),\ldots,2\tau(q+1)-1,
\]
and its one-round endpoint action. Here $\tau$ is the fixed integer scale for
this particular lift; each lower lift has its own scale. At the bottom are
the three RPS runs. The reset-count table $M$ and these scales specify the
proposed word. Its count polynomials are recomputed by exact finite summation
of that word, rather than using the supplied full-phase count table $C$ to
advance the trajectory.

From the actual integer score column at the start of a batch, the recomputed
counts give the scores at the first and last decision of every nested run.
The checks require integer lengths at least one, positive bundle counts, and
strict leader-minus-rival margins against every constructed action. The integer
property follows from the integer coefficients in the binomial basis and integer phase
labels; polynomial inequalities verify positivity. Each leader-minus-rival score difference is affine in the within-run offset. Positivity at the first and last decisions therefore implies positivity at every decision of that run, as in the constant-action calculation above.

\paragraph{Covering every nested index.}
If a batch contains $P$ outer phases, then its outer offset satisfies $0\le n\le P-1$. A child-phase offset satisfies $0\le h\le L-1$, where $L=\tau(q+1)$ depends on its ancestors. We introduce one real coordinate in $[0,1]$ for each nesting level. The first is $x_0$: substituting $n=(P-1)x_0$ covers every outer offset. At each child level $j$, substitute $h=(L-1)x_j$, after expressing $L$ in the ancestor coordinates. Thus, $x_j$ specifies the child's relative position in its allowed range. This retains the dependence on ancestor indices and covers a continuous domain containing every required integer index. For notation in the sign test below, relabel these coordinates as $x_1,\ldots,x_\nu$; their domain $[0,1]^\nu$ is the unit cube.

\paragraph{A sufficient exact polynomial sign test.}
After these substitutions the margins and length constraints are rational
polynomials on a unit cube. For coordinate degree bounds $D_1,\ldots,D_\nu$,
write
\[
 p(x)=\sum_{0\le\alpha_j\le D_j} b_\alpha
 \prod_{j=1}^{\nu}\binom{D_j}{\alpha_j}
 x_j^{\alpha_j}(1-x_j)^{D_j-\alpha_j}.
\]
The products are the tensor Bernstein basis: each is nonnegative on the cube
and their sum is one. Therefore $p(x)$ is a weighted average of its
\emph{Bernstein coefficients} $b_\alpha$. All coefficients positive implies
strict positivity; all nonnegative coefficients imply nonnegativity. This computational certificate is used only for the simulations; it is not used in the proof of Theorem~\ref{thm:main}. A negative coefficient only
means that this sufficient test did not certify the proposed batch; it does
not show that the trajectory is invalid. The program advances the state only
after every required test passes. The maximal coordinate degrees are
$(4,3,2)$, $(5,4,3,2)$, and $(6,5,4,3,2)$ for $k=4,5,6$. Figure~\ref{fig:empirical-high-orders} shows the resulting gap samples. Their decay agrees with the predicted $t^{-1/k}$ laws: the gap-to-law relative errors decrease toward zero, reaching the values in Table~\ref{tab:empirical-horizons}.

\begin{figure}[htbp]\centering
\includegraphics[width=\textwidth]{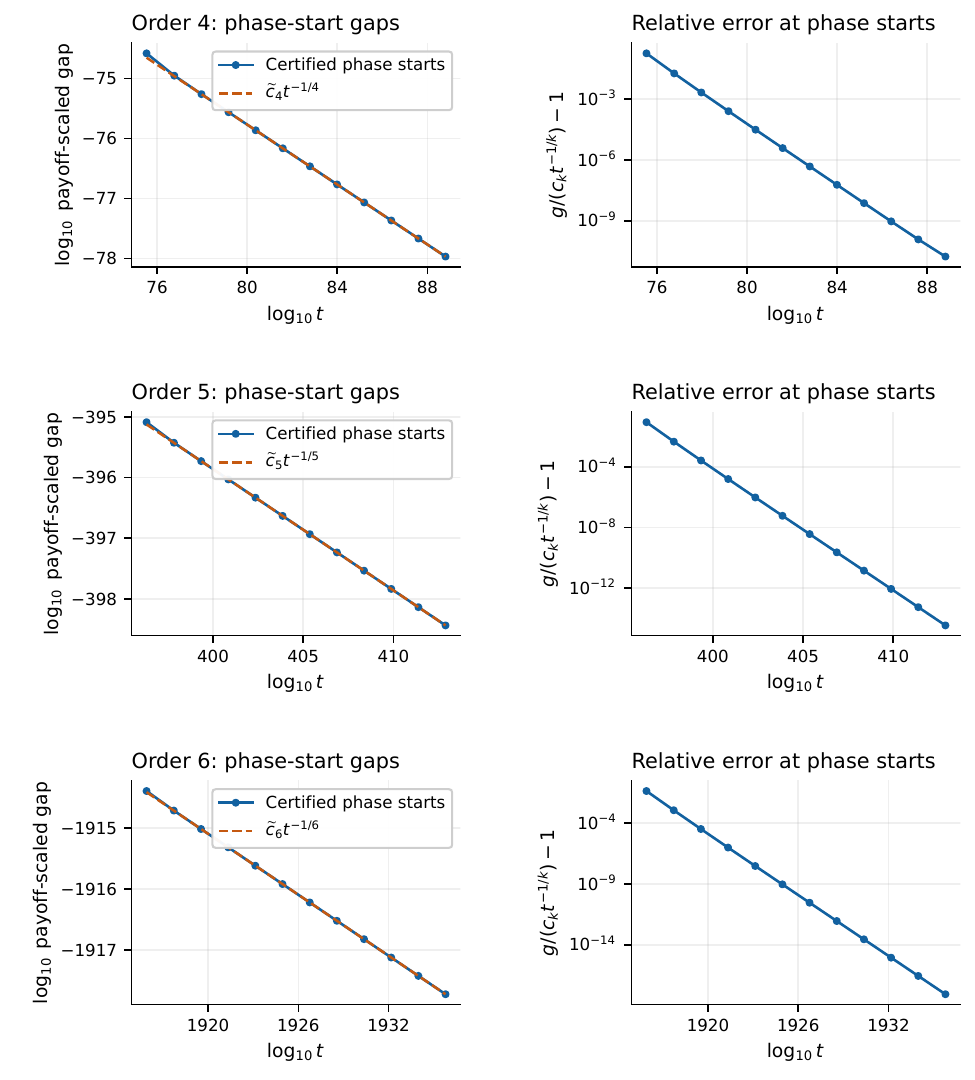}
\caption{Gap comparisons for $k=4,5,6$, using samples at the starts of phases $q=2^jQ_k$, $j=1,\ldots,12$. \textbf{Left:} payoff-scaled gaps and the fixed laws $\widetilde c_kt^{-1/k}$. \textbf{Right:} relative errors on logarithmic vertical axes, showing their approach to zero. Both panels in each row use the same horizontal coordinate $\log_{10}(t)$ and the same time range. These runs save samples at the starts of the indicated outer phases, rather than at the finer run/event intervals of Figure~\ref{fig:empirical-low-orders}. Thus, they do not display the shape of the gap inside a phase. Exact certificates cover all intervening decisions. Lines between samples are visual guides.}
\label{fig:empirical-high-orders}
\end{figure}
\FloatBarrier

\section{AI methodology statement}\label{sec:ai-methodology}

Our results were developed through an iterative process of mathematical analysis and interaction with ChatGPT 5.6 Sol, ChatGPT 6 Astra, and an early internal version of Gemini 4 Pro. 
Beginning in summer 2026, this process first yielded a zero-sum game with duality-gap decay $\Theta(t^{-1/4})$, slower than the $\Theta(t^{-1/3})$ construction of~\cite{wang2025tie}. 
Further interactions with ChatGPT helped develop the inductive construction of a general family of hard instances. 
Subsequent interactions with Gemini 4 Pro helped simplify the construction and proof. 
The authors verified and simplified the construction and proofs. 
ChatGPT also assisted with the writing and presentation. 
The authors take full responsibility for the correctness of the results and their exposition.

While preparing this manuscript, 
we learned that the Cogentic team at Google had also obtained a similar result. Their result was obtained using the Cogentic harness~\citep*{cai2026cogentic}
with an internal Gemini model as the base model.

\paragraph{Acknowledgements.}
The authors thank Yuanhao Wang, Georgios Piliouras, Ryann Sim, Stratis Skoulakis, and Yang Cai for helpful discussions.

\bibliographystyle{plainnat}
\bibliography{references}
\end{document}